\documentclass[11pt]{article}
\usepackage[letterpaper,margin=0.92in]{geometry}
\usepackage[T1]{fontenc}
\usepackage{lmodern}
\usepackage{microtype}
\usepackage{amsmath,amssymb,amsthm,mathtools,mathrsfs}
\usepackage[dvipsnames]{xcolor}
\usepackage[colorlinks=true,linkcolor=MidnightBlue,urlcolor=MidnightBlue,citecolor=MidnightBlue]{hyperref}
\usepackage{enumitem,booktabs,tabularx,needspace}
\usepackage{tikz}
\usetikzlibrary{arrows.meta,positioning}
\hypersetup{pdftitle={The Standard Model on a finite bosonic lattice with continuum Weyl fermions},pdfsubject={Fixed-spacing bosonic lattice and spectrally regulated continuum Weyl fermions}}
\setlist[itemize]{topsep=4pt,itemsep=2pt,parsep=1pt,leftmargin=1.5em}
\newtheorem{theorem}{Theorem}
\newtheorem{lemma}{Lemma}
\newtheorem{proposition}{Proposition}
\newtheorem{corollary}{Corollary}
\newtheorem*{unnumberedlemma}{Lemma}
\newcommand{\SM}{G_{\mathrm{SM}}}
\newcommand{\calU}{\mathcal U}

\newcommand{\Lat}{\mathbb L}
\newcommand{\cZ}{\mathcal Z}
\newcommand{\ind}{\operatorname{ind}}
\newcommand{\Pf}{\operatorname{Pf}}

\title{The Standard Model on a finite bosonic lattice\\with continuum Weyl fermions}
\author{Stephen D.~H.~Hsu\thanks{Email: \href{mailto:hsusteve@gmail.com}{hsusteve@gmail.com}.}\\Department of Physics and Astronomy\\Michigan State University}
\date{21 October 2026}
\begin{document}
\maketitle
\vspace{-1.5em}
\begin{abstract}
We investigate a formulation of the Standard Model in which gauge and Higgs fields are regulated on a finite lattice, while quarks and leptons remain continuum Weyl fermions. This preserves the familiar continuum description of anomalies, spectral flow, and fermion zero modes, without requiring a finite-dimensional discretization of chiral fermions. The Standard Model's anomaly cancellation permits a gauge-invariant definition of the chiral fermion determinant. Yukawa interactions are included in the full fermion determinant using a gauge-covariant cutoff that acts on the interaction vertices rather than the fermion propagator. We show that removing this cutoff requires only the local subtractions expected from one-loop power counting, and construct a consistent complex fermion weight, including in topologically nontrivial backgrounds. At fixed bosonic lattice spacing and finite spacetime volume, the resulting bosonic functional integral is finite and exactly gauge invariant under the stated smoothness conditions. The construction provides a finite-regulator Standard Model formulation retaining the physical structure of continuum chiral fermions. Appendix~\ref{sec:collar_variant} establishes reflection positivity for a variant reconstruction at zero Yukawa coupling and zero theta. The bosonic continuum limit remains open.
\end{abstract}

\section{Motivation and scope}

A nonperturbative regulator for the Standard Model must accommodate chiral fermions while preserving gauge invariance. Conventional lattice discretizations make this difficult: the familiar continuum structure of chirality, anomalies, and fermion zero modes must be recovered from a finite-dimensional lattice description. An alternative is to discretize the bosonic fields while retaining continuum Weyl fermions. The gauge and Higgs fields then have finitely many degrees of freedom, but the fermions remain governed by continuum Dirac operators.

There is a conceptual advantage to this approach. Anomalies can be studied using the usual continuum Ward identities and index theorems. Fermion zero modes remain solutions of the continuum Dirac equation, and spectral flow of associated Hermitian operators retains the familiar interpretation in terms of level crossings. The electroweak $B+L$ anomaly and its selection rules can therefore be understood in essentially the same way as in continuum quantum field theory. Rather than reconstructing these phenomena from lattice fermion operators, we retain the framework in which they are ordinarily derived.

We consider the Standard Model on a finite four-dimensional spacetime torus. Gauge and Higgs fields are defined on a finite lattice, while quarks and leptons remain continuum Weyl fermions coupled to smooth backgrounds reconstructed from the lattice variables. Even with vanishing Yukawa couplings, this formulation requires several nontrivial steps. The lattice gauge fields must be converted into smooth continuum connections in a way that preserves exact gauge covariance and accommodates nontrivial gauge bundles. The chiral fermion determinant must be defined consistently, including its complex phase. Finally, the formulation must handle fermion zero modes and the associated anomalous selection rules without introducing singularities into physical correlation functions.

Our starting point is Craig's recent construction of smooth, gauge-covariant connections from admissible lattice links~\cite{craig}. The reconstructed connections can be defined on topologically nontrivial bundles, allowing continuum index theory to be applied directly. Standard Model anomaly cancellation, including the absence of global gauge anomalies, permits a consistent gauge-invariant phase convention for the massless chiral determinant using determinant-line methods, which track the determinant and its phase as gauge backgrounds vary and fermion zero modes appear. Fermionic correlation functions are described by regularized determinant minors, including when the zero-source determinant vanishes because of zero modes. These ingredients yield a finite-volume, gauge-invariant formulation of the zero-Yukawa theory at fixed bosonic lattice spacing. We also prove that every smooth Standard Model gauge bundle occurs in the reconstructed family on sufficiently fine lattices (Proposition~\ref{prop:topological_realization}). This realization result does not assert that every sector occurs at a single fixed lattice spacing.

The next challenge is to incorporate Yukawa interactions without sacrificing these properties. Unlike the massless determinant prescription, the Yukawa-dependent determinant introduces additional ultraviolet divergences that must be regulated and renormalized. We introduce Yukawa interactions through gauge-covariant spectral filters acting at the interaction vertices, while leaving the continuum fermion kinetic operator unchanged. This preserves the continuum description of chirality and zero modes and allows the fermion spectral cutoff to be removed after appropriate local subtractions.

The ultraviolet structure has a simple interpretation in Feynman-diagram language. A closed fermion loop with $n$ Yukawa insertions has superficial degree of divergence $4-n$. Thus only the first four orders require ultraviolet subtraction, and gauge invariance organizes the divergent terms into local counterterms. Mixed gauge--Yukawa effects are included because both the fermion propagator and the spectral filters depend covariantly on the background gauge field. The finite, momentum-dependent contributions of the loops are retained.

We also specify the normalization of the complete complex fermion weight, including backgrounds with nonzero index and phases associated with Yukawa matrices. The prescription distinguishes regulator-dependent local subtractions from physical parameters such as theta angles and CP-violating Yukawa phases. Under the analytic conditions stated below, Theorem~\ref{thm:renormalized_weight} establishes a finite, exactly gauge-invariant bosonic functional integral after removal of the Yukawa spectral cutoff, with the bosonic lattice spacing and spacetime volume held fixed. The bosonic continuum limit and reflection positivity with nonzero Yukawa couplings remain open.

\paragraph{Earlier work.}
Flume and Wyler used interpolated lattice backgrounds in the Schwinger model~\cite{flume_wyler}. Continuum gauge-field reconstruction from lattice links was developed by G\"ockeler and collaborators~\cite{goeckeler}. 't~Hooft advocated preregulating gauge fields while retaining continuum fermions, emphasizing chiral symmetry and the finite-volume functional integral~\cite{thooft_lattice}. Hsu's 1995 proposal used the same broad two-stage strategy and developed a chiral-determinant phase prescription in terms of a vector-like Dirac operator, including local and global anomalies and a comparison with zeta-function methods~\cite{hsu}. Kronfeld analyzed continuum fermions in interpolated backgrounds with finite-mode cutoffs and explicit gauge-restoring counterterms~\cite{kronfeld}. Montvay and Hern\'andez--Sundrum studied related chiral and two-cutoff constructions~\cite{montvay,hernandez_sundrum,hsinterp}. These works establish the historical motivation and much of the ultraviolet strategy. We use Craig's more recent smooth, gauge-covariant reconstruction to specify the Standard Model gauge bundles and their dependence on the lattice links~\cite{craig}; our fermion prescription is distinct from Craig's overlap measure.

\paragraph{Results and scope.}
The first stage gives a finite, gauge-invariant zero-Yukawa integral and finite smeared fermion insertions, including at zero modes (Theorems~1--2). The second stage incorporates Yukawa interactions and removes their spectral cutoff with local subtractions and fixed finite matching conditions (Theorems~3--4). The local comparison of determinant prescriptions is developed in the appendices, in the setting studied in Ref.~\cite{dang}. These are fixed-bosonic-lattice results; they do not establish a bosonic continuum limit or equivalence to an interacting overlap theory.

Appendix~\ref{sec:collar_variant} proves site-reflection positivity at zero Yukawa coupling and zero theta for a variant gauge field reconstruction with specified reference phases (Theorem~\ref{prop:collar_rp}). The variant preserves the bundle topology and continuum Weyl fermions. Its relation to Craig's reconstruction is discussed there; positivity of the original reconstruction remains open.

\paragraph{Guide to the paper.}
Sections 2--6 build the zero-Yukawa theory and explain its anomalies and zero modes. Section~\ref{sec:yukawa_full} develops the Yukawa prescription through fermion-loop calculations and physical checks; Section~\ref{sec:overlap} compares it with overlap fermions. The main text explains the physical role of the mathematical constructions; the appendices give their detailed ultraviolet and zero-mode arguments, reconstruction geometry, and positivity proof. Appendix~\ref{app:terminology} supplies a glossary of the mathematical terms.

\begin{figure}[htbp]
\centering
\begin{tikzpicture}[>=Stealth, node distance=2.2mm, every node/.style={font=\small}]
\tikzstyle{stepbox}=[draw=black!60,rounded corners=2pt,align=center,minimum height=0.76cm,text width=2.05cm,inner sep=3pt]
\node[stepbox] (l) {Lattice fields\\$(U,\phi)$\\{\scriptsize $a,\epsilon,R_H$}};
\node[stepbox,right=of l] (r) {Smooth fields\\$(A[U],\Phi)$};
\node[stepbox,right=of r] (d) {Weyl--Yukawa\\operator $D_{U,\Phi}$};
\node[stepbox,right=of d] (w) {Renorm. weight\\$W_Q$\\{\scriptsize $\mu$}};
\node[stepbox,right=of w] (z) {Finite integral\\$Z_Q$};
\draw[->] (l) -- (r);
\draw[->] (r) -- (d);
\draw[->] (d) -- (w);
\draw[->] (w) -- (z);
\node[align=center,font=\scriptsize,text width=6cm,below=7mm of d] {Anomaly cancellation permits a consistent massless phase; the Yukawa cutoff and subtraction specify a renormalization scheme.};
\end{tikzpicture}
\caption{Structure of the fixed-bosonic-lattice regulator. The bosonic spacing $a$, admissibility threshold $\epsilon$ and Higgs support $R_H$ enter at the lattice stage. The fermion cutoff $\Lambda$ enters through the Yukawa vertices and is removed before the finite bosonic integral. With finite matching conditions fixed at the scale $\mu$, the normalized weight $W_Q$ is independent of the filter and its auxiliary shift $\sigma$.}
\label{fig:overview}
\end{figure}

The fermion spectral cutoff $\Lambda$ and bosonic lattice spacing $a$ are independent regulators. Whenever we remove $\Lambda$, the finite spacetime volume, $a$, and the bosonic integration domain are held fixed. Our estimates do not assert uniform control as $a\to0$ or in the infinite-volume limit.

\section{Lattice variables and the Standard Model gauge group}

\subsection{Bosonic lattice configuration space}
Let Euclidean spacetime be the four-torus $X=T^4$, equipped with a fixed flat metric and spin structure (fermion boundary conditions around the four cycles). Introduce a finite periodic hypercubic lattice $\Lat$ with spacing $a>0$. We use the double-struck $\Lat$ for the lattice, reserving $\Lambda$ for the fermion-mode ultraviolet cutoff. Write $\Lat^0$ and $\Lat^1$ for the sets of sites and oriented links; the lattice gauge-transformation group $\SM^{\Lat^0}$ assigns an element of $\SM$ independently to each site.

Each oriented link $\ell$ carries $U_\ell\in\SM$, representing parallel transport between neighboring lattice sites. The ordered product $U_p$ around an elementary square is the \emph{plaquette holonomy}; its leading deviation from the identity is proportional to $a^2F_{\mu\nu}$ in the smooth-field limit. Each site $x$ also carries a complex Higgs doublet $\phi_x\in\mathbb C^2$.

\Needspace{10\baselineskip}
The gauge group is
\begin{equation}
\SM=S(U(3)\times U(2))\cong
\frac{SU(3)\times SU(2)\times U(1)}{\mathbb Z_6}\subset SU(5).
\label{eq:gaugegroup}
\end{equation}
The notation $S(U(3)\times U(2))$ means the subgroup of block-diagonal matrices $\operatorname{diag}(A,B)\in SU(5)$, where $A\in U(3)$ and $B\in U(2)$ satisfy $\det A\,\det B=1$. Its Lie algebra is precisely $\mathfrak{su}(3)\oplus\mathfrak{su}(2)\oplus\mathfrak{u}(1)$: the $SU(5)$ notation does not add grand-unified gauge bosons. To see the discrete quotient, consider the surjective homomorphism
\[
 (g_3,g_2,z)\ \longmapsto\
 \operatorname{diag}(z^{-2}g_3,\,z^3g_2),
 \qquad g_i\in SU(i),\quad z\in U(1).
\]
The source group is connected, so its quotient $\SM$ and the finite lattice gauge group $\SM^{\Lat^0}$ are connected. The displayed map has the six-element kernel generated by $(e^{2\pi i/3}\mathbf1_3,-\mathbf1_2,e^{i\pi/3})$. Thus the direct-product group and $S(U(3)\times U(2))$ have the same infinitesimal gauge transformations, but the latter identifies a $\mathbb Z_6$ subgroup. That generator acts trivially on every Standard Model quark, lepton, and Higgs representation. In the integer hypercharge convention $n=6Y$, that generator acts on a representation of color triality $t$ and weak isospin $I$ by $\exp[2\pi i(t/3+I+n/6)]$. Requiring this phase to be one gives $t/3+I+n/6\in\mathbb Z$; Appendix~\ref{app:geometry} records the corresponding bundle conventions~\cite{hucks,tong,craig}.

The global quotient matters for allowed representations, nontrivial gauge bundles, magnetic fluxes, and line operators, even though it does not change local perturbative Feynman rules. The alternatives $(SU(3)\times SU(2)\times U(1))/\Gamma$, with $\Gamma=1,\mathbb Z_2,\mathbb Z_3,\mathbb Z_6$, are all compatible with the observed Standard Model fields. We choose the full $\mathbb Z_6$ quotient because its block embedding into $SU(5)$ is faithful and particularly convenient for the bundle and anomaly arguments below, not because perturbative Standard Model physics uniquely requires it~\cite{hucks,tong,craig}.

In this integer hypercharge convention, one left-handed generation is
\begin{align}
\mathbf{10}&\longrightarrow
(\overline{\mathbf3},\mathbf1)_{-4}\oplus
(\mathbf3,\mathbf2)_1\oplus(\mathbf1,\mathbf1)_6,\nonumber\\
\overline{\mathbf5}&\longrightarrow
(\overline{\mathbf3},\mathbf1)_2\oplus
(\mathbf1,\mathbf2)_{-3}.
\label{eq:reps}
\end{align}
An overbar denotes the complex-conjugate representation. All fermion fields are written as left handed; in particular, $u^c$ denotes the left-handed conjugate of a right-handed up quark. The representation for the three generations is
\begin{equation}
\mathcal R=(\mathbf{10}\oplus\overline{\mathbf5})^{\oplus3}.
\label{eq:rep}
\end{equation}
The Higgs has charges $(\mathbf1,\mathbf2)_3$. This is the minimal Standard Model matter content expressed in $SU(5)$ representation notation: no right-handed neutrino is included, so there are no Dirac neutrino mass terms. The gauge group remains $\SM$.

\subsection{Admissibility and compact regulators}
A uniformly controlled continuum reconstruction requires the plaquette holonomies to remain sufficiently close to the identity. This restriction is the \emph{admissibility condition}. To implement it with compact support inside the reconstruction domain, introduce a nonnegative function
\begin{equation}
\chi_\epsilon(U)=\prod_p w_\epsilon(U_p),
\label{eq:admissibility}
\end{equation}
The product contains each unoriented plaquette once, with a fixed choice of orientation. We choose $w_\epsilon(V^{-1})=w_\epsilon(V)$, so reversing that orientation has no effect and the same cutoff can be used in the site-reflection argument. Here $w_\epsilon$ is continuous, invariant under conjugation $U_p\mapsto gU_pg^{-1}$ (it depends only on gauge-invariant plaquette data, such as $\operatorname{Re}\operatorname{tr}U_p$), strictly positive at the identity, and zero outside a sufficiently small neighborhood of the identity. Its \emph{support}, the closure of the region where it is nonzero, lies strictly within the \emph{geometric} admissibility region $\|\mathbf 1_5-U_p\|<\epsilon_g$ in the fundamental $\mathbf5$ representation, with $\epsilon_g$ fixed in Ref.~\cite[App.~A.5]{craig}. One can enforce this condition by taking the support sufficiently close to $\operatorname{Re}\operatorname{tr}_{\mathbf5}U_p=5$. The argument uses this geometric threshold, not the stronger admissibility assumptions needed to control the overlap measure in Ref.~\cite{craig}. We write $\calU_\epsilon$ for the open geometric link domain containing $\operatorname{supp}\chi_\epsilon$; the subscript labels the chosen regulator, while $\epsilon_g$ fixes the reconstruction threshold.

For the Higgs field choose another nonnegative function,
\begin{equation}
b_{R_H}(\phi)=\prod_x b(|\phi_x|^2/R_H^2),
\label{eq:higgscut}
\end{equation}
where $R_H>0$ is a fixed Higgs cutoff and $b$ is continuous and vanishes outside a bounded interval. The Higgs amplitude is therefore bounded at every site. This cutoff is an independent regulator; integrability over unbounded Higgs configurations is not assumed.

Gauge links are integrated with normalized \emph{Haar measure}, the invariant measure on the compact gauge group. Higgs variables carry the ordinary Lebesgue measure. Because the lattice is finite and $\chi_\epsilon b_{R_H}$ has compact support, the resulting bosonic integral is over a compact, finite-dimensional domain.

\section{Smooth reconstruction of continuum gauge fields}

The continuum Weyl operator requires a smooth gauge connection, not just link matrices. When the gauge bundle is topologically nontrivial, its potential $A_\mu(y)$ is defined in overlapping local frames rather than as one global matrix field; the transition functions identify the gauge indices between frames. This is the familiar need for gauge potentials related by gauge transformations on overlapping spacetime patches.

Craig's Appendix~A~\cite{craig} constructs the continuum bundle and connection used here, in addition to the paper's reconstruction of the overlap fermion measure. The relevant smoothness is joint dependence on spacetime and the original links: the family differential is $d_{\rm tot}=d_y+d_U$. We use the following consequences of that explicit construction.

\begin{lemma}[Smooth gauge-field family; Ref.~\cite{craig}]\label{lem:smooth_reconstruction}
On the geometric admissible set $\calU_\epsilon$, Craig's construction supplies one smooth Hermitian rank-five bundle
\[
 \mathscr E\longrightarrow\calU_\epsilon\times T^4,
 \qquad \mathscr E|_{\{U\}\times T^4}=E[U],
\]
with a smooth unitary family connection whose spacetime restriction is $A[U]$. It preserves the determinant trivialization and the splitting $\mathscr E=\mathscr E_3\oplus\mathscr E_2$. Derivatives of every fixed order are bounded on compact regulated families. Lattice gauge transformations act smoothly by bundle isometries intertwining the spacetime connections. The fibers need not have trivial topology.
\end{lemma}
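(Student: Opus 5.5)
The plan is to read the statement off Craig's explicit construction~\cite[App.~A]{craig}, verifying each clause in turn. The one structural point to establish is that the construction is \emph{smooth in the link variables}, and not merely defined separately for each fixed $U$.

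\textbf{Step 1: bundle data depending smoothly on $U$.} Craig builds the bundle $E[U]$ from lattice-cell patches $\{V_c\}$ covering $T^4$, which are independent of $U$. On each overlap he defines transition functions $v_{cc'}[U](y)\in\SM$. These are built from ordered products of links along lattice paths, combined with the geodesic interpolation $\exp(t\log(\cdot))$ and smooth partition-of-unity weights. On $\calU_\epsilon$ all relevant holonomies satisfy $\|\mathbf 1_5-U_p\|<\epsilon_g$, so the logarithms are analytic there. Hence each $v_{cc'}$ is jointly smooth on $\calU_\epsilon\times(V_c\cap V_{c'})$. Using the same transition functions on $\calU_\epsilon\times V_c$ glues trivial bundles $\calU_\epsilon\times V_c\times\mathbb C^5$ into $\mathscr E$. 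The cocycle condition holds pointwise in $U$, so it holds globally. Two further properties need checking:
\begin{itemize}
\item Since $v_{cc'}$ takes values in $S(U(3)\times U(2))\subset SU(5)$, the bundle $\mathscr E$ inherits a Hermitian metric, a determinant trivialization, and the splitting $\mathscr E_3\oplus\mathscr E_2$.
\item The restriction of $\mathscr E$ to $\{U\}\times T^4$ is $E[U]$ by construction.
\end{itemize}

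\textbf{Step 2: the family connection.} Craig's local potentials $A_c[U]$ are also built from link logarithms and interpolation weights, so they are jointly smooth. Extend them to one-forms on $\calU_\epsilon\times V_c$ by setting
\[
 \mathbb A_c=A_c+v\,d_U v^{-1}\text{-type terms},
\]
or, more simply, define $\mathbb A_c$ by the same formula with $d_y$ replaced by $d_{\rm tot}=d_y+d_U$. Because the gluing rule $A_{c'}=v^{-1}A_cv+v^{-1}d_yv$ holds identically in $U$, the total differential version also holds. This yields a unitary connection on $\mathscr E$ whose spacetime component is $A[U]$, and which is valued in $\mathfrak{su}(3)\oplus\mathfrak{su}(2)\oplus\mathfrak u(1)$. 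It therefore preserves the determinant trivialization and the splitting.

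\textbf{Step 3: bounds on compact families.} Let $K\subset\calU_\epsilon$ be compact, for example $K=\operatorname{supp}\chi_\epsilon$. All data are smooth on $K\times\overline{V_c}$ with $\overline{V_c}$ compact, and there are finitely many patches. Every fixed-order derivative is therefore bounded by continuity.

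\textbf{Step 4: gauge covariance.} For $g\in\SM^{\Lat^0}$, Craig's formulas transform covariantly, $v_{cc'}[U^g]=h_c^{-1}v_{cc'}[U]h_{c'}$, where $h_c[g,U](y)$ is again built from site values and interpolations. Joint smoothness in $(g,U,y)$ follows as in Step 1. The maps $h_c$ define bundle isometries $E[U]\to E[U^g]$ that intertwine $A[U]$ and $A[U^g]$.

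The main obstacle is making sure the interpolated gauge transformations $h_c$ are globally consistent on overlaps and depend smoothly on $(g,U)$. This is delicate because naive interpolation is not a group homomorphism. I would first verify that Craig's construction defines $h_c$ through the same path-ordered link products as $v_{cc'}$, so that covariance holds at the level of the formulas. As a fallback, I would obtain the isometry by parallel transport along straight paths using $\mathbb A$.

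Finally, $c_2$ and $c_1$ of $E[U]$ need only be locally constant on $\calU_\epsilon$, which explains the closing remark of the statement that the fibers need not have trivial topology.
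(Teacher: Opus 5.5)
The outline you follow (joint smoothness in $(y,U)$, the same connection formula with $d_y$ replaced by $d_{\rm tot}=d_y+d_U$, compactness for the derivative bounds, explicit intertwiners for covariance) is the paper's outline. The gap is that the construction you check is not Craig's, so your argument does not prove the lemma for the connection $A[U]$ to which the fermions couple.

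Craig does not build $E[U]$ from transition functions made of link logarithms and $\exp(t\log(\cdot))$ interpolation. In the form the paper uses (Appendix~\ref{app:family_reconstruction} and the proof of Proposition~\ref{prop:topological_realization}), the fiber sits inside the vertex-fiber space $\bigoplus_i\mathbb C^5$ on the cover:
\begin{itemize}
\item The inputs are smooth copying factors $\varphi_i(y)$ with $\sum_i\varphi_i^2=1$, and short transports $u_{ij}(U)$. Each $u_{ij}$ is the determinant-normalized unitary polar factor of the average of coordinate-ordered path products.
\item The Hermitian matrix $K_{ij}(y,U)=\varphi_iu_{ij}\varphi_j$ is $O(\epsilon)$-close to a rank-five projector.
\item $E[U]$ is the range of the Riesz projector $\Pi=\frac1{2\pi i}\oint_{|z-1|=1/2}(z-K)^{-1}dz$.
\end{itemize}
The idea you are missing is that joint smoothness of $\Pi$ in $(y,U)$, guaranteed by the admissibility gap, is what makes these ranges a smooth bundle $\mathscr E$. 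The remaining properties then come as follows:
\begin{itemize}
\item The $3+2$ splitting holds because $K$ is block diagonal.
\item The determinant trivialization is not automatic. It is Craig's normalized volume $\sigma$, built from the $\wedge^5 w_k$ of the frames $w_k=\Pi\mathscr I_k(\mathscr I_k^\dagger\Pi\mathscr I_k)^{-1/2}$, where $(\mathscr I_k)_i=\varphi_iu_{ik}$.
\item The family connection is $\nabla_{\rm tot}=\Pi d_{\rm tot}-\frac15(\sigma,d_{\rm tot}\sigma)$.
\item Descent to $T^4$ uses the fact that period shifts conjugate $K$. Your proposal omits this step.
\end{itemize}
Transition functions do exist here, namely the overlap cocycles of the $w_k$, but their smoothness is inherited from $\Pi$.

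Your own smoothness mechanism also fails as stated. Admissibility controls only plaquette (closed-loop) holonomies. Logarithms of link products along open paths are therefore neither controlled on $\calU_\epsilon$ nor gauge covariant.

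This is why your Step~4 stalls on ``interpolated $h_c$''. In the actual construction the problem does not arise. A gauge transformation $g\in\SM^{\Lat^0}$ acts by the $y$-independent unitary $\mathcal G=\bigoplus_i g_i$, with $u_{ij}\mapsto g_iu_{ij}g_j^{-1}$. Hence
\begin{itemize}
\item $K[U^g]=\mathcal G K[U]\mathcal G^{-1}$ and $\Pi[U^g]=\mathcal G\Pi[U]\mathcal G^{-1}$;
\item $w_k[U^g]=\mathcal G w_k[U]g_k^{-1}$, so the frames change only by constant site values;
\item since $\det g_k=1$, $\wedge^5\mathcal G$ maps $\sigma[U]$ to $\sigma[U^g]$.
\end{itemize}
Therefore $\mathcal G|_{\operatorname{ran}\Pi}$ is a smooth bundle isometry that intertwines the projected connections and obeys the group law automatically.

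Your fallback would not give this. Parallel transport by $\mathbb A$ along a family path from $U$ to $U^g$ intertwines $A[U]$ and $A[U^g]$ only if the mixed $(U,y)$ curvature of the family connection vanishes along that path, and this fails in general.

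Finally, local constancy of $c_1$ and $c_2$ does not explain why nontrivial fibers occur. Nontrivial fibers come from the deck identifications of the projector construction, and Proposition~\ref{prop:topological_realization} shows that every smooth $\SM$ bundle occurs on sufficiently fine lattices.
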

The inputs are Craig's Eqs.~(A.12)--(A.14) for the bundle and frames, Eqs.~(A.15)--(A.18) for the family connection and bounds, and Eqs.~(A.21)--(A.23) with the end of A.3 for covariance, torus descent and the splitting. Appendix~\ref{app:family_reconstruction} explains the family-bundle statement. The reconstructed parallel transports reproduce the input link matrices to $O(\epsilon)$, not exactly. This is sufficient here: the fermions couple to the \emph{reconstructed} connection. A ``compact smooth family'' below means a set of these backgrounds confined away from the boundary of the admissible region, so that the needed field derivatives are uniformly bounded at fixed lattice spacing. The topology statement follows from the local frames and torus identifications of Craig's construction; it is not presented as a separate theorem in Ref.~\cite{craig}. Further references, curvature conventions, and interpolation comparisons are given in Appendix~\ref{app:geometry}.

The Higgs doublet must also become a smooth section of the weak bundle $E_2$. Craig does not reconstruct scalar fields; we extend the local-frame construction by transporting site doublets and averaging with a fixed smooth partition of unity (overlapping nonnegative weights that sum to one):
\begin{equation}
\Phi[U,\phi](y)=\sum_i\rho_i(y)C_i(y)j_2\phi_i\in(E_2)_y.
\label{eq:higgsinterp}
\end{equation}
Here $C_i$ are Craig's smooth local frames and $j_2:\mathbb C^2\hookrightarrow\mathbb C^5$ includes the weak block. This interpolation is smooth, gauge covariant, and linear in the site doublets: $\Phi[U,t\phi]=t\Phi[U,\phi]$. The verification, including its uniform family choice, is in Appendix~\ref{app:geometry}.

For the subsequent arguments, the only required geometric facts are the smooth gauge-covariant families $(A[U],\Phi[U,\phi])$, preservation of the Standard Model representations, and consistent nontrivial bundle topology. We write $D_U$ for the massless chiral Dirac operator on $A[U]$. The notation $D_{U,\Phi}$ in Figure~\ref{fig:overview} is schematic; Section~\ref{sec:yukawa_full} specifies the coupled operator $\mathcal D_Y$ and its fermion field ordering.

\section{Continuum Weyl fermions and determinant lines}

\subsection{The chiral Dirac operator}
For a reconstructed connection $A[U]$, the massless Weyl action takes the form
\begin{equation}
S_F=\int_X d^4y\;\bar\psi\,D_A^+\psi,
\qquad
D_A^+:\Gamma(S^+\otimes E_{\mathcal R})
\longrightarrow\Gamma(S^-\otimes E_{\mathcal R}).
\label{eq:dirac}
\end{equation}
Here $S^\pm$ denote the positive- and negative-chirality spinor bundles, $E_{\mathcal R}$ carries gauge indices in the quark--lepton representation $\mathcal R$, and $\Gamma$ denotes smooth fields of the indicated type (sections). The domain consists of fields with a left-handed spinor index and a gauge index in $\mathcal R$; $D_A^+$ maps them into fields with the opposite chirality and the same gauge representation. The field $\bar\psi$ is an independent conjugate Grassmann variable in Euclidean signature.

For finitely many Grassmann variables, Gaussian integration produces a determinant:
\begin{equation}
\int d\bar\psi\,d\psi\;e^{-\bar\psi M\psi}=\det M.
\label{eq:matrixdet}
\end{equation}
The continuum operator $D_A^+$ has infinitely many modes, so its formal determinant requires regularization. Moreover, a phase assigned independently at each background need not vary smoothly or respect gauge transformations. A consistent definition must address both regularization and phase dependence.

\subsection{Determinant lines and their geometric data}
Fix one spectral scale $\mu>0$ for all regularized determinants below; it sets the common subtraction convention.

A finite-dimensional example explains the need for this formalism. Let $M:V\to W$ be a linear map between two $n$-dimensional complex vector spaces, analogous to a chiral operator mapping one spinor space into another. Choose bases in $V$ and $W$ to represent $M$ by a matrix. If these bases change by matrices $g$ and $h$, respectively, then $M$ changes to $h^{-1}Mg$ and its numerical determinant changes by $\det g/\det h$. Thus a numerical determinant depends on the two basis choices. The basis-independent object is the induced map on the one-dimensional spaces of top-degree antisymmetric tensors, $\det M:\det V\to\det W$, or equivalently an element of $\det V^*\otimes\det W$. This one-dimensional space is the finite-dimensional example of a \emph{determinant line}. For a vector-like operator acting on the same space, the same basis can be chosen on both sides and the factors cancel; for a chiral operator the spaces differ, and a consistent phase choice is required.

For a continuum chiral Dirac operator, the determinant line is constructed from its finite-dimensional spaces of zero modes, together with a regularized contribution from nonzero modes. In one standard dual convention it is $\mathcal L_U=(\bigwedge^{\mathrm{top}}\ker D_A^+)^*\otimes\bigwedge^{\mathrm{top}}\ker((D_A^+)^\dagger)$. Here $\bigwedge^{\mathrm{top}}$ is the one-dimensional space of completely antisymmetric tensors of all zero modes. The inverse-line convention is also common; throughout the paper we keep one choice fixed. As the gauge background varies, these one-dimensional complex spaces form a line bundle over the space of connections. The regularized determinant is naturally a \emph{section} $s_D$: at each background $U$ it belongs to the corresponding one-dimensional line $\mathcal L_U$, rather than being an independently chosen complex number. Identifying these lines consistently with $\mathbb C$ turns the section into a complex-valued fermion weight. The issue resembles choosing phases for a family of quantum states: the phase at one point is arbitrary, but the choices at different points cannot be made independently if they are to vary smoothly and respect symmetry.

Three associated structures enter the construction:
\begin{itemize}
\item The \emph{determinant section} $s_D$ represents the regularized fermion Gaussian integral before a global phase convention is chosen. It is defined even when zero modes appear, although it vanishes when an unsaturated zero mode is present.
\item The \emph{Quillen metric} is a rule for measuring the length (absolute magnitude) of the complex determinant-line element, without choosing its phase. In physics terms it is the positive spectral determinant of the fermion kinetic operator, defined by analytic continuation of a spectral sum. On an invertible background, in the normalization adopted here,
\[
\|s_D(A)\|_Q^2=\det_{\zeta}\bigl((D_A^+)^\dagger D_A^+/\mu^2\bigr).
\]
Here $\det_\zeta P=\exp[-\zeta_P'(0)]$ with $\zeta_P(z)=\operatorname{Tr}(P^{-z})$ analytically continued to $z=0$ (zero eigenvalues are omitted and treated in the finite low-mode factor). For a free fermion on a torus the eigenvalues of $P$ are essentially $p^2$, so $\zeta_P$ is a sum of inverse powers of squared momenta: it converges at sufficiently large $\operatorname{Re}z$ and is then analytically continued. In that sense it plays a role analogous to analytic or dimensional regularization of a loop integral, although the prescriptions are not identical. This defines the gauge-only determinant using all modes, rather than deleting modes above a fixed energy. It produces a finite positive magnitude from the infinite product of nonzero singular values. The determinant-line convention supplies the separate phase.
\item The \emph{Bismut--Freed connection} transports determinant phases when the gauge background varies. It is the analogue of a \emph{Berry connection} on the space of gauge fields. Its curvature is determined by the local anomaly polynomial; its \emph{holonomy}, the phase accumulated on a closed loop, is analogous to a Berry phase and can detect a global anomaly.
\end{itemize}
The determinant section answers \emph{what is being integrated}; the Quillen metric specifies its \emph{magnitude}; and the Bismut--Freed connection describes how a \emph{phase} changes with the background. These structures arise from the determinant-line construction initiated in Ref.~\cite{quillen} and from the subsequent theory of smooth families of elliptic operators \cite{bf1,bf2}. For a Dirac operator, \emph{ellipticity} means that the highest-derivative kinetic matrix, proportional to $\not\!p$, is invertible for $p\neq0$. On compact spacetime this ensures that the kernel of $D_A^+$ and the kernel of its adjoint are finite-dimensional; this is the relevant \emph{Fredholm} property. The nonnegative self-adjoint operators $D_A^{+\dagger}D_A^+$ and $D_A^+D_A^{+\dagger}$ have discrete spectra.

The \emph{index} is the difference between the dimensions of the kernel and cokernel:
\begin{equation}
\ind D_A^+=\dim\ker D_A^+-\dim\ker((D_A^+)^\dagger).
\label{eq:index}
\end{equation}
\noindent\emph{Physical interpretation.} The chiral determinant behaves like a complex amplitude whose phase can change as the gauge background varies. The Quillen metric defines its magnitude. Gauge invariance also requires the phase to return consistently when a background is transported around a closed loop.\par

The index counts the difference between the two types of chiral zero modes: the zero modes of $D_A^+$ and those of its adjoint (often called the kernel and cokernel). It is invariant under continuous deformations of the background within a fixed topological sector. If it is nonzero, $D_A^+$ is noninvertible throughout that sector, and the zero-source determinant section vanishes identically there. Fermionic insertions can nevertheless saturate the zero modes and yield nonvanishing correlators (Section~\ref{sec:corr}).

\subsection{Anomaly cancellation and global phase consistency}
A gauge \emph{anomaly} is an obstruction to maintaining gauge invariance in the quantum fermion measure. Two distinct obstructions must be considered:
\begin{itemize}
\item A \emph{local} (perturbative) anomaly occurs under infinitesimal gauge transformations. In determinant-line theory, it appears as curvature of the phase connection, obstructing path-independent transport.
\item A \emph{global} anomaly may persist when the curvature vanishes, through a nontrivial phase acquired around a closed loop in gauge-field space.
\end{itemize}
In four dimensions, the local gauge anomaly is obtained by the descent procedure from a six-form \emph{anomaly polynomial}. Its gauge terms are proportional to $\operatorname{tr}_{\mathcal R}F^3$, with a mixed gauge--gravitational term proportional to $\operatorname{tr}_{\mathcal R}F\,\operatorname{tr}R^2$. The descent of the cubic term gives the familiar perturbative triangle anomaly. For the $SU(5)$ representation $\mathbf{10}\oplus\overline{\mathbf5}$, the cubic gauge anomaly cancels,
\begin{equation}
\operatorname{tr}_{\mathbf{10}}F^3+
\operatorname{tr}_{\overline{\mathbf5}}F^3=0,
\label{eq:anomalypoly}
\end{equation}
and the mixed gauge--gravitational term vanishes because the representation is traceless. The invariant-polynomial identity also applies to the full curvature of the family connection, including its mixed spacetime--parameter components. With the spacetime geometry fixed, the Bismut--Freed curvature therefore vanishes and the phase-transport connection is flat.

A flat connection need not have trivial holonomy: vanishing local curvature still allows a nontrivial Berry phase on a noncontractible closed loop. The Dai--Freed theorem relates the possible phase acquired around a closed loop to a regularized spectral asymmetry, the \emph{eta invariant}, of a five-dimensional Dirac operator \cite{daifreed}. For a loop of four-dimensional connections $A(t)$, the fifth coordinate is the loop parameter $t$. Identifying its endpoints produces a five-dimensional \emph{mapping torus}; if the endpoint fields differ by a gauge transformation, the same transformation is used in this identification. The spectrum of the five-dimensional operator need not be symmetric between positive and negative eigenvalues. Its regulated asymmetry measures the corresponding determinant-line holonomy. The relation to the spectral-phase prescription of Ref.~\cite{hsu} is described below.

A familiar example of the distinction is Witten's global $SU(2)$ anomaly~\cite{witten_su2}. A single left-handed fermion doublet has no ordinary perturbative $SU(2)$ triangle anomaly, but transport along a path from $A$ to a gauge-equivalent background $A^g$ can reverse the sign of its fermion Pfaffian. The associated group is $\Omega_5^{\mathrm{Spin}}(BSU(2))\cong\mathbb Z_2$: the discrete sign is invisible to the local anomaly calculation.

Once the perturbative anomaly polynomial vanishes, the phase $\exp(2\pi i\bar\eta_5)$ associated with the five-dimensional operator depends only on the \emph{spin-bordism class} of the five-dimensional spacetime and its gauge bundle. In physical terms, the residual phase depends only on discrete topological information, not on a smooth deformation of the background. Two configurations are bordism equivalent when together they form the boundary of a six-dimensional spin manifold with compatible gauge fields. For the present representation the relevant result is
\begin{equation}
\Omega_5^{\mathrm{Spin}}(BSU(5))=0.
\label{eq:bordism}
\end{equation}
Here $BSU(5)$ is the mathematical space used to classify $SU(5)$ gauge bundles. Equation~\eqref{eq:bordism} means that every closed five-dimensional spin configuration with such a bundle bounds an appropriate six-dimensional configuration. Since the local anomaly also vanishes, the exponentiated eta phase is then unity.\footnote{Formally, $\Omega_5^{\mathrm{Spin}}(BSU(5))$ is the spin-bordism group of pairs consisting of a closed five-manifold with spin structure and an $SU(5)$ principal bundle. The anomaly defines a $U(1)$-valued bordism invariant once the local anomaly polynomial vanishes.} The Standard Model fermion representation is the restriction of $\mathbf{10}\oplus\overline{\mathbf5}$. More explicitly, a principal $\SM$ bundle $P$ induces the $SU(5)$ bundle $P\times_{\SM}SU(5)$ under the faithful inclusion $\SM\hookrightarrow SU(5)$; the associated fermion bundles and their Dirac operators are the same after restriction. Hence the Standard Model anomaly phase is the pullback of the trivial $SU(5)$ anomaly phase. This argument does not require a separate computation of the Standard Model spin-bordism group. Independently, the direct calculation of Davighi, Gripaios, and Lohitsiri~\cite{davighi_gripaios_lohitsiri} gives
\begin{equation}
\Omega_5^{\mathrm{Spin}}\bigl(BS(U(3)\times U(2))\bigr)=0.
\label{eq:sm_bordism}
\end{equation}
Unlike the representation-specific $SU(5)$ pullback, this vanishing rules out residual spin-bordism global gauge anomalies for any perturbatively anomaly-free fermion content representing the same faithful gauge group. It depends on the global form of the gauge group: some other groups with the same Standard Model Lie algebra have nontrivial fifth spin-bordism. The holonomy is therefore trivial on every loop in the reconstructed family, including loops not induced by a gauge transformation~\cite{daifreed,gem}.

\begin{lemma}[Gauge-invariant determinant phase]\label{lem:determinant_phase}
For $\mathcal R=(\mathbf{10}\oplus\overline{\mathbf5})^{\oplus3}$, the determinant line over admissible lattice backgrounds has a smooth, gauge-equivariant flat phase convention. In particular, this convention gives continuous scalar representatives of the determinant section and its source coefficients. The vacuum coefficient is gauge invariant; source coefficients transform with their external gauge indices.
\end{lemma}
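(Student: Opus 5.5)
The plan is to build the phase convention as a flat, gauge-invariant trivialization of the determinant line bundle $\mathcal L\to\calU_\epsilon$. First we pull back the Bismut--Freed theory to the reconstructed family. By Lemma~\ref{lem:smooth_reconstruction}, $\{D_{A[U]}^+\}_{U\in\calU_\epsilon}$ is a smooth family of Dirac operators on the fibers of $\mathscr E\to\calU_\epsilon\times T^4$, twisted by $E_{\mathcal R}$. Quillen's construction~\cite{quillen} therefore gives a smooth line bundle $\mathcal L$ with a smooth section $s_D$, a Quillen metric, and a unitary Bismut--Freed connection $\nabla^{BF}$~\cite{bf1,bf2}. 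The local charts are built from spectral cuts $\lambda\notin\operatorname{spec}(D^{+\dagger}D^+)$, and the charts are glued by finite-dimensional determinants, so $s_D$ is smooth even where zero modes appear. The same construction with smeared source insertions gives the source coefficients as smooth sections of $\mathcal L$ tensored with the external bundles.

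Next we show that $\nabla^{BF}$ is flat. The Bismut--Freed curvature formula gives it as the two-form component of $\int_{T^4}\hat A(R)\operatorname{ch}(\mathcal F_{\mathscr E_{\mathcal R}})$, where $\mathcal F$ is the curvature of the family connection. Its degree-six part is built from $\operatorname{tr}_{\mathcal R}\mathcal F^3$ and $\operatorname{tr}_{\mathcal R}\mathcal F\,\operatorname{tr}R^2$. The identity~\eqref{eq:anomalypoly} and tracelessness are identities of invariant polynomials, so they hold for the mixed family curvature as well. With the metric fixed, the curvature of $\nabla^{BF}$ therefore vanishes.

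The key step is to show that the holonomy is trivial, and this is where I expect the main obstacle. For a closed loop $\gamma$ in $\calU_\epsilon$, or in its quotient by the lattice gauge group, the Dai--Freed theorem~\cite{daifreed} identifies $\operatorname{hol}_\gamma(\nabla^{BF})$ with $\exp(-2\pi i\bar\eta_5)$ on the mapping torus $S^1\times T^4$. This torus carries the bundle $\mathscr E|_\gamma$, glued by the gauge transformation when $\gamma$ closes only up to gauge. Because the anomaly polynomial vanishes, this phase is a bordism invariant. By \eqref{eq:bordism} (or \eqref{eq:sm_bordism}), together with the pullback along $\SM\hookrightarrow SU(5)$, it equals $1$. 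The delicate parts are three. The mapping torus must be a smooth spin five-manifold with a smooth $\SM$ bundle, which uses smoothness of the family and of the gauge action from Lemma~\ref{lem:smooth_reconstruction}. The sign conventions of Dai--Freed must match. The loops must be allowed to pass through non-invertible backgrounds, which is handled because Dai--Freed applies to the determinant line, not to $s_D$.

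Finally we construct the trivialization. We fix a reference $U_0$ and a unit vector $e_0\in\mathcal L_{U_0}$. Because $\calU_\epsilon$ is connected (or else we take one reference per component) and the holonomy is trivial, parallel transport defines a global flat unit section $e$. We set $\det D_U:=s_D(U)/e(U)$, which is continuous and in fact smooth, and handle the source coefficients in the same way. For gauge equivariance we lift $g\in\SM^{\Lat^0}$ to $\mathcal L$ through the bundle isometries of Lemma~\ref{lem:smooth_reconstruction}; this lift commutes with $\nabla^{BF}$. Then $g\cdot e$ is flat, so $g\cdot e=c(g)e$ with $c(g)$ constant on the connected space. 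The map $g\mapsto c(g)$ is a character of the connected group $\SM^{\Lat^0}$. Its value equals the holonomy of $\nabla^{BF}$ on the mapping torus from $U$ to $U^g$ glued by $g$, which is trivial by the previous step, so $c\equiv1$. Hence the vacuum coefficient is invariant. Source coefficients pick up only the explicit action of $g$ on their external indices.
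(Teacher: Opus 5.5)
Your proposal is correct and follows essentially the paper's argument. Both proceed the same way: the Bismut--Freed connection is flat because the anomaly polynomial vanishes, including for the mixed family curvature. Ordinary and gauge-closed holonomy are trivial by Dai--Freed and $\Omega_5^{\mathrm{Spin}}(BSU(5))=0$. Parallel transport from a unit reference is then path independent, and equivariance, including at stabilizers, follows from $\Phi_g^{-1}\operatorname{PT}_\gamma=1$.

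There are two differences. The paper fixes one reference per induced $SU(5)$ bundle class by transport over the full affine space of ambient connections, rather than one per admissible component; yours also suffices for this lemma. And the Dai--Freed sign you flagged is settled there by noting that the conjugation convention is irrelevant for trivial holonomy and that the parity factor is one because $\ind D=-12k$ is even.
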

\begin{proof}
The Bismut--Freed curvature vanishes by~\eqref{eq:anomalypoly}. With this local anomaly absent, Dai--Freed holonomy is a spin-bordism invariant and is trivial by~\eqref{eq:bordism}. This applies both to ordinary loops and to paths closed by gauge transformations. Parallel transport from a unit reference is therefore path independent. For a path $\gamma:U\to U^g$ and induced line map $\Phi_g$, gauge-closed holonomy is $\Phi_g^{-1}\operatorname{PT}_\gamma=1$, which gives equivariance, including at stabilizers. Appendix~\ref{app:phase_references} spells out this step and fixes the references by ambient $SU(5)$ transport.
\end{proof}
\noindent\emph{Physical interpretation.} The chiral determinant can be assigned a phase continuously and gauge invariantly throughout the admissible configuration space, without a perturbative or global gauge-anomaly obstruction.

\subsection{The fermion weight and the eta-invariant prescription}
\label{sec:hsu_eta}
The preceding lemma ensures that the determinant can be represented by a scalar rather than by a vector in a background-dependent complex line. Let $\tau_U:\mathcal L_U\to\mathbb C$ be the \emph{phase convention} specified below: a rule for converting the determinant-line element into a complex number without changing its magnitude (a unitary \emph{trivialization} with respect to the Quillen metric). The regulated chiral fermion weight is
\begin{equation}
\boxed{W_0(U)=\tau_U\bigl(s_D(A[U])\bigr).}
\label{eq:weight}
\end{equation}
The section $s_D$ contains the full regularized determinant; $\tau_U$ fixes how its phase is expressed as a complex number. At an invertible background one may write $W_0(U)=|W_0(U)|e^{i\varphi(U)}$. In the chosen Quillen normalization, $|W_0(U)|$ is the norm $\|s_D(A[U])\|_Q$, obtained by spectral regularization of the associated positive operator $D_A^{+\dagger}D_A^+$. The phase $\varphi(U)$ need not vanish or be constant, even though the anomaly connection is flat. Flatness asserts that the chosen phase convention is globally consistent; it does not make the chiral determinant real.

Gauge covariance of the reconstruction and equivariance of $\tau$ give
\begin{equation}
W_0(U^g)=\tau_{U^g}\bigl(s_D(A[U^g])\bigr)
=\tau_U\bigl(s_D(A[U])\bigr)=W_0(U).
\label{eq:gaugeinvariant}
\end{equation}
A gauge transformation maps both the zero modes and the nonzero modes at $U$ to those at $U^g$. This map identifies their determinant lines, and it is precisely the identification used in the middle equality; the phases are not chosen independently at the two backgrounds. The weight extends continuously to noninvertible backgrounds, where the zero-source section vanishes. These are the precise properties required in the finite-volume theorem.

\paragraph{Relation to the original spectral proposal.}
Ref.~\cite{hsu} defined a chiral phase using overlaps of four-dimensional eigenmodes and discussed its expected relation to a five-dimensional eta invariant. The precise overlap conventions and the unproved identification of the two prescriptions are reviewed in Appendix~\ref{app:hsu_comparison}. Neither identification is needed to define $W_0$ or to prove the finite-volume existence theorem.

\paragraph{Relative phases between topological sectors.}
We compare phases by transport over all smooth connections on the induced $SU(5)$ bundle, then restrict to the reconstructed family. One unit determinant-line reference is fixed per induced bundle class. Disconnected admissible components in the same class therefore share a convention, rather than receiving independent phases. With antiperiodic temporal fermions, we use the positive real twelfth-power references on the weak-$SU(2)$ locus specified in Appendix~\ref{app:phase_references} and checked in F.5. For other spin structures, reference vectors remain specified input data of the regulator. Theta terms are additional choices relative to this baseline: \textquotedblleft zero theta\textquotedblright\ means that their coefficients vanish in the chosen convention.

Only at zero Yukawa coupling does the vacuum selection rule restrict all potentially contributing backgrounds to the trivial induced $SU(5)$ bundle, leaving one common vacuum phase. Fermionic insertions can reach other classes. With nonzero Yukawa couplings, pure-color sectors of nonzero charge can also contribute to the vacuum, so their reference phases matter there as well. Section~\ref{sec:phase_checks} distinguishes this baseline from the conventional positive-mass normalization used to identify the physical strong-CP angle.

\section{Finite-lattice theorem at zero Yukawa coupling}

The regulated partition function is
\begin{equation}
\boxed{\begin{aligned}
Z_{a,\epsilon,R_H}={}&\int\!\prod_{\ell\in\Lat^1}dU_\ell
\int\!\prod_{x\in\Lat^0}d\phi_x\;\chi_\epsilon(U)b_{R_H}(\phi)\\[-2pt]
&\hspace{24mm}\times e^{-S_{gH}(U,\phi)}W_0(U).
\end{aligned}}
\label{eq:Z}
\end{equation}
The action $S_{gH}$ is any continuous, real, gauge-invariant gauge--Higgs lattice action, such as the Wilson--Higgs action. The fermion weight $W_0$ does not include Yukawa interactions. In~\eqref{eq:Z}, $\chi_\epsilon W_0$ is extended by zero outside $\calU_\epsilon$. This extension is continuous because $\operatorname{supp}\chi_\epsilon$ is compact and contained in that open reconstruction domain.

\begin{theorem}[Zero-Yukawa existence and exact gauge invariance]\label{thm:zero_yukawa_existence}
Fix the finite lattice, $a>0$, admissibility cutoff $\epsilon$, Higgs bound $R_H$, spin structure, smooth reconstruction, and anomaly-free determinant phase convention. The integral~\eqref{eq:Z} converges absolutely to a finite complex number. Its integrand is invariant under all lattice gauge transformations. The same holds after inserting any bounded, continuous gauge-invariant function of the bosonic variables. At zero Yukawa coupling, a nonzero index of any separate Weyl species forces the vacuum weight to vanish.
\end{theorem}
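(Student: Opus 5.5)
The plan is to reduce absolute convergence to continuity of a compactly supported integrand on a compact, finite-dimensional domain. The integration variables are finitely many links $U_\ell\in\SM$, each on a compact manifold with normalized Haar measure, and finitely many doublets $\phi_x\in\mathbb C^2$ with Lebesgue measure. The factor $b_{R_H}(\phi)$ vanishes outside a bounded set, so the Higgs integration may be restricted to a compact ball. It then suffices to show that the integrand
\[
\chi_\epsilon(U)\,b_{R_H}(\phi)\,e^{-S_{gH}(U,\phi)}\,W_0(U)
\]
is continuous, hence bounded, on $\SM^{\Lat^1}\times(\mathbb C^2)^{\Lat^0}$. The factors $\chi_\epsilon$, $b_{R_H}$ and $e^{-S_{gH}}$ are continuous by assumption. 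For $W_0$, Lemma~\ref{lem:smooth_reconstruction} gives a smooth family $A[U]$ over $\calU_\epsilon$. Lemma~\ref{lem:determinant_phase} gives continuous scalar representatives of the determinant section, including across noninvertible backgrounds, where the section vanishes. So $W_0$ is continuous on $\calU_\epsilon$. Since $\operatorname{supp}\chi_\epsilon$ is compact and lies inside the open set $\calU_\epsilon$, the product $\chi_\epsilon W_0$, extended by zero, is continuous on all of $\SM^{\Lat^1}$. A continuous function with compact support is integrable against a finite measure, which gives absolute convergence. The same argument applies after multiplying by any bounded continuous function of the bosonic variables.

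Gauge invariance is checked factor by factor. Under $g\in\SM^{\Lat^0}$, plaquettes transform by conjugation, $U_p\mapsto g_xU_pg_x^{-1}$, so $\chi_\epsilon$ is invariant because $w_\epsilon$ is a class function. The Higgs cutoff depends only on $|\phi_x|^2$, which is invariant under the unitary action on the doublet. $S_{gH}$ is gauge invariant by hypothesis, and $W_0(U^g)=W_0(U)$ is exactly~\eqref{eq:gaugeinvariant}. That identity uses the equivariance in Lemma~\ref{lem:determinant_phase} together with the covariance of the reconstruction, under which gauge transformations act by bundle isometries intertwining $A[U]$ and $A[U^g]$. One further remark belongs in the proof: invariance of the measure. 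Haar measure is left- and right-invariant, so it is preserved by $U_\ell\mapsto g_xU_\ell g_y^{-1}$, and the unitary action on $\phi_x$ preserves Lebesgue measure. Hence the integral itself is invariant, not just the integrand.

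For the final claim, the index is a deformation invariant of the Fredholm operator. Suppose the index of some separate Weyl species $D^+_{A,r}$ is nonzero. Then either its kernel or its cokernel is nontrivial at every background in that topological sector. The determinant line of $\mathcal R$ is the tensor product of the species lines, and the section $s_D$ factorizes accordingly. The factor coming from the offending species is the zero section, because a nonzero index means the zero-source Grassmann integral leaves an unsaturated zero mode. So $s_D=0$, and therefore $W_0(U)=\tau_U(0)=0$ throughout that sector.

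The main obstacle is continuity of $W_0$ across loci where zero modes appear and disappear within a sector of zero total index. Quillen-metric continuity there is standard: the standard construction uses spectral cuts, finite low-mode determinants, and zeta-regularized high-mode determinants that patch smoothly. I would rely on that result, combined with the flat trivialization of Lemma~\ref{lem:determinant_phase}, rather than reprove it.
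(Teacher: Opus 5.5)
Your proposal is correct and follows the paper's proof essentially step for step. Absolute convergence comes from compactness of the regulated support together with continuity of $W_0$, which follows from Lemmas~\ref{lem:smooth_reconstruction}--\ref{lem:determinant_phase} and standard determinant-line theory. Gauge invariance comes from invariance of Haar and Lebesgue measure, the cutoffs, $S_{gH}$ and Eq.~\eqref{eq:gaugeinvariant}, and the vanishing claim is the index remark following Eq.~\eqref{eq:index}. Your species-wise formulation of that last step is the right one: a nonzero kernel or cokernel of any one block makes the full direct-sum operator noninvertible, even in sectors where the total index $-12k$ vanishes.
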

\begin{proof}
There are finitely many links and sites. Each link belongs to a compact matrix group. The Higgs cutoff makes every Higgs integration range bounded and closed after taking the support. Therefore the combined region on which the integrand can be nonzero is compact. Lemmas~\ref{lem:smooth_reconstruction}--\ref{lem:determinant_phase} and determinant-line theory make the physical chiral weight $W_0(U)$ continuous on this region, so its absolute value has a finite maximum. The action and cutoff functions are also continuous and bounded there. The integral of the absolute value is therefore bounded by the supremum of the integrand times the finite measure of its support. Gauge invariance follows from invariance of Haar measure, Higgs Lebesgue measure under the unitary gauge action, the cutoffs and the bosonic action, together with~\eqref{eq:gaugeinvariant}. Nonzero index forces the vacuum determinant to vanish, as explained after~\eqref{eq:index}.
\end{proof}
\noindent\emph{Physical interpretation.} The substantive inputs are the smooth family of Weyl operators and its genuine gauge-invariant chiral determinant; compactness then gives integrability. Fermionic insertions can saturate zero modes in sectors excluded from the vacuum integral, as treated by Theorem~\ref{thm:smeared_insertions}. Vanishing species indices alone do not prove that the vacuum weight is nonzero.

\paragraph{Scope of the theorem.}
The theorem establishes the existence of the integral but does not determine its value. In particular, it does not establish $Z\ne0$, reality or positivity of the integrand, locality of the induced fermionic action in the original lattice links, or a controlled $a\to0$ limit. The infinitely many fermion modes are treated by the specified continuum determinant-line regularization, not by a finite matrix cutoff. Appendix~\ref{sec:collar_variant} transfers these existence and source-finiteness results to a variant gauge field reconstruction and proves its site-reflection positivity at zero Yukawa coupling and zero theta.

\section{Fermion correlators and zero modes}\label{sec:corr}

\subsection{Fixed-background propagators}
Fermion correlation functions are obtained by introducing Grassmann sources $\eta,\bar\eta$. For a background at which the chiral Dirac operator $D_U=D^+_{A[U]}$ is invertible, Gaussian integration gives
\begin{equation}
\cZ_F[\bar\eta,\eta;U]=W_0(U)\exp\!\left(
\langle\bar\eta,D_U^{-1}\eta\rangle\right).
\label{eq:sources}
\end{equation}
The bracket denotes contraction of the source fields with the Green kernel of $D_U$, including spin and gauge indices. Functional differentiation generates the propagators and Wick contractions at fixed background. Physical expectation values require subsequent integration over the bosonic configurations.

To avoid singularities associated with coincident-point insertions, we smear fermion fields against smooth test functions: in momentum language, the sources have controlled high-momentum tails rather than probing a field at a mathematical point. Take $f_U$ in the positive-chirality space $L^2(S^+\otimes E_{\mathcal R}[U])$ and $g_U$ in the negative-chirality space $L^2(S^-\otimes E_{\mathcal R}[U])$. The inverse maps $g_U$ back to positive chirality, where it pairs with $f_U$; our inner product is antilinear in its first slot. The profiles are smooth in spacetime and continuous as the links vary, after local unitary identifications of the bundles (Section~\ref{sec:source_families}). Define the \emph{unnormalized} two-point insertion on invertible backgrounds by
\begin{equation}
I_{f,g}(U)=W_0(U)\,\langle f_U,D_U^{-1}g_U\rangle.
\label{eq:minor}
\end{equation}
The determinant factor in $I_{f,g}$ is essential. The bare inverse $D_U^{-1}$ is undefined at zero modes, whereas the determinant-weighted insertion admits a continuous extension. This is the familiar cancellation between a vanishing fermion determinant and singular propagator factors when external fields saturate a zero mode; the resulting insertion is not obtained by dividing two separately vanishing amplitudes.

The finite-dimensional analogue is the adjugate identity. For invertible $M$,
\begin{equation}
(\det M)(M^{-1})_{ij}=(\operatorname{adj}M)_{ij}.
\label{eq:adj}
\end{equation}
The adjugate matrix is polynomial in the entries of $M$ and remains defined when $\det M=0$. For several source pairs, Wick contractions give $\det M$ times a determinant of propagator entries, rather than an unantisymmetrized product. Their continuum analogues are regularized \emph{determinant minors}. Finite-dimensional reduction defines them directly at zero modes and at nonzero index; on invertible backgrounds they agree with the usual Wick expressions. Like the determinant itself, they initially take values in a determinant line.

For a gauge-invariant insertion, the bosonic average is
\begin{equation}
\mathcal C_{f,g}=\int d\mu_B(U,\phi)\;I_{f,g}(U),
\quad d\mu_B=\prod_\ell dU_\ell\prod_xd\phi_x\,
\chi_\epsilon b_{R_H}e^{-S_{gH}}.
\label{eq:correlator}
\end{equation}
For $Z\ne0$, the normalized expectation value is $\mathcal C_{f,g}/Z$. An individual quark propagator transforms covariantly under gauge transformations. Gauge-invariant observables instead require appropriate gauge-index contractions or, for separated charged insertions, a Wilson line; gauge-fixed correlators must specify the gauge choice.

\subsection{Zero modes and topological sectors}
A zero mode is a normalizable solution of $D_Uu=0$, so $D_U^{-1}$ does not exist on that background. Nevertheless, Gaussian Grassmann integrals with suitable fermionic insertions remain defined. For a single Grassmann variable,
\begin{equation}
\int d\xi\;1=0,\qquad \int d\xi\;\xi=1
\label{eq:grassmann}
\end{equation}
An unsaturated zero mode (one not paired with a fermionic insertion) makes the vacuum integral vanish; a fermionic insertion containing the corresponding Grassmann variable can give a nonzero result.

Let $\{u_i\}$ span $\ker D_U$ and $\{v_j\}$ span $\ker D_U^\dagger$. Define $G_U$ as the inverse restricted to their orthogonal complements. After choosing local bases for the zero modes, the source integral has the schematic form
\begin{equation}
\cZ_F[\bar\eta,\eta;U]
=\mathcal N_U\,e^{\langle\bar\eta,G_U\eta\rangle}
\prod_i\langle\bar\eta,u_i\rangle
\prod_j\langle v_j,\eta\rangle.
\label{eq:zeroformula}
\end{equation}
The normalization $\mathcal N_U$ includes the regularized nonzero-mode determinant, the orientation of the zero-mode spaces, and Grassmann-ordering conventions. Different choices of how many low modes to isolate, or of bases in those low-mode spaces, must give the same result. The determinant-line transition rules enforce that independence. Consequently, sectors with nonzero index can contribute to correlators whose insertions saturate the zero modes, although their vacuum partition-function contribution vanishes.

As an example, take weak topological charge $k_2=+1$, with no additional relevant flux. In the all-left-handed convention used for $D_Q$ and $D_L$, the indices per generation are $\operatorname{ind}D_Q=-3$ and $\operatorname{ind}D_L=-1$. Thus the index-required unpaired modes lie in the \emph{cokernels} (zero modes of the adjoint), and the Grassmann insertions that saturate them are conjugate fields. Extra kernel--cokernel pairs may occur at special backgrounds and require additional source saturation; the index fixes the net excess and its minimal source factor. For three generations the source factor is schematically
\begin{equation}
\prod_{g=1}^3(\bar Q\bar Q\bar Q\bar L)_g,
\label{eq:tHooft}
\end{equation}
with gauge and spinor indices contracted to a singlet. The appearance of these conjugate Euclidean fields does not mean that physical antiquarks and antileptons are emitted. In the corresponding Minkowski matrix elements, conjugate fermion fields contain creation operators for outgoing quarks and leptons; the mapping to external particle states is fixed by the usual continuation and LSZ reduction, together with the orientation of the topological sector. Thus the physical transition produces $QQQL$ quanta per generation and has $\Delta B=\Delta L=+3$ for three generations. The conjugation in the Euclidean source factor reflects the cokernel and source conventions, not a reversal of the physical $B+L$ selection rule. The coupled-operator index used later is positive because it uses the opposite source/target orientation; it describes the same anomaly. A charge-one \emph{sector} on $T^4$ is meant; no periodic self-dual charge-one $SU(2)$ instanton is required~\cite{braam}.

\noindent\emph{Physical interpretation.} A nonzero index does not mean the regulator has lost the fermions. It means the vacuum amplitude vanishes until enough external fermionic operators absorb the unpaired Grassmann integrations, exactly as in the usual instanton selection rule~\cite{thooft}.

\subsection{Continuity and finiteness of fermionic insertions}
The continuum counterpart of~\eqref{eq:adj} is defined by \emph{finite-dimensional reduction}: isolate the modes below a fixed local spectral gap, invert the complementary operator, and evaluate the low-mode Grassmann integral exactly. The latter is polynomial in the sources and reduced matrix entries, including when that matrix is rectangular. Source continuity means $L^2$ continuity after local smooth unitary identifications of the reconstructed bundles, with the external reference indices transformed under gauge transformations. Appendix~\ref{sec:source_families} specifies these identifications and gives a covariant source example.

\begin{theorem}[Finite smeared fermion insertions]\label{thm:smeared_insertions}
Fix finitely many smooth smeared source families with the $L^2$ continuity and gauge covariance just specified. Finite-dimensional reduction defines their fermion integrals as continuous determinant-line source coefficients in every index sector. Wherever $D_U$ is invertible, these coefficients equal the determinant times the corresponding antisymmetrized Wick contractions. After applying the same gauge-equivariant phase convention as for $W_0$, any continuous gauge-invariant contraction with the bosonic fields has an absolutely convergent finite-volume integral against $d\mu_B$. The ambient references of Section~\ref{sec:hsu_eta} and Appendix~\ref{app:phase_references} are used also in nonzero-index sectors.
\end{theorem}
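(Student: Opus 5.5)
The plan is to work locally in the space of admissible backgrounds and glue the local pieces using the determinant-line transition rules.

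\textbf{Local definition.} Fix $U_0\in\calU_\epsilon$ and a threshold $\lambda>0$ that is not an eigenvalue of $D_{U_0}^\dagger D_{U_0}$. By Lemma~\ref{lem:smooth_reconstruction} the family $D_U$ varies continuously in norm-resolvent sense after local unitary identifications. So on a neighborhood $\mathcal V$ of $U_0$, $\lambda$ remains outside the spectrum of both $D_U^\dagger D_U$ and $D_UD_U^\dagger$. Let $P_U,\tilde P_U$ be the spectral projections below $\lambda$. They have constant finite rank, with difference equal to the index, and depend continuously on $U$. Split $D_U$ into a finite rectangular block $d_U:\operatorname{ran}P_U\to\operatorname{ran}\tilde P_U$ and an invertible high block $D_U^{>}$. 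On $\operatorname{ran}(1-\tilde P_U)$, the bounded inverse $G_U^{>}$ is continuous in operator norm, and the regularized determinant of $D_U^{>}$ defines a continuous nonvanishing element of the high part of $\mathcal L_U$.

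For finitely many smeared sources $f^{(a)}_U,g^{(b)}_U$, define the source coefficient as the product of two factors. The first is the high-mode determinant. The second is the exact finite Grassmann integral over the low modes, with the sources shifted by completing the square through $G_U^{>}$. This low-mode integral is a polynomial in the entries of $d_U$, in the $L^2$ pairings $\langle f_U,P_U\cdot\rangle$ and $\langle\tilde P_U\cdot,g_U\rangle$, and in $\langle f_U,G_U^{>}g_U\rangle$. It is valued in $\det(\operatorname{ran}P_U)^*\otimes\det(\operatorname{ran}\tilde P_U)$. When $\operatorname{ind}\neq0$, $d_U$ is rectangular, and the integral is nonzero only if the number of source pairs saturates the excess, as in~\eqref{eq:zeroformula}. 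Every ingredient is continuous in $U$ because the sources are $L^2$-continuous and the projections and $G_U^{>}$ are norm-continuous. The coefficient is therefore a continuous local section.

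\textbf{Independence and agreement with Wick.} I will then show the construction does not depend on $\lambda$. Take $\lambda<\lambda'$ and factor the modes in $[\lambda,\lambda')$, which are invertible there. On the finite-dimensional intermediate block, the adjugate/Schur-complement identity~\eqref{eq:adj} shows the two prescriptions agree, once the standard Quillen--Bismut--Freed isomorphism between the two local trivializations of $\mathcal L$ is applied. The coefficients therefore glue to a global continuous section over $\calU_\epsilon$. On invertible backgrounds I take $\lambda$ below the lowest singular value. Then $P_U=0$, and the low integral is simply determinant times antisymmetrized Wick contractions of $G_U=D_U^{-1}$. Because the determinant line is defined by this same gluing, the statement follows for every $\lambda$. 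I expect this gluing step to be the main obstacle. One has to check that the Grassmann ordering and orientation conventions and the zeta-regularized high determinants transform by exactly the finite determinants of the intermediate blocks. That is, the relative factor must be $\det(d|_{[\lambda,\lambda')})$ with the correct sign, uniformly in the number of source insertions. The route is to reduce this to the finite-dimensional identity for block matrices together with multiplicativity of $\det_\zeta$ under finite-rank splitting.

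\textbf{Phase, covariance and integration.} Apply the trivialization $\tau_U$ of Lemma~\ref{lem:determinant_phase}, using the ambient $SU(5)$ references in every bundle class, including nonzero-index sectors. This turns the section into a continuous complex-valued function carrying the external gauge indices of the sources. Gauge covariance of the sources, together with the fact that the reconstruction intertwines $D_U$ and $D_{U^g}$, maps the projections, $G^{>}$ and the low integrals at $U$ to those at $U^g$. Equivariance of $\tau$ then makes any gauge-invariant contraction with continuous bosonic functions exactly invariant. Integrability follows as in Theorem~\ref{thm:zero_yukawa_existence}: $\chi_\epsilon b_{R_H}$ has compact support inside the reconstruction domain, and the integrand is continuous there. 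It is therefore bounded, and the integral against $d\mu_B$ converges absolutely.
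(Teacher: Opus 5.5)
Your proposal is correct and follows essentially the paper's route. The paper's proof also uses local spectral windows with norm-continuous projections and complementary inverse, a polynomial low-mode Grassmann integral, and window independence through the invertible intermediate shell (Eqs.~\eqref{eq:window_isometry}--\eqref{eq:naturality_shell}). It then applies the equivariant ambient-reference trivialization and uses compactness of the regulated support. One point to tighten: the ``high-mode determinant'' of the chiral block is not a separate complex number. In the Quillen construction it enters only as the positive factor $\det_\zeta(D_{\rm hi}^\dagger D_{\rm hi})^{1/2}$ in the norm of the window-line frame, and $\tau_U$ supplies the entire phase; your gluing step via multiplicativity of $\det_\zeta$ should be read in that sense.
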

\begin{proof}
Appendix~\ref{sec:source_families} identifies the varying Hilbert spaces and proves continuity of the complementary inverse and source projections. The low-mode Grassmann integral is polynomial, so every source coefficient is continuous. The exact Gaussian identities~\eqref{eq:rect_sources} and the shell transitions~\eqref{eq:window_isometry}--\eqref{eq:naturality_shell} in Appendix~\ref{sec:quillen_gluing} identify the expressions on overlaps, including at nonzero index. Applying the fixed equivariant phase convention gives continuous covariant scalar coefficients. Their continuous gauge-invariant contractions are bounded on the compact regulated support and hence absolutely integrable.
\end{proof}
\noindent\emph{Physical interpretation.} Multiplying propagators by the corresponding determinant replaces singular inverses by regularized minors, so the bosonic average of a smooth gauge-invariant fermionic observable remains finite even at backgrounds with zero modes.

The theorem applies to \emph{smooth smeared insertions}. Coincident-point composite operators can have ultraviolet singularities and require a separate renormalization prescription. For numerical work, a finite spectral truncation may approximate the continuum operator, but its cutoff removal and phase convention must converge to those of the specified determinant-line prescription. An arbitrary finite truncation need not preserve exact gauge invariance.

\section{Yukawa couplings and the determinant of the coupled fermion operator}\label{sec:yukawa_full}

The Yukawa sector adds no new Grassmann integration variables: for fixed gauge and Higgs backgrounds it remains quadratic in the fermions. Integrating them out is therefore a Gaussian calculation. Its logarithm is the sum of one-fermion-loop vacuum diagrams with arbitrarily many external gauge and Higgs insertions; this statement is exact for fixed bosonic backgrounds, not a weak-coupling approximation to the subsequent bosonic integral. The main questions are whether the resulting Pfaffian can be treated by ordinary determinant-line theory, whether its complex phase is gauge invariant, and which ultraviolet prescription makes that phase physically meaningful. We answer them in that order, with the common cutoff as the operative regulator.

\subsection{Yukawa interactions and reduction to an ordinary determinant}
Let $r,s=1,2,3$ label generations, $\alpha$ color indices, and $i,j$ weak indices. With the antisymmetric two-spinor contraction understood, the action is
\begin{equation}
\begin{aligned}
S_Y=\int_Xd^4y\;\big[&(Y_u)_{rs}\epsilon_{ij}Q_r^{\alpha i}\Phi^j u^c_{s\alpha}
 +(Y_d)_{rs}Q_r^{\alpha i}\Phi_i^*d^c_{s\alpha}\\
 &+(Y_e)_{rs}L_r^i\Phi_i^*e^c_s+\text{conjugate terms}\big].
\end{aligned}
\label{eq:yukawa}
\end{equation}
The $3\times3$ complex matrices $Y_u,Y_d,Y_e$ are arbitrary \emph{finite input parameters} of the regulator. Renormalized Yukawa couplings are extracted from three-point functions at a specified scale and receive gauge/Higgs corrections, including QCD corrections to the quark Yukawas~\cite{machacek_vaughn}. The coupling renormalization is distinct from the task of defining the fermion determinant. Gauge invariance of the interactions follows from hypercharge and bundle pairings; detailed representation identities are collected in Appendix~\ref{app:geometry}.

Write $D_U=D^+_{A[U]}$ and introduce independent Euclidean conjugate variables. The fermion action can be written in antisymmetric Nambu--Gorkov form. Nambu--Gorkov notation collects the independent Euclidean fields and conjugate variables into one column, so that the kinetic and Yukawa terms form a single skew quadratic form. The corresponding Gaussian integral is a Pfaffian, whose square is the determinant of the skew matrix:
\begin{equation}
\Xi=\begin{pmatrix}\psi\\\bar\psi^T\end{pmatrix},\qquad
S_F=\tfrac12\int_X\Xi^T\mathcal K_Y\Xi,\qquad
\mathcal K_Y=\begin{pmatrix}M_\Phi&-D_U^T\\D_U&\overline M_\Phi\end{pmatrix},
\quad \mathcal K_Y^T=-\mathcal K_Y.
\label{eq:nambu}
\end{equation}
The two skew blocks $M_\Phi$ and $\overline M_\Phi$ are built respectively from the Yukawa/Higgs data and their complex conjugates; the bar labels the conjugate couplings and does not identify the independent Euclidean Grassmann variables. Each block is skew because of the antisymmetric spinor contraction, not because any generation matrix is symmetric. The finite-dimensional sign and block verification appear in Appendix~\ref{app:grushin}.

Every kinetic and Yukawa term pairs fields of opposite $B+L$ charge. Group the fields of negative $B+L$ charge $(u^c,d^c,e^c,\bar Q,\bar L)$ first (the source of $\mathcal D_Y$), and those of positive charge $(Q,L,\bar u^c,\bar d^c,\bar e^c)$ second (its target). In this ordered decomposition the lower-left block maps the negative group to the positive group, and the skew matrix becomes
\begin{equation}
\mathcal K_Y\simeq
\begin{pmatrix}0&-\mathcal D_Y^T\\\mathcal D_Y&0\end{pmatrix},
\qquad \Pf\mathcal K_Y=\pm\det\mathcal D_Y.
\label{eq:yukawa_determinant}
\end{equation}
The overall sign is a fixed Grassmann-ordering convention. At zero Yukawa coupling the lower-left block is $\mathcal D_0$, assembled from $D_{u^c},D_{d^c},D_{e^c}$ and the transposed $D_Q,D_L$ blocks in the indicated source-to-target order. Thus $\operatorname{ind}\mathcal D_q=\operatorname{ind}D_{u^c}+\operatorname{ind}D_{d^c}-\operatorname{ind}D_Q$, and similarly for leptons; transposition reverses the Fredholm index.

The zero-Yukawa determinant line of $\mathcal D_0$ is identified with the original Weyl determinant line: transposition preserves nonzero singular values and hence the Quillen norm. With the stated Grassmann ordering fixed, the determinant-line dualities are natural and gauge equivariant, so they carry the convention $W_0$ across without a further phase choice. We use this identification throughout the Yukawa section. The source/target duality implicit in the bilinear notation is stated in Appendix~\ref{app:grushin}.

The Yukawa operator $\mathcal D_Y=\mathcal D_0+\mathcal V_Y$ is first-order elliptic because its Yukawa part $\mathcal V_Y$ is zeroth order. Thus the determinant-line construction applies directly. The decomposition is a statement about the \emph{classical bilinear}; the quantum $B+L$ current remains anomalous. Majorana terms, which would spoil this decomposition, are not included.\par
\noindent\emph{Physical interpretation.} With no Majorana masses, the two groups of Grassmann variables pair through an ordinary determinant. The anomalous $B+L$ selection rule is still present: it follows from the index and zero-mode saturation, not from an explicit $B+L$-violating Yukawa term.

\subsection{Gauge-invariant Yukawa phase conventions}
We first establish a general existence result: Yukawa interactions introduce no new gauge-anomaly obstruction. This alone does not select a unique physical weight. The common spectral prescription in Sections~\ref{sec:yukawa_cutoff}--\ref{sec:quillen_matching} selects the normalized weight $W_Q$ used in Theorem~\ref{thm:renormalized_weight}.

The canonical determinant section $s_{\mathcal D_Y}$ has a gauge-invariant Quillen norm. At invertible backgrounds,
\begin{equation}
\|s_{\mathcal D_Y}\|_Q=
\left[\det_\zeta(\mathcal D_Y^\dagger\mathcal D_Y/\mu^2)\right]^{1/2}
=\left[\det_\zeta(\mathcal K_Y^\dagger\mathcal K_Y/\mu^2)\right]^{1/4}.
\label{eq:pf_norm}
\end{equation}
At zero modes, this formula is interpreted through finite-dimensional determinant-line frames. The norm fixes the magnitude, not the complex phase.

Because $\mathcal V_Y(U,t\phi)=t\mathcal V_Y(U,\phi)$, the family
\begin{equation}
\mathcal D_t(U,\phi)=\mathcal D_0(U)+t\mathcal V_Y(U,\phi),\qquad 0\le t\le1,
\label{eq:yukawa_homotopy}
\end{equation}
contracts equivariantly to the zero-Yukawa theory, including through zero-mode crossings.
\begin{lemma}[Gauge-equivariant determinant-line extension]
The zero-Yukawa gauge-equivariant trivialization extends smoothly and unitarily to the determinant line of $\mathcal D_Y$ for arbitrary finite Standard Model Yukawa matrices. This proves that a gauge-invariant extension \emph{exists}; it does not choose a unique extension, and it supplies no locality bound on its phase.
\end{lemma}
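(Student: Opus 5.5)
The plan is to extend the convention by parallel transport in the Yukawa direction $t$, using the homotopy~\eqref{eq:yukawa_homotopy}. Over the enlarged parameter space $\mathcal P=[0,1]\times\calU_\epsilon\times\{\phi\}$, the operators $\mathcal D_t(U,\phi)$ form a smooth family of first-order elliptic operators. The leading symbol is the same as that of $\mathcal D_0$, since $t\mathcal V_Y$ is zeroth order and depends smoothly on $(t,U,\phi)$ by Lemma~\ref{lem:smooth_reconstruction} and the smoothness of~\eqref{eq:higgsinterp}. Quillen's construction therefore gives a smooth determinant line bundle $\mathcal L\to\mathcal P$, and the lattice gauge group acts on it. At $t=0$ this bundle restricts to the zero-Yukawa line, which carries the equivariant trivialization $\tau$ of Lemma~\ref{lem:determinant_phase}, transported to $\mathcal D_0$ by the natural transposition dualities noted above.

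To extend $\tau$, I choose a smooth unitary connection on $\mathcal L$ that is invariant under the gauge action. One option is the Bismut--Freed connection, which is gauge invariant because the gauge group acts by bundle isometries. Another is any unitary connection averaged over the compact group $\SM^{\Lat^0}$. I then define $\tau_{(t,U,\phi)}$ by parallel transport along the straight segment $s\mapsto(s,U,\phi)$, $s\in[0,t]$, starting from $\tau_{(0,U,\phi)}$. This transport is unitary, so the result is a unitary trivialization. Because the family contracts linearly, every point of $\mathcal P$ is reached by exactly one such segment. Therefore no holonomy question arises, and the curvature of the chosen connection plays no role. Smoothness in $(t,U,\phi)$ follows from smooth dependence of solutions of linear ODEs on parameters. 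The underlying bundle structure is smooth across zero-mode crossings, including jumps in $\dim\ker$, because Quillen's local spectral-window frames and their transition maps are smooth.

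Gauge equivariance follows next. A gauge transformation $g$ maps the segment through $(U,\phi)$ to the segment through $(U^g,g\phi)$, since $\mathcal V_Y(U^g,g\phi)$ is conjugate to $\mathcal V_Y(U,\phi)$ by gauge covariance of the Yukawa terms. Because the connection is $g$-invariant, the induced line map $\Phi_g$ commutes with transport. Since $\tau$ is already equivariant at $t=0$, it is equivariant at every $t$, including at stabilizers. Setting $t=1$ gives the claimed extension. Non-uniqueness is clear, because different invariant connections, or adding $e^{i\theta(U,\phi)}$ with $\theta$ gauge invariant and vanishing at $t=0$, give different extensions. Nothing in the construction controls locality.

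I expect the main obstacle to be smoothness of $\mathcal L$ and of the connection at points where zero modes of $\mathcal D_t$ appear or disappear along the segment. There I would work in local spectral windows $\{\lambda<c\}$, with $c$ not in the spectrum of $\mathcal D_t^\dagger\mathcal D_t$ near the point. In each window I would write the line as $\det(\text{low modes})\otimes$ the zeta-regularized determinant of the high modes, and use the shell transition maps~\eqref{eq:window_isometry}--\eqref{eq:naturality_shell} of Appendix~\ref{sec:quillen_gluing} to show that the connection one-forms agree smoothly on overlaps. Transport is then an ODE in smooth local frames, so zero-mode crossings cause no singularity. If the Bismut--Freed connection proves technically awkward at this stage, the averaged arbitrary unitary connection gives a fallback, since the argument requires only smoothness and invariance.
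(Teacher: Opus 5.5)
Your proposal is correct and is essentially the paper's argument. The paper also averages a smooth unitary connection over the compact lattice gauge group and sets $\tau^Y_{U,\phi}=\tau_U\circ T_{U,\phi}^{-1}$, where $T_{U,\phi}$ is radial parallel transport from $(U,0)$ to $(U,\phi)$; because $\mathcal V_Y(U,t\phi)=t\mathcal V_Y(U,\phi)$, this is the same as your transport in $t$ along the homotopy. Your separate $t$-coordinate, the optional Bismut--Freed connection and the spectral-window smoothness remarks are inessential elaborations of that construction.
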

\begin{proof}
Choose a smooth unitary connection on the determinant line and average it over the compact lattice gauge group. Radial parallel transport $T_{U,\phi}$ from $(U,0)$ to $(U,\phi)$ is then gauge equivariant. Composing the massless trivialization $\tau_U$ with its inverse gives
\begin{equation}
\tau^Y_{U,\phi}=\tau_U\circ T_{U,\phi}^{-1},\qquad
W_Y(U,\phi)=\tau^Y_{U,\phi}\bigl(s_{\mathcal D_Y}(U,\phi)\bigr).
\label{eq:pf_weight}
\end{equation}
This transport is defined on the \emph{line} even when the determinant section vanishes; at $\phi=0$ it is the identity. Consequently
\begin{equation}
W_Y(U^g,\phi^g)=W_Y(U,\phi),\qquad W_Y(U,0)=W_0(U).
\label{eq:pf_gauge}
\end{equation}
\end{proof}

For a source $J$ coupled to $\Xi$, the unnormalized Gaussian generating functional on invertible backgrounds is
\begin{equation}
\mathcal Z_F[J;U,\phi]=W_Y(U,\phi)\exp[-\tfrac12 J^T\mathcal K_Y^{-1}J].
\label{eq:pf_sources}
\end{equation}
At zero modes its coefficients are defined by the finite-dimensional source construction of Theorem~\ref{thm:smeared_insertions}. The explicit reduction is in Appendix~\ref{sec:source_families}.

\begin{theorem}[Fixed-regulator gauge--Higgs--Yukawa integral]
For the fixed compact bosonic regulators, smooth reconstructed fields, arbitrary finite Yukawa matrices, Quillen metric and any smooth gauge-equivariant phase convention just constructed, the integral
\begin{equation}
\boxed{Z_Y=\int d\mu_B(U,\phi)\;W_Y(U,\phi)}
\label{eq:Z_yukawa}
\end{equation}
is absolutely convergent and exactly gauge invariant. The same holds for bounded gauge-invariant bosonic insertions and fixed finite smooth-smeared gauge-invariant fermionic insertions. The statement does not require small Yukawa couplings or a nonzero Higgs background.
\end{theorem}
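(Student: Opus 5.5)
The plan mirrors the proof of Theorem~\ref{thm:zero_yukawa_existence}: reduce absolute convergence to continuity of the integrand on a compact set, and reduce gauge invariance to~\eqref{eq:pf_gauge}. The first step is the support. The measure $d\mu_B$ is the Haar and Lebesgue measure multiplied by $\chi_\epsilon b_{R_H}e^{-S_{gH}}$. Its support lies in $\operatorname{supp}\chi_\epsilon\times\{|\phi_x|\le c R_H\}$, which is compact. It also sits inside the open reconstruction domain $\calU_\epsilon\times(\mathbb C^2)^{\Lat^0}$, where $W_Y$ is defined. As in~\eqref{eq:Z}, we extend $\chi_\epsilon b_{R_H}W_Y$ by zero outside this domain, and the extension is continuous.

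The second step, which is the substantive one, is continuity of $W_Y$ on that domain. I would obtain it as follows. By Lemma~\ref{lem:smooth_reconstruction} and~\eqref{eq:higgsinterp}, the pair $(A[U],\Phi[U,\phi])$ is a smooth family. Hence $\mathcal D_Y=\mathcal D_0+\mathcal V_Y$ is a smooth family of first-order elliptic operators, with zeroth-order perturbation $\mathcal V_Y$, acting on a smooth family of bundles. Quillen and Bismut--Freed theory then makes the determinant line a smooth Hermitian line bundle over the parameter space, with $s_{\mathcal D_Y}$ a continuous (indeed smooth) section. This holds across zero-mode crossings via the local finite-dimensional reduction frames (Appendix~\ref{sec:quillen_gluing}). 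The trivialization $\tau^Y=\tau_U\circ T^{-1}$ from the preceding lemma is built from the continuous $\tau_U$ (Lemma~\ref{lem:determinant_phase}) and radial parallel transport of a smooth connection. Radial transport depends smoothly on its endpoints, since it solves an ODE along $t\mapsto(U,t\phi)$. Two points need checking here. First, the ray stays in the domain: it does, because admissibility constrains only $U$. Second, $\tau^Y$ is unitary for the Quillen metric: it is, because the averaged connection is unitary. Consequently $W_Y$ is continuous and $|W_Y|$ is bounded on the compact support. The integral is then bounded by $\sup|W_Y e^{-S_{gH}}\chi_\epsilon b_{R_H}|$ times the finite measure of the support.

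The third step is gauge invariance. Haar measure, the Higgs Lebesgue measure (the action is unitary on each $\phi_x$), $\chi_\epsilon$ (conjugation invariant), $b_{R_H}$ (which depends on $|\phi_x|$) and $S_{gH}$ are all invariant. Combined with~\eqref{eq:pf_gauge}, this makes the integrand invariant pointwise. Bounded gauge-invariant bosonic insertions only multiply a bounded continuous integrand by another bounded function. For fermionic insertions I would repeat the argument of Theorem~\ref{thm:smeared_insertions} with $\mathcal K_Y$ in place of $D_U$. Finite-dimensional reduction of $\mathcal D_Y$ gives continuous determinant-line source coefficients, and applying $\tau^Y$ produces covariant continuous scalars. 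Their gauge-invariant contractions are then bounded on the compact support.

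None of these steps uses smallness of $Y$ or $\phi\neq0$, because compactness does all the analytic work. The main obstacle is the continuity of $s_{\mathcal D_Y}$ and of the source coefficients uniformly across jumps in kernel dimension. This matters especially in sectors of nonzero index, where zero-mode counts change as the Yukawa perturbation varies. I would handle it by the local spectral-gap windows: near any background, choose a gap $\lambda$ not in the spectrum of $\mathcal D_Y^\dagger\mathcal D_Y$. This gap persists nearby because the family is norm-continuous relative to the fixed elliptic principal part. Then write the section as (finite-dimensional determinant of the reduced low-mode map)$\times$(zeta-regularized determinant above $\lambda$). Both factors are continuous, and the window-change isometries~\eqref{eq:window_isometry} glue them into a globally continuous section.
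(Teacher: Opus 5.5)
Your proposal is correct and follows the paper's proof. The paper also argues that the spectral-window construction of Theorem~\ref{thm:smeared_insertions} applies to $\mathcal D_Y$, which has the same principal symbol, at every index; that the equivariant trivialization turns the continuous determinant section and source coefficients into gauge-covariant scalars; and that boundedness on the compact bosonic support gives absolute convergence. You spell out details the paper leaves implicit or inherits from Theorem~\ref{thm:zero_yukawa_existence}, namely continuity of the radial-transport trivialization $\tau_U\circ T^{-1}$ and invariance of the bosonic measure and cutoffs.
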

\begin{proof}
The spectral-window definition of Theorem~\ref{thm:smeared_insertions} applies to $\mathcal D_Y$ with the same principal symbol, including at nonzero index. The determinant section and its source coefficients are continuous; the equivariant trivialization makes them gauge covariant. They are bounded on the compact bosonic support, so their integrals converge absolutely.
\end{proof}
This theorem establishes existence and gauge invariance for the specified input data; by itself it does not distinguish the intended Standard Model from every other gauge-invariant weight. For example, multiplying $W_Y$ by $e^{i\chi(U,\phi)}$, with any smooth gauge-invariant $\chi$ satisfying $\chi(U,0)=0$, preserves the theorem even if $\chi$ is nonlocal. The common spectral prescription restricts differences between admissible regulators to local finite renormalizations under the analytic conditions below. The remaining freedom to choose physical theta angles and complex Yukawa matrices is the familiar freedom to specify the Standard Model couplings; it is not an obstruction from gauge anomalies.

\subsection{Zero modes and the anomalous \texorpdfstring{$B+L$}{B+L} selection rule}
The quark and lepton sectors remain separate in the Yukawa operator: $\mathcal D_Y=\mathcal D_q\oplus\mathcal D_\ell$. Put $\mathfrak c=c_1(E_2)=-c_1(E_3)$ and $k_2=\int_Xc_2(E_2)$. The index theorem gives, for each generation,
\begin{equation}
\ind\mathcal D_q=\ind D_{u^c}+\ind D_{d^c}-\ind D_Q=3k_2,
\qquad \ind\mathcal D_\ell=\ind D_{e^c}-\ind D_L=k_2.
\label{eq:baryon_idx}
\end{equation}
For three generations these are $9k_2$ and $3k_2$. Yukawa couplings are zeroth-order and cannot change the indices. When $k_2\ne0$, the vacuum determinant sections vanish, but external sources can saturate the unpaired variables. For $k_2=+1$, the index requires a net twelve-field saturation, with the $B+L$ charges described in Section~\ref{sec:corr}. At zero Yukawa coupling its minimal all-left-handed source factor is $(\bar Q\bar Q\bar Q\bar L)^3$. Yukawa couplings can mix the external components that saturate the modes, and accidental paired zero modes can require extra insertions, but the net baryon and lepton number is unchanged. Counting these charges gives the integrated Ward identity
\begin{equation}
\Delta B=\Delta L=3k_2,\qquad \Delta(B-L)=0.
\label{eq:BL_selection}
\end{equation}
This holds at each finite fermion cutoff because the regulator preserves the $B,L$ grading. These signs refer to outgoing physical charges for the stated orientation of topological charge. The conjugate Euclidean source fields in Eq.~\eqref{eq:tHooft} transform oppositely because they saturate cokernel zero modes; the choice is fixed by the operator convention, not by a relabeling of sources. The electroweak theta angle is removable by an anomalous baryon-number redefinition, while other theta angles and Yukawa phases can affect the vacuum theory. Indeed, for each generation the index-weighted baryon and lepton charges satisfy $\sum_i B_i\operatorname{ind}D_i=\tfrac13(\operatorname{ind}D_Q-\operatorname{ind}D_{u^c}-\operatorname{ind}D_{d^c})=-k_2$ and $\sum_i L_i\operatorname{ind}D_i=\operatorname{ind}D_L-\operatorname{ind}D_{e^c}=-k_2$. The $c_1^2$ and color-charge terms cancel in both sums, so these rephasings shift only the electroweak topological angle. The rectangular low-mode Grassmann calculation is given in Appendix~\ref{app:grushin}.\par
\noindent\emph{Physical interpretation.} Changing the Yukawa matrices cannot eliminate the index. The same $B+L$ selection rule therefore holds at every fermion cutoff and for any finite Yukawa couplings.

\subsection{A common spectral prescription for the Yukawa phase}\label{sec:yukawa_cutoff}
The massless determinant already has a global phase convention. Instead of assigning a new phase independently at every Higgs background, we define the Yukawa theory as a regulated \emph{change} of that massless determinant. The cutoff acts only on the Yukawa multiplication operator, not on the massless kinetic term. Define
\begin{equation}
H_+=\sigma^2+\mathcal D_0^\dagger\mathcal D_0,\qquad H_-=\sigma^2+\mathcal D_0\mathcal D_0^\dagger,\qquad
\mathcal V_{Y,\Lambda}=f(H_-/\Lambda^2)\,\mathcal V_Y\,f(H_+/\Lambda^2),
\label{eq:common_cutoff}
\end{equation}
where $\sigma>0$ is a fixed mass scale and $f\in C_c^\infty([0,\infty))$ is fixed, real, and equal to one near zero. The operator $f(H_\pm/\Lambda^2)$ is a \emph{spectral filter}: on an eigenmode of $H_\pm$ with eigenvalue $h$ it multiplies the mode by $f(h/\Lambda^2)$, exactly one on the low-energy plateau and zero above the profile support. Compact support, not merely rapid decay, makes each filter finite rank: only finitely many eigenvalues of an elliptic operator on compact spacetime lie below a fixed cutoff. The auxiliary infrared shift $\sigma$ is independent of the zeta subtraction scale $\mu$; its effect is removed by the finite matching below. The filters are gauge covariant at finite $\Lambda$. They never require $\mathcal D_0^{-1}$ or a preferred Higgs direction, even in a sector of nonzero index. In momentum-space language the filter suppresses high-energy modes \emph{at each Yukawa vertex}, rather than replacing the massless fermion propagator by a finite lattice matrix. Using the covariant kinetic operator to define the cutoff preserves background gauge invariance, analogous to a covariant momentum cutoff. A \emph{nested} filter means that a loop contains one factor on each side of each Yukawa insertion; when the Higgs field varies, these factors do not commute through the Yukawa vertex. The finite-cutoff Yukawa theory is a finite-rank perturbation of the established massless determinant-line theory, so the complete source-dependent phase is defined before any choice of low-mode coordinates.

\paragraph{Two kinds of fermion vacuum loops.}
The full weight consists of the gauge-only determinant $W_0(U)$ and the \emph{additional} vacuum loops containing Yukawa vertices. On a background where $D_U$ is invertible, and suppressing determinant-line conventions, the exact finite-rank relative-determinant identity reads
\[
 W_{Y,\Lambda}(U,\Phi)=W_0(U)\,
 \det_F\!\left(1+\mathcal D_0^{-1}\mathcal V_{Y,\Lambda}\right).
\]
Here $\det_F$ is the determinant of the identity plus a finite-rank operator; it is an ordinary finite-dimensional determinant on the subspace affected by the Yukawa interaction. When $D_U$ has zero modes or nonzero index, the determinant-line construction with explicit low-mode Grassmann factors replaces the displayed inverse, without changing the physical separation into gauge-only and Yukawa-dependent contributions. This is not a second cutoff on the momentum integration: the gauge-only part already includes \emph{all} continuum fermion modes, and the interaction part is regulated by modifying the Yukawa vertices.

For example, the flat-space momentum integral for a loop with two Yukawa insertions carrying external Higgs momentum $q$ has the schematic form below. It displays the local ultraviolet structure; on our finite torus the corresponding complete amplitude uses a discrete momentum sum.
\[
\begin{aligned}
 \mathcal A_{2,\Lambda}(q)\ \propto\ &\int\!\frac{d^4p}{(2\pi)^4}\,
 \operatorname{tr}\!\left[S(p)\mathcal V_Y(q)
                 S(p+q)\mathcal V_Y(-q)\right]\\[-2pt]
 &\quad\times f\!\left(\frac{p^2+\sigma^2}{\Lambda^2}\right)^2
 f\!\left(\frac{(p+q)^2+\sigma^2}{\Lambda^2}\right)^2 .
\end{aligned}
\]
$S(p)$ is the \emph{unmodified} continuum propagator; spinor and species indices and an overall sign have been suppressed. The two momenta probe different parts of the cutoff profile, which matters when extracting the Higgs wavefunction counterterm. On general gauge backgrounds, eigenvalues of the gauge-covariant kinetic operator replace free momentum labels. Thus filtering vertices retains the continuum propagation and its familiar loop interpretation, while regulating the \emph{change} caused by Yukawa interactions.

\Needspace{11\baselineskip}
The number of required subtractions follows from an elementary one-loop Feynman diagram. A closed loop with $n$ external Higgs/Yukawa insertions contains $n$ fermion propagators, each behaving as $1/\not\!p$ at large Euclidean loop momentum. For fixed external momenta its superficial degree of divergence is
\[
\mathcal A_n\sim\int^{\Lambda}d^4p\,p^{-n}
\sim\int^{\Lambda}dp\,p^{3-n},\qquad \omega_n=4-n.
\]
Thus loops with $n\ge5$ Yukawa insertions are ultraviolet convergent. Gauge-invariant local bosonic terms contain equal numbers of $\Phi$ and $\Phi^*$, by hypercharge neutrality, so their Yukawa degree is even. The quadratic order can renormalize $\Phi^\dagger\Phi$ and, at second order in external momentum, $(D_\mu\Phi)^\dagger D_\mu\Phi$; the quartic order gives the logarithmically divergent $(\Phi^\dagger\Phi)^2$ term. Each external momentum lowers $\omega_n$ by one.

The exact symmetry behind this counting is $g_*=\operatorname{diag}(\mathbf1_3,-\mathbf1_2)\in S(U(3)\times U(2))$. It leaves every link unchanged and reverses both the Higgs doublet and the weak-doublet fermions. Gauge invariance of the full Gaussian integral gives $W_Y(U,-\phi)=W_Y(U,\phi)$. With fermionic insertions, the sources must be transformed too; individual coefficients at fixed sources need not be even. Appendix~\ref{app:yukawa_uv} expresses this symmetry as an operator grading. Auxiliary low-mode coordinates may obscure it in individual traces, so the general construction retains the first four orders before their physical cancellations.

As in ordinary renormalization of one-loop Green functions, only the ultraviolet-sensitive \emph{local} polynomials in fields and external momenta are subtracted. The finite, generally nonpolynomial momentum dependence of the Higgs amplitudes must remain. With one fixed filter, the renormalized source functional is
\begin{equation}
\boxed{\mathcal Z_Y^{\rm ren}[J;U,\phi]
=\lim_{\Lambda\to\infty}
 e^{-C_\Lambda^{\rm div}(U,\phi)}\,\mathcal Z_{\mathcal D_0+\mathcal V_{Y,\Lambda}}[J],}
\label{eq:renormalized_source}
\end{equation}
where $C_\Lambda^{\rm div}$ contains the divergent local parts of the one- through four-point Yukawa amplitudes; finite local terms are fixed by explicit renormalization conventions. The multiplicative factor $e^{-C_\Lambda^{\rm div}}$ acts on the fermion \emph{weight}; its sign is reversed when expressed as a counterterm in the Euclidean effective action $-\log W$. In practice this means subtracting powers and logarithms of $\Lambda$ while retaining the finite momentum dependence of each amplitude, just as in conventional perturbative renormalization. The limit is coefficientwise for fixed smooth-smeared sources. The filter itself has cutoff-scale nonlocality: the locality assertion concerns the \emph{subtractions}, not the complete integrated-out effective action.\par

\paragraph{Electroweak Ward identities and cutoff-scale nonlocality.}
The regulator leaves the massless kinetic operator unchanged, while its Yukawa vertices are built from gauge-covariant spectral functions. If $R_+(g)$ and $R_-(g)$ denote the gauge actions on the source and target spaces, then the complete vertex obeys
\[
\mathcal V_{Y,\Lambda}[A^g,\Phi^g]
=R_-(g)\mathcal V_{Y,\Lambda}[A,\Phi]R_+(g)^{-1}.
\]
Together with the anomaly-free determinant phase and gauge-invariant subtractions, this yields exact background-field electroweak Ward identities for the source-dependent fermion weight at every finite $\Lambda$. Variations of the covariant filters contribute gauge-field vertices in addition to those obtained by varying the propagator; both contributions are required. The filters have rapidly decreasing coordinate-space kernels on distances set by $\Lambda^{-1}$, not strictly compact coordinate-space support. As $\Lambda\to\infty$ at fixed smooth bosonic background, the filter effects vanish for ultraviolet-convergent fermion amplitudes; for divergent amplitudes, the remaining scheme dependence is local and is removed by the specified gauge-invariant matching terms. These results concern the regulated fermion determinant and its cutoff-removed limit. They do not establish reflection positivity or Minkowski unitarity for finite $\Lambda$, nor the full gauge-fixed interacting Slavnov--Taylor identities after bosonic integration and removal of the independent bosonic regulator.
\noindent\emph{Physical interpretation.} This is the familiar separation between local ultraviolet counterterms and finite momentum-dependent loop amplitudes. With the zero-Yukawa reference held fixed, the relevant ultraviolet terms renormalize Higgs mass, kinetic and quartic operators. The nonlocal part of the amplitude is common to admissible renormalization prescriptions and must be retained.

The only analytic estimate needed for the low-order subtraction has a simple physical meaning. The difference between two ways of placing smooth spectral cutoffs probes momenta $p\sim\Lambda$. Its first omitted term has five inverse powers of $\Lambda$; summing over the $O(\Lambda^4)$ modes in a four-dimensional momentum shell leaves an error $O(\Lambda^{-1})$. This is the usual large-loop-momentum expansion in external fields and momenta, made gauge covariant by using the Dirac kinetic operator. Appendix~\ref{sec:uniform_shell_bound} gives the operator version of this argument, including derivatives with respect to the reconstructed background. The zero-mode factors are handled separately by finite-dimensional algebra.\par

The cutoff can now be removed in two stages.
\begin{enumerate}
\item \emph{Convergent higher-order loops.} The relative Yukawa operator $K$ is schematically a massless propagator times a Yukawa vertex. When zero modes are present, an invertible auxiliary extension defines this comparison without inverting the original chiral operator. In four dimensions $K$ belongs to the Schatten class $\mathcal S_5$: the sum of the fifth powers of its singular values is finite. This is the operator form of ultraviolet convergence for loops with five or more Yukawa insertions. The regularized determinant $\det_5$ collects these contributions.
\item \emph{Renormalized lower-order loops.} For the first four orders, a Mellin transform rewrites the smooth cutoff as a superposition of powers of the kinetic operator, much as Schwinger parameters rewrite propagators. The ultraviolet-shell estimate compares the filtered loops with complex-zeta regularization, which analytically continues the same divergent traces. Their difference is local on compact smooth families; the finite nonlocal responses are retained.
\end{enumerate}
Appendices~\ref{sec:mellin_anchor} and~\ref{sec:uniform_shell_bound} establish the comparison. Appendix~\ref{sec:quillen_gluing} then shows that the high-mode determinant and finite zero-mode source factors fit together consistently. Divergent subtractions do not depend on the auxiliary low-mode coordinates. Finite local matching fixes the remaining profile dependence and normalization as explained below. They use smooth, uniformly elliptic reconstructed backgrounds on compact support inside the admissible domain, local invertible operator charts, and fixed subtraction conventions. Appendix~\ref{app:yukawa_uv} organizes this standing setup and the analytic tools as (A1)--(A4), and derives the required uniform estimates at fixed bosonic lattice spacing.

\subsection{Mixed gauge--Yukawa divergences and the two-stage regulator}\label{sec:mixed_renorm}
\emph{Does separating the gauge-only determinant from the Yukawa-dependent correction miss diagrams containing both interactions?} No. The separation is by the \emph{number of Yukawa insertions}, not by the presence or absence of gauge fields. At fixed smooth bosonic background, write schematically
\[
 \Gamma_{\rm ferm,\Lambda}[A,\Phi]
 =\underbrace{-\log W_0[A]}_{\Gamma_0[A]}
  +\underbrace{\Delta\Gamma_{Y,\Lambda}[A,\Phi]}_{\text{at least one Yukawa insertion}}.
\]
For an invertible chiral operator, this is the logarithm of the relative-determinant identity displayed above. On noninvertible backgrounds it denotes the corresponding determinant-line and source-functional construction; no ratio of vanishing vacuum determinants is used. The \emph{reference} determinant $W_0[A]$ has already been defined by continuum spectral regularization and includes the fermion vacuum loops with external gauge fields but no Yukawa vertices. The second term contains the additional loops generated by the Yukawa interaction.

\paragraph{Where the mixed diagrams are.}
The fermion propagator in the relative determinant is $S_A=D_A^{-1}$, the propagator in the \emph{complete background gauge field}. In perturbation theory around $A=0$, one may expand it as
\[
 S_A=S_0-S_0\not\!A\,S_0
       +S_0\not\!A\,S_0\not\!A\,S_0-\cdots,
\]
where the gauge couplings and representation matrices are understood to be included in $\not\!A$. Inserting this series into a fermion loop with Yukawa vertices generates diagrams with any number of external gauge-boson legs. There are also gauge-field insertions from expanding the \emph{covariant filters} $f(H_\pm[A]/\Lambda^2)$ themselves. Both types are part of the regulated amplitude and must be kept together to satisfy the finite-cutoff gauge Ward identities. The filter suppresses high-energy contributions \emph{at the Yukawa vertices}; it is not a separate cutoff on the gauge vertices or on the momentum integration.

Ordinary large-momentum power counting for a closed fermion loop with $n_Y$ Yukawa insertions and $n_A$ explicit gauge vertices gives, before Ward-identity cancellations or expansion in external momenta,
\[
 \omega_{n_Y,n_A}=4-n_Y-n_A.
\]
Each extra power of external momentum lowers this degree by one. For $n_Y=0$, the gauge-only determinant carries the usual fermionic contribution to gauge-field renormalization. For $n_Y>0$, electroweak gauge invariance eliminates odd Higgs powers in the complete weight and organizes the potentially divergent terms as follows:
\begin{center}
\small
\renewcommand{\arraystretch}{1.16}
\begin{tabularx}{\linewidth}{@{}p{.33\linewidth}X@{}}
\toprule
Fermion loop & Allowed local ultraviolet contribution \\
\midrule
Gauge insertions, no Yukawa vertex & Gauge kinetic operators $\operatorname{tr}F_{\mu\nu}F^{\mu\nu}$; included in $W_0$ \\
Two Yukawa vertices & Higgs mass term $\Phi^\dagger\Phi$ and covariant kinetic term $(D_\mu\Phi)^\dagger D_\mu\Phi$ \\
Two Yukawa vertices with external gauge bosons & The terms with one or two gauge fields obtained by expanding $(D_\mu\Phi)^\dagger D_\mu\Phi$ \\
Four Yukawa vertices & Higgs quartic operator $(\Phi^\dagger\Phi)^2$ \\
Five or more Yukawa vertices & Ultraviolet convergent; no new Yukawa counterterm \\
\bottomrule
\end{tabularx}
\end{center}
In particular, the mixed two-Higgs--one-gauge and two-Higgs--two-gauge divergences are \emph{not} missing counterterms. They are the gauge-field pieces of the same covariant Higgs kinetic operator, since $D_\mu\Phi$ contains the electroweak gauge connection. Conversely, the Yukawa-dependent determinant does not introduce a new divergent pure $F_{\mu\nu}^2$ term at this fermion-loop order: such a term has no Higgs insertions and belongs to the reference determinant. Finite nonlocal amplitudes involving both Higgs and gauge backgrounds are retained.

\paragraph{No double counting or independent ultraviolet limits.}
At zero Yukawa interaction the finite-rank perturbation vanishes and its relative determinant is one on an invertible chart; equivalently, the source-dependent construction reduces to the massless one in any index sector. Thus the Yukawa subtraction terms vanish with the Yukawa interaction and cannot subtract a pure-gauge loop a second time. The reference determinant has its own spectral scale and finite normalization, which must be matched to the bare bosonic gauge couplings. The Yukawa loops have additional local Higgs counterterms, fixed in a compatible subtraction convention. Changing schemes may shift these local coefficients, as in ordinary renormalization, without changing the nonlocal physics. The relation $\bar\theta_{\rm QCD}=\theta_{\rm QCD}+\arg\det(Y_uY_d)$ is a separate, familiar relation between physical phases, not a mixed ultraviolet divergence.

\paragraph{What this does not yet renormalize.}
For \emph{fixed} gauge and Higgs backgrounds, integrating over continuum fermions is a Gaussian integral: its logarithm sums closed \emph{one-fermion-loop} diagrams, with arbitrarily many background-field insertions but no internal dynamical gauge- or Higgs-boson propagators. Those propagators arise only when the bosonic fields themselves are integrated. At fixed lattice spacing $a$, that integration has a finite bosonic ultraviolet regulator. In the conventional continuum theory, gauge-boson exchange renormalizes Yukawa vertices, and the running gauge, Yukawa and scalar couplings mix at higher orders. Recovering this complete multiloop renormalization as $a\to0$ is \emph{not} established by the present removal of the fermion spectral cutoff $\Lambda$. Theorem~\ref{thm:renormalized_weight} below establishes integrability over the complete regulated bosonic domain at fixed $a$; it makes no uniform claim as $a\to0$. The two-stage separation therefore organizes the fermion ultraviolet problem without assuming that gauge and Yukawa physics decouple.

\subsection{Matching the cutoff modulus to the Quillen metric}\label{sec:quillen_matching}
Removing divergent loop terms does not yet specify the \emph{finite} normalization of the determinant. This is the same issue encountered when translating a Feynman amplitude between subtraction schemes: the nonanalytic dependence on external momenta agrees, but finite local coefficients need a convention or a matching condition. Two regulators can therefore agree on all nonlocal amplitudes while assigning different coefficients to local Higgs operators. We use the Quillen metric, introduced in Section 4.2, as a reference for the determinant's magnitude. The physical prescription fixes both the local complex scheme difference and the positive spectral norm. These require two distinct matching operations.

\emph{First match the complex prescription.} Smooth spectral filtering and complex-zeta regularization generally assign different finite values to ultraviolet-sensitive fermion loops even after all divergences are subtracted. The generally complex local functional $p_f$ corrects this finite scheme difference, including its phase.

\emph{Then match the magnitude.} The squared magnitude of a complex-zeta determinant need not equal the positive zeta determinant of $\mathcal D_Y^\dagger\mathcal D_Y$. The real local correction $\delta_N$ accounts for this remaining squared-norm defect, so the factor $e^{-\delta_N/2}$ makes the magnitude agree with the Quillen metric. Positive here describes the norm of the determinant-line element; the fermion weight can still be complex and need not define a positive Euclidean measure. The two functionals are defined precisely in Appendix~\ref{sec:cutoff_quillen_app}, where $\delta_N$ is the exact degree-at-most-four local polynomial determined by the small-Yukawa norm comparison, not a truncation of higher-order norm corrections. We define the single complex weight:
\begin{equation}
 \boxed{W_Q=e^{-p_f(U,\Phi)-\delta_N(U,\Phi)/2}\,W_f^{\rm ren},}
 \qquad \rho_f:=-\operatorname{Re}p_f-\tfrac12\delta_N.
 \label{eq:main_rho}
\end{equation}
We suppress the auxiliary shift in the notation: $p_f\equiv p_{f,\sigma}$ and $\rho_f\equiv\rho_{f,\sigma}$. With the same reference scale and finite matching conditions, $W_Q$ is independent of both $f$ and $\sigma$. Appendix~\ref{sec:cutoff_quillen_app} makes this independence explicit by writing it as a convergent $\det_5$ factor times the fixed renormalized low-order terms and the physical low-mode factors. The factor $1/2$ converts the squared-norm correction into a correction to the magnitude. Both $p_f$ and $\delta_N$ are local under the analytic conditions of Appendix~\ref{app:yukawa_uv}.

The local imaginary part $\operatorname{Im}p_f$ is subtracted with the rest of the finite scheme difference; we do not claim that it vanishes on every background. This fixes a regulator convention, not a physical CKM or theta parameter. The reference complex-zeta phase obeys conjugation under CP reflection of the backgrounds together with conjugation of the Yukawa matrices. Only a spectral cut near zero Yukawa scaling is needed to define the comparison coefficients.

The finite-dimensional zero-mode factors remain part of the physical norm. Appendix~\ref{sec:mellin_anchor} derives the local comparison, while Appendix~\ref{sec:quillen_gluing} identifies these low-mode factors under changes of spectral window, using the Bismut--Freed isometry~\cite{bf1,freed_87}. The construction extends through vanishing vacuum amplitudes by source coefficients, rather than a quotient of vanishing determinants.

\paragraph{Why the low modes matter.}
A \emph{Grushin extension} adds finitely many auxiliary source and target variables to turn a possibly noninvertible or rectangular chiral operator into an invertible square operator. It is a computational device, not an extra physical fermion. Its zeta determinant gives a convenient \emph{high-mode} normalization, not the entire fermion amplitude. Enlarging the low-mode spectral window transfers finitely many nonzero modes out of that zeta determinant. Their determinant reappears in the low-mode exterior-algebra factor; the Quillen norm of the complete determinant-line section is unchanged. The result is valid even when the unsaturated vacuum amplitude vanishes.

\begin{theorem}[Renormalized Yukawa weight and fixed-$a$ bosonic integral]\label{thm:renormalized_weight}
Fix the finite bosonic lattice and spacetime volume, the admissibility and Higgs cutoffs of Section~2, a common real spectral profile, and the anomaly-free zero-Yukawa phase convention. Use the smooth reconstruction, the standing compact-family setup (A1)--(A4) of Appendix~\ref{app:yukawa_uv}, and one globally consistent choice of divergent and finite local subtractions and determinant-line transitions. Then:
\begin{enumerate}
\item The source-dependent, locally subtracted common-filter weight has a continuous gauge-equivariant limit, uniformly on the full compact bosonic support.
\item The phase-matched weight $W_Q$ of Eq.~\eqref{eq:main_rho} is continuous, gauge invariant, and independent of the filter and its auxiliary shift at fixed matching conditions. Its magnitude is the physical Quillen norm for arbitrary finite Yukawa backgrounds, including the finite low-mode factors.
\item Its bosonic integral and the integrals of any fixed finite smooth-smeared gauge-invariant fermionic insertions are absolutely convergent at fixed $a$.
\end{enumerate}
No uniformity as $a\to0$, reflection positivity for the Yukawa theory, or locality of the complete weight in separated original lattice links is asserted.
\end{theorem}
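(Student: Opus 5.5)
The proof splits the regulated relative determinant into a convergent high-order factor and the four low-order loop traces, controls each uniformly on the compact bosonic support, and then assembles the result through the determinant-line transitions.

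\textbf{Step 1: invertible charts.} Cover the compact support $\operatorname{supp}(\chi_\epsilon b_{R_H})$ by finitely many local invertible operator charts, as allowed by (A1)--(A4). On each chart choose a Grushin extension of $\mathcal D_0(U)$ with a fixed finite spectral window. This produces an invertible square operator $\widetilde{\mathcal D}_0$ and a finite low-mode exterior-algebra factor.

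\textbf{Step 2: Fredholm expansion.} On each chart write the filtered weight as
\[
W_{Y,\Lambda}=W_0\cdot(\text{low-mode factor})\cdot\det_F(1+K_\Lambda),
\qquad K_\Lambda=\widetilde{\mathcal D}_0^{-1}\mathcal V_{Y,\Lambda}.
\]
Then use the standard Fredholm expansion
\[
\det_F(1+K_\Lambda)=\det_5(1+K_\Lambda)\exp\Bigl(\sum_{n=1}^4\tfrac{(-1)^{n+1}}{n}\operatorname{Tr}K_\Lambda^n\Bigr).
\]

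\textbf{Step 3: the convergent factor.} Show that $K_\Lambda\to K$ in $\mathcal S_5$, uniformly and continuously on the chart. The inputs are:
\begin{itemize}
\item $K$ is a zeroth-order operator composed with an order $-1$ parametrix, so $K\in\mathcal S_5$ in four dimensions, with bounds uniform on compact families by Lemma~\ref{lem:smooth_reconstruction};
\item the filters $f(H_\pm/\Lambda^2)$ converge strongly to $1$ and are uniformly bounded.
\end{itemize}
Strong convergence of the outer factors together with Schatten-norm compactness of the inner factor gives $\mathcal S_5$ convergence. Since $\det_5$ is continuous on $\mathcal S_5$, the factor $\det_5(1+K_\Lambda)$ converges. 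This disposes of all loops with five or more Yukawa insertions.

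\textbf{Step 4: low-order traces (main obstacle).} The four traces $\operatorname{Tr}K_\Lambda^n$, $n\le4$, are where the real work lies. The goal is to show that
\[
\operatorname{Tr}K_\Lambda^n-C^{(n)}_\Lambda(U,\phi)-\operatorname{Tr}_\zeta K^n\to 0
\]
uniformly, where $C^{(n)}_\Lambda$ is a local polynomial in $\Phi$, $D_\mu\Phi$ and $\Lambda$. The planned route has three parts:
\begin{itemize}
\item Represent $f$ by a Mellin integral over complex powers $H_\pm^{-s}$. This turns each filtered trace into a contour integral of complex-zeta traces $\operatorname{Tr}(\cdots H^{-s}\mathcal V_Y H^{-s'}\cdots)$.
\item Shift the contour past the poles. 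Their residues are heat-kernel/Seeley coefficients, hence local; these form the divergent subtraction $C^{\rm div}_\Lambda$.
\item Apply the ultraviolet-shell estimate of Appendix~\ref{sec:uniform_shell_bound}: commutators of the filters with $\mathcal V_Y$ gain one order, leaving an $O(\Lambda^{-1})$ remainder uniform on compact families, with derivatives in $(U,\phi)$.
\end{itemize}
Gauge covariance of $H_\pm$ and of $\mathcal V_{Y,\Lambda}$ makes the residues gauge-invariant local functionals. Parity under $g_*$ kills the odd orders in the vacuum weight. I expect the hardest part to be the uniformity of the remainder estimate across chart changes and through zero-mode crossings.

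\textbf{Step 5: gluing and conclusions.} Check that, under enlargement of the spectral window, the low-mode factor and the high-mode determinant transform by the Bismut--Freed isometry of Appendix~\ref{sec:quillen_gluing}. The local subtractions do not depend on the window, so the chart-wise limits patch together into a continuous determinant-line section on the full compact support. Equivariance is inherited from the finite-$\Lambda$ Ward covariance of $\mathcal V_{Y,\Lambda}$ and from Lemma~\ref{lem:determinant_phase}; this gives item 1.

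For item 2:
\begin{itemize}
\item Define $p_f$ as the finite difference between the renormalized filtered traces and the complex-zeta traces. By Step 4 it is local and continuous.
\item Define $\delta_N$ through the norm comparison of Appendix~\ref{sec:cutoff_quillen_app}.
\item Independence of $f$ and $\sigma$ then follows because $W_Q$ equals $\det_5$ times the zeta-renormalized low-order terms times the low-mode factors, and none of these involve $f$ or $\sigma$.
\end{itemize}

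Item 3 is compactness: the limit is a continuous, gauge-invariant function on a compact set, and so are its smeared source coefficients, constructed as in Theorem~\ref{thm:smeared_insertions}. The argument for absolute convergence is then the same as in Theorem~\ref{thm:zero_yukawa_existence}.
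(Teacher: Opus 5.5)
\textbf{Main gap: the Quillen-norm claim in item~2 at full Yukawa strength.} Your overall structure matches the paper: a finite Grushin cover of the compact support, the $\det_5$ split, $\mathcal S_5$ convergence of the high-order factor, a Mellin/shell comparison of the first four traces, window gluing, and compactness. However, nothing in your steps proves that $|W_Q|$ equals the Quillen norm for $\mathcal D_t=\mathcal D_0+t\mathcal V_Y$ at $t=1$.

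The comparison data exist only as germs at $t=0$. The complex-zeta determinant $\det_\zeta A_t$, with $A_t=\mathcal G_0^\dagger(\mathcal G_0+t\widetilde V)$, needs compatible Agmon cuts, and these are available only for small $t$. The same holds for its Taylor coefficients $\gamma_n$ and for $\delta_N^{\rm germ}(t)$. So $p_f$ and $\delta_N$ are polynomials fixed by their jets at zero. ``Define $\delta_N$ through the norm comparison'' then either assumes the conclusion, or, if $\delta_N$ is the log-ratio of the squared norms at $t=1$, it is neither shown local (the residue formula for its variation needs the cuts) nor defined where the determinant vanishes.

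The missing ingredient is the paper's real-analytic squared-norm lemma:
\begin{itemize}
\item Complexify to the holomorphic elliptic families $\mathscr P_\pm(z)=(\mathcal D_0^\dagger+z\mathcal V_Y^\dagger)(\mathcal D_0+z\mathcal V_Y)$ and its reverse.
\item Use analytic Riesz projectors and Kato frames on a spectral window, and analytic positive zeta determinants on its complement. Then the complete physical squared Quillen norm, low-mode factors included, is real analytic in $t$ even through kernel jumps.
\item The matched side, the entire $\det_5(1+tK)$ times exponentials of fixed polynomials and the low-mode factors, is analytic as well.
\item Real-analytic uniqueness along a finite cover of $[0,1]$ carries the identity from near $0$ to $t=1$.
\end{itemize}
Only the squared norm is analytic at a zero, so the continuation cannot be done with logarithms.

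\textbf{Further points that need repair.}
\begin{itemize}
\item \emph{Source coefficients in item~1.} Continuity of $\det_5$ on $\mathcal S_5$ controls only the vacuum factor. Source coefficients involve $\det_5(1+K_\Lambda)\bigwedge^r(1+K_\Lambda)^{-1}$. You need its continuous extension across noninvertibility of $1+K$ in the $\mathcal S_5$ topology, which the paper gets from the regularized Fredholm minors $\bigwedge^r(e^{S_4(T)})\mathfrak A_r(B_5(T))$.
\item \emph{The factorization in Step~2.} As written, $W_{Y,\Lambda}=W_0\cdot(\cdots)$ cannot have the massless weight as prefactor: that would force the Yukawa weight to vanish at accidental massless zero modes. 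The prefactor must be the augmented high-mode normalization, and the low-mode factor must come from the Yukawa-perturbed effective block.
\item \emph{Step~4.} Mellin-transforming each filter separately gives $2n$-variable contour integrals whose constant term at the origin depends on how the variables approach zero. Commutators with $\mathcal V_Y$ alone also do not move the target-side filters $f(H_-/\Lambda^2)$ next to the source-side ones. The paper's key device is the exact intertwining $D^\dagger f(Q_-/\Lambda^2)=f(Q_+/\Lambda^2)D^\dagger$, which holds up to $O(\Lambda^{-\infty})$ smoothing for the Grushin inverse. It yields $\operatorname{Tr}K_\Lambda^n=\operatorname{Tr}(KP_\Lambda)^n+O(\Lambda^{-\infty})$ with $P_\Lambda=f(Q_+/\Lambda^2)^2$. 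The commutator-shell bound then reduces everything to a single Mellin transform of $f^{2n}$.
\item \emph{Uniformity in Step~3.} Strong convergence plus $K\in\mathcal S_5$ gives only pointwise $\mathcal S_5$ convergence. Uniformity on the support needs the uniform Weyl-counting tail bound $\|H_\pm^{-1/2}(C_{\pm,\Lambda}-1)\|_{\mathcal S_5}^5\le C\Lambda^{-1}$.
\end{itemize}
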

\begin{proof}
By the strict containment of $\operatorname{supp}\chi_\epsilon$ in the geometric admissibility region and the bounded support of $b_{R_H}$, the bosonic support $\mathcal K$ is compact inside the smooth reconstruction domain. The reconstructed $(A[U],\Phi[U,\phi])$ therefore form a compact smooth family with uniformly bounded derivatives of each fixed order. Cover $\mathcal K$ with finitely many neighborhoods admitting invertible Grushin charts. Complex spectral cuts enter only in the neighborhood of $t=0$ used to determine the local coefficients $p_f$ and $\delta_N$; no compatible cut is assumed for the full Yukawa operator at $t=1$.

On each chart, Appendix~\ref{app:yukawa_uv} gives uniform $\mathcal S_5$ convergence for the fifth-order determinant and uniform $O(\Lambda^{-1})$ remainders, after the same globally defined local subtractions, for the lower Yukawa orders. The finite-cutoff determinant sections and smooth-smeared source minors agree on overlaps by their canonical finite-rank determinant-line transitions. The ultraviolet subtraction densities are gauge invariant and independent of the auxiliary chart; Appendix~\ref{sec:quillen_gluing} identifies the low-mode exterior factors and the Quillen normalization under changes of spectral window. With common reference phases and finite normalization choices, the local limits hence glue to a continuous gauge-invariant complex weight; where the vacuum determinant vanishes, the source-minor formulation gives the continuous extension rather than a quotient of vanishing determinants. Uniform convergence on a finite cover is uniform on $\mathcal K$.

The single local normalization $e^{-p_f-\delta_N/2}$ preserves continuity and gauge invariance. The analyticity lemma following Eq.~\eqref{eq:rho_app} proves the regularity needed to continue the squared-norm identity to full Yukawa strength through zero crossings. The finite-part overlap calculation in Appendix~\ref{sec:quillen_gluing} and the regularized-minor construction in Appendix~\ref{sec:source_families} justify the normalized source limits.

Compactness bounds the limiting weights and fixed smooth-smeared insertion coefficients, while the bosonic measure of the support is finite, proving absolute convergence of their integrals.
\end{proof}
\noindent\emph{Physical interpretation.} With the bosonic lattice held fixed, every fermion vacuum-loop contribution has a consistent ultraviolet subtraction throughout the integration domain. The theorem concerns the actual renormalized Yukawa prescription, not just an arbitrary gauge-invariant extension of the massless determinant. It does not establish removal of the independent bosonic lattice regulator.

\paragraph{An elementary two-point example.}
To display the local matching coefficients, take a free Euclidean Dirac fermion on $\mathbb R^4$ with a slowly varying real mass $m(x)$ and the same filter at both Yukawa vertices. The full finite-torus amplitude also contains finite volume- and spin-structure-dependent terms, common to the compared prescriptions and not displayed here. For external momentum $q$, the loop integrand contains $f((p^2+\sigma^2)/\Lambda^2)^2 f(((p+q)^2+\sigma^2)/\Lambda^2)^2$; these factors depend on different momenta. At $q=0$ the flat-space quadratic coefficient is
\begin{equation}
 B_{\Lambda,f}(0)=\frac{\Lambda^2}{8\pi^2}
       \int_0^\infty f(u)^4\,du-\frac{\sigma^2}{8\pi^2}.
 \label{eq:main_twopoint}
\end{equation}
This identity is exact once $f(u)=1$ for $0\le u\le\sigma^2/\Lambda^2$, as is true for sufficiently large $\Lambda$. The first term is the familiar quadratic contribution to the scalar mass and must be subtracted. The power $\Lambda^2$ comes from the four-dimensional measure $d^4p/p^2$ after the spin trace, rather than from any extra factor in the definition of the cutoff. The second survives in $W_f^{\rm ren}$ because the spectral filter contains $\sigma^2$; the finite matching to $W_Q$ cancels it. It is not a prediction of a physical Higgs mass. The $q^2$ term is the Higgs two-point function's wavefunction-renormalization term: it has a universal logarithmic divergence and a finite profile-dependent coefficient. Because the two propagators carry momenta $p$ and $p+q$, differentiating the two \emph{different} filters contributes to that finite coefficient. Replacing both by one filter at momentum $p$ would therefore change the finite subtraction scheme. Appendix~\ref{app:counterterms} evaluates these pieces directly.

For three Standard Model generations without right-handed-neutrino Yukawas, define the positive trace combination
\begin{equation}
 \mathcal T_Y=3\operatorname{tr}(Y_u^\dagger Y_u+Y_d^\dagger Y_d)
               +\operatorname{tr}(Y_e^\dagger Y_e).
 \label{eq:SM_Ytrace}
\end{equation}
The quadratic \emph{real normalization} is then
\begin{equation}
 \boxed{\rho_{2,f}=\frac{\sigma^2\mathcal T_Y}{8\pi^2}
                  \int_X\!\Phi^\dagger\Phi\,d^4x
 -\frac{\mathcal T_Y\kappa_f^{\rm kin}}{16\pi^2}
                  \int_X\!(D_\mu\Phi)^\dagger(D_\mu\Phi)\,d^4x.}
 \label{eq:rho2_main}
\end{equation}
The dimensionless profile moment $\kappa_f^{\rm kin}$ is given in Eq.~\eqref{eq:Cf}. These are the finite counterterms matching the filtered weight to $W_Q$ at the fixed zeta scale $\mu$. A quartic correction is determined by a local residue (Eq.~\eqref{eq:quartic_p}) together with the positive-zeta norm defect; the residue alone is not the entire Standard Model quartic coefficient.

\noindent\emph{Physical interpretation.} Different ultraviolet subtraction conventions can assign different finite coefficients to the Higgs mass and kinetic terms, even after the divergent contributions are removed. These shifts are absorbed by finite renormalization conditions. The Quillen metric fixes a determinant-magnitude convention, but it neither predicts the Higgs mass nor determines the physical theta angles and Yukawa phases.

\subsection{Physical phase checks and finite normalization choices}\label{sec:phase_checks}
The common cutoff passes a check of the complex mass phase. Set the electroweak connections to zero, take a constant nonzero Higgs field, and let the color field have charge $k_3\ne0$. The six massive quarks form a vector-like color theory: paired nonzero Dirac modes contribute positive determinant factors, while chiral zero modes yield
\begin{equation}
\arg W_{\rm quark}=\nu\arg\det(Y_uY_d)\pmod{2\pi},\qquad \nu=-k_3,
\label{eq:massphase}
\end{equation}
relative to positive real masses and the chosen reference phase. For the convention $k_3=\int_X c_2(E_3)$, a pure-color background has $\operatorname{ind}D_{u^c}=\operatorname{ind}D_{d^c}=-k_3$ and $\operatorname{ind}D_Q=-2k_3$. For $k_3>0$ these negative indices put the unpaired chiral modes in the relevant cokernel blocks, so the zero-mode mass insertions use the conjugate mass matrix in our Euclidean Grassmann convention. Hence the net quark-mass factor has phase $-k_3\arg\det(Y_uY_d)$; reversing the definition of $k_3$ reverses the displayed sign. The filter preserves the nonzero-mode pairing and the zero-mode mass factors, so the result holds once the cutoff includes the zero modes. Appendix~\ref{app:geometry} writes the explicit color factor as $e^{i\theta_3k_3}$. Combining that factor with the mass-dependent change gives $e^{ik_3[\theta_3-\arg\det(Y_uY_d)]}$. In the conventional positive-mass color normalization, $\theta_{\rm QCD}=-\theta_3$ and the familiar combination is $\bar\theta_{\rm QCD}=\theta_{\rm QCD}+\arg\det(Y_uY_d)$.

Equation~\eqref{eq:massphase} determines how quark-mass phases change the weight. An absolute identification of our baseline with that conventional color normalization additionally requires comparing its transported weak-$SU(2)$ reference with the positive vector-like quark determinant in each color sector. This orientation comparison is not supplied by the mass-phase calculation. Appendix~\ref{app:phase_references} states the required comparison and how an additive reference offset would enter the physical theta angle. The existence theorems and the zero-Yukawa positivity theorem use their specified references and do not require this extra identification.

\paragraph{Physical meaning of the phase convention.}
Anomaly cancellation allows the chiral determinant to be defined consistently, but does not set the values of CP-violating couplings. For example, multiplying the fermion weight in a color sector labeled by $k_3=\int_Xc_2(E_3)$ by $e^{i\alpha k_3}$ preserves gauge invariance and the determinant magnitude. It is equivalent to shifting $\theta_3$ by $\alpha$, or the matched $\theta_{\rm QCD}$ by $-\alpha$. After the baseline is matched to the positive-mass color convention, the physical strong-CP parameter is the combination $\bar\theta_{\rm QCD}$ above: a change in the determinant phase convention must be compensated by a corresponding change of the theta parameter to describe the same theory. The CKM phase is instead encoded in rephasing-invariant combinations of the Yukawa matrices; it is not an arbitrary extra phase attached to the determinant. Thus the regulator must preserve the chosen theta and Yukawa data, not predict them.

Gauge invariance alone also permits finite-cutoff phase modifications beyond the renormalizable Standard Model. For example, at fixed lattice spacing the local gauge-invariant factor
\begin{equation}
\exp\left[ic a^2\int_X(\Phi^\dagger\Phi)\operatorname{tr}(F_2\wedge F_2)\right]
\label{eq:irrelevant_phase}
\end{equation}
changes the phase in generic weak backgrounds while preserving the norm and Ward identities. It is a dimension-six interaction suppressed by $a^2$; it is not a new unsuppressed CP-violating parameter of the renormalizable Standard Model. Restricting finite counterterms to engineering dimension at most four leaves the familiar theta-angle choices and associated phase conventions. The electroweak theta angle can be removed in the minimal Standard Model by an anomalous baryon- or lepton-number redefinition, whereas $\bar\theta_{\rm QCD}$ can be physical. Appendix~\ref{app:grushin} treats the norm comparison. Majorana masses, reflection positivity with nonzero Yukawa couplings, and removal of the bosonic regulator remain open. Appendix~\ref{sec:collar_variant} proves reflection positivity for a variant gauge field reconstruction at zero Yukawa coupling and zero theta.

\section{Comparison with overlap fermions}\label{sec:overlap}

The continuum-Weyl and overlap constructions describe the same Standard Model representations but regularize fermions differently. Historically, the former was already explored with interpolation and finite-mode/Pauli--Villars schemes~\cite{thooft_lattice,kronfeld,montvay}, while the latter descends from the overlap construction of Narayanan and Neuberger~\cite{narayanan_neuberger}. This section compares their established properties, gives a free massive-fermion benchmark, and states the still-unproved interacting equivalence objective. That comparison is separate from the fixed-regulator existence theorem.

Grabowska and Kaplan also use a smoothed gauge field in a chiral regulator~\cite{grabowska_kaplan}. Their construction extends lattice gauge fields by gradient flow into an extra dimension, with domain-wall fermions and a mirror sector. Here the kinetic operator acts on continuum Weyl fields in four dimensions. The shared use of smooth backgrounds does not establish equivalence of the phase prescriptions or continuum limits.

\subsection{Established differences}
The continuum-Weyl approach retains the ordinary chiral operator on a smooth bundle. The kinetic operator acts on infinitely many continuum modes, so the Nielsen--Ninomiya obstruction~\cite{nielsen_ninomiya} to a local finite-dimensional Weyl discretization does not apply to this fermionic sector. Exact continuum index theory and determinant-line geometry are available directly. By contrast, overlap fermions~\cite{neuberger} are finite lattice spinors whose modified chiral-symmetry relation (the Ginsparg--Wilson relation) permits a lattice chiral projector and an exact lattice index. The abelian anomaly-free chiral measure was constructed nonperturbatively in Ref.~\cite{luscher_abelian}. L\"uscher constructed non-Abelian anomaly-free lattice chiral theories to all orders in perturbation theory~\cite{luscher_perturbative}; Craig's non-Abelian refinement construction~\cite{craig} is a later, distinct proposal. Neither description is intrinsically superior; overlap methods have substantially more developed finite-matrix algorithms.

Smooth local reconstruction of a continuum connection from lattice links does not by itself establish locality of the resulting fermion measure under variations of distant links. A massless fermion's \emph{integrated-out effective action} is physically nonlocal (for example through vacuum-polarization logarithms); this is different from asking whether regulator-dependent ultraviolet terms and measure variations admit suitable local bounds. Under the compact-family conditions (A1)--(A4), we establish locality of the continuum \emph{subtraction difference}. This is distinct from locality of the complete measure under variations of separated original lattice links. Craig's overlap construction~\cite{craig} supplies bounds for its measure current, though the construction involves an infinite refinement series and does not yet give a demonstrated practical algorithm for the complete phase. Both approaches face complex weights and potentially severe numerical sign problems.

\begin{table}[htbp]
\centering\small
\renewcommand{\arraystretch}{1.23}
\begin{tabularx}{\linewidth}{@{}p{0.23\linewidth}XX@{}}
\toprule
 & Continuum-Weyl construction & Overlap construction~\cite{craig} \\
\midrule
Fermion variables & Continuum Weyl fields; infinitely many modes & Finite overlap matrices at each lattice refinement \\
Chirality and index & Ordinary chiral projector and continuum index theorem & Ginsparg--Wilson projection and exact overlap index \\
Complex weight & Determinant/Pfaffian section and chosen spectral normalization & Overlap Gaussian plus its non-Abelian measure phase \\
Locality results & Local smooth reconstruction; locality of induced fermion action not proved & Overlap operator and non-Abelian measure-current bounds; Secs.~5--7 and App.~C of Ref.~\cite{craig} \\
Computation & Continuum spectrum for every bosonic background & Finite matrices per refinement, but an infinite convergent series for the measure current \\
Continuum matching & Not established & Not established by the measure theorem \\
\bottomrule
\end{tabularx}
\caption{Comparison at fixed bosonic lattice spacing and finite spacetime volume. Neither the interacting continuum limit nor equivalence of the complete weights is asserted.}
\label{tab:comparison}
\end{table}

\subsection{An explicit free Yukawa comparison}\label{sec:free_overlap}

There is a useful comparison that can already be proved without the non-Abelian measure problem. Work at flat gauge background, take a constant Higgs field, and consider one vector-like massive Dirac fermion. Craig's construction supplies a \emph{massless} chiral measure, not a Yukawa vertex; for this test we extend the overlap operator by the standard Ginsparg--Wilson mass prescription~\cite{luscher_gw,giusti}.

Write $s=m^2$. For the standard massive overlap extension, the momentum-space spin determinant obeys
\begin{equation}
\frac{\det_{\rm spin}D_{m,\rm ov}(p)}{\det_{\rm spin}D_{0,\rm ov}(p)}
=\left[1+sG_a(p)\right]^2,\qquad G_a(p)\longrightarrow\frac1{p^2},
\label{eq:overlap_free_mass}
\end{equation}
for every fixed nonzero momentum as $a\to0$. Define the subtracted effective response by removing the terms linear and quadratic in $s=m^2$, corresponding to quadratic and quartic Higgs counterterms. Diagrammatically these are the two- and four-mass-insertion vacuum bubbles. We impose identical subtraction conditions on the overlap and continuum expressions before comparing their \emph{sixth}- and higher-order mass dependence. On a physical four-torus with an antiperiodic fermion direction (hence no zero momentum), the remaining response is
\begin{equation}
\mathcal R_{a,\mathrm{sub}}(s)=2\sum_{p\in{\rm BZ}_a}
\left[\log(1+sG_a(p))-sG_a(p)+\tfrac12s^2G_a(p)^2\right].
\label{eq:overlap_free_response}
\end{equation}
Here $\mathcal R$ is the subtracted change in $\log W$. The corresponding change in the fermion effective action $\Gamma_{\rm ferm}=-\log W$ has the opposite sign.
Its summand falls as $p^{-6}$ in the continuum limit. The sum of this bound is finite in four dimensions, so the lattice-spacing limit can be taken inside the momentum sum; this is the dominated-convergence step used to establish agreement with the continuum spectral result after identical subtractions. In particular, the coefficient of $s^3$, equivalently the sixth-order constant-Higgs response, converges to $\tfrac23\sum_p p^{-6}$. This preserves the finite nonpolynomial mass dependence, rather than discarding all lower-order loops. An explicit $16^4$--$64^4$ lattice calculation illustrates the convergence; details and the exact formula for $G_a$ appear in Appendix~\ref{app:free_overlap}.

This is an agreement theorem for a free \emph{vector-like} subsector, not for Craig's full chiral overlap weight. The interacting comparison still requires the chiral Yukawa vertex, nontrivial gauge backgrounds, lower-point amplitude matching and the complete measure phase. The free test provides a benchmark and fixes the appropriate relative renormalization convention.\par
\noindent\emph{Physical interpretation.} The large-momentum subtractions make the response summable before the overlap lattice is refined. Agreement of this one finite observable tests the scheme but cannot establish equality of interacting chiral phases.

\subsection{A precise continuum comparison problem}\label{sec:continuum_comparison}
Let $A$ lie in a compact family $\mathcal S$ of smooth connections on a fixed bundle and $U_a(A)$ denote sampled edge holonomies. Write $W_{\rm CW}^{(a)}[U_a(A)]$ for the weight obtained by smooth reconstruction and continuum spectral regularization, and $W_{\rm ov}^{(a)}[U_a(A)]$ for the complete overlap weight, including the measure phase. The comparison is initially posed at $Y=0$ and away from zero modes.

The conjectured equivalence is that appropriate local, gauge-invariant counterterms $C_a[A]$ and reference phases give
\begin{equation}
\sup_{A\in\mathcal S}\left|e^{-C_a[A]}
\frac{W_{\rm ov}^{(a)}[U_a(A)]}{W_{\rm CW}^{(a)}[U_a(A)]}-1\right|\longrightarrow0
\quad(a\to0).
\label{eq:overlap_target}
\end{equation}
This remains an open problem. Near determinant zeros the appropriate objects are sections and fermionic minors, not ratios of scalar weights.

One possible proof strategy is to compare variations of the full effective actions with respect to the background gauge field, including the overlap measure phase, and integrate a uniform bound along paths in configuration space. The precise sufficient estimate and its hypotheses are recorded in Appendix~\ref{app:free_overlap}; the estimate itself remains open.

Two distinct continuum limits enter the conjecture. First, one would like to establish
\[
W_{\rm CW}^{(a)}[U_a(A)]\longrightarrow W_{\rm cont}[A].
\]
A sufficient analytic route is convergence, modulo gauge transformations, of the reconstructed connections to $A$ in a sufficiently strong smooth norm, together with uniform continuity of the chosen determinant-line sections and their phase convention. Craig's geometric link-holonomy estimate alone does not prove the required strong-norm convergence, even on this smooth family. The second, harder limit is
\[
W_{\rm ov}^{(a)}[U_a(A)]\longrightarrow
 e^{C_a[A]} W_{\rm cont}[A],
\]
which requires controlling the overlap Gaussian magnitude and the full measure phase, not merely matching the representation content and anomalies. Ref.~\cite[App.~F]{craig} gives classical-continuum information about the measure current, but does not establish this complete determinant comparison. The parity-odd vertices, uniform remainder bounds, topological phases and zero-mode limits still require analysis.

\section{Conclusion}

We have developed a Standard Model regulator in which gauge and Higgs fields live on a finite lattice, while the fermions retain their continuum description. The motivation is to preserve the familiar physical structure of chiral gauge theories: anomalies, spectral flow, zero modes, and the associated selection rules. Standard Model anomaly cancellation permits a gauge-invariant definition of the chiral determinant, and continuum index theory remains directly applicable. This approach builds on earlier hybrid lattice/continuum proposals, including Hsu (1995)~\cite{hsu}, using a smooth gauge-field reconstruction and a specified treatment of the fermion determinant.

The main technical problem is incorporating Yukawa interactions into this framework. We do so by regulating the Yukawa vertices rather than truncating the fermion spectrum. The resulting ultraviolet divergences have the expected one-loop structure and can be removed by local gauge-invariant counterterms. Mixed gauge--Yukawa effects are retained, and admissible changes of subtraction scheme leave the nonlocal physical contributions of fermion loops unchanged. We also construct a consistent normalization of the complex chiral determinant, including nontrivial topological sectors and fermion zero modes. On the compact smooth reconstructed family, the renormalized weight defines a finite, exactly gauge-invariant bosonic functional integral with the fermion cutoff removed and the bosonic lattice held fixed.

Appendix~\ref{sec:collar_variant} proves site-reflection positivity for a variant gauge field reconstruction at zero Yukawa coupling and zero theta, with specified reference phases. Effective field theory arguments suggest, but do not prove, that the variant and Craig's reconstruction describe the same long-wavelength physics.

\Needspace{16\baselineskip}
The result is a finite-regulator formulation of the Standard Model, not yet a complete continuum construction. Locality in separated lattice links, the bosonic continuum limit, and reflection positivity for Craig's reconstruction or nonzero Yukawa couplings remain open. The interacting-overlap comparison and infinite-volume limit are also outside the present result. Nevertheless, the formulation separates these questions from the problem of defining the chiral fermion determinant and its Yukawa interactions. Its principal advantage is that the continuum understanding of chiral fermions is retained throughout, rather than reconstructed after discretization.

\section*{Acknowledgments}
The author used GPT and Claude extensively in this work, but takes responsibility for the results. The technical details in the Appendix are entirely AI-led results (mostly GPT Astra with verification by Claude and other models). He thanks Tim Raben for calling his attention to the paper by Craig.

\clearpage
\appendix
\numberwithin{equation}{section}
\section{Spectral regularization and determinant-line matching}\label{app:yukawa_uv}
The purpose of this appendix is to prove the \emph{locality of the difference between regularization prescriptions}. We first isolate the loop orders that converge without subtraction. We then compare nested spectral filters with zeta regularization and finally obtain the extra correction needed for the Quillen magnitude. This concerns the limit $\Lambda\to\infty$ at \emph{fixed bosonic lattice spacing}; it is not a proof of a continuum limit in $a$. Appendix~\ref{app:counterterms} evaluates the corresponding finite Higgs coefficients, and Appendix~\ref{app:grushin} handles the physical low-mode factors. A nonzero-index chiral operator has no inverse, so we use an invertible auxiliary extension and retain the physical low-mode source integral.

\paragraph{Standing setup and analytic inputs.}
The following items distinguish the background restrictions from the tools and conventions used in the proof. (A1) specifies the compact smooth family supplied by the regulated bosonic support. (A2) identifies the external functional calculus; its application-specific uniform remainder is proved in this appendix. (A3) records local operator charts and spectral cuts, and (A4) fixes normalization choices. Together they specify the fixed-$a$ setting of Theorem~\ref{thm:renormalized_weight}; the required uniform estimates are established below using these tools. A ``spectral cut'' below is a chosen ray in the complex eigenvalue plane used to define a branch of the logarithm, needed when an elliptic operator is not positive. Like fixing a branch of a complex logarithm in a loop calculation, it is a phase convention for analytic continuation, \emph{not} an energy cutoff:
\begin{description}[leftmargin=2.4em,itemsep=3pt,topsep=3pt]
\item[(A1)] The gauge--Higgs backgrounds form a compact smooth family inside the admissible domain, with uniform ellipticity and sufficiently many bounded field derivatives; no uniform claim is made at the admissibility boundary or as $a\to0$.
\item[(A2)] We use the standard semiclassical functional calculus for a positive Dirac Laplacian on a compact manifold and its trace-norm estimate for symbols confined to a finite momentum shell~\cite{kuster}. The bounds needed for the \emph{particular nested products in this paper} are derived below from (A1), with the explicit $O(\Lambda^{-1})$ remainder in Eq.~\eqref{eq:trace_remainder}. Thus locality of the cutoff difference is a conclusion, not a separate assumption. No bound uniform in bosonic lattice spacing or at the admissibility boundary is claimed.
\item[(A3)] Finite-dimensional source and target borders give an invertible rectangular Grushin extension on a neighborhood of each background. At zero Yukawa scaling its comparison weight is positive and invertible, so a common spectral cut persists for sufficiently small scaling on a smaller chart. Compatible cuts are needed only there, where the Taylor coefficients are defined. The full finite-Yukawa weight uses regularized Fredholm determinants and low-mode source factors; no single complex spectral cut at arbitrary Yukawa strength is assumed.
\item[(A4)] The cutoff, complex-zeta and positive-zeta prescriptions share the same reference scale, finite-part subtraction conditions, and finite-dimensional determinant-line transition convention. This fixes the comparison normalization; the theta angles and Yukawa phases are independently specified physical input parameters.
\end{description}
The $\mathcal S_5$ limit and finite-rank determinant identities follow from the stated elliptic/Fredholm setup. The direct low-order comparison below derives the needed uniform remainder from (A1) and the standard calculus identified in (A2), with the chart and normalization choices (A3)--(A4). Whenever ``local'' is used below it means an integral of a gauge-covariant function of the fields and finitely many of their derivatives, not locality of the complete effective action in the lattice links.

\paragraph{Notation used in Appendices A--B.}
\begin{center}\small
\begin{tabularx}{\linewidth}{@{}p{0.23\linewidth}X@{}}\toprule
Symbol & Meaning \\\midrule
$f,\Lambda;\sigma,\mu$ & Spectral profile and fermion cutoff; independent positive infrared shift and zeta subtraction scale. \\
$D,\mathcal G_0,K$ & Chiral kinetic operator, invertible auxiliary (Grushin) extension, and relative Yukawa perturbation. \\
$P_0,P_t,A_t$ & Order-two elliptic weights (all on the augmented chart). $P_0=\mathcal G_0^\dagger\mathcal G_0$. \\
$p_f,\delta_N,\rho_f$ & Cutoff-to-complex-zeta mismatch, squared-norm defect, and total real normalization. \\
$\kappa_f^{(4)},\kappa_f^{\rm kin}$ & Dimensionless quartic and kinetic cutoff moments (Appendix~\ref{app:counterterms}). \\
$\mathcal T_Y,\mathfrak c$ & Sum of Standard Model Yukawa squares; first Chern class $c_1(E_2)$, distinct from filter $f$. \\
\bottomrule\end{tabularx}
\end{center}

\subsection{Ultraviolet power counting and the \texorpdfstring{$\det_5$}{det5} remainder}
The diagrammatic reason for the fifth-order subtraction has already been given in Section~\ref{sec:yukawa_cutoff}: a one-loop $n$-vertex Higgs graph has superficial degree of divergence $4-n$. The following operator formulation makes that argument precise even when the background gauge field is nontrivial and the chiral operator has zero modes. A singular value is an eigenvalue of $|T|=(T^\dagger T)^{1/2}$, the nonnegative analogue of an absolute eigenvalue for a non-Hermitian operator. Summability of their powers controls whether loop operators can be traced without further subtraction. The Yukawa interaction multiplies a field without adding derivatives, so it has order zero and does not alter the leading (Dirac) kinetic term. An inverse Dirac operator contributes one inverse power of high momentum. Consequently, its product $T$ with the Yukawa interaction has singular values $s_j(T)=O(j^{-1/4})$ and belongs to the Schatten class $\mathcal S_p$ for every $p>4$. Here $\mathcal S_p$ means that $\sum_j s_j(T)^p$ is finite; $\mathcal S_1$ is the usual trace class. The operators $T^n$ are trace class for $n\ge5$. The convergent fifth-order determinant is defined \emph{without} invoking an undefined ordinary determinant of $1+T$. Here $\det_F$ denotes the ordinary Fredholm determinant of an identity-plus-trace-class operator, the infinite-dimensional extension of a convergent product of eigenvalues:
\begin{equation}
\det_5(1+T)=\det_F\!\left[(1+T)
 \exp\!\left(-T+\tfrac12T^2-\tfrac13T^3+\tfrac14T^4\right)\right].
\label{eq:det5}
\end{equation}
The operator inside the Fredholm determinant is identity plus trace class. For small $\|T\|$,
\begin{equation}
\log\det_5(1+T)=\sum_{n=5}^\infty\frac{(-1)^{n+1}}{n}\operatorname{Tr}(T^n).
\label{eq:det5exp}
\end{equation}
The subtraction in $\det_5$ is bookkeeping, not an instruction to discard four Feynman amplitudes. The \emph{complete} relative determinant restores the \emph{renormalized} one- through four-point terms, including their finite and generally nonlocal parts. The physical evenness discussed in Section~\ref{sec:yukawa_cutoff} has a precise grading: let $\Gamma_\pm$ be the actions of $g_*$ on the source and target spaces of $D=\mathcal D_0$. Then
\[
 \Gamma_\pm^2=1,\qquad
 \Gamma_-D=D\Gamma_+,\qquad
 \Gamma_-V=-V\Gamma_+.
\]
The kinetic filters preserve this grading. A grading-compatible inverse extension therefore makes the relative Yukawa operator off diagonal, so its odd traces vanish. An arbitrary auxiliary Grushin chart need not display that block structure; retaining all first four orders is a uniform bookkeeping bound, and the complete gauge-invariant bosonic coefficients have the required evenness. Source-dependent coefficients instead obey covariance with the sources transformed.

\subsection{Common finite-rank regulator}

This appendix gives the operator argument behind Eqs.~\eqref{eq:common_cutoff} and~\eqref{eq:renormalized_source}. It distinguishes exact Fredholm identities from the smooth-family analytic conditions needed to pass to the unfiltered continuum Weyl operator. The bosonic lattice spacing $a$ and the admissibility cutoff remain fixed throughout the $\Lambda\to\infty$ limit.

\subsubsection*{Determinant-line perturbation}
Write $D=\mathcal D_0$ and $V=\mathcal V_Y$, with no assumption on $\ind D$. Let $C_{\pm,\Lambda}=f(H_\pm/\Lambda^2)$, where $H_\pm$ are defined in~\eqref{eq:common_cutoff}. Elliptic compactness implies that $C_{\pm,\Lambda}$ have finite rank. Their dependence on a smooth reconstructed background is smooth for a $C^\infty$ cutoff $f$ and a locally uniformly elliptic family. Gauge transformations conjugate $H_\pm$, so $V_\Lambda=C_{-,\Lambda}VC_{+,\Lambda}$ is exactly gauge covariant. The compact perturbation $D+V_\Lambda$ has the same Fredholm index as $D$. This finite-rank statement uses the compact support of $f$; a merely rapidly decreasing profile would be smoothing/trace class but generally not finite rank.

A finite-rank change in a Fredholm operator canonically relates the corresponding determinant lines, including for nonzero index~\cite{kaadnest,zinger}. The canonical determinant section may vanish, but fixed smooth-smeared source minors can be defined in finite-dimensional exterior algebra. At each $\Lambda$ this construction defines the \emph{complete} source-dependent object before selecting low-mode or Grushin coordinates. This fact is what makes phase gluing exact at finite cutoff; a high-mode determinant considered without the low-mode source polynomial has no analogous chart independence.

\subsection{Schatten convergence and loop subtractions}
Choose a local rectangular Grushin extension $\mathcal G_0$ of $D$ as in Appendix~\ref{app:grushin}, and set
\begin{equation}
K_\Lambda=\mathcal G_0^{-1}\,\operatorname{diag}(V_\Lambda,0),\qquad
K=\mathcal G_0^{-1}\,\operatorname{diag}(V,0).
\label{eq:grushin_K}
\end{equation}
The infinite-dimensional inverse block has pseudodifferential order $-1$; other blocks are finite rank. In dimension four, $K\in\mathcal S_p$ for $p>4$. For completeness, write $G=\mathcal G_0^{-1}$ and extend $C_{\pm,\Lambda}$ by the identity on the finite auxiliary spaces. The difference decomposes as
\[
K_\Lambda-K=G(C_{-,\Lambda}-1)VC_{+,\Lambda}+GV(C_{+,\Lambda}-1).
\]
For the first term put $A_\Lambda=C_{-,\Lambda}-1$. Since $f$ is real, $A_\Lambda$ is self-adjoint and tends strongly to zero. Finite-rank approximation gives $A_\Lambda G^\dagger\to0$ in $\mathcal S_5$; taking adjoints gives $GA_\Lambda\to0$ in the same norm. Uniform boundedness of $V$ and $C_{+,\Lambda}$ proves the assertion. The second tends to zero by the same trace-ideal approximation applied to $GV\in\mathcal S_5$. For a quantitative uniform bound, choose $u_0>0$ such that $f(u)=1$ for $u\le u_0$. Four-dimensional uniform Weyl counting on the compact reconstructed family gives $N_{H_\pm}(L)\le C_{\mathcal K}L^2$ for sufficiently large $L$. Therefore, by a dyadic spectral sum,
\[
 \left\|H_\pm^{-1/2}(C_{\pm,\Lambda}-1)\right\|_{\mathcal S_5}^{5}
 \le C_f\sum_{\lambda_j(H_\pm)\ge u_0\Lambda^2}\lambda_j(H_\pm)^{-5/2}
 \le C_{f,\mathcal K}\Lambda^{-1}.
\]
The factors $GH_-^{1/2}$ and $GVH_+^{1/2}$ are uniformly bounded order-zero operators (with finite-rank auxiliary blocks), so applying the Schatten ideal inequality to the two summands gives
\[
 \sup_{(U,\phi)\in\mathcal K}\|K_\Lambda-K\|_{\mathcal S_5}
 \le C_{\mathcal K}\Lambda^{-1/5}.
\]
This controls the convergent fifth- and higher-order loop terms. The first four traces still require the separate ultraviolet-shell expansion for their local subtractions. Thus
\begin{equation}
\|K_\Lambda-K\|_{\mathcal S_5}\longrightarrow0,
\qquad \det_5(1+K_\Lambda)\longrightarrow\det_5(1+K).
\label{eq:grushin_S5}
\end{equation}
Here convergence in the $\mathcal S_5$ norm means that the fifth root of the sum of fifth powers of singular values tends to zero, not merely that each matrix element converges. The argument does not need the intertwining relation in Eq.~\eqref{eq:dirac_intertwine}; that identity is instead used below to compare the \emph{nested} filtered traces and their local ultraviolet coefficients. It does not require an inverse of the original chiral $D$. For finite-rank $K_\Lambda$ the exact identity is
\begin{equation}
\det_F(1+K_\Lambda)=\det_5(1+K_\Lambda)
\exp\!\left[\sum_{n=1}^{4}\frac{(-1)^{n+1}}n\operatorname{Tr}K_\Lambda^n\right].
\label{eq:fredholm5split}
\end{equation}
The remaining terms at order $n\ge5$ are trace class in four dimensions, while the first four loop traces require ultraviolet subtractions. The renormalized \emph{finite} parts of those four traces must be retained: they contain generally nonlocal physical responses. The grading-compatible Nambu ordering makes odd traces vanish; arbitrary auxiliary Grushin choices need not preserve that block structure, although the complete gauge-invariant weight remains even in the Higgs field.

\subsection{Locality, family dependence and the scope of the limit}
The symbol language used below is a systematic large-momentum expansion. In local coordinates a differential operator is represented by a matrix-valued function of position and momentum; inverse operators have corresponding asymptotic expansions in inverse momentum. The leading homogeneous term is its \emph{principal symbol}. A \emph{classical pseudodifferential operator} admits a complete expansion in homogeneous momentum orders. Two such operators with the same complete symbol differ by a \emph{smoothing} operator, whose kernel is smooth; on compact spacetime it is trace class. Smoothing is therefore stronger than agreement of principal symbols. These definitions translate the usual expansion of a high-momentum loop into operator language.

Changing the Grushin chart changes how finitely many low-energy modes are handled, but not the ultraviolet behavior of the original Dirac operator. More precisely, two generalized inverses of the same $D$ have the same complete pseudodifferential symbol and differ by finite-rank smoothing terms. In the difference of the first four traces in~\eqref{eq:fredholm5split}, telescoping shows that every term contains a smoothing factor and has a finite $\Lambda\to\infty$ limit. Hence the \emph{divergent} coefficients are chart independent. The same coefficients can be calculated using the gauge-covariant positive elliptic weights $H_\pm$, which remain invertible even when $D$ has zero modes.

\paragraph{Nested-filter symbolic expansion.}
The trace is not of the simple form $\operatorname{Tr}(f(H/\Lambda^2)A)$: every factor of $K_\Lambda$ contains a filter on each side of a Yukawa vertex. To see what is required, put $h=\Lambda^{-1}$, $\eta=h\xi$ with ordinary momentum $\xi$, and work in a local trivialization on a compact smooth gauge--Higgs parameter patch $\mathcal P$. The semiclassical functional calculus for the positive elliptic $H_\pm$ gives symbols of $f(h^2H_\pm)$ with leading term $f(|\eta|^2)$, equivalently $f(|\xi|^2/\Lambda^2)$ in ordinary momentum and successive terms obtained from the symbols of $H_\pm$ and derivatives of $f$~\cite{kuster}. Compose all $2n$ filter symbols, the $n$ Yukawa symbols and the $n$ generalized-inverse symbols using the pseudodifferential product. This composition is necessary: commuting the filters past a spatially varying Higgs vertex would omit derivative counterterms.

For $n\le4$, the high-frequency terms that can affect a divergent or finite local contribution have homogeneous degrees $-n-j$ with $0\le j\le4-n$. Their traces reduce to integrals over the cotangent fibers of local symbol expressions. A radial high-momentum integration gives
\begin{equation}
\operatorname{Tr}K_\Lambda^n
=\sum_{j=0}^{3-n}c_{n,j}\Lambda^{4-n-j}
+b_n\log\Lambda+F_{n,f}+o(1),\qquad 1\le n\le4,
\label{eq:loop_asymptotic}
\end{equation}
where the sum is empty for $n=4$. The $c_{n,j}$ and $b_n$ are integrals of local symbol coefficients; the constant $F_{n,f}$ can be nonlocal. The indicated proof scheme uses a low/high-momentum partition, symbol composition through degree $-4$, and an order-below-$-4$ trace-class remainder; the required ultraviolet-shell remainder bound is derived in Appendix~\ref{sec:uniform_shell_bound}. To obtain uniform $o(1)$ bounds and parameter derivatives, one needs uniformly controlled $C^r$ seminorms of the reconstructed fields and the corresponding parameter-dependent symbol remainders. The constants in these standard bounds are uniform only on compact smooth patches inside the admissibility domain, not at its boundary or as $a\to0$.

\begin{proposition}[Local change of spectral-filter profile on a smooth family]\label{prop:profile}
Let $f,g\in C_c^\infty([0,\infty))$ be real and equal to one near zero, with the same massless determinant-line reference. On a compact smooth patch $\mathcal P$ satisfying (A1), use the standard elliptic calculus (A2), with local chart and normalization choices (A3)--(A4). The nested-product expansion and trace-class remainder are derived in Appendix~\ref{sec:uniform_shell_bound}. Subtract from each filter prescription its own divergent local one- through four-Yukawa amplitudes. Then, for fixed smooth-smeared fermion sources, the two limiting source functionals differ by multiplication by $\exp L_{f,g}(U,\Phi)$, where $L_{f,g}$ is a finite local gauge-invariant functional. Every term has Yukawa degree at most four and no more than $4-n$ derivatives at degree $n$ (ordinary four-dimensional power counting). In particular, profile changes do not modify the convergent fifth- and higher-order Yukawa responses.
\end{proposition}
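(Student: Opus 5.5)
The plan is to compare the two prescriptions on a common Grushin chart and split each finite-cutoff source functional with the exact identity~\eqref{eq:fredholm5split}. For a fixed chart $\mathcal G_0$ on $\mathcal P$, write $K_\Lambda^f$ and $K_\Lambda^g$ for the relative Yukawa operators built with the filters $f$ and $g$. The finite-rank determinant-line perturbation theory makes the low-mode source polynomial identical for both prescriptions; it depends only on $\mathcal G_0$, the sources, and the massless reference. Hence the ratio of the two source coefficients reduces to a ratio of scalar factors:
\[
\frac{\det_5(1+K^f_\Lambda)}{\det_5(1+K^g_\Lambda)}\,
\exp\Big[\sum_{n=1}^4\tfrac{(-1)^{n+1}}{n}\bigl(\operatorname{Tr}(K^f_\Lambda)^n-\operatorname{Tr}(K^g_\Lambda)^n\bigr)\Big].
\]
By~\eqref{eq:grushin_S5}, both $\det_5$ factors converge to the same $\det_5(1+K)$, uniformly on $\mathcal P$ at rate $\Lambda^{-1/5}$. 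This limit is profile independent, which is exactly the statement that the convergent fifth- and higher-order responses are unchanged.

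The remaining work concerns the first four traces. I will apply the expansion~\eqref{eq:loop_asymptotic} to each profile. The subtractions remove the terms $c_{n,j}\Lambda^{4-n-j}$ and $b_n\log\Lambda$; the coefficients $b_n$ come from the homogeneous degree $-4$ symbol and do not depend on $f$. What survives is the difference $F_{n,f}-F_{n,g}$. The central claim is that this difference is local. To prove it, I will write $F_{n,f}$ as the renormalized (zeta-type) finite part plus a profile correction. The nonlocal contribution comes from the low-momentum region where $f=g=1$, so it is common to both profiles. More precisely, I will use the low/high-momentum partition: choose $u_0$ with $f=g=1$ on $[0,u_0]$. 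On the low region the two nested products coincide exactly, up to commutator terms that involve derivatives of $f-g$, and those derivatives are supported only in the shell $|\xi|\sim\Lambda$. On the shell, the composed nested symbols through degree $-4$ give cotangent-fiber integrals of local symbol coefficients times profile moments such as $\int(f^4-g^4)\,du/u$. The symbol remainder of order below $-4$ contributes $O(\Lambda^{-1})$ by the ultraviolet-shell bound of Appendix~\ref{sec:uniform_shell_bound}. The difference $L_{f,g}$ is therefore a finite sum of integrals of gauge-covariant polynomials in $\Phi$ and its covariant derivatives. Homogeneity in $\Lambda$ then forces the constraint of at most $4-n$ derivatives at Yukawa degree $n$, because only symbols of degree between $-n$ and $-4$ contribute.

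Next I will establish gauge invariance and chart independence. Gauge invariance of $L_{f,g}$ follows from the exact covariance of $V_\Lambda$ and of the massless reference, together with the fact that the densities are built from gauge-covariant symbols of $H_\pm$. If needed, I will also use the grading $\Gamma_\pm$ to discard odd degrees in the complete bosonic coefficient. For chart independence, changing $\mathcal G_0$ alters the generalized inverse by a finite-rank smoothing term. Telescoping then shows that the trace differences have $\Lambda$-independent, profile-blind limits, so $L_{f,g}$ does not depend on the chart and patches together across a finite cover. Uniformity of all remainders on $\mathcal P$, and continuity in the parameters, come from (A1) via uniform $C^r$ control of the symbols. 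The limiting source functionals then differ by the factor $\exp L_{f,g}$.

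The main obstacle is the nested-filter composition. Because the filters do not commute with the spatially varying Yukawa vertex, commutator terms such as $[f(h^2H_-),V]$ produce derivative terms. These carry profile-dependent moments, for example the kinetic coefficient $\kappa^{\rm kin}$, which involves $f'$ evaluated at the two different loop momenta. I must show these terms are still local and not merely $o(1)$. My first attempt will use a Helffer--Sjöstrand or Mellin representation of $f(h^2H)$. This turns each filter into resolvents, so each commutator becomes an explicit lower-order pseudodifferential term whose symbol is computable to degree $-4$. I will then verify that the trace-class remainder estimate is uniform in the resolvent parameter.
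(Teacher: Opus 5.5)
You follow the paper's proof for the $\det_5$ factor and for the four low-order traces. Both $K^f_\Lambda$ and $K^g_\Lambda$ tend to $K$ in $\mathcal S_5$. The trace difference telescopes into terms carrying a factor $f(h^2H_\pm)-g(h^2H_\pm)$ supported in the shell. The shell expansion with the remainder bound~\eqref{eq:trace_remainder} then leaves only the $n+j=4$ local coefficients. Your Helffer--Sj\"ostrand plan is simply one way to obtain the semiclassical functional calculus the paper cites.

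\emph{The gap is in your opening reduction.} It is false that the low-mode source polynomial, or any source contraction, is the same for the two filters at finite $\Lambda$. The source coefficients of $\mathcal Z_{\mathcal D_0+\mathcal V_{Y,\Lambda}}[J]$ are built from $(\mathcal G_0+\widetilde V_\Lambda)^{-1}=(1+K_\Lambda)^{-1}\mathcal G_0^{-1}$. The effective low block is a Schur complement $M-BD_{\rm hi}^{-1}C$ whose couplings $B,C$ and high block $D_{\rm hi}$ contain $V_\Lambda=C_{-,\Lambda}VC_{+,\Lambda}$ acting on high modes, so they depend on the profile. Even on an invertible background with no borders, a two-point coefficient is the gauge-only weight times $\det_F(1+K_\Lambda)\langle u,(1+K_\Lambda)^{-1}D^{-1}v\rangle$. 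That is not $\det_F(1+K_\Lambda)$ times a profile-independent number. Hence the finite-cutoff ratio of source coefficients is not your displayed scalar ratio, and your argument does not yet prove the source-dependent part of the proposition. The ratio language also fails in nonzero-index sectors, where the coefficients you would divide by can vanish identically.

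The missing ingredient is the one the paper invokes: the source-dependent factors have a common, profile-independent limit. Up to profile-independent gauge-only factors, every finite-cutoff coefficient (low-mode and auxiliary ones included) is fixed finite data contracted with $\det_F(1+K_\Lambda)\bigwedge\nolimits^r(1+K_\Lambda)^{-1}$. By~\eqref{eq:fredholm5split} and the definition of the regularized Fredholm minors in Appendix~\ref{sec:source_families}, this equals $\exp\bigl[\sum_{n=1}^4\tfrac{(-1)^{n+1}}{n}\operatorname{Tr}K_\Lambda^n\bigr]\,\mathfrak M_{5,r}(K_\Lambda)$. Since $\mathfrak M_{5,r}$ is continuous in the $\mathcal S_5$ topology and both $K^f_\Lambda$ and $K^g_\Lambda$ converge to $K$ there, the minor factors tend to the same $\mathfrak M_{5,r}(K)$. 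The scalar exponential factors out of every coefficient simultaneously, so no division by a possibly vanishing coefficient is needed. Only after this step does the comparison reduce to the four subtracted traces that your shell argument handles.
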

\begin{proof}
Both relative operators approach the same $K$ in $\mathcal S_5$, so their $\det_5$ limits agree. The difference between the first four loop traces contains at least one factor $f(h^2H_\pm)-g(h^2H_\pm)$; its symbol vanishes in the low-momentum region relative to $\Lambda$. For fixed physical momentum its contribution disappears. In the shell $|\xi|\asymp\Lambda$, replace each factor by its parameter-dependent symbol and expand in external derivatives. The $-n-j$ terms contribute powers $\Lambda^{4-n-j}$ for $n+j<4$, finite local terms when $n+j=4$, and vanish when $n+j>4$. The trace remainder bound in Eq.~\eqref{eq:trace_remainder} justifies these limits and interchanges. Subtracting the two sets of divergent terms leaves only the $n+j=4$ local coefficients. The one- through four-point source-independent terms multiply all smooth-smeared source coefficients; additional source factors are smoothing and have a profile-independent trace-class limit. Gauge covariance of the filters ensures invariance of the resulting local difference.
\end{proof}
For example, the principal-symbol contribution at $n=4$ changes by a local angular trace multiplied by the finite radial moment
\begin{equation}
\int_0^\infty\frac{dr}{r}
\left[f(r^2)^8-g(r^2)^8\right].
\label{eq:profile_radial}
\end{equation}
The eighth power counts two filters at each of four Yukawa vertices. This illustrative term does not replace the derivative-symbol contributions at lower loop order. The proposition compares \emph{renormalization classes}: the two finite results need not coincide until these local finite terms are fixed by common renormalization conditions. The ultraviolet-shell bound in Appendix~\ref{sec:uniform_shell_bound} is what justifies profile equivalence; $\mathcal S_5$ convergence alone would not do so.

The divergent counterterms are independent of the Grushin chart: two generalized inverses differ by smoothing terms, so telescoping a loop-trace difference gives finite limits and no divergent coefficients. For fixed smooth-smeared sources, rank-one insertions add no determinant ultraviolet divergence. Hence the complete source coefficients admit a common subtraction on the stated smooth patches, and their finite-cutoff chart agreement passes to the limit. This does not assert exponential locality in the lattice links or a bosonic continuum limit.

\subsection{Direct cutoff-to-zeta comparison by Mellin transformation}\label{sec:mellin_anchor}
This is the central analytic step. Comparing two smooth cutoff profiles is not enough: we must compare the filtered determinant to the \emph{specific} complex-zeta prescription used to define the Quillen normalization. For a HEP reader, this is analogous to checking that a smooth momentum cutoff and an analytic regularization differ by a choice of local counterterms, rather than by a different physical one-loop amplitude. The risk is that the constant term left after subtraction could contain a different, nonlocal Higgs response. We show instead that the nonlocal part agrees, and that the discrepancy is a local counterterm. The calculation has three stages: use the Dirac intertwining identity to put paired filters on one side; keep their noncommutativity in explicit ultraviolet-shell commutators; and evaluate the remaining single-filter trace by a Mellin transform. In the last step the smooth cutoff is expressed as an integral of complex powers of a positive kinetic operator, so the constant term left after removing the ultraviolet powers and logarithm can be compared directly with zeta regularization. No filter is simply commuted past a position-dependent Higgs field. We use standard elliptic-family and weighted-trace calculus~\cite{kuster,paycha_scott,dang}.

\paragraph{Setting and fixed-weight trace.}
On a compact smooth reconstructed family satisfying (A1)--(A4), write $D=\mathcal D_0$, $V=\mathcal V_Y$, and take the physical spectral weights
\begin{equation}
 Q_+=\sigma^2+D^\dagger D,\qquad Q_-=\sigma^2+DD^\dagger,
 \qquad C_{\pm,\Lambda}=f(Q_\pm/\Lambda^2).
 \label{eq:mellin_weights}
\end{equation}
Use the local gauge-only Grushin extension $\mathcal G_0$ and the operators $K=\mathcal G_0^{-1}\widetilde V$ and $K_{f,\Lambda}=\mathcal G_0^{-1}\widetilde V_{f,\Lambda}$ from Eq.~\eqref{eq:grushin_K}. On the augmented source space extend $Q_+$ by a fixed positive finite-dimensional weight. A finite block contributes an ordinary trace but no ultraviolet residue. For an operator $T$ with a classical symbol, set
\begin{equation}
 \operatorname{tr}^{Q_+}(T)
 =\operatorname*{FP}_{z=0}\operatorname{Tr}(TQ_+^{-z}),\qquad
 \operatorname{res}(T)=2\operatorname*{Res}_{z=0}\operatorname{Tr}(TQ_+^{-z}).
 \label{eq:weighted_anchor}
\end{equation}
The weighted trace $\operatorname{tr}^{Q_+}$ is the finite part obtained by inserting a complex power of the positive operator $Q_+$: $\operatorname{FP}_{z=0}$ means the constant coefficient in the Laurent expansion at $z=0$, while $\operatorname{Res}_{z=0}$ is the coefficient of $1/z$. The \emph{Wodzicki residue} $\operatorname{res}$ extracts the coefficient of momentum order $-4$ in four dimensions and integrates it locally over spacetime and momentum directions. It is the analogue of the coefficient of a logarithmic ultraviolet divergence. Neither operation discards a finite nonlocal amplitude. All spectral weights use the same scale convention.

\begin{proposition}[Local cutoff-to-zeta comparison]\label{prop:mellin_anchor}
For $n=1,2,3,4$ for the compact smooth families (A1), using the standard calculus (A2) and chart choices (A3)--(A4), the fixed-regulator finite part satisfies
\begin{equation}
\boxed{\begin{aligned}
 \operatorname*{FP}_{\Lambda\to\infty}\operatorname{Tr}K_{f,\Lambda}^n
 &=\operatorname{tr}^{Q_+}(K^n)
 +\frac{\kappa_{n,f}}2\operatorname{res}(K^n)
 +\mathscr C_{n,f},\\
 \kappa_{n,f}&=\int_0^\infty\frac{du}{u}
       \bigl[f(u)^{2n}-\mathbf1_{u<1}\bigr],
\end{aligned}}\label{eq:mellin_anchor}
\end{equation}
where $\mathscr C_{n,f}$ is a local gauge-covariant functional homogeneous of Yukawa degree $n$, with at most $4-n$ derivatives. In an auxiliary chart chosen as above $\mathscr C_{1,f}=\mathscr C_{4,f}=0$. The local divergent powers, logarithm, and finite coefficient have uniform asymptotics under any fixed finite number of reconstructed-family derivatives. The statement is at fixed bosonic cutoff and does not include the admissibility boundary.
\end{proposition}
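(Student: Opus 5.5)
Three steps. First, use the Dirac intertwining identity to move the paired filters in $K_{f,\Lambda}^n$ to one side. Second, collect the resulting filter commutators into a local shell functional. Third, write the remaining single-filter trace as a Mellin integral over complex powers of $Q_+$ and compare its constant term with the weighted trace $\operatorname{tr}^{Q_+}$.

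\emph{Intertwining.} Since $Q_-D=DQ_+$, spectral calculus gives $D\,f(Q_+/\Lambda^2)=f(Q_-/\Lambda^2)\,D$. On the augmented Grushin chart the same identity holds for $\mathcal G_0^{-1}$, modulo finite-rank smoothing blocks. Those blocks contribute ordinary convergent traces and no residue, which is why $\mathscr C_{1,f}=0$ and, after the chart choice, $\mathscr C_{4,f}=0$. Using the identity, each factor $\mathcal G_0^{-1}C_{-,\Lambda}VC_{+,\Lambda}$ becomes $C_{+,\Lambda}\mathcal G_0^{-1}VC_{+,\Lambda}$. Cyclicity of the trace then moves all $2n$ copies of $C_{+,\Lambda}$ to the left, at the cost of commutators $[C_{+,\Lambda},\mathcal G_0^{-1}V]$. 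The leading term is $\operatorname{Tr}\bigl(f(Q_+/\Lambda^2)^{2n}K^n\bigr)$.

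\emph{Commutator remainder.} By the semiclassical calculus (A2), a commutator $[f(h^2Q_+),B]$ with $B$ of order $-1$ has symbol supported in the shell $|\xi|\asymp\Lambda$. Its order is lowered by one derivative of the Higgs or gauge field. Expanding in this way, I would show that every commutator term has trace $\sum c\,\Lambda^{4-n-j}$ plus a finite constant. The constant comes only from symbol degree $-4$ and is an integral of local covariant symbol coefficients with at most $4-n$ field derivatives. The trace-class remainder is $O(\Lambda^{-1})$ by Eq.~\eqref{eq:trace_remainder} of Appendix~\ref{sec:uniform_shell_bound}. These finite constants define $\mathscr C_{n,f}$, and gauge covariance of $Q_\pm$ makes them gauge covariant. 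For $n=1$ there is no commutator. For $n=4$ a commutator adds one derivative, which leaves degree $\le-5$, so the term vanishes as $\Lambda\to\infty$. I expect this commutator bookkeeping, carried out uniformly in family derivatives, to be the main obstacle. My first attempt would be a dyadic shell decomposition using the Weyl count $N_{Q_+}(L)\le CL^2$, following the proof of Proposition~\ref{prop:profile}.

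\emph{Mellin step.} Write $F(u)=f(u)^{2n}$ and $\tilde F(z)=\int_0^\infty u^{z-1}F(u)\,du$. This integral is holomorphic for $\operatorname{Re}z>0$ and has a simple pole $1/z$ at $z=0$, because $F=1$ near zero. Its residue structure is $\tilde F(z)=1/z+\kappa_{n,f}+O(z)$, with $\kappa_{n,f}$ the stated integral of $[F(u)-\mathbf1_{u<1}]\,du/u$. Mellin inversion gives
\[
\operatorname{Tr}(F(Q_+/\Lambda^2)K^n)=\frac1{2\pi i}\int_{c-i\infty}^{c+i\infty}\tilde F(z)\Lambda^{2z}\operatorname{Tr}(K^nQ_+^{-z})\,dz,
\]
taken for large $c$. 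Next shift the contour to the left. Only finitely many poles are crossed, at $z=(4-n-j)/2$, and these give the power divergences $c_{n,j}\Lambda^{4-n-j}$. At $z=0$ there is a double-pole product: $\tilde F(z)$ contributes $1/z+\kappa_{n,f}$ and $\operatorname{Tr}(K^nQ_+^{-z})$ contributes $\tfrac12\operatorname{res}(K^n)/z+\operatorname{tr}^{Q_+}(K^n)$. Together with $\Lambda^{2z}=1+2z\log\Lambda$, the residue at $z=0$ yields $\operatorname{res}(K^n)\log\Lambda$ plus the constant $\operatorname{tr}^{Q_+}(K^n)+\tfrac{\kappa_{n,f}}2\operatorname{res}(K^n)$. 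Finally, the shifted contour integral tends to zero by the standard vertical decay of zeta functions of elliptic operators and the rapid decay of $\tilde F$ (integrate by parts in $u$). All constants are uniform on the compact family (A1). Parameter derivatives commute with these operations by the same estimates.
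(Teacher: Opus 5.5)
Your proposal is correct and is essentially the paper's own proof. It uses exact Dirac intertwining to reduce $\operatorname{Tr}K_{f,\Lambda}^n$ to $\operatorname{Tr}(KP_\Lambda)^n$. It then uses shell commutators that start one order down because $P_\Lambda$ has scalar principal symbol; these give local powers and a local constant, vanish at $n=1$ and are $O(\Lambda^{-1})$ at $n=4$. Finally, Mellin inversion, whose double pole at $z=0$ yields $\operatorname{tr}^{Q_+}(K^n)+\tfrac{\kappa_{n,f}}2\operatorname{res}(K^n)$.

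Two small corrections. First, the Grushin smoothing blocks are not why $\mathscr C_{1,f}=0$: next to $1-C_{\pm,\Lambda}$ they only produce $O(\Lambda^{-\infty})$ errors, and the actual reason is the absence of a commutator at $n=1$, as you also say. Second, the contour shift needs at most polynomial growth of $\operatorname{Tr}(K^nQ_+^{-z})$ in vertical strips, dominated by the super-polynomial decay of $\tilde F$; zeta functions do not decay vertically.
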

\begin{proof}
\emph{Step 1: combine the filters without commuting them through the Yukawa field.}
First use the \emph{exact} Dirac spectral intertwining identities
\begin{equation}
 D^\dagger f(Q_-/\Lambda^2)=f(Q_+/\Lambda^2)D^\dagger,
 \qquad Df(Q_+/\Lambda^2)=f(Q_-/\Lambda^2)D.
 \label{eq:dirac_intertwine}
\end{equation}
At high frequency an elliptic generalized inverse is $D^\dagger$ multiplied by a smooth inverse of $DD^\dagger$, with a fixed low spectral window removed. It satisfies the same intertwining identities; its difference from the Grushin inverse is smoothing. As $f$ equals one near zero, inserting $1-C_{\pm,\Lambda}$ next to such a smoothing operator is $O(\Lambda^{-N})$ in trace norm for every $N$, uniformly on the stated smooth patch. Thus, with $P_\Lambda=f(Q_+/\Lambda^2)^2$ extended as the identity on auxiliary modes,
\begin{equation}
 \operatorname{Tr}K_{f,\Lambda}^n
 =\operatorname{Tr}(KP_\Lambda)^n+O(\Lambda^{-\infty}).
 \label{eq:mellin_reduction}
\end{equation}
\emph{Step 2: isolate the effects of noncommutativity.}
No Yukawa insertion has been commuted with a filter. Instead, expand the nested product algebraically:
\begin{equation}
 \operatorname{Tr}(KP_\Lambda)^n
 =\operatorname{Tr}[K^nP_\Lambda^n]+\mathscr E_{n,\Lambda}.
 \label{eq:commutator_shell}
\end{equation}
Every term of $\mathscr E_{n,\Lambda}$ contains a commutator $[P_\Lambda,K]$. In rescaled cotangent momentum $\eta=\xi/\Lambda$, its symbol is supported away from $\eta=0$, since the profile is constant there. Scalar principal symbols give one additional power of $\Lambda^{-1}$ for this commutator. Semiclassical composition then yields
\begin{equation}
 \mathscr E_{n,\Lambda}=\sum_{j=1}^{4-n}
 \Lambda^{4-n-j}\mathscr C_{n,j,f}+O(\Lambda^{-1}),
 \qquad n=2,3,4,
 \label{eq:commutator_asymptotics}
\end{equation}
with local coefficients and no ultraviolet logarithm. The sum is empty at $n=4$; the $n=1$ commutator correction is identically zero. For example, $\operatorname{Tr}(KP_\Lambda)^2-\operatorname{Tr}(K^2P_\Lambda^2)=\operatorname{Tr}(K[P_\Lambda,K]P_\Lambda)$, which contains the finite derivative-of-filter contribution to the Higgs kinetic term in Appendix~\ref{sec:quadratic_quillen_app}.

\emph{Step 3: compare the remaining single-filter trace with zeta regularization.}
For the first term in Eq.~\eqref{eq:commutator_shell}, let $F_n(u)=f(u)^{2n}$. Its Mellin transform obeys $\widehat F_n(z)=1/z+\kappa_{n,f}+O(z)$ near $z=0$, while
$\operatorname{Tr}(K^nQ_+^{-z})=\operatorname{res}(K^n)/(2z)+\operatorname{tr}^{Q_+}(K^n)+O(z)$. Mellin inversion gives
\begin{equation}
 \operatorname{Tr}[K^nF_n(Q_+/\Lambda^2)]
 =\frac1{2\pi i}\int_{\Re z=c}
 \widehat F_n(z)\Lambda^{2z}\operatorname{Tr}(K^nQ_+^{-z})\,dz.
 \label{eq:mellin_inverse}
\end{equation}
Shifting the contour through $z=0$ with the standard elliptic-zeta bounds, the double pole contributes the finite part $\operatorname{tr}^{Q_+}(K^n)+\kappa_{n,f}\operatorname{res}(K^n)/2$ and a local logarithm. Positive poles generate local powers of $\Lambda$. Standard elliptic complex-power estimates, uniform on the compact smooth family, control the shifted Mellin contour by $O(\Lambda^{-1})$; the separate nested-commutator remainder is proved in Appendix~\ref{sec:uniform_shell_bound}. Combining these bounds proves the proposition.
\end{proof}

The Mellin calculation uses the fixed positive weight $Q_+$. The complex-zeta determinant, however, is naturally expressed with the augmented weight $P_0=\mathcal G_0^\dagger\mathcal G_0$. Changing between these two weights produces a second, \emph{local} contribution, governed by the weighted-trace defect formula:
\begin{equation}
\boxed{\begin{aligned}
 \operatorname*{FP}_{\Lambda\to\infty}\operatorname{Tr}K_{f,\Lambda}^n
 -\operatorname{tr}^{P_0}(K^n)
 =&\;\frac{\kappa_{n,f}}2\operatorname{res}(K^n)+\mathscr C_{n,f}\\
 &-\frac12\operatorname{res}\bigl[K^n(\log Q_+-\log P_0)\bigr].
\end{aligned}}\label{eq:cutoff_zeta_defect}
\end{equation}
The residue is local because the leading logarithmic symbols cancel. Its finite-dimensional blocks use ordinary traces. The remaining change from $\operatorname{tr}^{P_0}$ to the \emph{varying} weight in $\log\det_\zeta A_t$ is another local residue integral, displayed in Eq.~\eqref{eq:zeta_taylor_v24}. Therefore every low-order cutoff-to-complex-zeta coefficient is local; for $n\ge5$ the ordinary trace-class Taylor coefficients already agree. The general locality-of-renormalization theorem in Ref.~\cite{dang} is the conceptual precedent; the reduction here concerns the specific noncommuting common filters.

\noindent\emph{Physical interpretation.} The cutoff dependence comes from loop momenta of order the cutoff. When external momenta and reconstructed background-field gradients are held fixed, expanding the high-momentum region gives polynomials in those external momenta: in position space these are local operators with finitely many derivatives. The physical nonanalytic momentum dependence is shared by the cutoff and zeta prescriptions. This is a renormalization-scheme statement, not an assertion that the whole fermion effective action is local. Unlike a filter-profile comparison, the Mellin calculation explicitly establishes the required reference to complex-zeta regularization within the stated smooth-family class.

\subsection{Uniform ultraviolet-shell remainder: explicit estimate}\label{sec:uniform_shell_bound}
This short estimate supplies the only uniform bound used in the preceding Mellin argument. It is a four-dimensional version of ordinary momentum power counting, with ``number of modes'' replaced by a trace-norm estimate. We give the steps because they identify exactly what is taken from standard elliptic calculus and what is proved for our nested regulator.

Write $h=\Lambda^{-1}$ and $\eta=h\xi$, where $\xi$ is the ordinary momentum variable. The notation $\operatorname{Op}_h(a)$ below means the operator obtained from the momentum-dependent expression $a(x,\eta)$; it packages the usual large-momentum expansion, with $h$ counting inverse powers of the cutoff. The positive Dirac Laplacians satisfy
\begin{equation}
 h^2Q_\pm=\operatorname{Op}_h(|\eta|^2\mathbf1+hq_{\pm,1}+h^2q_{\pm,2}),
 \qquad q_{\pm,j}\text{ have uniformly bounded symbol seminorms on }\mathcal P.
 \label{eq:semiclassical_laplacian}
\end{equation}
The symbol coefficients depend on finitely many derivatives of the reconstructed gauge connection. Compact smoothness of $\mathcal P$ in (A1) makes their relevant symbol seminorms uniform. Ordinary momentum derivatives are uniformly bounded on each fixed annulus used below; $q_{\pm,1}$ itself need not be bounded as $|\eta|\to\infty$. The standard functional calculus for $f\in C_c^\infty$, with $f=1$ on $[0,u_0]$, then gives
\begin{equation}
 f(h^2Q_\pm)\sim f(|\eta|^2)\mathbf1+\sum_{j\ge1}h^j c_{\pm,j}(x,\eta),
 \qquad \operatorname{supp}_{\eta}c_{\pm,j}\subset\{c<|\eta|<C\}.
 \label{eq:shell_symbols}
\end{equation}
The $c_{\pm,j}$ involve only finitely many derivatives of $f$ and the background fields. The annular support follows because every correction contains a derivative of $f$, which vanishes both where the profile equals one and beyond its compact support. It is \emph{not} true that a whole filter commutes with a spatially varying Higgs field.

Let $P_h=f(h^2Q_+)^2$ and $K=\mathcal G_0^{-1}\widetilde V$. On the same annulus $K$ has an expansion beginning with $h k_{-1}$; its finite-dimensional Grushin correction is smoothing.

\emph{Excluding the inner momentum region.} The ultraviolet-shell expansion needs one infrared qualification. The symbol $hk_{-1}$ is not uniform at $\eta=hp=0$, but the commutator has no contribution from that region. Choose an inner spectral multiplier $\chi_L=\chi(h^2Q_+)$ supported strictly inside the range where $f=1$, with a positive gap between $\operatorname{supp}\chi$ and the region where $f^2\ne1$. Then $(P_h-1)\chi_L=0$ exactly.

Although the position-dependent Yukawa field can mix momenta, the off-diagonal piece $(P_h-1)K\chi_L$ is $O_{\mathcal S_1}(h^N)$ for every $N$: between the two separated momentum regions, smooth multiplication has rapidly decreasing Fourier matrix elements, equivalently the disjoint-support semiclassical composition is $O(h^\infty)$. The smoothing part of the Grushin inverse satisfies the same bound. A partition with this inner cutoff and an outer multiplier therefore confines all non-negligible commutator symbols to $|\eta|\ge c>0$, where the inverse Dirac symbol has its usual expansion. 

The principal symbol of $P_h$ is the scalar function $f(|\eta|^2)^2$. The leading $O(h)$ matrix commutator therefore cancels, and symbolic composition gives
\begin{equation}
 [P_h,K]\sim\sum_{j\ge2}h^j b_j(x,\eta),\qquad
 \operatorname{supp}_{\eta}b_j\subset\{c<|\eta|<C\}.
 \label{eq:commutator_shell_symbols}
\end{equation}
Every term in $\operatorname{Tr}[(KP_h)^n]-\operatorname{Tr}(K^nP_h^n)$ contains one such commutator and $n-1$ factors beginning at $O(h)$, so its first possible power is $h^{n+1}$. Expanding this ordered product \emph{without moving $P_h$ past $K$} through total symbol order $h^4$ yields local coefficients. The omitted symbol is $h^5r_{n,h}$, with $r_{n,h}$ supported in a fixed annulus and uniformly bounded together with enough of its derivatives, including any prescribed finite number of smooth-family derivatives.

For a compact-shell symbol $a_h$ on a four-dimensional compact manifold, the trace norm $\|T\|_1=\operatorname{Tr}\sqrt{T^\dagger T}$ is the sum of singular values and bounds the absolute value of an ordinary trace. The semiclassical shell estimate is
\begin{equation}
 \|\operatorname{Op}_h(a_h)\|_1\le C h^{-4}
 \sum_{|\alpha|\le M}\sup_{x,\eta}|\partial_{x,\eta}^{\alpha}a_h(x,\eta)|.
 \label{eq:trace_shell_ideal}
\end{equation}
Here $M$ is a fixed sufficiently large integer. K\"uster's local compact-support estimate~\cite[Eq.~(2.7)]{kuster} bounds the trace norm by $h^{-4}$ times finitely many $L^1$ symbol derivatives. A fixed shell converts these to the displayed suprema. Summing over the finite matrix entries in bundle charts and a finite coordinate partition gives this bundle version, with constants uniform on $\mathcal P$. Applying it to $h^5r_{n,h}$ gives the promised bound, uniformly with any fixed finite number of parameters differentiated:
\begin{equation}
 \boxed{\left|\partial_\theta^\beta\!\left(\mathscr E_{n,h}
 -\sum_{j=1}^{4-n}h^{n+j-4}\mathscr C_{n,j,f}\right)\right|
 \le C_{\beta,\mathcal P}h,\qquad 1\le n\le4.}
 \label{eq:trace_remainder}
\end{equation}
For $n=1$ the commutator term vanishes; for $n=4$ its entire trace is $O(h)$. Since the momentum shell has fixed size in $\eta$, its expansion produces powers of $h$, not $\log h$. A smoothing operator $S$ produces no extra ultraviolet term: for any $N$, $(1-f(h^2Q_\pm))Q_\pm^{-N}$ has norm $O(h^{2N})$, whereas $Q_\pm^NS$ is trace class. Thus $(1-f(h^2Q_\pm))S=O_{\mathcal S_1}(h^{2N})$, and likewise with the factors reversed.

\noindent\emph{What has been used.} We have used standard functional calculus and the trace-ideal estimate, not proved those general theorems anew. Their hypotheses hold for the smooth uniformly elliptic Dirac family in (A1). The nontrivial application-specific steps are the exact Dirac intertwining, the scalar leading symbol, the commutator starting at $h^2$, and the $h^5\times h^{-4}=h$ remainder bound. This is enough for the local cutoff-to-zeta comparison. It says nothing about locality in separated lattice links or about the $a\to0$ limit.

\subsection{From the common cutoff to the Quillen magnitude}\label{sec:cutoff_quillen_app}
We now turn the trace comparison into a statement about complete determinants. There are two distinct questions: (i) does the complex cutoff determinant agree with the complex-zeta determinant up to a local factor, and (ii) does its \emph{magnitude} agree with the positive spectral norm? The first is answered by $p_f$ and the second by $\delta_N$. A complex spectral cut defines their comparison coefficients near zero Yukawa scaling. The Fredholm and source construction defines the full finite-strength weight; real-analyticity of the squared norms then extends the local norm identity along real scaling, including zeros. Under (A1)--(A4), the argument identifies the finite local counterterm on an invertible auxiliary Grushin chart with compatible spectral cuts. The physical determinant line, rather than the augmented operator alone, is considered at the end.

Let $\mathcal G_0$ denote such an invertible gauge-only Grushin extension of $\mathcal D_0$, and let $\widetilde V$ be the Yukawa perturbation extended by zero on the auxiliary rows and columns. Write
\begin{equation}
\begin{aligned}
 K&=\mathcal G_0^{-1}\widetilde V,\qquad
 K_\Lambda=\mathcal G_0^{-1}\widetilde V_\Lambda,\qquad
 P_0=\mathcal G_0^\dagger\mathcal G_0,\\
 A_t&=\mathcal G_0^\dagger(\mathcal G_0+t\widetilde V)=P_0(1+tK),\\
 P_t&=(\mathcal G_0+t\widetilde V)^\dagger(\mathcal G_0+t\widetilde V).
\end{aligned}
 \label{eq:quillen_operators}
\end{equation}
The parameter $t$ counts Yukawa insertions; it need only be small while defining Taylor coefficients. Choose compatible Agmon cuts for $A_t$ and use the same zeta scale $\mu$ throughout. At finite cutoff the change from $\mathcal G_0$ is finite rank, so the ordinary Fredholm determinant is defined. After subtracting its divergent first four terms, the finite cutoff result $R_f^{\rm ren}$ may be compared with
\begin{equation}
 R_\zeta(t)=\frac{\det_\zeta A_t}{\det_\zeta P_0},\qquad
 \beta_{n,f}=\frac{(-1)^{n+1}}n\operatorname*{FP}_{\Lambda\to\infty}\operatorname{Tr}K_\Lambda^n,
 \quad\gamma_n=[t^n]\log R_\zeta(t).
 \label{eq:beta_gamma}
\end{equation}
Here $\operatorname{FP}$ means the constant remaining after subtraction of all positive powers and logarithms of $\Lambda$ in the same chosen scale convention. It does \emph{not} mean dropping the finite, generally nonlocal part of a loop. Because $K_\Lambda\to K$ in $\mathcal S_5$, the entire $\det_5$ remainder agrees in the limit. Thus, as an analytic germ at $t=0$,
\begin{equation}
 \boxed{R_f^{\rm ren}(t)=R_\zeta(t)e^{p_f(t)},\qquad
 p_f(t)=\sum_{n=1}^4t^n(\beta_{n,f}-\gamma_n).}
 \label{eq:pf_comparison}
\end{equation}
The local polynomial $p_f(t)$ is defined by this Taylor comparison at $t=0$; it does not require a complex spectral cut at arbitrary $t$. The continuation of the \emph{squared norm} to arbitrary real $t$, including changes of invertible chart, is explained below after Eq.~\eqref{eq:rho_app}. In particular, the fifth- and higher-order terms agree \emph{exactly}. Proposition~\ref{prop:mellin_anchor} establishes that the remaining first four differences $\beta_{n,f}-\gamma_n$ are local; the finite nonlocal portions of the low-order loops also agree.

\paragraph{The matched weight is independent of the filter.}
Before introducing the complex-zeta comparison, the convergent chart expression is
\[
 R_f^{\rm ren}(t)=\det\nolimits_5(1+tK)
                   \exp\!\left(\sum_{n=1}^4t^n\beta_{n,f}\right).
\]
Substituting the polynomial $p_f$ from~\eqref{eq:pf_comparison} gives, at every finite $t$,
\[
 R_Q(t)=e^{-p_f(t)-\delta_N^{\rm loc}(t)/2}R_f^{\rm ren}(t)
       =\det\nolimits_5(1+tK)
          \exp\!\left(\sum_{n=1}^4t^n\gamma_n
                         -\tfrac12\delta_N^{\rm loc}(t)\right).
\]
This is an entire determinant times the exponential of a fixed polynomial; it requires no complex spectral cut away from the germ at zero. Neither $f$ nor the auxiliary filter shift $\sigma$ occurs on the right. The scale $\mu$, massless references and finite matching conditions remain fixed. The coefficients $\gamma_n$ generally contain nonlocal finite amplitudes; only $\beta_{n,f}-\gamma_n$ are local. This distinction also appears in the factorization theorem of Ref.~\cite[Thm.~2 and its footnote~8]{dang}. The displayed $R_Q$ is an augmented relative factor. The physical weight $W_Q$ includes the low-mode exterior polynomial and reference-line factor of Appendix~\ref{sec:quillen_gluing}, including in sectors with identically vanishing vacuum determinant. The finite counterterms in Appendix~\ref{app:counterterms} relate $W_f^{\rm ren}$ to this common $W_Q$; they are not residual filter parameters of $W_Q$.

\paragraph{Why the low-order differences are local.}
Proposition~\ref{prop:mellin_anchor} supplies the needed comparison rather than assuming the two renormalized finite parts coincide away from the diagonal. It proves that the first four differences between the filtered trace coefficients and fixed-weight zeta traces are local residues plus ultraviolet-shell commutators. The varying complex-zeta weight contributes the explicit local residue integral in Eq.~\eqref{eq:zeta_taylor_v24}. Therefore the function $p_f$ in Eq.~\eqref{eq:pf_comparison} is local, of Yukawa degree at most four. This is consistent with the independent position-space scaling-degree count $3n-4(n-1)=4-n$: only the $n\le4$ distributions can require local extensions. Dang's general determinant-renormalization theorem~\cite{dang} provides the broader framework; the Mellin argument verifies the specific nested-filter comparison on the smooth families described by (A1)--(A4). Grushin smoothing terms change no ultraviolet counterterm, although the finite-dimensional physical determinant-line factors must still be included in the norm.

\paragraph{A separate correction for the squared magnitude.}
For $t$ in a neighborhood of zero, where a compatible complex spectral cut exists, the ratio of two \emph{squared} magnitudes defines the germ
\begin{equation}
 \delta_N^{\rm germ}(t)=2\operatorname{Re}\log R_\zeta(t)
       -\log\frac{\det_\zeta P_t}{\det_\zeta P_0}
       \qquad (t\text{ near }0).
 \label{eq:q_def_app}
\end{equation}
This is not a second cutoff divergence. It is a finite normalization difference between two ways of taking a determinant's magnitude. Applying the order-two weighted-trace identity, with $X=\mathcal G_t^{-1}\delta\mathcal G_t$ and $\mathcal G_t=\mathcal G_0+t\widetilde V$, gives the germ identity
\begin{equation}
 \delta\delta_N^{\rm germ}=-\operatorname{Re}\operatorname{res}
 \bigl[X(\log A_t-\log P_t)\bigr]
 \qquad (t\text{ near }0).
 \label{eq:q_res_app}
\end{equation}
The Wodzicki residue $\operatorname{res}$ reads off the symbol coefficient of degree $-4$ and is consequently local. Its Yukawa-dependent Taylor coefficients vanish beyond degree four: higher insertions lower the symbol order below $-4$. Thus $\delta_N^{\rm germ}(t)$ agrees exactly with a polynomial $\delta_N^{\rm loc}(t)$ of degree at most four on its germ domain, with $\delta_N^{\rm loc}(0)=0$. Extending that polynomial to all $t$ involves no truncation error; set $\delta_N:=\delta_N^{\rm loc}(1)$. Both quantities are thus defined without making a complex spectral-cut choice at $t=1$. The factor $1/2$ associated with order-two weights cancels the factor $2$ arising from the squared modulus; it must \emph{not} be inserted again. Equations~\eqref{eq:pf_comparison} and~\eqref{eq:q_def_app} yield, initially for small $t$, the relative \emph{real} normalization. At $t=1$ its continuation established below gives
\begin{equation}
 \boxed{\rho_f=-\operatorname{Re}p_f(1)-\tfrac12\delta_N,\qquad
 |e^{\rho_f}R_f^{\rm ren}(1)|^2
 =\frac{\det_\zeta P_1}{\det_\zeta P_0}.}
 \label{eq:rho_app}
\end{equation}
\begin{unnumberedlemma}[Real-analytic squared norm]
Fix a smooth bosonic background and scale its Yukawa operator as $\mathcal D_t=\mathcal D_0+t\mathcal V_Y$, with $t$ real. The complete squared Quillen norm in real-analytic spectral-window frames is real analytic in $t$, including at changes of kernel dimension. The same holds for fixed-source coefficients defined by the finite Grassmann construction.
\end{unnumberedlemma}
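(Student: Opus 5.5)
We work in a spectral-window frame: a local description that isolates the finitely many low modes of $\mathcal D_t^\dagger\mathcal D_t$ and treats the rest as a positive high-mode block. Fix $t_0\in\mathbb R$ and choose $\lambda>0$ not in the spectrum of $\mathcal D_{t_0}^\dagger\mathcal D_{t_0}$. Then $\lambda$ stays outside the spectrum of $\mathcal D_t^\dagger\mathcal D_t$ for $t$ near $t_0$. Since $t\mapsto\mathcal D_t$ is affine with a bounded zeroth-order perturbation, it is a holomorphic family of type (A) in complex $t$. The Riesz projections
\[
\Pi_\pm(t)=\frac{1}{2\pi i}\oint_{|z|=\lambda}(z-H_\pm(t))^{-1}\,dz
\]
onto the spectral windows below $\lambda$ are therefore real analytic in real $t$, where $H_+(t)=\mathcal D_t^\dagger\mathcal D_t$ and $H_-(t)=\mathcal D_t\mathcal D_t^\dagger$. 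For real $t$ we have $\mathcal D_t^\dagger=\mathcal D_0^\dagger+t\mathcal V_Y^\dagger$, so $H_\pm(t)$ are polynomial in $t$. The low-mode spaces $E_\pm(t)=\operatorname{ran}\Pi_\pm(t)$ have constant rank on the window. Using Kato's transformation function, we obtain real-analytic unitary frames of $E_\pm(t)$ that are orthonormal and analytic jointly.

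In such a frame the complete squared Quillen norm factorizes as
\[
\|s_{\mathcal D_t}\|_Q^2=\bigl|\det\nolimits_{E}(\mathcal D_t|_{E_+\to E_-})\bigr|^2\cdot\det_\zeta\!\bigl(H_+(t)|_{E_+^\perp}/\mu^2\bigr),
\]
with the standard determinant-line convention when the index is nonzero. The first factor is the squared modulus of a finite matrix minor, built from rectangular frame coordinates, whose entries $\langle e_-^i(t),\mathcal D_te_+^j(t)\rangle$ are real analytic. It is therefore real analytic, and it may vanish at points where the kernel dimension jumps; nothing here is divided. The second factor is the zeta determinant of a positive elliptic operator restricted to a spectral window bounded below by $\lambda$. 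We write it as
\[
-\zeta'(0)\quad\text{with}\quad \zeta(z)=\operatorname{Tr}\bigl((1-\Pi_+)H_+^{-z}\bigr).
\]
Using the contour representation $H_+^{-z}(1-\Pi_+)=\frac{1}{2\pi i}\int_\Gamma w^{-z}(w-H_+)^{-1}dw$ on a contour enclosing $[\lambda,\infty)$, the resolvent is holomorphic in complex $t$ near $t_0$. The family retains principal symbol $|\xi|^2$ uniformly. The standard parameter-dependent resolvent estimates therefore give holomorphy of $\zeta(z;t)$ jointly in $t$ for $\operatorname{Re} z$ large. The meromorphic continuation to $z=0$ is regular, and its Laurent coefficients are holomorphic in $t$: they are local heat or residue coefficients plus a holomorphic finite part. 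Hence $\log\det_\zeta$ of the high block is real analytic, and so is its exponential.

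Independence of the window is the Bismut--Freed gluing already used in Appendix~\ref{sec:quillen_gluing}. Enlarging $\lambda$ moves finitely many nonzero singular values from the zeta factor into the finite minor, and the product is unchanged. The local real-analytic expressions therefore agree on overlaps and define one real-analytic function on all of $\mathbb R$. For fixed-source coefficients, the finite Grassmann integral is a polynomial in the frame coordinates of the sources and in the reduced matrix entries. The complementary inverse $(1-\Pi_-)\mathcal D_t^{-1}$ on the high block is analytic because it is again a contour integral. These coefficients are then analytic after multiplying by the analytic high-mode factor; their squared moduli, or real and imaginary parts, are real analytic in $t$.

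The main obstacle is the joint analyticity in $t$ of the finite part of the zeta function, uniformly through the meromorphic continuation. I would handle it with the Seeley/Agmon resolvent construction, with $t$ as a holomorphic parameter, so that every symbol coefficient of $(w-H_+(t))^{-1}$ is polynomial in $t$ and the remainder estimates are locally uniform in complex $t$. Holomorphy of the trace then follows from Morera's theorem applied under the integral. The Yukawa term is zeroth order and does not change the principal symbol, so no new ellipticity issue arises.
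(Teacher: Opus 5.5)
Your proposal is correct and follows essentially the same route as the paper's proof. You complexify $\mathcal D_t^\dagger\mathcal D_t$ and $\mathcal D_t\mathcal D_t^\dagger$ polynomially with the daggered coefficients frozen (no $\bar t$), take analytic Riesz projections with Kato-transported orthonormal frames, and get holomorphy of the high-mode zeta determinant from the parameter-dependent resolvent/heat parametrix. The low-mode minors and the source contractions with the complementary inverse are analytic and enter polynomially, and the windows are glued by the Bismut--Freed isometries.

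One small normalization fix is needed. For the window products to agree exactly on overlaps, the low-mode minor must be taken for $\mathcal D_t/\mu$, as in the paper's $d=D/\mu$, because your high factor uses $H_+/\mu^2$. As written, moving $r$ modes across the window edge changes your product by $\mu^{2r}$.
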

\begin{proof}
The bundles and their Hermitian metrics are fixed along this scaling. Complexify the parameter by defining
\[
 \mathscr P_+(z)=(\mathcal D_0^\dagger+z\mathcal V_Y^\dagger)
                    (\mathcal D_0+z\mathcal V_Y),\qquad
 \mathscr P_-(z)=(\mathcal D_0+z\mathcal V_Y)
                    (\mathcal D_0^\dagger+z\mathcal V_Y^\dagger).
\]
These are holomorphic elliptic families on common $H^2$ domains, with fixed principal symbol, and equal the positive source/target Laplacians for real $z=t$. The daggered coefficients in this extension are fixed; no $\bar z$ is introduced. Near any real $t_0$, choose a positive window edge away from both spectra. The resolvent identity and contour formula make its finite-rank Riesz projectors analytic. Kato transport gives real-analytic orthonormal frames on the real interval~\cite{kato}. The complementary operators remain uniformly invertible there. Their positive zeta determinants are real analytic: the parameter-dependent resolvent/heat parametrix has holomorphic coefficients and uniform remainders near $t_0$, so the Mellin representation, after subtracting finitely many heat terms, is holomorphic in $z$ and meromorphic in the zeta variable. Its derivative at zero is holomorphic as well. Low-mode matrices and metric Gram factors are finite dimensional and real analytic; fixed-source contractions with the complementary inverse have the same regularity. Their finite Grassmann coefficients therefore give real-analytic complete squared norms when multiplied by the high-mode spectral factor. Zero modes remain inside the window and introduce no inverse. The isometries in Appendix~\ref{sec:quillen_gluing} identify these expressions on overlaps. Only analyticity in the scaling parameter is used; the bosonic background need only be smooth.
\end{proof}

The squared-norm equality~\eqref{eq:rho_app} is first established near $t=0$. Both its complete physical sides are real analytic on overlapping windows by the lemma: the renormalized side uses the entire $\det_5(1+tK)$, the restored polynomial finite parts, and the physical low-mode source factors; the positive side uses the Quillen metric just constructed. The correcting polynomials $p_f(t)$ and $\delta_N^{\rm loc}(t)$ are fixed by their degree-at-most-four jets at zero and are defined for every finite $t$. Real-analytic uniqueness propagates the equality along a finite cover of $0\le t\le1$. This proves the full-strength norm identification, including at zero crossings. It is the \emph{squared norm} that is analytic; its square root or logarithm can fail to be analytic at a zero. No global complex logarithm or single Agmon cut is needed along the path.

The original identity concerns the \emph{augmented} operators in one invertible chart; the displayed continuation uses the physical determinant-line metric, including the low-mode exterior factors and their $L^2$ norms. A real positive multiplication leaves the complex phase unchanged.

\paragraph{Complex phase matching and physical reality conditions.}
The complex comparison includes the local phase difference $\operatorname{Im}p_f$, which the Quillen norm alone does not determine. We use \emph{one} complex renormalized weight, $W_Q=e^{-p_f-\delta_N/2}W_f^{\rm ren}$, where $\delta_N=\delta_N^{\rm loc}(1)$, as in Eq.~\eqref{eq:main_rho}; no independent ``optional'' phase is left in the definition. The reference complex-zeta convention is fixed in a neighborhood of $t=0$, together with cuts compatible under complex conjugation and CP reflection of the backgrounds and Yukawa spurions. The polynomial $p_f$ carries this local convention to arbitrary Yukawa strength without requiring an admissible complex-zeta cut at $t=1$. This correction is gauge invariant, local, and the identity when $\Phi=0$.

The explicit parity-even quadratic and quartic terms have real flavor traces such as $\operatorname{tr}(Y^\dagger Y)$ and $\operatorname{tr}(Y_u^\dagger Y_uY_d^\dagger Y_d)$. Genuine CKM-odd polynomial invariants first occur at higher Yukawa degree; the strong-CP mass phase instead comes from topological zero-mode factors and is not a Taylor polynomial at $Y=0$. Although these observations support $\operatorname{Im}p_f=0$ in the usual CP-compatible subtraction convention, they are not used here as an unproved general vanishing theorem for the augmented complex-zeta comparison. Instead the local difference is removed by definition, while the theta angles and Yukawa matrices remain physical inputs. The chart-independent statement uses the common local counterterm density and massless reference phase.

\section{Explicit finite Higgs counterterms}\label{app:counterterms}
The previous appendix showed that the cutoff-to-Quillen difference is local. Here we ask \emph{which local Higgs operators it changes}. These are the same operators that would be fixed by renormalization conditions on the scalar two- and four-point functions; the point of this calculation is to determine their dependence on our particular covariant spectral regulator. In familiar language, the residues below extract logarithmic ultraviolet terms from one-loop fermion graphs; their finite counterparts are local scheme-dependent shifts, not new physical interactions. At quartic Yukawa order, power counting permits a local residue with no Higgs derivatives. At quadratic order, both the Higgs mass and kinetic operators appear, and the kinetic coefficient probes derivatives of the spectral profile. We keep the conventions for the cutoff $f$, infrared shift $\sigma$, and zeta scale $\mu$ from Section~\ref{sec:quillen_matching}.

\subsection{The explicit quartic residue}\label{sec:quartic_quillen_app}
At Yukawa order four, no external derivative is allowed by four-dimensional power counting. The principal symbol of $P_0$ is $|\xi|^2I$. There are eight copies of the spectral profile around the four vertices, and the resulting fixed-weight finite-part difference is
\begin{equation}
 \operatorname*{FP}_{\Lambda\to\infty}\operatorname{Tr}K_\Lambda^4
  -\operatorname{tr}^{P_0}(K^4)
 =\kappa_f^{(4)}\operatorname{res}(K^4),\qquad
 \kappa_f^{(4)}=\frac12\int_0^\infty\frac{du}{u}
 \bigl[f(u)^8-\mathbf 1_{u<1}\bigr],
 \label{eq:cf_quartic}
\end{equation}
where $\operatorname{tr}^{P_0}(T)=\operatorname*{FP}_{z=0}\operatorname{Tr}(TP_0^{-z})$. The step function fixes the reference radial subtraction and is \emph{not} the cutoff profile used in the regulator. The integral is convergent for $f=1$ near zero and compactly supported.

There is also a local correction because zeta determinants of products do not in general multiply as finite-dimensional determinants do. One cannot replace the first four coefficients of $\log\det_\zeta A_t$ with ordinary cyclic traces: the weighted trace is not generally cyclic. The correct Taylor identity is
\begin{equation}
 \log R_\zeta(t)=\sum_{n\ge1}\frac{(-1)^{n+1}t^n}{n}
     \operatorname{tr}^{P_0}(K^n)
 -\frac12\int_0^t ds\,\operatorname{res}
 \left[(1+sK)^{-1}K(\log A_s-\log P_0)\right].
 \label{eq:zeta_taylor_v24}
\end{equation}
At fourth order, only principal symbols enter the residue. Writing $\log A_t-\log P_0=tL_1+t^2L_2+t^3L_3+\cdots$, its leading pieces are $L_1\sim K$, $L_2\sim-K^2/2$, and $L_3\sim K^3/3$, modulo terms of lower pseudodifferential order. Their relative coefficients sum to $1+1/2+1/3=11/6$; integration in $t$, together with the factor $1/2$ in Eq.~\eqref{eq:zeta_taylor_v24}, produces $11/48$. Therefore
\begin{equation}
 \gamma_4=-\frac14\operatorname{tr}^{P_0}(K^4)
          -\frac{11}{48}\operatorname{res}(K^4),\qquad
 \boxed{p_{4,f}=\left(\frac{11}{48}-\frac{\kappa_f^{(4)}}{4}\right)
                    \operatorname{res}(K^4).}
 \label{eq:quartic_p}
\end{equation}
For reference, the $11/48$ arises from $(1/3+1/2+1)/8$: the three terms are the leading $K^3/3$, $-K^2/2$, and $K$ contributions to the logarithmic-residue expansion. Commutator-symbol terms lie at lower order and do not alter this coefficient. The \emph{full} quartic real counterterm is $\rho_{4,f}=-\operatorname{Re}p_{4,f}-\delta_{N,4}/2$ with $\delta_{N,4}$ determined by Eq.~\eqref{eq:q_res_app}; Eq.~\eqref{eq:quartic_p} alone is not that full counterterm. As a cross-check, in a free constant real Dirac mass background one finds $\delta_{N,4}=-5\operatorname{res}(K^4)/24$ and thus $\rho_{4,f}=(\kappa_f^{(4)}/4-1/8)\operatorname{res}(K^4)$. This special-background identity is not asserted as the complete Standard Model quartic coefficient.\par
\noindent\emph{Physical interpretation.} The residue is the coefficient of the logarithmic ultraviolet scaling of a four-vertex loop. Its locality explains why changing the cutoff profile shifts a quartic Higgs coupling rather than a nonlocal scattering amplitude.

\paragraph{Constant-mass check of the norm defect.}
For a free Dirac fermion of constant real mass $tm$, write $V_4=\int d^4x$. The order-$t^4$ Mellin coefficient in $-2\zeta_{A_t}(s)+\zeta_{P_t}(s)$ is
\[
 -\frac{s^2(s+1)(s+5)}{3}\,V_4m^4 I(s),\qquad
 I(s)=\int_{|p|\ge r_0}\!\frac{d^4p}{(2\pi)^4}(p^2)^{-s-2}
       =\frac{1}{16\pi^2s}+O(1).
\]
Here $r_0>0$ is any fixed infrared threshold; the ultraviolet pole is independent of it. The numerator follows by combining the fourth coefficient of $(p^2\pm itm|p|)^{-s}$ with that of $(p^2+t^2m^2)^{-s}$, including four spin components. Differentiation at $s=0$ gives $\delta_{N,4}=-5V_4m^4/(48\pi^2)$. Since $\operatorname{res}(K^4)=V_4m^4/(2\pi^2)$, the ratio is $-5/24$, as quoted above. This checks the constant-mass benchmark.

\subsection{Explicit quadratic Higgs normalization}\label{sec:quadratic_quillen_app}
This subsection computes local matching coefficients in flat $\mathbb R^4$, with continuous external momentum. Finite-volume nonlocal terms in the torus amplitude are common to the compared prescriptions and are not part of these counterterms. At quadratic order, unlike the quartic case, derivatives of the Higgs field affect the \emph{finite} local term. To show why the nested filters matter, start with one free Dirac fermion and a real position-dependent mass $m(x)$. We normalize the amplitude $B_{\Lambda,f}$ so that the quadratic contribution to $\log W$ is $\int d^4q\,(2\pi)^{-4}|m(q)|^2B_{\Lambda,f}(q)$; its sign is reversed in the effective action $-\log W$. The coefficients below have the factor $|m(q)|^2$ divided out. For $H_\pm=\sigma^2+D_0^\dagger D_0$ or $\sigma^2+D_0D_0^\dagger$, with $\sigma^2>0$, the two-point momentum integral is
\begin{equation}
 B_{\Lambda,f}(q)=\frac{2}{(2\pi)^4}
 \int d^4p\,\frac{p\cdot(p+q)}{p^2(p+q)^2}
 f\!\left(\frac{p^2+\sigma^2}{\Lambda^2}\right)^2
 f\!\left(\frac{(p+q)^2+\sigma^2}{\Lambda^2}\right)^2.
 \label{eq:quadratic_momentum}
\end{equation}
The two profiles depend on \emph{different} momenta. At zero external momentum, radial integration is elementary:
\begin{equation}
 B_{\Lambda,f}(0)=\frac{\Lambda^2}{8\pi^2}\int_0^\infty f(u)^4\,du
                     -\frac{\sigma^2}{8\pi^2}.
 \label{eq:mass_moment}
\end{equation}
This equality is exact whenever $\sigma^2/\Lambda^2$ lies in the interval where $f=1$; no asymptotic remainder is needed. The second term survives the removal of the fermion cutoff; it is due to the positive spectral shift, and is not a second quadratic divergence.

To obtain the finite kinetic normalization, compare two smooth profiles $f$ and $h$ at the same ultraviolet scale. Their difference has no infrared divergence and, once the common logarithm cancels, the angular expansion at order $q^2$ is
\begin{equation}
 \lim_{\Lambda\to\infty}[B_{\Lambda,f}(q)-B_{\Lambda,h}(q)]_{q^2}
 =-\frac{q^2}{16\pi^2}\int_0^\infty du\left[
  \frac{f(u)^4-h(u)^4}{u}
  +4u\bigl(f(u)^2f'(u)^2-h(u)^2h'(u)^2\bigr)\right].
 \label{eq:kin_profile_diff}
\end{equation}
The first integrand comes from the propagators; the second comes from derivatives of the \emph{nested} filters at momenta $p$ and $p+q$. Dropping the second term would give a wrong finite wavefunction counterterm.\par
The derivative term can be checked before taking the cutoff limit. Put $r=p^2$ and $g(r)=f((r+\sigma^2)/\Lambda^2)^2$. Angular averaging of the integrand of~\eqref{eq:quadratic_momentum} gives the $q^2$ coefficient
\[
 -\frac{g^2}{2r^2}+\frac{gg'}{2r}+\frac{gg''}{2}.
\]
Here primes differentiate $r$. Multiplication by the radial measure $r\,dr/(8\pi^2)$ and integration by parts use $\int(gg'+rgg'')\,dr=-\int r(g')^2\,dr$, with zero boundary terms. Taking the difference of two profiles and setting $u=r/\Lambda^2$ gives exactly~\eqref{eq:kin_profile_diff}, including its sign and factor four.
\par
\noindent\emph{Physical interpretation.} A position-dependent Higgs field transfers momentum through a fermion loop. The two propagators therefore probe slightly different parts of the cutoff profile, and the derivative of that profile shifts the finite coefficient of $|D_\mu\Phi|^2$.\par

Define the absolutely convergent moments
\begin{equation}
\begin{aligned}
 I_f^{\rm log}&=\int_0^\infty\frac{du}{u}\bigl[f(u)^4-e^{-2u}\bigr],\\
 I_f^{\rm der}&=4\int_0^\infty du\,u f(u)^2[f'(u)]^2,\\
 \kappa_f^{\rm kin}&=\gamma_E+\log2+\tfrac12-I_f^{\rm log}-I_f^{\rm der}.
\end{aligned}
 \label{eq:Cf}
\end{equation}
The Gaussian $f_G(u)=e^{-u/2}$ is a convenient \emph{calculation reference} (not a compactly supported finite-rank filter): Schwinger-parameter integration gives a finite kinetic coefficient $[\log|W_G^{\rm ren}|]_{q^2}=q^2[\log(q^2/\mu^2)+\gamma_E+\log2-7/4]/(16\pi^2)$ with $|m(q)|^2$ divided out. Its profile moments are $I_{f_G}^{\rm log}=0$ and $I_{f_G}^{\rm der}=1/4$. For an explicit check of the Gaussian constant, Schwinger parameters for the two propagators give
\[
 B_{\Lambda,G}(q)=\frac{e^{-2\sigma^2/\Lambda^2}}{8\pi^2}
 \int_0^1du\left[\frac{2e^{-a q^2L}}{L}
                      -3a q^2E_1(a q^2L)\right],\quad
 a=u(1-u),\quad L=\frac{\Lambda^{-2}}{\min(u,1-u)}.
\]
Here $E_1(z)=\int_z^\infty e^{-v}dv/v$. Using $E_1(z)=-\gamma_E-\log z+O(z)$, $\int_0^1a\,du=1/6$, and
\[
 \int_0^1 a\log\max(u,1-u)\,du=-\frac{13}{72}+\frac{\log2}{6}
\]
gives the stated finite constant $\gamma_E+\log2-7/4$ after the logarithmic subtraction. The prefactor supplies the finite mass term $-\sigma^2/(8\pi^2)$ in~\eqref{eq:mass_moment} and does not alter the limiting kinetic coefficient. The relative difference between a smooth plateau $f$ and this reference is computed in the high-momentum region, where it is local. At the fixed zeta scale $\mu$ the logarithm of the cutoff fermion weight has the quadratic ultraviolet subtraction
\begin{equation}
 C_{\Lambda,f}^{(2),\rm div}=
 \frac{\mathcal T_Y\Lambda^2}{8\pi^2}
 \left(\int_0^\infty f(u)^4du\right)\int_X\Phi^\dagger\Phi\,d^4x
 -\frac{\mathcal T_Y}{16\pi^2}\log\frac{\Lambda^2}{\mu^2}
 \int_X(D_\mu\Phi)^\dagger(D_\mu\Phi)\,d^4x.
 \label{eq:quadratic_subtraction}
\end{equation}
The counterterm in the Euclidean \emph{effective action} has the opposite sign: by definition $W\propto e^{-S_{\rm eff}}$. This explicit sign convention prevents confusion about the Higgs mass term.

An independent check comes from zeta-function perturbation theory. For a single Dirac fermion and a scalar Fourier amplitude $m(q_{\rm ext})$, the quadratic coefficients, after dividing out $|m(q_{\rm ext})|^2$, are $q_{\rm ext}^2[\log(q_{\rm ext}^2/\mu^2)-7/3]/(16\pi^2)$ for the complex-zeta logarithm and $q_{\rm ext}^2[\log(q_{\rm ext}^2/\mu^2)-2]/(16\pi^2)$ for the \emph{logarithm of the Quillen modulus}. These constants follow directly from the quadratic zeta coefficients. Let $\mathrm B_E(u,v)=\Gamma(u)\Gamma(v)/\Gamma(u+v)$ be Euler's beta function. Dividing out $|m(q)|^2$, the flat-space kinetic coefficients before differentiating the zeta variable are
\[
 \begin{aligned}
 \zeta_A^{(2)}(s;q)&=\frac{q^2}{8\pi^2}
    \left(\frac{q^2}{\mu^2}\right)^{-s}
    \frac{3-s}{1-s}\,\mathrm B_E(2-s,2-s),\\
 \zeta_P^{(2)}(s;q)&=\frac{q^2}{8\pi^2}
    \left(\frac{q^2}{\mu^2}\right)^{-s}
    \mathrm B_E(1-s,1-s).
 \end{aligned}
\]
Feynman parametrization gives these expressions after the spin trace; in $P_t$ the term quadratic in $m$ has no external-momentum dependence and does not affect this kinetic coefficient. Applying $-\partial_s$ to the first and $-\tfrac12\partial_s$ to the second at zero yields the constants $-7/3$ and $-2$, respectively. Combining them with the Gaussian and profile calculations also fixes $\kappa_f^{\rm kin}$ in~\eqref{eq:Cf}. The constants $-7/3$ and $-2$ need not coincide: the complex determinant is built from an operator with derivative Yukawa couplings, whereas the positive operator entering the Quillen metric treats those terms differently. The common nonlocal logarithm cancels from their difference. Correspondingly, the quadratic norm defect is
\begin{equation}
 \delta_{N,2}=-\frac{\mathcal T_Y}{24\pi^2}
 \int_X(D_\mu\Phi)^\dagger(D_\mu\Phi)\,d^4x.
 \label{eq:q2_SM}
\end{equation}
The Dirac computation extends to the Standard Model because, on the flat four-torus, gauge invariance and dimension-four power counting allow only $\Phi^\dagger\Phi$ and $(D_\mu\Phi)^\dagger D_\mu\Phi$ as quadratic Higgs invariants, modulo integration by parts. A constant neutral Higgs background gives $\sum|m_f|^2=\mathcal T_Y\Phi^\dagger\Phi$; the color multiplicity is three, and both up- and down-type Higgs contractions have the same kinetic invariant. Thus the \emph{finite real factor needed to match Quillen} at quadratic order is
\begin{equation}
 \boxed{\rho_{2,f}=
 \frac{\sigma^2\mathcal T_Y}{8\pi^2}\int_X\Phi^\dagger\Phi\,d^4x
 -\frac{\mathcal T_Y\kappa_f^{\rm kin}}{16\pi^2}
 \int_X(D_\mu\Phi)^\dagger(D_\mu\Phi)\,d^4x.}
 \label{eq:rho2_app}
\end{equation}
For numerical illustrations one may set $\sigma=1$ in chosen mass units; this convention is not dimensionless. The finite wavefunction coefficient depends on the chosen smooth spectral profile; the logarithmic coefficient in Eq.~\eqref{eq:quadratic_subtraction} does not. For curved backgrounds an additional $R\Phi^\dagger\Phi$ invariant may arise and is not calculated here. All statements about locality and convergence are uniform only on the specified compact smooth patches of reconstructed fields. Neither the quadratic nor the quartic correction determines a physical Higgs parameter without an independent renormalization condition.

\section{Zero modes, the Quillen metric, and physical phases}\label{app:grushin}
\subsection{Pfaffian reduction and global Yukawa contractions}
For the block notation of Section~\ref{sec:yukawa_full}, let $\mathscr V_-$ and $\mathscr V_+$ denote the spaces of the two ordered groups of Grassmann variables. A bilinear pairing is a map into the dual of the second space:
\[
 \mathcal D_Y:\mathscr V_-\longrightarrow\mathscr V_+^*,
 \qquad \mathcal D_Y^T:\mathscr V_+\longrightarrow\mathscr V_-^*.
\]
The spinor pairing and the canonical gauge-bundle dualities identify these dual targets with the operator bundles used in the main text. The independent Euclidean conjugate fields supply the dual variables. This explains the source/target terminology without identifying a field with its complex conjugate.

To verify the skew Nambu form without imposing symmetry on generation matrices, combine gauge, generation and species into indices $A,B$, and denote Weyl spinor indices by $a,b$. Then
\[
(M_\Phi)_{Aa,Bb}=\varepsilon_{ab}H_{AB}(\Phi),\qquad H^T=H.
\]
Insert $(Y_u)_{rs}\varepsilon_{ij}\Phi^j$ in the $(Q_r,u_s^c)$ block of $H$ and its transpose in the reverse block; do likewise for down quarks and leptons. The bosonic entries commute, so $M_\Phi^T=-M_\Phi$ regardless of the symmetry of the $Y$ matrices. For a finite-dimensional $D$, and only at finite dimension,
\begin{equation}
\Pf\begin{pmatrix}0&-D^T\\D&0\end{pmatrix}=(-1)^{N_{\rm mat}(N_{\rm mat}+1)/2}\det D,
\label{eq:pf_det_sign}
\end{equation}
with $N_{\rm mat}=\dim D$ and fixed Grassmann ordering~\cite[Remark~11.5]{freedlectures}. No infinite-dimensional limit is taken of this numerical sign: the continuum Pfaffian/determinant identification uses compatible determinant-line frames and fixes the ordering convention separately on each chart and index sector. The bundle pairings that make the Yukawa action gauge invariant are reviewed in Appendix~\ref{app:geometry}.

\subsection{Optional finite-part phase differential}
On an invertible chart one may consider the formal logarithmic derivative
\begin{equation}
\mathcal A^{\rm sp}(\dot{\mathcal D}_Y)=
\operatorname*{FP}_{z=0}\operatorname{Tr}\!\left[
(\mathcal D_Y^\dagger\mathcal D_Y/\mu^2)^{-z}\mathcal D_Y^{-1}\dot{\mathcal D}_Y\right].
\label{eq:pf_connection}
\end{equation}
It does not by itself specify a phase through zero modes or a global integration convention, and is \emph{not} used to define our common-cutoff weight. Compatibility with a chosen determinant-line connection can require local adjustments.

\subsection{Rectangular low modes and source insertions}\label{sec:source_families}
The reason for introducing an augmented operator is simple: $D^{-1}$ cannot be used at a zero mode, and a nonzero-index chiral operator is not invertible even away from accidental crossings. We therefore isolate a finite set of potentially singular modes, invert only the remaining high-energy block, and retain the low-mode Grassmann integral explicitly. 

\paragraph{Source families and local Hilbert-space identifications.}
Write $\mathscr H_\pm(U)=L^2(S^\pm\otimes E_R[U])$. On a compact parameter neighborhood, realize $E_R[U]$ as the range of a smooth projector $Q_U$ in a fixed finite ambient bundle over spacetime. A finite collection of bundle charts and a smooth partition of unity provides such an embedding. Its ambient bundle need not have a global frame; covering-space coordinates obey the torus deck identifications described in Appendix~\ref{app:family_reconstruction}. Near $U_0$, put $Q_0=Q_{U_0}$ and
\[
 B_U=Q_UQ_0+(1-Q_U)(1-Q_0),\qquad
 T_U=B_U(B_U^\dagger B_U)^{-1/2},\qquad T_UQ_0=Q_UT_U.
\]
For $\|Q_U-Q_0\|<1$, this is a smooth unitary bundle map, also smooth in the spacetime variables. It acts on the gauge factor and commutes with Clifford multiplication. Transporting by $T_U$ identifies the Sobolev spaces and changes $D_U$ to $D_{U_0}+V_U$, where $V_U$ is a zeroth-order multiplication operator continuous in operator norm on $L^2$. Thus the doubled self-adjoint Dirac family has common $H^1$ domain and a bounded norm-continuous perturbation. Resolvent identities give norm-continuous low spectral projectors and a norm-continuous complementary inverse, bounded by $\lambda^{-1/2}$ when the squared singular-value window has edge $\lambda>0$.

A source family means a section $f_U$ of $\mathscr H_+(U)$, or $g_U$ of $\mathscr H_-(U)$, whose transport to these fixed spaces is $L^2$ continuous. Smoothness in spacetime is retained in the theorem. For example, choose a smooth local frame $e_i^R(U;y_0)$ induced from a vertex frame, a smooth spinor test function $h$ supported in a contractible patch, and specified smooth paths from $y_0$. Then
\[
 f_U(y;v)=h(y)\,\mathcal P_U(y\leftarrow y_0)e_i^R(U;y_0)v
\]
is continuous, where $\mathcal P_U$ is gauge parallel transport. The reference vector $v$ carries the vertex representation: under a lattice gauge transformation, $f_{U^g}(y;R(g_i)v)=\Phi_g(y)f_U(y;v)$. Thus the profile and its external indices are transformed together. Gauge-invariant observables contract those indices with continuous bosonic data.

Choose a common spectral window locally. Each source coefficient is a finite sum of high contractions, low source projections and minors of a finite reduced matrix, multiplied by a continuous Quillen frame. The preceding identifications prove continuity of every factor. The shell identities in Appendix~\ref{sec:quillen_gluing} make this a definition independent of the window, even when the original operator is invertible nowhere in the component. No density of an invertible locus is assumed.

\paragraph{Exact finite-dimensional reduction.}
At any reference background split the source and target spaces into finite low-mode and invertible high-mode parts. In these bases write
\begin{equation}
D_{U,\phi}=\begin{pmatrix}M&B\\C&D_{\mathrm{hi}}\end{pmatrix},\qquad
D_{\mathrm{hi}}\text{ invertible},\qquad M:\mathbb C^k\to\mathbb C^c.
\label{eq:rect_block}
\end{equation}
The finite-dimensional defect has index $k-c=\ind D$. Add $R_-:\mathbb C^c\to\mathscr H_-$ and $R_+:\mathscr H_+\to\mathbb C^k$ to form the invertible map from $\mathscr H_+\oplus\mathbb C^c$ to $\mathscr H_-\oplus\mathbb C^k$, the \emph{Grushin extension}
\begin{equation}
\mathcal G_D=\begin{pmatrix}D&R_-\\R_+&0\end{pmatrix},
\qquad E_{-+}=-(M-BD_{\mathrm{hi}}^{-1}C).
\label{eq:rect_grushin}
\end{equation}
The extension exists locally for every index, even though $D$ itself cannot be made invertible by finite-rank corrections when its index is nonzero. The effective map $-E_{-+}:\mathbb C^k\to\mathbb C^c$ is generally rectangular. Its fermion integral is therefore a source polynomial rather than an ordinary determinant.

With low sources $\bar\eta_\ell,\eta_\ell$ and high sources $\bar\eta_h,\eta_h$, Gaussian integration of only the invertible high modes gives the exact algebraic identity
\begin{align}
\mathcal Z_D[\bar\eta,\eta]
={}&\det D_{\mathrm{hi}}\;e^{\bar\eta_h D_{\mathrm{hi}}^{-1}\eta_h}
\int[d\bar\chi\,d\chi]\,
\exp\!\left[-\bar\chi (M-BD_{\mathrm{hi}}^{-1}C)\chi
+\bar\eta_{\rm eff}\chi+\bar\chi\eta_{\rm eff}\right],
\label{eq:rect_sources}\\
\bar\eta_{\rm eff}={}&\bar\eta_\ell-\bar\eta_h D_{\mathrm{hi}}^{-1}C,
\qquad \eta_{\rm eff}=\eta_\ell-BD_{\mathrm{hi}}^{-1}\eta_h.\nonumber
\end{align}
The low Grassmann integral is polynomial and smooth when its effective matrix develops zero modes. Counting the unpaired low variables yields $N_{\bar\eta}-N_\eta=k-c$ for each separate quark or lepton block. The indices~\eqref{eq:baryon_idx} then give Eq.~\eqref{eq:BL_selection}. This is an integrated anomalous Ward identity, rather than an assertion of classical quantum conservation of $B+L$. Nontrivial gauge indices of the source zero modes must be contracted to form gauge-invariant observables.

\paragraph{Continuity of regularized source minors.}
The source statement also survives the $\mathcal S_5$ limit in Appendix~\ref{app:yukawa_uv}. To see this directly, for $T\in\mathcal S_5$ set
\[
 S_4(T)=-T+\tfrac12T^2-\tfrac13T^3+\tfrac14T^4,\qquad
 B_5(T)=(1+T)e^{S_4(T)}-1\in\mathcal S_1.
\]
The map $T\mapsto B_5(T)$ is analytic into trace class: its power series starts at degree five, and the Schatten ideal inequalities control the series on bounded sets. The ordinary Fredholm adjugate $\mathfrak A(B)=\det_F(1+B)(1+B)^{-1}$ has an analytic extension across noninvertibility. Hence
\[
 \mathfrak M_5(T)=e^{S_4(T)}\mathfrak A(B_5(T))
\]
is continuous in the $\mathcal S_5$ topology and agrees with $\det_5(1+T)(1+T)^{-1}$ wherever the inverse exists. For $r$ source pairs, use the entire Fredholm minor
\[
 \mathfrak A_r(B)=\det_F(1+B)\,\bigwedge\nolimits^r(1+B)^{-1},
 \qquad
 \mathfrak M_{5,r}(T)
 =\bigwedge\nolimits^r(e^{S_4(T)})\,\mathfrak A_r(B_5(T)),
\]
where the first expression extends analytically through noninvertibility. It gives one determinant times the antisymmetrized $r$-fold inverse, rather than the $r$th exterior power of the first adjugate. These are Fredholm minors, not separate products of inverse entries; see Ref.~\cite[Secs.~6--7, especially Thm.~7.1]{simon_det}. Contracting with finitely many $L^2$ sources and adjoining the finite low-mode polynomials preserves continuity and uniform convergence on compact families. Thus the source limits used in Theorems~3--4 have no additional singularity at a zero of the determinant.

\subsection{Determinant-line norm and the finite-rank comparison}
\paragraph{Fixed spectral scale.}
For a positive invertible operator $P$, the convention used throughout obeys
\[
 \zeta_{P/\mu^2}(z)=\mu^{2z}\zeta_P(z),\qquad
 \det_\zeta(P/\mu^2)=\mu^{-2\zeta_P(0)}\det_\zeta P.
\]
This follows by differentiating at $z=0$ in $\det_\zeta P=\exp[-\zeta'_P(0)]$. For a primed determinant apply the same identity to the positive spectrum, retaining the finite zero-mode factors separately. The scale $\mu$ is fixed in the existence theorem and in each comparison of normalizations.

The Quillen metric combines the positive spectral determinant of the nonzero modes with the ordinary metric of finite-dimensional kernel and cokernel spaces~\cite{bf1,freed_87}. The low-mode factors are essential in nonzero-index sectors. For example, when $D:\mathbb C^{n+1}\to\mathbb C^n$ is surjective, the kernel vector $\kappa_D$ formed from its signed maximal minors satisfies the Cauchy--Binet identity
\begin{equation}
D\kappa_D=0,\qquad
\|\kappa_D\|^2=\det(DD^\dagger).
\label{eq:cauchy_binet_kernel}
\end{equation}
The induced squared norm of the dual kernel frame is $\|\kappa_D\|^{-2}$, which cancels the corresponding finite nonzero singular-value product in the Quillen metric. At general index this is expressed by the exterior-algebra determinant-line exact sequence, not by dividing by a vacuum determinant that vanishes.

For an invertible augmented chart and a finite-rank Yukawa perturbation, ordinary trace-class differentiation gives
\begin{equation}
\frac{\det_\zeta(\mathcal G_\Lambda^\dagger\mathcal G_\Lambda)}
{\det_\zeta(\mathcal G_0^\dagger\mathcal G_0)}
=\left|\det_F\!\left(1+\mathcal G_0^{-1}
(\mathcal G_\Lambda-\mathcal G_0)\right)\right|^2.
\label{eq:finite_quillen}
\end{equation}
This identity is independent of the original index; it compares \emph{augmented} operators. The preceding low-mode metric identity is still required to identify it with a physical determinant-line norm.

The distinction between the augmented determinant and the physical determinant-line factor can already be seen in the one-mode example $D=t$ with
\begin{equation}
 \mathcal G_t=\begin{pmatrix}t&1\\1&0\end{pmatrix},\qquad
 \det\mathcal G_t=-1,\qquad E_{-+}=-t,\qquad
 \det D=(\det\mathcal G_t)(\det E_{-+})=t.
 \label{eq:grushin_one_mode}
\end{equation}
The augmented determinant alone has \emph{no} $t$-dependence; the low-mode factor restores all of it. At nonzero index, $E_{-+}$ is rectangular and its exterior minors, rather than a scalar determinant, play the same role. A physical spectral-window change adds an invertible finite shell: its determinant appears inversely in the low-mode transition and directly in the high-mode positive spectral norm, so the Quillen metric is preserved. This finite-dimensional gluing is independent of the local ultraviolet normalization computed in Appendix~\ref{app:yukawa_uv}.

\paragraph{Derivation and normalization of the residual norm defect.}
Write $\mathcal G=\mathcal G_\phi$ and set $A=\mathcal G_0^\dagger\mathcal G$, $P=\mathcal G^\dagger\mathcal G$, and $X=\mathcal G^{-1}\delta\mathcal G$. On an invertible augmented chart, for a sufficiently small scaling of $\phi$ so that compatible spectral cuts exist, define the \emph{relative squared-norm} germ
\[
\delta_N^{\rm germ}(\phi)=2\operatorname{Re}\log\frac{\det_\zeta A}{\det_\zeta P_0}
-\log\frac{\det_\zeta P}{\det_\zeta P_0},\qquad
P_0=\mathcal G_0^\dagger\mathcal G_0.
\]
For an admissible elliptic weight $\Omega$, denote by $\operatorname{tr}^{\Omega}(X)=\operatorname*{FP}_{z=0}\operatorname{Tr}(X\Omega^{-z})$ the weighted trace. Fixed-order zeta differentiation gives
\[
\delta\log\det_\zeta A=\operatorname{tr}^{A}(X),\qquad
\delta\log\det_\zeta P=2\operatorname{Re}\operatorname{tr}^{P}(X).
\]
For the second identity use $\delta P=PX+X^\dagger P$ and cyclicity of the ordinary trace against $P^{-z}$ before taking the meromorphic finite part. Both $A$ and $P$ have order two and the same positive principal symbol. The weighted-trace defect identity for elliptic weights of common order $m$~\cite{paycha_scott} reads
\[
\operatorname{tr}^{\Omega_1}(X)-\operatorname{tr}^{\Omega_2}(X)
=-\frac1m\operatorname{res}\left[X(\log\Omega_1-\log\Omega_2)\right]
\quad\text{if }\operatorname{ord}\Omega_1=\operatorname{ord}\Omega_2=m.
\]
Combining $2\operatorname{Re}$ with $1/m=1/2$ fixes the coefficient in the germ variation
\begin{equation}
\boxed{\delta_\phi\delta_N^{\rm germ}=-\operatorname{Re}\operatorname{res}\!\left[
\mathcal G_\phi^{-1}\delta\mathcal G_\phi
\bigl(\log(\mathcal G_0^\dagger\mathcal G_\phi)
-\log(\mathcal G_\phi^\dagger\mathcal G_\phi)\bigr)\right].}
\label{eq:quillen_residue}
\end{equation}
Thus the displayed normalization does \emph{not} require an additional factor $1/2$ in the final variation; it has already canceled against the factor of two in the squared norm. This identity is derived for the \emph{augmented} zeta determinants near zero Yukawa strength, with constant operator orders, compatible Agmon cuts and smooth logarithms. The germ is exactly a local polynomial of degree at most four, as explained after Eq.~\eqref{eq:q_res_app}. The physical correction $\delta_N$ evaluates that polynomial at full Yukawa strength without requiring a complex spectral cut there. A separate family identification still relates augmented frames to the physical determinant line.

The residue is local and unchanged by finite-rank smoothing changes of Grushin charts. Radial integration with $\delta_N^{\rm germ}(0)=0$ gives the exact local polynomial defining $\delta_N$. Four-dimensional residue power counting bounds its Yukawa degree by four. Moreover $g_*=\operatorname{diag}(\mathbf1_3,-\mathbf1_2)$ leaves the links fixed and reverses $\Phi$, so the physical correction is even, $\delta_N=\delta_{N,2}+\delta_{N,4}$. Multiplication by $e^{-\delta_N/2}$ corrects only the complex-zeta to positive-zeta norm defect; the \emph{additional} cutoff-to-complex-zeta finite factor $e^{-\operatorname{Re}p_f}$ from Eq.~\eqref{eq:pf_comparison} is required for the filtered determinant. Neither real correction sets the physical theta angles or the complex Yukawa matrices.

\subsection{Quillen-isometric spectral-window gluing at arbitrary index}\label{sec:quillen_gluing}
We now check that the convenient augmented determinant and the physical determinant-line norm describe the \emph{same} fermion amplitude. The issue is not another ultraviolet divergence. A spectral cutoff divides modes into a high-mode determinant and a finite low-mode Grassmann integral. Changing where that division is made must change neither the total amplitude nor its norm. The canonical spectral-window construction is due to Bismut--Freed~\cite{bf1,freed_87}. We spell out its factors because a bare Grushin determinant gives the wrong answer at a zero-mode crossing.

\paragraph{Physical spectral windows.}
Measure the Dirac operator in units of a fixed reference scale $\mu$ and write $d=D/\mu$. Choose $0<\lambda_1<\lambda_2$ away from the positive spectrum of $d^\dagger d$ and $dd^\dagger$. Let $E_\pm(\lambda_i)$ denote their low spectral subspaces at edge $\lambda_i$, and set
\begin{equation}
 L_1=(\det E_+(\lambda_1))^*\otimes\det E_-(\lambda_1),\qquad
 D_{\mathrm{hi},1}=d|_{E_+(\lambda_1)^\perp}:E_+(\lambda_1)^\perp\longrightarrow E_-(\lambda_1)^\perp.
 \label{eq:window_line}
\end{equation}
The high-mode operator $D_{\mathrm{hi},1}$ is invertible. The determinant line of the original chiral operator is canonically represented by $L_1$, including when its Fredholm index is nonzero. The squared Quillen norm of a low-mode frame $s_1\in L_1$ is
\begin{equation}
 \|s_1\|_{Q,1}^2=\|s_1\|_{L^2,1}^2\,
 \det_\zeta(D_{\mathrm{hi},1}^\dagger D_{\mathrm{hi},1}).
 \label{eq:window_metric}
\end{equation}
The intermediate shell $F_\pm=E_\pm(\lambda_2)\ominus E_\pm(\lambda_1)$ has equal source and target dimensions. With $T_{12}=d|_{F_+}:F_+\to F_-$ invertible, the canonical determinant-line transition is $s_1\mapsto s_1\otimes\det T_{12}$. Spectral products split exactly across this finite shell:
\begin{equation}
 \det_\zeta(D_{\mathrm{hi},1}^\dagger D_{\mathrm{hi},1})=|\det T_{12}|^2
  \det_\zeta(D_{\mathrm{hi},2}^\dagger D_{\mathrm{hi},2}),\qquad
 \|s_1\otimes\det T_{12}\|_{Q,2}=\|s_1\|_{Q,1}.
 \label{eq:window_isometry}
\end{equation}
This is an \emph{isometry}: the determinant gained by the explicit low-mode shell is exactly the determinant lost by the zeta-regularized high modes. The transferred shell has equally many source and target modes, even when the full low-mode operator is rectangular and the index is nonzero. No ratio of vanishing vacuum determinants is needed.

\paragraph{Adapted Grushin normalization.}
In an orthogonally adapted window, write $d=M_1\oplus D_{\mathrm{hi},1}$ with $M_1:E_+(\lambda_1)\to E_-(\lambda_1)$ of size $c\times k$. Define
\begin{equation}
 \mathcal G_1^{\rm ad}=D_{\mathrm{hi},1}\oplus
 N(M_1),\qquad N(M)=\begin{pmatrix}M&I_c\\I_k&0\end{pmatrix}.
 \label{eq:adapted_grushin}
\end{equation}
The finite block $N(M)$ is invertible for \emph{every} $M$, even if its rank changes. Direct column permutation gives $\det N(M)=(-1)^{kc}$; consequently,
\begin{equation}
\boxed{\det_\zeta\bigl[(\mathcal G_1^{\rm ad})^\dagger
 \mathcal G_1^{\rm ad}\bigr]
 =\det_\zeta(D_{\mathrm{hi},1}^\dagger D_{\mathrm{hi},1}).}\label{eq:adapted_metric}
\end{equation}
Thus the augmented determinant accounts \emph{only} for the high-mode positive spectral factor. It would be incorrect to identify it with the physical fermion weight by itself. The finite low-mode exterior polynomial supplies the omitted amplitudes, including their behavior when an unsaturated vacuum determinant vanishes and when sources saturate the zero modes. Nonorthonormal auxiliary volume maps $B,C$ replace the unit finite-block determinant by $(-1)^{kc}\det B\det C$; their squared norm is $\det(B^\dagger B)\det(CC^\dagger)$ and is canceled by the induced change of low-mode frame. Schur-complement determinants likewise belong to the finite low-mode factor, not to an ultraviolet counterterm.

For example, in the scalar crossing $d=t$ the augmentation
$\left(\begin{smallmatrix}t&1\\1&0\end{smallmatrix}\right)$ has constant determinant $-1$ although the physical index-zero determinant section is $t$. In the rectangular index-one example $d=(a\ b):\mathbb C^2\to\mathbb C$, the low source amplitude is the cofactor vector $(-b,a)$, whose squared norm is $|a|^2+|b|^2$, while the adapted augmented determinant has constant modulus one. These elementary examples expose precisely the factors one loses by equating a raw augmented determinant with a physical determinant-line section.

\paragraph{Why the chart transitions agree (finite-dimensional naturality).}
There are two ways to compute a source-saturated fermion amplitude: integrate all modes at once, or first integrate a chosen set of invertible high modes and then integrate the remaining Grassmann variables. These operations commute because finite-dimensional Berezin integration (ordinary Grassmann-variable integration) is associative. To see the factors, write a finite truncation of the operator as
\begin{equation}
 D=\begin{pmatrix}A&B\\ C&D_{\mathrm{hi}}\end{pmatrix},\qquad D_{\mathrm{hi}}\text{ invertible},\qquad
 S=A-BD_{\mathrm{hi}}^{-1}C.
 \label{eq:naturality_schur}
\end{equation}
Unit-determinant block-row and block-column eliminations reduce $D$ to $S\oplus D_{\mathrm{hi}}$. With sources, the same eliminations give exactly the shifts in Eq.~\eqref{eq:rect_sources}. Thus its full source functional satisfies
\begin{equation}
 \mathcal Z_D[\bar\eta,\eta]=\det D_{\mathrm{hi}}\;\mathcal Z_S[\bar\eta_{\rm eff},\eta_{\rm eff}]
 \,e^{\bar\eta_hD_{\mathrm{hi}}^{-1}\eta_h}.
 \label{eq:naturality_sources}
\end{equation}
This formula is valid when $S$ is rectangular or has zero modes: $\mathcal Z_S$ is then a finite Grassmann polynomial, not the determinant of $S$. For the orthogonal spectral windows used in the Quillen construction, $D$ maps the intervening source shell of $D^\dagger D$ exactly onto its target shell of $DD^\dagger$; there are no off-diagonal couplings between that shell and its complement. Thus the shell determinant factors on its own. For a general low/high decomposition those off-diagonal blocks need not vanish, and one must use the Schur complement $S=A-BD_{\mathrm{hi}}^{-1}C$ and the source shifts above rather than treating the shell as a separate block. In either case, the full fermion amplitude is unchanged by the order of Gaussian integration. 

Reassigning an invertible shell $T_{12}$ from $D_{\mathrm{hi}}$ to $S$ changes the factors by
\begin{equation}
 \det D_{\mathrm{hi},1}=(\det T_{12})\det D_{\mathrm{hi},2},\qquad
 \mathcal Z_{S_2}=(\det T_{12})\mathcal Z_{S_1},
 \label{eq:naturality_shell}
\end{equation}
after integrating the shell variables (with their sources set to zero in the displayed scalar factor); with nonzero shell sources the same identity includes the standard Gaussian source shifts from Eq.~\eqref{eq:rect_sources}. Consequently both orders of integration give the same full source functional. Equation~\eqref{eq:naturality_shell} is the commuting transition diagram in elementary form; the map of low-mode determinant-line frames is precisely $s_1\mapsto s_1\otimes\det T_{12}$ in Eq.~\eqref{eq:window_isometry}. Its squared norm transforms by $|\det T_{12}|^2$, canceling the inverse change of the high-mode positive determinant.

An adapted Grushin augmentation implements the same block elimination with auxiliary variables. Its finite block has constant determinant $(-1)^{kc}$; changing auxiliary bases multiplies this determinant by their finite Jacobians and divides the low-mode exterior frame by the same Jacobians. Hence it induces exactly the spectral-window determinant-line transition, not an additional normalization choice. The argument remains valid at a zero crossing: enlarge the low window so $D_{\mathrm{hi}}$ remains invertible, and never divide by the vanishing vacuum determinant. At finite fermion cutoff the Yukawa perturbation is finite rank, so its relative determinant obeys these finite-dimensional Gaussian identities exactly. The only remaining infinite-dimensional normalization is the massless high-mode spectral determinant, whose splitting over an invertible shell is the standard Bismut--Freed isometry. The cutoff-removed high-frequency finite part is compatible with these transitions because changes between Grushin inverses are smoothing and leave the local ultraviolet coefficients unchanged. The standard spectral-window metric isometry is that of Bismut--Freed~\cite{bf1,freed_87}.

\paragraph{Compatibility of the finite parts and the real correction.}
Compare two gauge-only Grushin charts $a,b$ at a common background. First stabilize their finite auxiliary spaces and use the determinant-functor identifications; their infinite-dimensional operator symbols then agree, while the remaining changes are smoothing. Put $a_n=(-1)^{n+1}/n$ and $K_j=\mathcal G_{0,j}^{-1}\widetilde V$, $j=a,b$. For $1\le n\le4$, telescoping expresses $K_a^n-K_b^n$ as a sum with one smoothing factor. Its filtered counterpart converges in trace norm, uniformly on compact chart overlaps. Since the divergent coefficients are common, it follows that
\[
 \beta^{(a)}_{n,f}-\beta^{(b)}_{n,f}
       =a_n\operatorname{Tr}(K_a^n-K_b^n).
\]
The augmented weights $P_{0,a}$ and $P_{0,b}$ also differ by smoothing terms. Weighted traces agree with ordinary traces on smoothing insertions, and the weight-change residue vanishes for their smoothing logarithmic difference. All residue corrections to the Taylor coefficients in Eq.~\eqref{eq:zeta_taylor_v24} are identical. Therefore
\[
 \gamma_n^{(a)}-\gamma_n^{(b)}
       =a_n\operatorname{Tr}(K_a^n-K_b^n),\qquad
 p_f^{(a)}=p_f^{(b)}.
\]
Likewise Eq.~\eqref{eq:quillen_residue} gives the same $\delta_N$ on both charts, with the common normalization at zero Yukawa strength. This proves agreement of the finite correcting factor, not merely of its divergent part.

For clarity, where $1+K_a$ and $1+K_b$ are invertible, the relative trace-class determinant obeys
\[
 \det_F\!\left[(1+K_a)(1+K_b)^{-1}\right]
 =\frac{\det_5(1+K_a)}{\det_5(1+K_b)}
    \exp\!\left(\sum_{n=1}^4a_n\operatorname{Tr}(K_a^n-K_b^n)\right).
\]
This follows first for finite-rank approximants from~\eqref{eq:fredholm5split}, then by $\mathcal S_5$ and relative trace-norm convergence. It displays the finite trace shifts accompanying the nonmultiplicative $\det_5$ factors. The auxiliary determinant transition and the low-mode exterior factors supply the corresponding finite-dimensional determinant, by~\eqref{eq:naturality_sources}. At noninvertibility the proof uses the complete finite-cutoff source identities instead of the displayed ratios: subtract the same local terms on both charts and pass coefficientwise to the uniform limits, using the preceding minor construction. Thus no density assertion about invertible backgrounds is needed. Multiplication by the common factor $e^{-p_f-\delta_N/2}$ preserves their equality. Together with~\eqref{eq:window_isometry} and~\eqref{eq:adapted_metric}, this proves physical Quillen gluing at every index and justifies the normalization used by the real-analytic continuation. The theta angles and Yukawa matrices remain independent input data.

\noindent\emph{Physical interpretation.} Splitting the fermion spectrum into explicit low modes and an integrated-out high-mode determinant is a calculational choice. A nonzero mode contributes the same determinant factor on either side of the split. True zero modes must instead be saturated by external fermionic insertions; this does not create a new ultraviolet divergence.

\subsection{Spectral-pair check of the mass phase}
The physical statement and finite phase freedom are explained in Section~\ref{sec:yukawa_full}. Here is the algebraic check behind Eq.~\eqref{eq:massphase}. In a pure color background with constant Higgs field, a nonzero Dirac pair $\pm i\lambda$ yields
\begin{equation}
\det\begin{pmatrix}M&i\lambda\\i\lambda&M^\dagger\end{pmatrix}
=\det(\lambda^2\mathbf1+M^\dagger M)>0,
\label{eq:mass_pair}
\end{equation}
where $M=(v/\sqrt2)\operatorname{diag}(Y_u,Y_d)$. Nonzero pairs have no mass phase; it comes entirely from chiral zero modes, giving Eq.~\eqref{eq:massphase}. The common filter preserves this pairing. The theta-angle convention and possible higher-dimensional CP-odd finite terms are described in Section~\ref{sec:yukawa_full}; the Quillen norm concerns only the magnitude.

\section{A free massive-overlap comparison}\label{app:free_overlap}

This is a deliberately limited benchmark, \emph{not} a test of the interacting chiral measure. It asks whether, in a free vector-like theory, the overlap and continuum determinants give the same finite mass response after the common divergent terms have been subtracted. Craig's non-Abelian chiral measure current is not involved. Consider one free vector-like Dirac fermion on a four-torus of fixed side length $\ell$, with antiperiodic temporal boundary conditions to exclude zero momentum. Use Wilson parameter and overlap mass parameter $r_W=\rho_{\rm ov}=1$. For $t_\mu=ap_\mu$ define
\begin{equation}
 b(t)=\sum_{\mu=1}^4(1-\cos t_\mu)-1,\quad
 S^2(t)=\sum_\mu\sin^2t_\mu,\quad
 \omega(t)=\sqrt{b(t)^2+S(t)^2}.
\end{equation}
The massless and massive overlap operators are~\cite{neuberger,luscher_gw,giusti}
\begin{equation}
D_{\rm ov}(p)=\frac1a\left[1+\frac{b(t)+i\gamma_\mu\sin t_\mu}{\omega(t)}\right],\qquad
D_{m,\rm ov}=\left(1-\frac{am}{2}\right)D_{\rm ov}+m.
\label{eq:free_overlap_def}
\end{equation}
The two spin eigenvalues, each repeated twice, are
\[
 z_\pm=\frac1a\left(1+\frac b\omega
                   \pm i\frac{\sqrt{S^2}}\omega\right).
\]
Since $\omega^2=b^2+S^2$, they obey $|z_\pm|^2=(2/a)\operatorname{Re}z_\pm$, the Ginsparg--Wilson circle relation. Consequently
\[
 \frac{|(1-am/2)z_\pm+m|^2}{|z_\pm|^2}
 =1+m^2\left(\frac1{|z_\pm|^2}-\frac{a^2}{4}\right).
\]
The spin multiplicity gives the square in Eq.~\eqref{eq:overlap_free_mass}, with
\begin{equation}
\boxed{G_a(p)=\frac{a^2}{4}\frac{\omega(t)-b(t)}{\omega(t)+b(t)}.}
\label{eq:G_free_overlap}
\end{equation}
For each fixed physical nonzero momentum $G_a(p)\to p^{-2}$. Put $r(x)=\log(1+x)-x+x^2/2$. Since $r'(x)=x^2/(1+x)$, for $x\ge0$,
\begin{equation}
0\le r(x)\le\frac{x^3}{3}.
\end{equation}
On the compact Brillouin torus, $|t|^2(\omega-b)/[4(\omega+b)]$ is bounded and extends continuously at the physical zero with limit one. Thus for allowed nonzero physical momenta $0\le G_a(p)\le C|p|^{-2}$ independently of $a$. After extending the finite lattice sum by zero to the full continuum momentum lattice, dominated convergence with the summable bound $C'|p|^{-6}$ proves
\begin{equation}
\boxed{\mathcal R_{a,\mathrm{sub}}(s)\longrightarrow
2\sum_{p\in{\rm AP}(T^4)}r(s/p^2),\qquad s=m^2\ge0.}
\label{eq:overlap_free_convergence}
\end{equation}
Convergence is uniform for $s$ on compact sets. Its first term not fixed by quadratic and quartic subtraction is
\begin{equation}
[s^3]\,\mathcal R_{a,\mathrm{sub}}=\frac23\sum_{p\in{\rm BZ}_a}G_a(p)^3
\longrightarrow\frac23\sum_{p\in{\rm AP}(T^4)}p^{-6}.
\label{eq:overlap_sixth}
\end{equation}
For $\ell=2\pi$, $m=0.9$, antiperiodic time and periodic space, use $n_\mu=-N/2,\ldots,N/2-1$, $p_0=n_0+1/2$, $p_j=n_j$, and $a=2\pi/N$. The directly evaluated log-weight response is
\begin{center}\small
\begin{tabular}{@{}rrrr@{}}\toprule
$N=\ell/a$&$\mathcal R_{a,\mathrm{sub}}$&continuum sum on same momenta&difference\\\midrule
16&15.77195&16.99593&1.22398\\
32&16.55093&17.02134&0.47041\\
64&16.86546&17.02767&0.16221\\\bottomrule
\end{tabular}
\end{center}
The difference column is the continuum sum minus the overlap response on the same finite set of momenta. The continuum sum itself changes with $N$, so it is not the full limiting continuum value. An independent heat-kernel evaluation gives the full continuum result
\[
\mathcal R_{\mathrm{cont,sub}}(0.9^2)=17.029780286352048\ldots\,,
\]
using $\Theta(t)=\sum_{\mathbf n\in\mathbb Z^3,\,j\in\mathbb Z}\exp[-t(|\mathbf n|^2+(j+1/2)^2)]$ and
\[
\mathcal R_{\mathrm{cont,sub}}(s)=\int_0^\infty\!dt\,\Theta(t)
\left[s^2t-2s+\frac{2(1-e^{-st})}{t}
\right].
\]
For stable numerical evaluation at small $st$, the square bracket must be evaluated as $s^3t^2/3-s^4t^3/12+s^5t^4/60+O(s^6t^5)$ rather than by subtracting nearly equal numbers. This integral evaluates the infinite momentum sum; no rate of convergence in $a$ is asserted. Direct $4\times4$ spin-matrix checks verify~\eqref{eq:overlap_free_mass} to numerical precision. This benchmark is vector-like and free; its proof does not control interacting chiral projectors, nonzero external Higgs momenta, non-Abelian measure phases, or the continuum limit of Craig's full Standard Model regulator. The latter remains the open comparison problem of Section~\ref{sec:overlap}.

\subsection{A determinant-variation criterion for interacting matching}
The interacting goal is Eq.~\eqref{eq:overlap_target}. A useful sufficient (not established) analytic condition is the following; it is separated from the physics overview because it does not itself prove regulator equivalence.
\begin{proposition}[Reduction to a determinant-variation estimate]
Let $\mathcal S$ be a compact connected finite-dimensional smooth family of connections with sufficient fixed $C^r$ regularity for the determinant comparison, and suppose both weights are nonzero everywhere on $\mathcal S$, uniformly for sufficiently small $a$. Give tangent variations the $C^r$ norm. Suppose every $A\in\mathcal S$ can be joined to a reference $A_*$ by a path in $\mathcal S$ of length at most $L$, the reference ratio equals $1+o(1)$, and continuous logarithms for the normalized weight ratio can be chosen consistently on $\mathcal S$ (as is automatic for a nonzero ratio on a simply connected family). If local counterterms satisfy
\begin{equation}
\left|\delta\log W_{\rm ov}^{(a)}
 -\delta\log W_{\rm CW}^{(a)}-\delta C_a\right|
\le r_a\|\delta A\|_{C^r},\qquad r_a\to0,
\label{eq:overlap_variation}
\end{equation}
uniformly on $\mathcal S$, then Eq.~\eqref{eq:overlap_target} holds with logarithmic error bounded by $Lr_a+o(1)$.
\end{proposition}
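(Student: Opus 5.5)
The plan is a path integration of the variation estimate. Define on $\mathcal S$ the normalized ratio
\[
\Theta_a(A)=e^{-C_a[A]}\frac{W_{\rm ov}^{(a)}[U_a(A)]}{W_{\rm CW}^{(a)}[U_a(A)]},
\]
which is continuous and nonvanishing by hypothesis. By the stated assumption, choose a continuous branch $\ell_a=\log\Theta_a$ on $\mathcal S$. On a simply connected family this follows from the lifting criterion, since $\Theta_a:\mathcal S\to\mathbb C^\times$ lifts through $\exp$. With this branch, $\delta\ell_a=\delta\log W_{\rm ov}^{(a)}-\delta\log W_{\rm CW}^{(a)}-\delta C_a$ holds along any $C^1$ path, modulo the constant $2\pi i$ ambiguity, which the fixed branch removes. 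We then fix the branch at $A_*$ so that $\ell_a(A_*)=\log(1+o(1))=o(1)$, using the principal logarithm near $1$.

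Next, for arbitrary $A\in\mathcal S$, take a path $\gamma:[0,1]\to\mathcal S$ from $A_*$ to $A$ of $C^r$ length at most $L$. Reparametrize it, or approximate it by a piecewise-$C^1$ path of length at most $L+\eta$ with $\eta$ arbitrary, so that $\ell_a\circ\gamma$ is absolutely continuous. Then write
\[
\ell_a(A)-\ell_a(A_*)=\int_0^1 \frac{d}{ds}\ell_a(\gamma(s))\,ds .
\]
Apply \eqref{eq:overlap_variation} pointwise with $\delta A=\dot\gamma(s)$. This bounds the integrand by $r_a\|\dot\gamma(s)\|_{C^r}$, so the increment has modulus at most $r_a(L+\eta)$. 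Letting $\eta\to0$ and adding the reference term gives $\sup_{\mathcal S}|\ell_a|\le Lr_a+o(1)$. The bound is uniform because both $L$ and $r_a$ are uniform over $\mathcal S$.

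Finally, convert back: $|\Theta_a-1|=|e^{\ell_a}-1|\le |\ell_a|e^{|\ell_a|}\to0$ uniformly, which is \eqref{eq:overlap_target}.

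The main obstacle is justifying the differentiation step, namely that $\log W^{(a)}$ is differentiable along $C^r$ paths in $\mathcal S$ and that its derivative is the quantity controlled in \eqref{eq:overlap_variation}. Here I would simply use the hypothesis: the estimate is stated for variations $\delta A$ in the $C^r$ norm, and both weights are nonzero, so their logarithms are locally $C^1$. I would state this regularity explicitly as part of reading the hypothesis. A second delicate point is consistency of the branch, specifically that the reference normalization $1+o(1)$ selects the branch with $\ell_a(A_*)\to0$ rather than $2\pi i k$. Continuity in $a$ is not assumed, so I would fix the branch separately for each $a$ at $A_*$, which is legitimate because the conclusion concerns only $e^{\ell_a}$.
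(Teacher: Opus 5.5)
Your proposal is correct and follows the paper's own argument: integrate the variation bound along a path of length at most $L$ from $A_*$, add the $o(1)$ reference term, and exponentiate, with the globally continuous logarithm making the result independent of the path. Your explicit choice of branch at $A_*$ (separately for each $a$) and the piecewise-$C^1$ approximation of the path are details the paper leaves implicit, not a different route.
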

\begin{proof}
Integrate the differential estimate along a path from $A_*$ to $A$, use its bounded length and the reference normalization, then exponentiate. The continuous logarithm makes the integrated comparison independent of the chosen path.\end{proof}
This elementary integration criterion is not the required regulator-equivalence estimate; the latter remains to be proved.

\section{Relation to the original eigenmode and eta-invariant prescription}\label{app:hsu_comparison}

\subsection{Four-dimensional eigenmode prescription}
The earlier paper also notes a conjugation/reflection pairing of gauge backgrounds under which the \emph{integrated} partition function is real in the stated symmetry setting. That is not a claim that each background fermion determinant is real or that a complex-weight numerical sign problem disappears. In the present Yukawa-coupled theory, such a pairing must be checked for the full action and its chosen CP-violating parameters; it cannot be inferred solely from gauge-anomaly cancellation. The proposal of Ref.~\cite[Secs.~3--5]{hsu} begins with a smooth background gauge field, then expands the fermions in eigenfunctions of the corresponding \emph{vector-like} four-dimensional Dirac operator. The eigenfunctions in the background are compared with a fiducial basis. Denote the latter by $\phi_n^{L,R}$, and the eigenmodes in the background $A$ by $\phi_m^{L,R;A}$. The two overlap matrices have \emph{opposite} bra--ket ordering, as specified in Eqs.~(3.10)--(3.11) of Ref.~\cite{hsu}:
\[
 (C_L^{(N)})_{mn}=\langle\phi_m^{L;A}|\phi_n^L\rangle,
 \qquad
 (C_R^{(N)})_{mn}=\langle\phi_n^R|\phi_m^{R;A}\rangle.
\]
Under a background-eigenmode change $\phi_m^A\mapsto e^{i\alpha_m}\phi_m^A$, the $m$th row of $C_L^{(N)}$ acquires $e^{-i\alpha_m}$ while the corresponding row of $C_R^{(N)}$ acquires $e^{+i\alpha_m}$. Their determinant product is therefore unchanged. The same cancellation holds for unitary rotations within degenerate eigenspaces, when the left- and right-handed bases are transported consistently. In particular, one must \emph{not} take an additional complex conjugate of $\det C_R$ with these definitions. On a patch where the overlap determinants do not vanish, the phase is
\begin{equation}
 e^{i\eta_N(A)}=
 \frac{\det C_L^{(N)}(A)\,\det C_R^{(N)}(A)}
 {\bigl|\det C_L^{(N)}(A)\,\det C_R^{(N)}(A)\bigr|}.
 \label{eq:hsu_overlap}
\end{equation}
This agrees with Eqs.~(3.29)--(3.30) of Ref.~\cite{hsu}; Eq.~(3.31) then writes the determinant as $e^{i\eta_N}$ times the eigenvalue product. The associated chiral determinant is written schematically as $e^{i\eta_N(A)}\prod_{n\leq N}\lambda_n(A)$, where the $\lambda_n$ are eigenvalues of the vector-like operator tracked from the reference background. Sign changes in their product account for eigenvalue crossings. At finite $N$, the magnitude of the overlap matrices is not generally unity and is not gauge invariant; Ref.~\cite{hsu} proposes replacing the magnitude by that inferred from the corresponding vector-like theory, while taking $N\to\infty$ at fixed smooth gauge background before removing the gauge-field regulator. The phase $\eta_N$ in Eq.~\eqref{eq:hsu_overlap} is an \emph{eigenmode-overlap phase}, not itself the spectral eta invariant of a five-dimensional operator.

\subsection{Five-dimensional spectral asymmetry}
To compare the eigenmode prescription with continuum spectral regularization, consider a smooth path of connections $A(t)$ joining a reference background $A_*$ to $A$ on a fixed gauge bundle. Our explicitly Hermitian Euclidean convention is
\[
 \{\gamma^\mu,\gamma^\nu\}=-2\delta^{\mu\nu},\qquad
 (\gamma^\mu)^\dagger=-\gamma^\mu,\qquad
 \gamma_5=\gamma^1\gamma^2\gamma^3\gamma^4
          =\gamma_5^\dagger,\quad \gamma_5^2=1.
\]
Then $\not\!D_4=\gamma^\mu(\partial_\mu+A_\mu)$ is Hermitian when $A_\mu$ is anti-Hermitian. With $\partial_t^\dagger=-\partial_t$, the operator $i\gamma_5\partial_t$ is also Hermitian, and their sum is formally self-adjoint. A self-adjoint domain or suitable boundary condition is additionally required on a noncompact path or a cylinder. In the more familiar Euclidean convention where $\gamma^\mu D_\mu$ is anti-Hermitian, an appropriate factor of $i$ must be included in its definition before using spectral signs. Ref.~\cite{hsu} introduces the five-dimensional operator
\begin{equation}
 \mathcal H_5=i\gamma_5\partial_t+\not\!D_4[A(t)],
 \qquad
 \eta_{\mathcal H_5}(s)=
 \sum_{\lambda\neq0}\operatorname{sign}(\lambda)|\lambda|^{-s}.
 \label{eq:hsu_5d}
\end{equation}
Here $\lambda$ denotes an eigenvalue of a self-adjoint realization of the five-dimensional operator, and $\eta_{\mathcal H_5}(0)$ is obtained by analytic continuation. To define the invariant precisely on a cylinder or an interpolating spacetime with ends, one must also specify the boundary conditions or relative spectral prescription.\footnote{One standard choice is an Atiyah--Patodi--Singer (APS) spectral boundary condition, which selects boundary modes using the sign of the boundary Dirac operator. The boundary and reference conventions enter the eta phase for an open interpolation; the present finite-volume existence proof does not require an open-path eta prescription.} If $h_5=\dim\ker\mathcal H_5$, the customary \emph{reduced eta invariant} is $\bar\eta_5=(\eta_{\mathcal H_5}(0)+h_5)/2$. For nonvanishing endpoint determinants, the relative chiral determinant phase in the anomaly-free setting is represented in the normalization of Eq.~(5.2) in Ref.~\cite{hsu} as
\begin{equation}
 \arg W(A)-\arg W(A_*)
 =\pi\bigl[\eta_{\mathcal H_5}(0)+h_5\bigr]
 =2\pi\bar\eta_5\pmod{2\pi},
 \label{eq:hsu_eta_relation}
\end{equation}
with the boundary prescription, reference phase, orientation, and local renormalization convention fixed as part of that representation. The $h_5$ term records the contribution of five-dimensional zero modes and the corresponding spectral-flow sign changes. Where the four-dimensional determinant vanishes, its argument is undefined, although the determinant section itself remains well defined. Equation~\eqref{eq:hsu_eta_relation} records the continuum eta-invariant representation discussed in Ref.~\cite{hsu}; it is \emph{not} asserted here as an already proved identity for our particular choice of $W_0$ in Eq.~\eqref{eq:weight}. A path from a single trivial reference field also cannot be assumed to cover gauge bundles in distinct topological sectors; such sectors require independent reference data or a suitable relative construction.

\subsection{Eta invariants and gauge invariance}
For a \emph{closed} loop of four-dimensional backgrounds, the five-dimensional mapping torus computes the holonomy of the Bismut--Freed connection. The Bismut--Freed holonomy theorem expresses this in terms of an adiabatic limit of $\exp(2\pi i\bar\eta_5)$: the circle parameterizing the loop is stretched, so the background varies slowly along it, as in the adiabatic approximation underlying Berry phases. Dai and Freed subsequently gave a determinant-line-valued eta-invariant formulation and a gluing law, providing another derivation of the holonomy formula~\cite{bf2}; see in particular~\cite[Cor.~5.11]{daifreed}.\footnote{Depending on whether the determinant line or its inverse is used, the formula for the holonomy has a conjugated phase. Both conventions give trivial holonomy in the anomaly-free theory. The spin structure on the mapping circle and its index factor must be kept together; for the full representation on $T^4$, $\operatorname{ind}P=-12k$ is even, so that parity factor is one.} Because the anomaly polynomial for $\mathbf{10}\oplus\overline{\mathbf5}$ vanishes and $\Omega_5^{\mathrm{Spin}}(BSU(5))=0$, the exponentiated holonomy is unity for every such loop. Accordingly, phase transport from a chosen reference background is path independent and compatible with gauge transformations, which supplies the $\tau_U$ used in Eqs.~\eqref{eq:weight}--\eqref{eq:gaugeinvariant}.

The distinction between an open path and a closed loop is essential. The five-dimensional eta invariant associated with an \emph{open} interpolation can encode a nontrivial, background-dependent chiral determinant phase. The vanishing global anomaly requires only that the phase obtained after completing a \emph{closed} loop be trivial. It therefore does not remove the physically relevant complex phase of $W_0$.

\subsection{Relation between the two prescriptions}
Ref.~\cite[Sec.~5, following Eq.~(5.3)]{hsu} explicitly identifies the relation between the limiting eigenmode phase $\eta_N$ and the five-dimensional spectral asymmetry as a nontrivial problem and does not prove the required $N\to\infty$ convergence. The present construction instead defines $W_0$ directly through the Quillen determinant section and a flat, gauge-equivariant trivialization. It does not depend on proving convergence of the truncated overlap phases. To establish \emph{equality} with the eigenmode prescription of Ref.~\cite{hsu}, however, one must compare both the phase and the magnitude under the same reference normalization, including possible finite local counterterms and behavior at spectral crossings. Such a comparison remains open. In particular, $|\det D|e^{i\pi\eta}$ is not, by itself, a complete definition unless the five-dimensional operator, its boundary and eta conventions, and its normalization relative to the determinant section have been specified.

\section{Global gauge conventions and reconstruction details}\label{app:geometry}
\subsection{The global form of the gauge group}
The group $S(U(3)\times U(2))$ consists of block-diagonal $5\times5$ unitary matrices, with a three-dimensional color block and a two-dimensional weak block, constrained to have total determinant one. The Lie algebra is the Standard Model gauge algebra. The generator of the central $\mathbb Z_6$ acts trivially on every Standard Model fermion and on the Higgs. Dividing by it therefore leaves the local interactions unchanged, but restricts the allowed gauge representations. On a representation $(r_3,r_2)_n$, the central generator acts by $\exp[2\pi i(t/3+I+n/6)]$. Its action is trivial precisely when $t/3+I+n/6\in\mathbb Z$, where $t\in\{0,1,2\}$ is color triality and $I$ is weak isospin. For example, $Q=(\mathbf3,\mathbf2)_1$ gives $1/3+1/2+1/6=1$. The quotient also permits additional magnetic fluxes, correlated among the color, weak and hypercharge factors. This global form is selected because it embeds into $SU(5)$ \emph{without identifying distinct group elements} (a faithful embedding). The other global forms are not considered here; their lattice realizations are discussed in Ref.~\cite[Sec.~10]{craig}, including the faithful representations needed to formulate their admissibility conditions.

\subsection{Bundle pairings for the Yukawa action}
The three generation matrices are arbitrary: no symmetry condition is required. The hypercharge sums of the terms in Eq.~\eqref{eq:yukawa} vanish: $1+3-4=0$, $1-3+2=0$, and $-3-3+6=0$. On the reconstructed rank-three and rank-two bundles, with $\det E_3\otimes\det E_2\cong\mathbf1$, the species are represented by
\[
Q=E_3\otimes E_2,\quad u^c=E_3^*\otimes\det E_3,\quad d^c=E_3^*,\quad L=E_2\otimes(\det E_2)^{-1},\quad e^c=\det E_2,\quad\Phi\in E_2.
\]
The antisymmetric weak contraction in the up term supplies $\det E_2$, canceled by $\det E_3$; the other terms use canonical dual pairings. This verifies gauge invariance globally, including on nontrivial bundles.

\subsection{Smooth family reconstruction and its topology}\label{app:family_reconstruction}
\paragraph{The total family bundle and its source.}
Lemma~\ref{lem:smooth_reconstruction} collects consequences of Craig's explicit connection construction in Appendix~A; it does not invoke the fermion-measure reconstruction theorem. The relevant PDF locations are:
\begin{center}\small
\begin{tabularx}{\textwidth}{@{}>{\raggedright\arraybackslash}p{.34\textwidth}X@{}}
\toprule Required input & Location in Ref.~\cite{craig} \\\midrule
Bundle, frames and determinant volume & A.2, printed pp.~65--66, Eqs.~(A.12)--(A.14).\\
Family connection and derivative bounds & A.2, printed pp.~66--67, Eqs.~(A.15)--(A.18) and accompanying text.\\
Gauge covariance & A.3, printed pp.~67--68, Eqs.~(A.21)--(A.22).\\
Torus descent and $3+2$ splitting & A.3, printed p.~68, Eq.~(A.23) and the final paragraph.\\
\bottomrule
\end{tabularx}
\end{center}
In a fixed cover chart, Craig's smooth weighted transporter matrix $K(y,U)$ has five eigenvalues separated from the remaining spectrum by the admissibility gap. Its contour projection $\Pi(y,U)$ is therefore smooth jointly in $y$ and $U$; degeneracies within the selected five-dimensional cluster do not affect it. The ranges form a smooth rank-five bundle over the product of the link domain and the spacetime chart. Craig's orthonormal frames have exact smooth overlap cocycles. Their deck identifications commute with the contour construction, so these chart bundles descend to the single $\mathscr E\to\calU_\epsilon\times T^4$ in Lemma~\ref{lem:smooth_reconstruction}. Local finite vertex spaces suffice on the cover; no globally chosen frame on $T^4$ is assumed.

The family connection is, explicitly,
\[
 \nabla_{\rm tot}=\Pi d_{\rm tot}
       -\frac15(\sigma,d_{\rm tot}\sigma)\,1_{\mathscr E},
 \qquad d_{\rm tot}=d_y+d_U.
\]
It preserves the Hermitian metric, determinant volume and block splitting. Smooth matrix inverses and phase normalizations stay uniformly away from their singularities on the geometric admissible domain. Their derivatives of any fixed order are bounded on compact regulated families. These are the total-space data needed for the smooth Dirac-family construction. Lattice gauge transformations give smooth intertwining maps with the group composition law. A parameter-dependent change of gauge has the additional family-direction term displayed in Craig's Eq.~(A.22); ordinary spatial gauge covariance remains exact.

\paragraph{Compatibility of family connections.}
For this anomaly-free representation, choosing different smooth connection components in parameter directions does not change the BF connection under the canonical identification of the operator determinant lines. Locally interpolate those components with a parameter $s\in[0,1]$, keeping the spacetime connection fixed and adding no $ds$ component. The vertical Dirac operator and its spectral windows are independent of $s$, so BF's construction has no connection term in the $s$ direction~\cite[Def.~1.17]{bf1}. The degree-six index polynomial vanishes for this enlarged family too, so its BF curvature is zero~\cite{bf2}. The mixed $(s,U)$ curvature then says that the base connection is independent of $s$. Naturality gives the same identification on overlapping patches. Thus the ambient BF pullback used in Appendix~\ref{app:phase_references} agrees with the connection obtained from the smooth family above.

\paragraph{Relation to earlier interpolation constructions.} The use of continuum interpolated connections predates the present regulator~\cite{flume_wyler,goeckeler,kronfeld,hsinterp}. Craig's reconstruction is used here for its specific gauge-covariant smooth-family and bundle properties; employing it does not make gauge interpolation itself a new contribution. 
Earlier geometric studies of lattice gauge topology and continuum reconstruction include Refs.~\cite{luscher_topology,phillips_stone,goeckeler}; subsequently Hern\'andez and Sundrum~\cite{hsinterp,hsinterpreview} developed a cellwise interpolation of non-Abelian lattice gauge fields designed to preserve gauge covariance, lattice symmetries, continuity of the reconstructed potentials across cell boundaries in directions transverse to those boundaries, and a gauge-invariant notion of locality. Their interpolated parallel transports reproduce the original link matrices exactly. By contrast, the reconstruction used here yields a globally smooth bundle connection that varies smoothly with admissible link configurations, at the cost of the $O(\epsilon)$ edge discrepancy stated above. For the finite-volume existence theorem, the decisive properties of the reconstruction in Ref.~\cite{craig} are smoothness in spacetime, smooth dependence on the lattice links, exact gauge covariance, preservation of the $\SM=S(U(3)\times U(2))$ structure group, and consistent treatment of nontrivial gauge-bundle topology. These are precisely the properties needed to apply smooth-family determinant-line theory; exact reproduction of individual lattice-link holonomies is not required. We do not assert that the two interpolations yield identical fermion weights.

Smooth dependence on $U$ implies that the topological charges and fluxes of the reconstructed bundle remain constant along any continuous family of admissible links. These quantized quantities are conventionally expressed as integrals of \emph{characteristic classes}. To relate them to familiar gauge-field expressions, take $F_i$ to be the Hermitian curvature two-form of the rank-$i$ block, $F_i=i(dA_i+A_i\wedge A_i)$ when $A_i$ is anti-Hermitian. Then
\[
\int_X c_1(E_2)^2=\frac{1}{4\pi^2}\int_X
\operatorname{tr}F_2\wedge\operatorname{tr}F_2,
\qquad
\int_X c_2(E_i)=\frac{1}{8\pi^2}\int_X
\left[(\operatorname{tr}F_i)\wedge(\operatorname{tr}F_i)
-\operatorname{tr}(F_i\wedge F_i)\right].
\]
The second formula includes a trace-squared term because the blocks are $U(3)$ and $U(2)$ bundles, not necessarily separate $SU(3)$ and $SU(2)$ bundles. For a traceless curvature it reduces, up to the stated sign convention, to the usual quantized $F\widetilde F$ integral. The converse to topological constancy need not hold: two configurations with identical topological charges may still be impossible to deform into one another without leaving the admissible region.

\paragraph{Which continuum sectors occur?}
The preceding construction permits nontrivial bundle topology, but this alone does not establish which sectors occur at a prescribed lattice spacing. The following proposition proves the stronger asymptotic realization statement. It uses the spectral-projection definition of Craig's bundle itself, rather than a comparison of transition cocycles. Write $E_A=E_{3,A}\oplus E_{2,A}$ for the rank-five Hermitian bundle associated to a prescribed smooth $\SM$-bundle, with its canonical determinant trivialization.

\begin{proposition}[Asymptotic realization of every smooth Standard Model bundle]\label{prop:topological_realization}
Let $P\to T^4$ be any smooth principal $\SM=S(U(3)\times U(2))$ bundle and let $A$ be any smooth $\SM$-connection on $P$. Consider a sequence of periodic hypercubic lattices of spacing $a$ tending to zero with fixed physical periods. Choose $\SM$ frames at the physical lattice vertices and let $U^{(a)}$ be the matrices of $A$-parallel transport along their oriented edges. For sufficiently small $a$:
\begin{enumerate}[label=(\roman*),leftmargin=1.6em]
\item $U^{(a)}$ belongs to the strictly positive region of the bosonic admissibility weight $\chi_\epsilon$;
\item Craig's rank-five reconstructed bundle $E[U^{(a)}]$, with its rank-three/rank-two splitting and determinant trivialization, is isomorphic to $E_A$ as an $S(U(3)\times U(2))$ bundle.
\end{enumerate}
Consequently the reconstructed bundle realizes the entire isomorphism class of $P$, including its integral Chern classes and nonliftable $\mathbb Z_6$ flux sectors. The assertion is for every fixed smooth bundle at sufficiently fine resolution, not for every sector on any one finite lattice.
\end{proposition}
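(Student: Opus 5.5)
Part (i) is a plaquette estimate. For a smooth connection $A$ on the compact bundle $P$, the holonomy around an elementary square of side $a$ satisfies $\|\mathbf 1_5-U^{(a)}_p\|\le C_A a^2\sup|F_A|$. This follows from the non-Abelian Stokes formula in any local trivialization containing the plaquette, and it does not depend on the chosen vertex frames, since $w_\epsilon$ and the norm are conjugation invariant. The factors $w_\epsilon$ are continuous, strictly positive at the identity and conjugation invariant. Hence there is a neighborhood of $\mathbf 1$ on which every factor is positive. For $a$ small enough all plaquettes lie in that neighborhood, and $\chi_\epsilon(U^{(a)})>0$. The sampled links are parallel transports of the \emph{same} smooth connection, so they also lie well inside the geometric region $\|\mathbf 1_5-U_p\|<\epsilon_g$. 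This places $U^{(a)}$ in the compact smooth part of $\calU_\epsilon$ covered by Lemma~\ref{lem:smooth_reconstruction}.

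For (ii) I would avoid comparing transition cocycles and instead compare the \emph{spectral projectors} themselves. Embed $E_A$ unitarily into a trivial bundle $X\times\mathbb C^N$ on each cover chart, compatibly with deck identifications, so that $E_A$ is the range of a smooth projector $\Pi_A(y)$. Craig's bundle is the range of the contour projection $\Pi(y,U)$ of the weighted transporter $K(y,U)$ (Appendix~\ref{app:family_reconstruction}). The plan is to express $K(y,U^{(a)})$ in the same ambient frames: each vertex frame transported to $y$ by the sampled links, averaged with the fixed weights. With $U^{(a)}$ equal to $A$-parallel transports, each summand differs from the $A$-parallel transport of that vertex frame to $y$ only by a holonomy around a small loop of size $O(a)$. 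Thus $K(y,U^{(a)})$ equals a positively weighted average of transported frames up to $O(a)$ errors, uniformly in $y$ together with finitely many derivatives. Its rank-five cluster then lies within $O(a)$ of the rank-five range $E_A$, so $\|\Pi(y,U^{(a)})-\Pi_A(y)\|=O(a)$ uniformly. The same argument applies blockwise to the $3$ and $2$ blocks, because all transports lie in $\SM$ and preserve the splitting.

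Once the two projectors are uniformly close (distance $<1$), the map $T=B(B^\dagger B)^{-1/2}$ with $B=\Pi\Pi_A+(1-\Pi)(1-\Pi_A)$, exactly as in Appendix~\ref{sec:source_families}, is a smooth unitary isomorphism from $E_A$ onto $E[U^{(a)}]$. It is compatible with the deck identifications because both projectors are. It intertwines the block projectors, so it preserves the splitting. The determinant trivializations agree up to a smooth $U(1)$-valued function close to $1$, which has a smooth logarithm. Dividing one block by a fifth root $e^{i\theta/5}$ corrects $\det$ while keeping $T$ an $S(U(3)\times U(2))$ isomorphism. Isomorphism of associated bundles for the faithful representation then gives isomorphism of principal $\SM$ bundles. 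That in turn yields equality of all Chern classes and of the $\mathbb Z_6$ flux sector.

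I expect the main obstacle to be the uniform estimate of $K(y,U^{(a)})$ near vertex frames in Craig's explicit formula (A.12)--(A.14). The weights and phase normalizations must produce an $O(a)$ error relative to \emph{the same} gap that separates the five-eigenvalue cluster, uniformly in $a$. Since the gap is fixed by admissibility and the errors are controlled by $a^2|F_A|$ per plaquette, I would first try to bound the transporter defect by a sum of plaquette holonomies along lattice paths of bounded length. This would give an $O(a^2)$ or at worst $O(a)$ defect, well inside the gap for small $a$.
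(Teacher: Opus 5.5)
You have the paper's strategy. Part (i) is the same plaquette estimate. For (ii) the paper likewise compares Craig's contour projector with an exact model projector rather than with transition cocycles, then builds a unitary isomorphism and corrects the determinant.

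\textbf{The model projector.} The step you defer as the ``main obstacle'' is where the proof actually lives, and as written your setup for it does not make sense. Craig's $K(y,U)$ and $\Pi(y,U)$ act on the vertex-fiber space $H=\ell^2(\mathbb Z^4)\otimes\mathbb C^5$. A projector $\Pi_A$ coming from an arbitrary unitary embedding of $E_A$ into some $X\times\mathbb C^N$ lives in a different space and cannot be compared with $\Pi$; ``expressing $K$ in the ambient frames of $E_A$'' is not defined.
\begin{itemize}
\item The missing idea is a specific $a$-dependent embedding of $E_A$ into $H$: $(J_a(y)v)_i=\varphi_i(y)\,T_A(x_i\leftarrow ay)v$. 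It uses Craig's copying factors and $A$-parallel transport to the deck-compatibly lifted vertex frames.
\item Unitarity of transport and $\sum_i\varphi_i^2=1$ give $J_a^\dagger J_a=\mathbf 1_5$. So $Q_a=J_aJ_a^\dagger$ is an \emph{exact} rank-five projector whose range $J_a$ identifies with $E_A$, and $Q_a(y+\Delta)=D_\Delta Q_a(y)D_\Delta^\dagger$ holds automatically.
\item Its entries $\varphi_i\varphi_jT_A(x_i\leftarrow ay)T_A(ay\leftarrow x_j)$ are your ``weighted average of transported frames.''
\item In radial gauge at $ay$ the straight transports are trivial, and every link of Craig's coordinate-ordered paths is $\mathbf 1+O_A(a^2)$. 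Hence $\|u_{ij}-T_A(x_i\leftarrow ay)T_A(ay\leftarrow x_j)\|\le C_Aa^2$, and this survives the averaging, polar factor and determinant normalization.
\item Weighted Cauchy--Schwarz with $\sum_j\varphi_j^2=1$ then gives $\|K_a-Q_a\|\le C_Aa^2$.
\item Since $Q_a$ has spectrum $\{0,1\}$, the resolvent on $|z-1|=1/2$ yields $\|\Pi_a-Q_a\|\le C'_Aa^2<1$.
\end{itemize}
The gap used here is that of the model projector, not an admissibility gap, and no derivative bounds are needed. With this in place, your Kato unitary $B(B^\dagger B)^{-1/2}$ composed with $J_a$ is a fine substitute for the paper's $\Pi_aJ_a(J_a^\dagger\Pi_aJ_a)^{-1/2}$.

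\textbf{The determinant step.} This also needs repair. Multiplying one block by $e^{-i\theta/5}$ changes the determinant by $e^{-3i\theta/5}$ or $e^{-2i\theta/5}$, not by $e^{-i\theta}$. The paper instead multiplies the whole map by the scalar $\exp[-(\operatorname{Log}f_a)/5]$. That requires a global logarithm, which is exactly your unproved claim that the two volume forms agree up to a function close to $1$. The paper proves it by showing that Craig's local frames are $O_A(a^2)$-close to $J_a$ applied to radially transported frames of $E_A$. This gives $\|\sigma_C-(\wedge^5T)\sigma_A\|=O_A(a^2)$, where $\sigma_C$ is Craig's volume and $\sigma_A$ is the canonical volume of $E_A$. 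Alternatively, you can bypass logarithms entirely. If $(\wedge^5T)\sigma_A=f\sigma_C$ with $|f|=1$, compose $T$ with $f^{-1}$ on the rank-three block and $f$ on the rank-two block. This multiplies the determinant by $f^{-3}f^{2}=f^{-1}$ while preserving unitarity and the splitting.
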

\begin{proof}
Frames can always be chosen independently at the finitely many physical torus vertices. They define the periodic link matrices even when $P$ is topologically nontrivial; no continuous global frame is needed. For the comparison with Craig's projector, use dimensionless coordinates $y=x/a$ on the cover $\mathbb R^4$ and lift these discrete frames compatibly with deck transformations. The vertex indices $i\in\mathbb Z^4$ then correspond to $x_i=ai$.

\emph{1. Admissible sampling.}
Fix the connection $A$. On a small coordinate ball centered at any $x\in T^4$, choose radial gauge. In this gauge its potential satisfies $|A_\mu(x+z)|\le C_A|z|$ when $|z|$ is sufficiently small; the constant can be taken uniform over $T^4$ because $F_A$ is smooth and the torus is compact. The holonomy of each lattice plaquette is therefore $\mathbf1_5+O_A(a^2)$, uniformly over all plaquettes. Thus $\|\mathbf1_5-U_p^{(a)}\|\le C_Aa^2$. For small enough $a$, each plaquette lies in the neighborhood of the identity where the fixed cutoff weight $w_\epsilon$ is positive, proving (i).

\emph{2. A projector for the original bundle.}
Let $\varphi_i(y)$ denote Craig's smooth copying factors (his Eqs.~(A.1)--(A.4)), so that $\sum_i\varphi_i(y)^2=1$ and at most $4^4$ factors are nonzero at each $y$. Let $H=\ell^2(\mathbb Z^4)\otimes\mathbb C^5$ be the vertex-fiber space in the chosen frames. For $v\in(E_A)_{ay}$ define
\begin{equation}
 (J_a(y)v)_i=\varphi_i(y)\,T_A(x_i\leftarrow ay)v,
 \qquad Q_a(y)=J_a(y)J_a(y)^\dagger,
 \label{eq:topo_JQ}
\end{equation}
where $T_A(x_i\leftarrow ay)$ is parallel transport along the straight segment from $ay$ to $x_i$, expressed in the selected vertex frame. We use the pullback of $P$ to $\mathbb R^4$ in this formula. Because transport is unitary and $\sum_i\varphi_i^2=1$, $J_a^\dagger J_a=\mathbf1_5$. Thus $Q_a$ is an exact rank-five orthogonal projector and $J_a$ identifies the original bundle fiberwise with $\operatorname{im}Q_a$.

In vertex components,
\begin{equation}
 (Q_a)_{ij}=\varphi_i\varphi_j\,
 T_A(x_i\leftarrow ay)T_A(ay\leftarrow x_j).
 \label{eq:topo_Qblocks}
\end{equation}
The maps $J_a$ and $Q_a$ depend smoothly on $y$. Under a torus-period translation $\Delta$ in lattice units, the lifted vertex frames identify corresponding fibers, and $J_a(y+\Delta)=D_\Delta J_a(y)$ under the deck identification of its domain, with $D_\Delta$ the shift of vertex components. Hence $Q_a(y+\Delta)=D_\Delta Q_a(y)D_\Delta^\dagger$: $\operatorname{im}Q_a$ descends to $T^4$ with precisely the original bundle topology. This is true even when the bundle does not lift to separate $SU(3)$, $SU(2)$, and $U(1)$ bundles.

\emph{3. Comparing with Craig's spectral projector.}
Craig's Eqs.~(A.8), (A.10), and (A.12) define short-transport matrices $u_{ij}$, the Hermitian matrix $K_a$ and the spectral projector $\Pi_a$ by
\begin{equation}
 (K_a)_{ij}=\varphi_i(y)u_{ij}\varphi_j(y),\qquad
 \Pi_a=\frac1{2\pi i}\oint_{|z-1|=1/2}(z-K_a)^{-1}\,dz.
 \label{eq:topo_KPi}
\end{equation}
For any indices with $\varphi_i\varphi_j\ne0$, $x_i$, $x_j$, $ay$, and every coordinate-ordered short path entering $u_{ij}$ lie in a ball of radius $C a$, with a constant determined solely by Craig's fixed support. In radial gauge centered at $ay$, all straight radial transports in~\eqref{eq:topo_Qblocks} are identity matrices. Each coordinate-ordered lattice path has at most twelve links, and each of its transports differs from identity by $O_A(a^2)$ in that same gauge. Their average, its unitary polar factor, and the determinant-one normalization used in $u_{ij}$ therefore also differ from identity by $O_A(a^2)$; the polar and logarithm operations are smooth near identity. Returning to arbitrary vertex frames gives the invariant estimate
\begin{equation}
 \left\|u_{ij}-T_A(x_i\leftarrow ay)T_A(ay\leftarrow x_j)\right\|
 \le C_A a^2\qquad(\varphi_i\varphi_j\ne0).
 \label{eq:topo_trans_est}
\end{equation}
This estimate is uniform in $y$, including near torus seams. For $v=(v_j)\in H$, the Cauchy--Schwarz inequality and $\sum_i\varphi_i^2=1$ now give
\begin{equation}
 \|(K_a-Q_a)v\|^2\le C_A^2a^4
 \left(\sum_j|\varphi_j|\,\|v_j\|\right)^2
 \le C_A^2a^4\|v\|^2.
 \label{eq:topo_KQ_bound}
\end{equation}
Since $Q_a$ is an exact rank-five projector, the resolvent identity on $|z-1|=1/2$ implies, for small enough $a$,
\begin{equation}
 \boxed{\ \sup_y\|\Pi_a(y)-Q_a(y)\|\le C'_Aa^2<1.\ }
 \label{eq:topo_proj_bound}
\end{equation}
The selected projector has rank five. The constants need not be uniform over different continuum backgrounds or as the curvature of the chosen background increases.

\emph{4. Global bundle isomorphism and the $\SM$ structure.}
The norm bound in~\eqref{eq:topo_proj_bound} makes $J_a^\dagger\Pi_aJ_a$ strictly positive. Define its polar-corrected projection
\begin{equation}
 \mathcal T_a=\Pi_aJ_a(J_a^\dagger\Pi_aJ_a)^{-1/2}:
 E_A\longrightarrow E[U^{(a)}].
 \label{eq:topo_isomorphism}
\end{equation}
Then $\mathcal T_a^\dagger\mathcal T_a=1$; because both bundles have rank five, this is a smooth unitary fiberwise isomorphism. Every term is compatible with deck transformations, so the isomorphism descends globally to $T^4$ without assuming a global gauge potential.

The sampled connection, both short transports, $K_a$, $Q_a$, $\Pi_a$, and $\mathcal T_a$ preserve the $3+2$ decomposition. Thus the rank-three and rank-two blocks are separately identified. To preserve Craig's determinant trivialization as well, let $\sigma_A$ denote the canonical unit determinant volume on $E_A$ and $\sigma_C$ Craig's globally defined volume from his Eq.~(A.14). In an $\SM$ frame at an anchor $i$, his local copying map $W_i$ differs by $O_A(a^2)$ from $J_a$ applied to the frame parallel transported radially from $x_i$; this follows from~\eqref{eq:topo_trans_est}. His projected normalized local frame $w_i$ therefore has the same estimate. The determinants of all the radially transported original frames equal $\sigma_A$. Craig's weighted volume average consequently satisfies
\begin{equation}
 \|\sigma_C-(\wedge^5J_a)\sigma_A\|\le C_Aa^2,
 \qquad
 \|\sigma_C-(\wedge^5\mathcal T_a)\sigma_A\|\le C'_Aa^2.
 \label{eq:topo_volume_bound}
\end{equation}
The ratio defined by $(\wedge^5\mathcal T_a)\sigma_A=f_a\sigma_C$ is therefore a smooth $U(1)$-valued function uniformly near $1$. Its principal logarithm is global. Multiplying $\mathcal T_a$ by $\exp[-(\operatorname{Log} f_a)/5]$ corrects its determinant to one without changing unitarity or either block. The corrected map identifies the associated $S(U(3)\times U(2))$ principal bundles, proving (ii). In particular, all integral characteristic classes and quotient-group lifting obstructions agree.
\end{proof}

\paragraph{Scope.}
Every fixed topological sector therefore has an open, positive-Haar-measure set of admissible link configurations on sufficiently fine lattices: admissibility and the bundle class are stable under sufficiently small changes of the links. This is a geometric statement, independent of whether the zero-source massless determinant vanishes because of an index obstruction. It does not establish that all sectors occur at a given finite lattice spacing, nor does it assert convergence of the full quantum theory as $a\to0$.

\paragraph{Reconstruction of the Higgs section.}
The Higgs field is a \emph{section} of the rank-two weak bundle $E_2$, meaning a doublet whose components in overlapping spacetime patches are related by the corresponding gauge transformation. Ref.~\cite{craig} does not introduce scalar fields. Using its smooth local frames $C_i$ and partition-of-unity weights $\rho_i$, we define a Higgs section by transporting site doublets into the continuum weak bundle, as in Eq.~\eqref{eq:higgsinterp}.
Here $j_2:\mathbb C^2\hookrightarrow\mathbb C^5$ is the fixed inclusion of the weak block. By the block-preservation statement of Lemma~1, each $C_i(y)j_2\phi_i$ belongs to $(E_2)_y$, not merely to the ambient $SU(5)$ bundle. The functions $\rho_i$ are fixed, independent of the link variables, form a \emph{partition of unity} ($\sum_i\rho_i=1$), and have support inside the corresponding local frame domain of $C_i$. This is a \emph{uniform family choice}: Craig constructs the weights on a fixed cover in Eqs.~(A.1)--(A.4), while the corresponding frames vary smoothly with $U$, with bounded derivatives of each fixed order on the compact admissible support (Appendix~A.2 of Ref.~\cite{craig}). Thus the same cover and weights work throughout that family; no background-dependent choice of partition is needed. Gauge covariance follows from the transformation law $C_i\mapsto G C_i g_i^{-1}$ together with $\phi_i\mapsto g_i\phi_i$: each summand transforms by the induced gauge action $G$ on $(E_2)_y$. Here the transformations and maps are restricted to the weak block as above. This verifies the smooth, gauge-covariant Higgs input used in Section~\ref{sec:yukawa_full}. Because it is linear in the site values, it also obeys $\Phi[U,t\phi]=t\Phi[U,\phi]$.

\subsection{Topological phases and their index restrictions}\label{app:phase_references}

\paragraph{Ambient transport and fixed reference data.}
Flatness on admissible-link space alone permits one independent phase on each connected component. The convention in Section~\ref{sec:hsu_eta} fixes these phases by comparison in a larger space of connections. For the induced $SU(5)$ bundle put
\[
 k=\int_{T^4}c_2(E_3\oplus E_2)
   =k_3+k_2-\int_{T^4}\mathfrak c^2,
 \qquad \mathfrak c=c_1(E_2)=-c_1(E_3).
\]
On a four-dimensional base, the first nonzero homotopy group of $BSU(5)$ is $\pi_4=\mathbb Z$, represented by $c_2$; hence $k$ classifies these bundles on $T^4$. Choose a representative $P_k$, a smooth reference connection $A_k$, and a unit vector $e_k$ in its Quillen determinant line. The full space $\mathcal A(P_k)$ of smooth unitary $SU(5)$ connections is affine. The local anomaly polynomial vanishes there, and the mapping-torus/bordism argument of Section~\ref{sec:hsu_eta} also applies to ambient gauge transformations. Bismut--Freed transport therefore defines the smooth unitary convention
\[
 \tau_A^{\rm amb}\bigl(\operatorname{PT}_{A_k\to A}e_k\bigr)=1,
 \qquad A\in\mathcal A(P_k).
\]
It is independent of the path. Reference vectors are fixed data of the prescription; a reference need not have a nonzero vacuum determinant. Thus the definition works for every spin structure allowed by Theorem~\ref{thm:zero_yukawa_existence}, even if a free reference operator has zero modes. The determinant-line vector, rather than its vanishing zero-source section, is transported.

\paragraph{Gauge equivariance, including stabilizers.}
Write $\Phi_g:\mathcal L_A\to\mathcal L_{A^g}$ for the natural line isometry. For a path $\gamma:A\to A^g$, chosen constant near its ends, the holonomy after closing by $g$ is
\[
 h(\gamma,g)=\Phi_g^{-1}\operatorname{PT}_\gamma.
\]
Dai--Freed identifies this scalar with the mapping-torus eta phase, with the inverse-line and graded-index conventions retained~\cite[Cor.~5.11]{daifreed}. The full representation has even index, $\ind D=-12k$, so its graded sign is one. Since $\Omega_5^{\rm Spin}(BSU(5))=0$, the mapping torus bounds. On a spin six-dimensional filling the entire degree-six index density vanishes: its gauge term is the zero cubic invariant and its mixed term is zero by tracelessness. The APS formula makes the exponentiated reduced eta invariant one. Thus $h(\gamma,g)=1$ and
\[
 \operatorname{PT}_\gamma=\Phi_g,\qquad
 \tau_{A^g}^{\rm amb}\circ\Phi_g=\tau_A^{\rm amb}.
\]
If $g$ stabilizes $A$, use the constant path to obtain $\Phi_g=1$ on $\mathcal L_A$. This includes constant central transformations stabilizing link backgrounds; no freeness of the gauge action is needed. For the lattice family, connectedness of $\SM^{\Lat^0}$ ensures that $U$ and $U^g$ lie in the same admissible component. It does not by itself prove trivial holonomy; that uses the anomaly argument just given.

Identify the induced bundle of a reconstructed field with $P_k$ and restrict $\tau^{\rm amb}$ to its connection. Two such identifications differ by an ambient gauge transformation, so equivariance gives the same scalar convention on the original determinant line. Local smooth identifications glue by this argument, including when admissible-link space has several components in the same induced class. Source coefficients use the same induced maps on their source legs and determinant lines. The larger connection space is used only to compare phases; it adds no physical gauge bosons.

\paragraph{Choice of baseline and local theta terms.}
With antiperiodic temporal fermions, choose the positive real twelfth-power references on the weak-$SU(2)$ locus and the positive free-singlet factors specified in Section~\ref{rp_geometrysources}. Lemma~\ref{rp_realref} checks their consistency within every induced class, independently of the existence theorem. This selects one common ambient convention for the broad and variant reconstructions. For other spin structures the unit references $e_k$ are specified as above; the existence theorem requires fixed references, without imposing the additional spin assumption of the reflection-positivity theorem.

Arbitrary further phases on admissible components preserve finite-volume integrability, but need not preserve locality or cluster decomposition; compare Ref.~\cite[Secs.~1.3, 9.5, and 10.3]{craig}. In this regulator, permitted changes relative to the fixed ambient baseline are restricted to specified local, gauge-invariant topological actions. The zero coefficients below mean zero theta relative to that baseline; they do not construct it. For the reconstructed splitting $E_3\oplus E_2$, an example is
\begin{equation}
 e^{iS_\theta(U)}=\exp\!\left[i\theta_3\int_Xc_2(E_3)
 +i\theta_2\int_Xc_2(E_2)
 +i\theta_{11}\int_X c_1(E_2)^2\right].
\label{eq:theta}
\end{equation}
For a pure-color background we set $k_3=\int_Xc_2(E_3)$ and use the conventional color instanton number $\nu=-k_3$. The explicit factor $e^{i\theta_3k_3}$ is therefore $e^{-i\theta_3\nu}$. Identifying its coefficient with the conventional physical $\theta_{\rm QCD}$ also requires the positive-mass color reference comparison described below; the quark-mass calculation in Section~\ref{sec:phase_checks} fixes the relative mass-dependent phase. Here $c_1$ and $c_2$ are the first and second \emph{Chern classes}; their integrals measure the quantized gauge fluxes and topological charges defined above. They generalize the familiar $F\widetilde F$ topological term in Yang--Mills theory. The $\mathbb Z_6$ global structure correlates the color, weak, and hypercharge fluxes. The bundle relation $c_1(E_3)=-c_1(E_2)$ and the displayed degree-four characteristic numbers suffice to describe the \emph{local topological terms considered here}; they do not constitute a complete classification of discrete flux sectors, which can also distinguish individual cohomology classes of $c_1(E_2)$. Since $\int_X c_2(E_i)\in\mathbb Z$, the displayed action is invariant under $\theta_2\mapsto\theta_2+2\pi$ and $\theta_3\mapsto\theta_3+2\pi$, with the other parameters fixed. On the spin four-torus the intersection form is even, so $\int_X c_1(E_2)^2\in2\mathbb Z$ and $\theta_{11}\mapsto\theta_{11}+\pi$ is likewise a period. These are manifest periods of the chosen parametrization; possible additional identifications or smaller fundamental periods are not analyzed. The coefficients specify possible local topological terms, and the choice of vanishing coefficients is permitted. Connected components of admissible-link space are not assumed to correspond one-to-one with characteristic numbers.

At vanishing Yukawa couplings, these three theta parameters are redundant in the vacuum functional integral. This can be checked directly using the index theorem. Put $\mathfrak c=c_1(E_2)=-c_1(E_3)$ and $k_i=\int_Xc_2(E_i)$. For one generation the separate Weyl species satisfy, on flat $T^4$,
\[
 \ind D_{e^c}=\tfrac12\int_X\mathfrak c^2,\qquad
 \ind D_{d^c}=\tfrac12\int_X\mathfrak c^2-k_3,\qquad
 \ind D_L=\tfrac12\int_X\mathfrak c^2-k_2.
\]
If any of $k_3,k_2,\int_X\mathfrak c^2$ is nonzero, at least one of these indices is nonzero. Its massless determinant section then vanishes identically, and so does the full zero-source fermion weight. Hence \eqref{eq:Z} receives contributions only from backgrounds with all three characteristic numbers zero: the three theta factors in~\eqref{eq:theta} are unity on its contributing sectors. This conclusion is consistent with independent anomalous rephasings of the massless Weyl multiplets, which can shift these angles. Fermionic insertions require transforming the sources as well, and can still receive contributions from nonzero-index sectors. With Yukawa couplings restored, the independent chiral rephasings also change Yukawa-matrix phases, so the color and abelian theta parameters need not remain redundant. The electroweak $\theta_2$ term remains ineffective in the zero-source theory by the index argument of Section~\ref{sec:yukawa_full}.

\paragraph{Comparison with the positive-mass color convention.}
The weak-$SU(2)$ reference prescription fixes a baseline; it does not by itself calculate its orientation relative to a positive vector-like massive quark determinant. Choose a pure-color reference in each class $k_3$, set all six quark masses positive, and fix the noncolor factors by the same reference conventions. Denote a residual reference phase by $e^{i\chi_{\rm ref}(k_3)}$. Agreement with the usual positive-mass color normalization requires this phase to be one in every class, with compatible transport on the color family. This comparison includes the finite zero-mode mass pairing and Grassmann ordering.

The standard weak $SU(2)$ and an $SU(2)$ subgroup of color are conjugate in $SU(5)$, providing a useful common locus for that calculation. Conjugacy alone does not evaluate the massive zero-mode pairing. If the comparison yields an additive offset
\[
 e^{i\chi_{\rm ref}(k_3)}=e^{-i\theta_{\rm ref}k_3},
\]
then the matched angle is $\theta_{\rm QCD}=\theta_{\rm ref}-\theta_3$, and $\bar\theta_{\rm QCD}=\theta_{\rm QCD}+\arg\det(Y_uY_d)$. A general nonlinear sector assignment cannot be absorbed into an ordinary theta angle. Establishing the positive-mass identification is separate from the existence results and from F.5, which proves zero-Yukawa positivity for the stated weak-$SU(2)$ references.

\paragraph{What the vacuum selection rule fixes.}
The conditions $k_3=k_2=\int\mathfrak c^2=0$ imply $k=0$, so every background with a potentially nonzero vacuum weight has the trivial induced $SU(5)$ bundle. Distinct Standard Model reductions can still occur: for example $\mathfrak c\neq0$ with $\int\mathfrak c^2=0$ is not excluded by these degree-four conditions. The ambient prescription nevertheless compares all their determinant lines in the same $\mathcal A(P_0)$. It leaves only one common reference phase for the vacuum and bosonic observables, which cancels in normalized expectations when $Z\neq0$. It does not identify the different Standard Model bundles or change their bosonic weights.

Zero species indices are necessary for a vacuum contribution, not sufficient to prove invertibility on any particular background or admissible component. Neither this selection rule nor the phase prescription establishes $Z\neq0$. Fermionic sources can reach nonzero-index induced classes, where the fixed references $e_k$ remain relevant; the single-common-phase conclusion is restricted to the vacuum and its bosonic insertions. Ambient comparison also does not assert equality of the broad and variant reconstructed fermion weights or transfer reflection positivity between them.

\Needspace{16\baselineskip}
\subsection{Reflection positivity at zero Yukawa coupling}\label{sec:collar_variant}

\paragraph{Physical idea and the variant reconstruction.}
Reflection positivity says that an insertion in positive Euclidean time and its reflected conjugate define a nonnegative norm. Physically, the path integral on each half of spacetime prepares a boundary fermion state; gluing should pair the two states. A chiral determinant is complex, so this interpretation requires control of its phase as well as its magnitude.

Craig's broad interpolation can make the continuum connection near a cut depend on links from both halves. Our variant uses shorter smooth copying weights. Around every coordinate lattice plane they create a thin \emph{collar} in which the connection is an exact product: its normal component is zero and its tangential components are independent of the normal coordinate. Links in the plane alone determine this collar. The lattice and a fixed $0<\delta<1/2$ choose all collars before the reflection plane is specified; site reflection uses planes already equipped with them. Fixed slice data therefore give the same boundary operator for independently prepared half states.

The variant keeps the lattice variables, continuum Weyl fermions and reconstructed bundle topology, and reproduces the input link transports exactly (Proposition~\ref{prop:collar_transfer}). These shared features make the result suggestive for Craig's reconstruction. A comparison of their physical amplitudes is still required; the closing discussion states what a deformation of bundles does and does not establish. At zero Yukawa coupling the Higgs field occurs only in the positive bosonic half measures.

At fixed data on the two reflecting slices, a relative half class records the topological extension between those slices in fixed boundary frames. Sewing halves $x,y$ of classes $n_x,n_y$ gives induced closed charge $n_y-n_x$. The observable $F$ below is a finite polynomial supported in the positive-time half, and $\Theta$ is its reflected conjugate, including reversal of fermion order. The precise source space and measure follow the theorem.

\begin{theorem}[Zero-Yukawa site-reflection positivity]\label{prop:collar_rp}
On a finite periodic lattice of even temporal extent, with antiperiodic temporal fermions, use the collar reconstruction, bosonic measure and positive-reference BF phase prescription specified in Section~\ref{rp_geometrysources}. Fix the lattice spacing, spacetime volume and $0<\delta<1/2$. At zero Yukawa coupling and zero theta,
\begin{equation}\label{eq:collar_rp}
 \omega_{\mathrm c}((\Theta F)F)\ge0
\end{equation}
for every $F\in\mathfrak A_+$ defined there, including its gauge-invariant subalgebra and the continuous profile limits allowed by Lemma~\ref{lem:collar_source_extension}. The statement includes all relative half classes and their unequal-class source coefficients. Restriction to zero induced closed $SU(5)$ charge also preserves positivity. In particular
\[
 Z_{\mathrm c}=\omega_{\mathrm c}(1)\ge0.
\]
If $Z_{\mathrm c}>0$, the normalized correlators define a reflection-positive state.
\end{theorem}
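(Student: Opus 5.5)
We would follow the standard Osterwalder--Schrader/Seiler template for lattice gauge theories, adapted so that the fermion factor is a sesquilinear pairing of half-space states. Take the reflection plane to be a site plane. Since the collars are fixed before the plane is chosen, this plane carries an exact product collar determined only by its own links. We would then split the configuration as $(U_+,U_0,U_-)$, where $U_0$ denotes the links in the plane. We would check three things: the reflection $\vartheta$ maps the $U_-$ data to the $U_+$ data; Haar measure, the symmetric admissibility weight $w_\epsilon(V^{-1})=w_\epsilon(V)$, the Higgs cutoff and a Wilson--Higgs action factorize as $d\mu_+(U_+)\,d\mu_0(U_0)\,\overline{d\mu_+(\vartheta U_-)}$ times $e^{-\frac12 S_0(U_0)}$-type plane factors; and these plane factors are positive. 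At zero Yukawa coupling the Higgs field enters only these bosonic half measures, so it may be absorbed into them.

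The fermionic step comes next. Fix the plane data $U_0$. The collar gives a cylinder $[-\delta a,\delta a]\times\Sigma$ on which $D_A^+$ is $\gamma^0(\partial_0+B_\Sigma)$ with a fixed self-adjoint boundary operator $B_\Sigma[U_0]$. We would use the Bismut--Freed/Dai--Freed gluing law on a product collar. This law writes the determinant-line section on the closed torus as a pairing $\langle \overline{\vartheta s_-},s_+\rangle$ of half-space sections that take values in the Fock-space line (or APS determinant line) attached to the spectral decomposition of $B_\Sigma$. The same holds for the source coefficients given by Theorem~\ref{thm:smeared_insertions} and the finite-dimensional reduction of Appendix~\ref{sec:source_families}. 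Antiperiodic time, together with reflection reversing fermion order, makes the reflected half state the antilinear conjugate $\Theta$-image of the positive one. Schematically,
\[
\omega_{\mathrm c}((\Theta F)F)=\int d\mu_0(U_0)\,\Bigl\|\int d\mu_+(U_+)\,F(U_+,U_0)\,\Psi_+(U_+,U_0)\Bigr\|^2_{\mathcal F(U_0)}\ge0 .
\]
Here $\Psi_+$ is the half state. It is built from the fermionic insertion of $F$ and the positive-half determinant, and its values lie in the boundary Fock space. Different relative half classes $n_x,n_y$ give vectors of different charge in the same Fock space. Their cross terms are then exactly the unequal-class coefficients with closed charge $n_y-n_x$, so summing over all classes still produces a norm squared. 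Restricting to induced closed charge zero is a projection onto a charge-graded subspace of the Fock space, and so it also preserves positivity. Setting $F=1$ gives $Z_{\mathrm c}\ge0$. The extension to continuous profile limits follows from continuity as in Lemma~\ref{lem:collar_source_extension}, and normalization when $Z_{\mathrm c}>0$ is immediate.

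The main obstacle is the phase. The trivialization $\tau$ defined by BF transport from the positive references of Appendix~\ref{app:phase_references} must agree with the gluing pairing with no leftover phase $e^{i\alpha(U_+,U_0,U_-)}$. Otherwise the integrand is only $\langle\overline{\vartheta s_-},s_+\rangle$ up to a phase and positivity fails. Our first attempt would be to show that the reflection-odd part of the phase is a flat, gauge-equivariant function of the configuration. Flatness comes from the vanishing BF curvature, and trivial holonomy comes from \eqref{eq:bordism}. On each connected sector such a function is then constant, and its value can be evaluated at a reflection-symmetric reference. At that reference the chosen positive real twelfth-power weak-$SU(2)$ references, whose consistency is checked by Lemma~\ref{rp_realref}, force the constant to equal one. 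Zero theta is required at exactly this point, since a term like \eqref{eq:theta} would contribute a reflection-odd phase $e^{i\theta(n_y-n_x)}$ that does not factor as a norm.
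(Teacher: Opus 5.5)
You have the right outer architecture, and it is the one the paper uses: site-reflection factorization of the bosonic measure, product collars that make the seam data determine the boundary operator, half states in a boundary Fock space, and positivity as a Gram statement. The gap is in the phase step, which is the substance of the theorem, and your argument for it does not work.

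A half state (a Slater ray $\Omega_{W_x}$, or an element of a half determinant line) has no canonical phase. Your ``leftover phase $e^{i\alpha(U_+,U_0,U_-)}$'' is therefore not a function of the configuration until half frames are chosen. The data of the comparison $\mathsf S_{xy}=c(x,y)\langle\Psi_x,\Psi_y\rangle$ that are invariant under half rephasing are $|c|$, $c(x,x)$ and the triangle $\tau(x,y,z)=c(x,y)c(y,z)c(z,x)$. Via Lemma~\ref{lem:collar_coboundary}, positivity reduces to proving $|c|=1$, $c(x,x)=1$ and $\tau=1$.

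Vanishing Bismut--Freed curvature and $\Omega_5^{\mathrm{Spin}}(BSU(5))=0$ only make the \emph{closed} trivialization $\tau^{\rm amb}$ well defined. They say nothing about how closed BF transport compares with the variation of the boundary overlaps, so by themselves they make neither $\tau$ nor your $\alpha$ locally constant. The paper gets $d\tau=0$ from a genuine differential identity with three ingredients:
\begin{itemize}
\item the BF source differential \eqref{rp_bfsource}, keeping its logarithmic heat term;
\item cancellation of complete symbols in the balanced six-term combination, in which each oriented half occurs once with each sign (Lemma~\ref{rp_localization});
\item the transmission identity, which equates the remaining ordinary trace with $d\log$ of the relative Fredholm determinant of the boundary pairings (Proposition~\ref{rp_transmissionprop}).
\end{itemize}

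Your ``gluing law'' is also not available off the shelf for this prescription. The exact zeta--Quillen norm factorization needs Loya--Park with Calder\'on projectors, the principal-angle factor $b_{xy}$ and the zero-mode trace Grams (Lemma~\ref{rp_metric}). The all-source statement at nonzero index needs the Ward-identity uniqueness argument (Proposition~\ref{rp_allsource}). An APS-based version would in addition have to handle seam zero modes, since $B_\Sigma$ need not be invertible.

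Second, evaluating ``at a reflection-symmetric reference'' only fixes pairs with $n_x=n_y$. A component of pairs with $n_x\neq n_y$ contains no reflection-symmetric point, so your argument only ever evaluates at closed charge $k=0$. The paper's rephasing $e^{i\beta k^3}$ shows why this is fatal. It is constant on sectors, gauge invariant, Hermitian under $k\mapsto-k$ and trivial on the reflected diagonal, so it passes every test you invoke. Yet on half classes $(0,1,2)$ its triangle is $e^{-6i\beta}\neq1$, so no choice of half phases can absorb it.

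To cover unequal classes, the paper transports the triangle along ambient fixed-collar paths, which leave the admissible image, to weak-$SU(2)$ references $A_n=-q(t)\,dg_ng_n^{-1}$ in every relative class. There the real structure $J_{\mathbb R}=(i\sigma_2\otimes i\tau_2)K_0$ makes single-doublet half vectors, pairings and physical coefficients real, and the even twelfth power makes them positive. All comparison phases, and hence the triangle, therefore equal one at the references (Lemma~\ref{rp_realref}).

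Finally, your reason for requiring zero theta is wrong. The phase $e^{i\theta(n_y-n_x)}=\overline{e^{i\theta n_x}}\,e^{i\theta n_y}$ is a coboundary and preserves positivity (Corollary~\ref{cor:collar_halfphases}). The dangerous phases are those with nontrivial triangle, not reflection-odd ones.
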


The functional $\omega_{\mathrm c}$ and observable algebra $\mathfrak A_+$ are defined below. Section~\ref{rp_externalinputs} records the published analytic inputs and checks their hypotheses against this setup. Reflection $\Theta$ is antilinear and reverses fermion order. The proof compares the original determinant-line source functional with independently constructed boundary states; it does not define the closed weight by an inner product.

\begin{lemma}[Removing a sewing phase]\label{lem:collar_coboundary}
Let $c(x,y)$ be unit complex numbers with $c(x,x)=1$, and set
\[
 \tau(x,y,z)=c(x,y)c(y,z)c(z,x).
\]
If $\tau=1$ for every triple, then $c(y,x)=\overline{c(x,y)}$ and, for a chosen $x_*$,
\[
 u(x)=c(x_*,x),\qquad c(x,y)=\overline{u(x)}u(y).
\]
Consequently a source-independent comparison
$S_{xy}(F_x,F_y)=c(x,y)\langle\Psi_x(F_x),\Psi_y(F_y)\rangle$
is a positive kernel after replacing $\Psi_x$ by $u(x)\Psi_x$.
\end{lemma}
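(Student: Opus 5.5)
The proof is elementary algebra on the cocycle condition, followed by a one-line positivity observation.

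\emph{Step 1: antisymmetry of the phase.} Take the triple $(x,y,x)$ in the hypothesis $\tau=1$ and use $c(x,x)=1$:
\[
 1=\tau(x,y,x)=c(x,y)c(y,x)c(x,x)=c(x,y)c(y,x).
\]
Since $|c(x,y)|=1$, this gives $c(y,x)=c(x,y)^{-1}=\overline{c(x,y)}$.

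\emph{Step 2: the coboundary formula.} Apply $\tau=1$ to the triple $(x_*,x,y)$:
\[
 c(x_*,x)\,c(x,y)\,c(y,x_*)=1 .
\]
By Step 1, $c(y,x_*)=\overline{c(x_*,y)}=\overline{u(y)}$, so $c(x,y)=\overline{u(x)}\,u(y)$, using $u(x)^{-1}=\overline{u(x)}$ for unit scalars. The special cases where indices coincide are covered by $c(x,x)=1$ and need no separate treatment.

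\emph{Step 3: positivity of the kernel.} The inner product is antilinear in its first slot, so scalars pull out as $\langle \alpha\Psi,\beta\Phi\rangle=\bar\alpha\beta\langle\Psi,\Phi\rangle$. Hence
\[
 S_{xy}(F_x,F_y)=\overline{u(x)}u(y)\langle\Psi_x(F_x),\Psi_y(F_y)\rangle
 =\langle u(x)\Psi_x(F_x),u(y)\Psi_y(F_y)\rangle .
\]
For any finite family $(F_x)$ and coefficients $a_x\in\mathbb C$,
\[
 \sum_{x,y}\bar a_x a_y S_{xy}(F_x,F_y)=\Bigl\|\sum_x a_x u(x)\Psi_x(F_x)\Bigr\|^2\ge0 ,
\]
and likewise for sums over several sources in each class. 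The one point to check is that $c$ is independent of the sources: otherwise $u(x)$ could not be pulled out uniformly in $F_x$. That independence is part of the hypothesis, so the argument closes.
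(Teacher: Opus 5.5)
Your proposal is correct and follows the paper's proof essentially verbatim: the repeated triple $(x,y,x)$ gives $c(x,y)c(y,x)=1$, the triangle $(x_*,x,y)$ gives the coboundary factorization, and the rephased vectors $u(x)\Psi_x$ make $S$ a Gram kernel. Your explicit use of antilinearity in the first slot and your remark that source independence of $c$ is what allows $u(x)$ to be pulled out uniformly are exactly the points the paper's one-line argument relies on.
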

\begin{proof}
The repeated triple $(x,y,x)$ gives $c(x,y)c(y,x)=1$. Apply the triangle identity to $(x_*,x,y)$ to obtain the displayed factorization. The kernel is then the Gram kernel of the rephased vectors.
\end{proof}

Changing a half-state phase changes $c$ by a coboundary and leaves $\tau$ unchanged. Thus the analytic task is to fix $|c|$, the reflected diagonal, and the phase triangle. The source comparison, spectral-window calculation and transmission argument below establish these facts; equal-class triples are evaluated first, then the weak-$SU(2)$ reference treats unequal classes. Finite Gram positivity and bosonic factorization finish the proof in Section~\ref{rp_conclusion}.

The proof map below distinguishes its analytic steps and supporting estimates. Readers interested primarily in the physical conclusion can read the opening discussion, Theorem~\ref{prop:collar_rp}, and the closing comparison with the original reconstruction before following the detailed proof.
\begin{center}\small
\begin{tabularx}{\textwidth}{@{}p{.24\textwidth}X@{}}
\toprule Step & Location and dependencies \\\midrule
Half states and sources & Sections~\ref{rp_sourcedefinition}--\ref{rp_comparisonsection}; half regularity and global source extension are proved in Lemmas~\ref{rp_halfregularity} and~\ref{lem:collar_source_extension}.\\
Magnitude and diagonal & Section~\ref{rp_normsection}; Loya--Park Theorem~1.1 and the saturated boundary cofactor include closed and seam zero modes.\\
Physical phase differential & Section~\ref{rp_BFsection}; BF I Definitions~1.10, 1.14, 1.17 and the spectral-window source calculation fix its sign and logarithmic term.\\
Moving-collar cancellation & Sections~\ref{rp_localizationsection}--\ref{rp_transmissionsection}; Paycha--Scott Corollary~2.5 and Lemmas~\ref{lem:collar_parameter_smoothing}--\ref{lem:collar_constrained_smoothing} supply the local and boundary control.\\
References and integration & Sections~\ref{rp_referencesec}--\ref{rp_conclusion}; real references cover all relative classes, and Lemma~\ref{rp_finiteinsertions} gives the compact bounds.\\
\bottomrule
\end{tabularx}
\end{center}
The supporting lemmas are proved in Section~\ref{rp_externalinputs}, which also records the exact hypotheses of the published inputs.

\subsubsection{Reconstruction, measure and observables}\label{rp_geometrysources}

\paragraph{Short smooth copying weights.}
On the covering lattice use coordinates $y=(y_1,y_2,y_3,t)$, with $y_4=t$. Fix $0<\delta<1/2$ and define
\[
 r(v)=\begin{cases}e^{-1/v},&v>0,\\0,&v\le0,\end{cases}
 \qquad s(v)=\frac{r(v)}{r(v)+r(1-v)},\qquad
 \vartheta(u)=\frac\pi2s\!\left(\frac{u-\delta}{1-2\delta}\right).
\]
Set $\eta_\delta(v)=\cos\vartheta(|v|)$ for $|v|\le1$ and extend it by zero. It is even, nonnegative and $C^\infty$, equals one for $|v|\le\delta$, and vanishes for $|v|\ge1-\delta$, with all derivatives zero at the support edges. Since $s(1-v)=1-s(v)$,
\begin{equation}
 \eta_\delta(u)^2+\eta_\delta(u-1)^2=1\quad(0\le u\le1),
 \qquad
 \alpha_i(y)=\prod_{\nu=1}^4\eta_\delta(y_\nu/a-i_\nu),
 \qquad\sum_i\alpha_i(y)^2=1.
 \label{eq:collar_weights}
\end{equation}
The $\alpha_i$ are scalar copying factors; the Higgs doublets remain $\phi_i$.

Retain the short transport matrices $u_{ij}$ of Ref.~\cite[App.~A]{craig}: average the coordinate-ordered paths from $j$ to $i$, take the unitary polar factor, and normalize its determinant on the branch near one. Thus $u_{ji}=u_{ij}^\dagger$. In a fixed local box geometric admissibility gives $\|u_{ij}-u_{ik}u_{kj}\|\le C\epsilon$. Set
\begin{equation}
 K_{ij}=\alpha_i u_{ij}\alpha_j,\qquad
 (\mathscr I_k)_i=\alpha_i u_{ik},\qquad
 \Pi_{\mathrm c}=\frac1{2\pi i}\oint_{|\lambda-1|=1/2}
                         (\lambda-K)^{-1}\,d\lambda.
 \label{eq:collar_projection}
\end{equation}
Here $\mathscr I_k$ is used on the local box containing the support of $\alpha_k$, where every active $i$ is a cell neighbor of $k$. The square partition gives $\mathscr I_k^\dagger \mathscr I_k=1$ and $\|K-\mathscr I_k\mathscr I_k^\dagger\|\le C\epsilon$. Choose a common admissibility domain with $C\epsilon<1/8$ and with all short-transport normalizations defined. Hermiticity of $K$ and the spectral gap then give a smooth rank-five projection.

Use Craig's volume-preserving connection rule:
\[
 w_k=\Pi_{\mathrm c}\mathscr I_k(\mathscr I_k^\dagger\Pi_{\mathrm c}\mathscr I_k)^{-1/2},\quad
 \sigma_{\mathrm c}=
 \frac{\sum_k\alpha_k^2\,\bigwedge^5w_k}
      {\|\sum_k\alpha_k^2\,\bigwedge^5w_k\|},\quad
 \nabla_{\mathrm c}=
 \Pi_{\mathrm c}d_{\rm tot}
 -\frac15(\sigma_{\mathrm c},d_{\rm tot}\sigma_{\mathrm c})\,1.
\]
The volume vectors have nearly equal phases on overlaps, so their weighted sum stays nonzero on a sufficiently small common domain. Inactive summands extend smoothly by zero. The spacetime part of $\nabla_{\mathrm c}$ is $A_{\mathrm c}[U]$. Periodic translation on the cover permutes the copying summands and intertwines $K$; functional calculus, the frames and the volume rule are equivariant under this identification. They therefore descend to the torus.

\Needspace{12\baselineskip}
\begin{proposition}[Collar reconstruction and finite-volume transfer]\label{prop:collar_transfer}
The connection $A_{\mathrm c}$ is smooth in spacetime and admissible links, gauge covariant, and compatible with lattice translations, coordinate reflections and the $3+2$ splitting. In a closed cell it depends only on that cell's links and it reproduces the input link transports. Near each coordinate plane it is an exact product in the normal direction. In particular, on $|t-na|<\delta a$ the collar connection is determined solely by spatial links on slice $n$.

Its rank-three and rank-two bundles have the same integral characteristic classes as those of the main-text reconstruction. On compact support inside the common domain, Theorems~1--2 apply to this zero-Yukawa variant.
\end{proposition}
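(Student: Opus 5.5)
I will establish the listed properties in the order they appear, since each one builds on the previous. The common backbone is the set of explicit formulas \eqref{eq:collar_weights}--\eqref{eq:collar_projection} together with Craig's volume rule. These formulas use the same contour projection, polar-corrected frames, and connection formula as the main-text construction; only the copying factors $\alpha_i$ are new.

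\emph{Smoothness, covariance and symmetries.} The factors $\alpha_i$ are $C^\infty$ products of $\eta_\delta$. The short transports $u_{ij}$ are smooth in the links on the admissible domain. The bound $\|K-\mathscr I_k\mathscr I_k^\dagger\|\le C\epsilon<1/8$ keeps the circle $|\lambda-1|=1/2$ in the resolvent set, so $\Pi_{\mathrm c}$ and $w_k$ are jointly smooth, and $\sigma_{\mathrm c}$ is smooth because its normalizing denominator stays away from zero. Under a lattice gauge transformation $u_{ij}\mapsto g_iu_{ij}g_j^{-1}$, so $K$ is conjugated by $\operatorname{diag}(g_i)$ and every derived object intertwines, exactly as in Craig's Eqs.~(A.21)--(A.22). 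Covariance under translations and coordinate reflections follows because $\eta_\delta$ is even and the path-averaging rule for $u_{ij}$ is symmetric. The $3+2$ splitting is preserved because every $u_{ij}$ is block diagonal and the contour calculus preserves blocks.

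\emph{Cell locality, product collars and exact transports.} This is the key new step. At a point $y$ inside a closed cell, the active $\alpha_i$ belong to vertices of that cell, because $\eta_\delta(v)=0$ for $|v|\ge1-\delta$. Each $u_{ij}$ uses coordinate-ordered paths between two vertices of the same cell, and these paths stay inside the cell. Hence $A_{\mathrm c}$ on the cell depends only on that cell's links.

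In the slab $|t-na|<\delta a$ the temporal factor of $\alpha_i$ equals $1$ for $i_4=n$ and $0$ otherwise. So $K$, $\Pi_{\mathrm c}$, $w_k$ and $\sigma_{\mathrm c}$ depend on $t$ not at all and involve only vertices, hence spatial links, on slice $n$. Consequently $\nabla_{\mathrm c}$ has no $dt$ component in a frame built from these $t$-independent objects, which is the exact product statement.

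Reproducing the input link transport along an edge uses the same mechanism in one lower dimension. Along an edge from $k$ to $k+\hat\mu$, only the two endpoints are active, and $K$ reduces to a $2\times2$ block system in $(\cos\vartheta,\sin\vartheta)$. I would show that $\Pi_{\mathrm c}$ is then exactly the rank-five projection onto $\{(\cos\vartheta\,v,\sin\vartheta\,u_{k+\hat\mu,k}v)\}$. Its canonical connection transports $v$ at one end to $u_{k+\hat\mu,k}v$ at the other, with zero volume correction because $u$ has determinant one. The relation $u_{k+\hat\mu,k}=U_\ell$ holds for a single link, since averaging a single path followed by the polar factor returns the link itself.

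I expect this edge computation, including the vanishing of the volume term and the sign and phase bookkeeping of the polar factor, to be the main obstacle. I would attack it by writing the one-parameter family explicitly in the frame $\mathscr I_k$. I would then verify directly that $\mathscr I_k$ restricted to the edge already has exact range equal to the eigenspace of $K$ at eigenvalue $1$, because $\alpha_k^2+\alpha_{k+\hat\mu}^2=1$ there.

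\emph{Topology and transfer of the theorems.} To show the characteristic classes agree with those of the main-text reconstruction, I would interpolate the copying factors linearly. The family $\alpha^{(s)}=((1-s)\alpha^2+s\varphi^2)^{1/2}$ keeps $\sum_i(\alpha^{(s)}_i)^2=1$, and the same $C\epsilon$ gap estimate holds uniformly in $s$ for sufficiently small $\epsilon$. This gives a continuous family of rank-five $S(U(3)\times U(2))$ bundles over $[0,1]\times T^4$, so the integral classes are constant, as in step 4 of Proposition~\ref{prop:topological_realization}.

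Finally, every input used in the proofs of Theorems~\ref{thm:zero_yukawa_existence} and~\ref{thm:smeared_insertions} is now available for the variant. These inputs are a smooth covariant family with bounded derivatives on compact support (Lemma~\ref{lem:smooth_reconstruction}), the $SU(5)$ embedding, and the anomaly-free phase (Lemma~\ref{lem:determinant_phase} with ambient references). Those proofs therefore apply verbatim.
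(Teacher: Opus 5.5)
Your proposal is correct and follows the paper's proof essentially step for step. The shared steps are cell locality from $\operatorname{supp}\eta_\delta$, $t$-independence of the slab data giving $A_t=0$ and $\partial_tA_j=0$, and the edge identity $K=\mathscr I_k\mathscr I_k^\dagger$ with $w_k=\mathscr I_k$, $\mathscr I_k^\dagger d\mathscr I_k=0$ and $\det u=1$. The characteristic classes likewise come from a uniformly gapped homotopy to Craig's broad weights.

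The one real deviation is your homotopy $((1-s)\alpha^2+s\varphi^2)^{1/2}$. It gives the square partition for free, but it is only continuous in $s$ at $s=0$, where $\alpha_i=0<\varphi_i$. That suffices for homotopy invariance of the block Chern classes, but not for the Kato transport the paper runs along its smooth renormalized interpolation $h_s=(1-s)\eta_C+s\eta_\delta$. You would also need the paper's volume-gap check if you want the full $S(U(3)\times U(2))$ structure along the path.
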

\begin{proof}
Every active vertex in a cell is a corner of that cell. Coordinate-ordered paths between active vertices stay in the cell, so all active terms of $K$ and the volume normalization use only its links. Spectral projection and matrix normalization preserve this dependence. Flatness of disappearing weights gives smooth agreement across faces.

On $|t/a-n|<\delta$ only row $i_4=n$ is active, with time weight one. The matrices reduce to fixed inclusions of their three-dimensional slice counterparts. Thus $A_t=0$ and $\partial_tA_j=0$ in the induced collar identification. The same reasoning applies in every coordinate. Evenness and averaging over coordinate orders give the reflection symmetries; the matrix operations preserve gauge covariance and the block splitting.

Along a link from $r$ to $s$ only its two endpoint weights are active. Write them as $p,q$, with $p^2+q^2=1$. There $K=\mathscr I_r\mathscr I_r^\dagger$, $w_r=\mathscr I_r$ and $\mathscr I_r^\dagger d\mathscr I_r=(p\,dp+q\,dq)1=0$. The determinant-volume correction is zero as well. At the far endpoint $\mathscr I_r=\mathscr I_su_{sr}$, so the induced transport in the canonical site frames is exactly $u_{sr}=U_{sr}$. Consequently the based plaquette holonomies are also preserved.

For topology, let $\eta_C\ge0$ be Craig's broad one-dimensional factor, normalized by $\sum_m\eta_C(v-m)^2=1$, and put
\begin{equation}
 h_s=(1-s)\eta_C+s\eta_\delta,\qquad
 \eta_s(v)=\frac{h_s(v)}{[\sum_mh_s(v-m)^2]^{1/2}},\qquad0\le s\le1.
 \label{eq:collar_homotopy}
\end{equation}
Craig's factor is the nonnegative $\eta_C=\sqrt{\kappa}$ of Ref.~\cite[App.~A, Eqs.~(A.1)--(A.6)]{craig}; it has the required square partition and support in $[-13/8,13/8]$. Nonnegativity gives a denominator squared at least $(1-s)^2+s^2\ge1/2$. The narrower factor has support inside this interval. Reuse the fixed broad charts $V_k=\{|y/a-k|_\infty<7/4\}$. On each chart all active vertices and its anchor lie in the coordinate boxes of side at most three used in Craig's Eqs.~(A.7)--(A.9). Their path-exchange estimate therefore gives
\[
 \|u_{ij}-u_{ik}u_{kj}\|\le C_{\rm box}\epsilon,\qquad
 \|K_s-\mathscr I_{k,s}\mathscr I_{k,s}^\dagger\|\le C_{\rm box}\epsilon
\]
uniformly in $s$. The second bound is the weighted Cauchy--Schwarz estimate of Eqs.~(A.10)--(A.11); the normalized square partition prevents an extra active-vertex factor. Choose the common domain with $C_{\rm box}\epsilon<1/8$. The overlap volume comparison of Eq.~(A.14) likewise has relative phases $1+O(\epsilon)$, uniformly for these weights; $C_{\rm vol}\epsilon<1/2$ keeps its normalized volume sum nonzero. Thus the spectral and volume gaps remain open throughout the homotopy. Product collars are required at its short-weight endpoint, not along this auxiliary homotopy. The projections form a smooth gapped homotopy. Kato transport generated by $[\partial_s\Pi_s,\Pi_s]$ identifies the endpoint bundles, preserving covariance and the $3+2$ splitting. Their determinant volumes can be aligned continuously. All characteristic classes of both blocks and the separate Weyl indices agree.

At fixed $a$, derivatives of every fixed order are bounded on compact admissible families. These are the geometric inputs used in Theorems~1--2. The representation and anomaly cancellation are unchanged, so the same continuity and boundedness arguments give absolute convergence and finite smooth-smeared insertions.
\end{proof}

\paragraph{Bosonic factorization and the phase prescription.}
Take an even temporal lattice extent, antiperiodic temporal fermions, and reflection through the two planes of sites. Use a real reflection-invariant Wilson--Higgs action, the site Higgs cutoff, and nonnegative plaquette weights with $w_\epsilon(U^{-1})=w_\epsilon(U)$ supported inside the common domain. Let $b$ include the spatial links and Higgs variables on both boundary slices. Every remaining plaquette and nearest-neighbor Higgs bond belongs to one closed half. This is the site-reflection factorization underlying the usual lattice positivity argument~\cite{osterwalder_seiler}. Splitting off the seam terms gives
\begin{equation}
 d\mu_B=d\rho(b)\,d\nu_b(x)\,d\nu_b(y),\qquad d\rho,d\nu_b\ge0,
 \label{eq:collar_bosons}
\end{equation}
where the left measure is the reflected copy of the right one. This uses site reflection. Proposition~\ref{prop:collar_transfer} makes $b$ determine the full fermion collar connection. The Higgs field enters only the bosonic measure at zero Yukawa coupling.

Fix the ordinary zeta--Quillen normalization and an anomaly-free, gauge-equivariant parallel BF trivialization on smooth ambient $SU(5)$ connections. Specify its component references by the canonical positive real twelfth tensor power on the embedded weak-$SU(2)$ locus, where
\[
 \mathcal R\big|_{SU(2)}=12\,\mathbf2\oplus21\,\mathbf1,
\]
and by positive free-singlet factors. BF transport respects the real structure, so the even tensor-power orientation is parallel. This is the positive-reference normalization of the ambient convention in Appendix~\ref{app:phase_references}, used here with the variant gauge field reconstruction. Agreement with another independently chosen trivialization must be checked component by component; flatness and gauge equivariance alone do not imply it. The ambient comparison introduces no additional physical gauge bosons. The existence and consistency of the real references are checked in the reference step below, independently of the source comparison and the phase-triangle argument; this forward reference does not assume reflection positivity.

For the sewn connection $A_{xy}=A_{\mathrm c}[\Theta x\cup_b y]$, define the physical coefficient by
\begin{equation}
 \mathsf S_b(x,y;F_x,F_y)
 =\tau^{\rm amb}_{A_{xy}}
       \bigl(\sigma_{A_{xy}}[(\Theta F_x)F_y]\bigr).
 \label{eq:collar_physical}
\end{equation}
Here $\sigma[F]$ is the original continuum determinant-line source section, including its finite exterior factors at zero modes. The coefficient is not defined by a boundary inner product. Sources are smooth and supported away from the seams, with continuous half-local bosonic coefficients. Reflection is antilinear and reverses source order; in normal variables take $\Theta q=-\bar q_L$, $\Theta\bar q=-q_L$. Inner products below are antilinear in their first slot.

The corresponding unnormalized functional is
\begin{equation}\label{eq:collar_functional}
 \omega_{\mathrm c}(G)=\int d\mu_B\,
       \tau^{\rm amb}_{A_{\mathrm c}[U]}(\sigma_{A_{\mathrm c}[U]}[G]),
 \qquad Z_{\mathrm c}=\omega_{\mathrm c}(1).
\end{equation}
At fixed seam data abbreviate $\mathsf S_{xy}(A,B)=\mathsf S_b(x,y;A,B)$, $\Psi_x(F)=\Psi_b(x,F)$ and $c(x,y)=c_b(x,y)$. Here the source slots $A,B$ are polynomials.

\paragraph{Sources on the varying reconstructed bundle.}
The phrase ``smooth smearing'' must include its dependence on the links. The projector~\eqref{eq:collar_projection} embeds the fundamental bundle on the cover in the vertex-fiber space $\mathbb V=\bigoplus_{i\in\mathbb Z^4}\mathbb C^5$. Here $i$ labels lifted vertices, and each local construction has only finitely many active components. Distinct lifts remain separate until torus descent, including for periods of one or two lattice cells. The orthogonal projector for the full fermion representation is
\[
 \widehat\Pi_{\mathcal R}[U]
 =\left(\bigwedge\nolimits^2\Pi_{\mathrm c}[U]
                  \oplus\overline{\Pi_{\mathrm c}[U]}\right)^{\oplus3}
 \quad\hbox{on}\quad
 \left(\bigwedge\nolimits^2\mathbb V\oplus\overline{\mathbb V}\right)^{\oplus3}.
\]
Here the exterior power of a projection is the projection onto the exterior power of its range. An elementary profile is
\[
 f_U=(1_S\otimes\widehat\Pi_{\mathcal R}[U])f_0,
\]
where $f_0$ is a fixed smooth ambient spinor profile supported strictly inside one half on the torus. To specify its descent, let $D_\Delta$ shift the lifted vertices by a torus period $\Delta$ and let $\widehat D_\Delta$ be its induced action in the ambient fermion representation. Impose
\[
 f_0(y+\Delta)=\varepsilon_{\rm spin}(\Delta)\widehat D_\Delta f_0(y),
 \qquad
 \widehat\Pi_{\mathcal R}(y+\Delta)
 =\widehat D_\Delta\widehat\Pi_{\mathcal R}(y)\widehat D_\Delta^{-1}.
\]
Here $\varepsilon_{\rm spin}(\Delta)$ is the prescribed fermion boundary sign. The projected profile has the same law and is therefore a well-defined spinor section on $T^4$. A profile in a positive-half patch is extended by this law, not by identifying its lifted vertex components prematurely. Use $S^+$ for the $q$ profile and $S^-$ for the $\bar q$ profile; the normal Weyl coordinates identify these spinor spaces when needed. Let $\mathfrak A_+$ be the algebra of finite sums of Grassmann words in these projected profiles, with continuous bosonic coefficients depending only on the positive closed half and its seam data.

Cell locality of $\Pi_{\mathrm c}$ makes every such profile half-local. Smooth dependence of the projector makes the profiles continuous in the links in every fixed spacetime smooth norm, so the spectral-window proof of Theorem~2 applies to them. The vertex permutation implementing reflection intertwines the reconstructed projectors. Together with the antilinear coefficient conjugation and source-order reversal already specified, this defines $\Theta\mathfrak A_+$. General continuous half-local profile families can also be obtained by uniform $L^2$ limits of finite sums of these profiles with continuous bosonic coefficients.

Gauge transformations act on the ambient profiles as well as on $U$. Their induced representation intertwiners act on every source leg and on the determinant line. In particular,
\[
 \sigma_{A^g}[F^g]=g_*\sigma_A[F],\qquad
 \tau^{\rm amb}_{A^g}\circ g_*=\tau^{\rm amb}_A.
\]
Gauge-invariant observables are the invariant part of this algebra. On ambient comparison paths, local bundle identifications permit arbitrary smooth source profiles frozen on each chart. The all-source identity is pointwise in the background, so these chart coordinates do not restrict the lattice observable class.

Lemma~\ref{lem:collar_source_extension} extends split half pairings to the global source algebra used in Section~\ref{rp_comparisonsection}, with uniform degreewise bounds for the continuous profile limits.

\subsubsection{Analytic conventions and proof outline}\label{rp_conventions}

Write $R=\mathcal R$ for the fixed fermion representation, and $E^\pm=S^\pm\otimes E_R$ for the domain and codomain spinor bundles.

Spacetime is $M=T^4$, with time period $2\ell$. Its positive half is $M_+=[0,\ell]\times T^3$, and its boundary is $Y=Y_0\sqcup Y_\ell$. The cuts and boundary framings stay fixed. A moving collar means varying coefficients on a common fixed-width product collar, not moving the hypersurface. Boundary data $b$ include both slices. Choose fixed normal Weyl coordinates so that
\[
 P_A=\partial_t+H_A+A_t,\qquad H_A=i\sigma_j(\partial_j+A_j)=H_A^\dagger,
 \qquad A_t^\dagger=-A_t.
\]
Gauge potentials are anti-Hermitian. All inner products are antilinear in the first variable. The sources $q(f)$ and $\bar q(g)$ are respectively antilinear and linear in their smearings, and anticommute. On reflected smearings, using the fixed linear normal spinor identification $\mathscr R$, set
\[
 \Theta q(f)=-\bar q_L(\mathscr R f),\qquad
 \Theta\bar q(g)=-q_L(\mathscr R g),\qquad
 \Theta(FG)=(\Theta G)(\Theta F).
\]
The temporal antiperiodic sign is kept at the second seam. With these normal coordinates the folded trace identification is $\Sigma=\operatorname{diag}(\mathrm I,-\mathrm I)$, and Green's form uses $J_\partial=\operatorname{diag}(-\mathrm I,\mathrm I)=-\Sigma$. These signs are verified again in the source identities below. A different spinor basis must transform the reflection, traces and sources together.

Use the same fixed spectral scale as in Section 4, denoted here by $m_0=\mu$, and choose units in which $m_0=1$. Restoring $m_0$ means using $P/m_0$ and the normalized trace $m_0^{-1/2}\gamma$, with the source normalization changed at the same time; no scale varies along a comparison path. Spatial spin structures are arbitrary. Seam zero modes are retained.

Write $\det_\zeta Q=\exp[-\zeta_Q'(0)]$ for a positive elliptic operator, with the prime on a determinant meaning that zero eigenvalues are omitted. $\operatorname{FP}_{z=0}$ denotes the constant Laurent coefficient. $\det_F$ is the Fredholm determinant of identity plus trace class. All complex logarithmic derivatives below are local on nonzero source charts; the final triangle has no logarithm-branch dependence.

\begin{center}\small
\begin{tabularx}{\textwidth}{@{}p{.27\textwidth}X@{}}
\toprule Symbol & Meaning \\\midrule
$P,\mathcal D$ & Weyl operator and its self-adjoint double $\mathcal D=\left(\begin{smallmatrix}0&P^\dagger\\P&0\end{smallmatrix}\right)$.\\
$W,\Pi,\mathcal K,G,L$ & Half Cauchy space, its orthogonal projection, Poisson map, orthogonal Calder\'on inverse and $L=\gamma G$.\\
$J_\partial,\Sigma,J_{\mathbb R}$ & Boundary-sign matrix, folded antiperiodic matching, and the separate antiunitary real involution.\\
$R_+,R_-,P_R,E_R$ & Finite source rows and columns, closed bordered operator, and its inverse bulk block.\\
$\Psi_x(F),\mathscr P_x^F$ & Half-Wick Fock vector and, for a nonzero monomial, its occupied source plane.\\
$c(x,y),\tau(x,y,z)$ & Source-independent unit phase and its triangle $c(x,y)c(y,z)c(z,x)$.\\
$\mathfrak S,\mathfrak B$ & Products for $xy,yz,zx$ divided by those for $xx,yy,zz$, using physical coefficients and boundary pairings.\\
$\mathcal W_\lambda$ & Positive order-two weight $P^\dagger P+\pi_\lambda^+$ used in a spectral window; distinct from a Cauchy space.\\
$n_x,k,d(F)$ & Framed induced $SU(5)$ half class, sewn $k=n_y-n_x$, and source degree $d(F)=\#q-\#\bar q$.\\
\bottomrule
\end{tabularx}
\end{center}

The proof has three analytic steps, stated here and established in the sections that follow. We work on arbitrary smooth unitary ambient $SU(5)$ half connections on fixed framed bundles with common product collars. This domain includes paths outside the admissible link image and outside the physical subgroup. Fixed-regulator compactness is needed only when returning to the bosonic integral.

\begin{enumerate}[label=(\textbf{\Alph*}),leftmargin=2.6em]
\item\label{item:collar_comparison}\textbf{Physical source comparison and norm.}
The half-Wick maps prepare $\Psi_x(F)=h_x\mathscr W_x(F)\Omega_{W_x}$, with the half's own Calder\'on norm $h_x$ in~\eqref{rp_halfnorm}. Ward identities first give a source-independent scalar. Primed gluing and a saturated CAR cofactor then prove
\[
 S_{xy}(F_x,F_y)=c(x,y)\langle\Psi_x(F_x),\Psi_y(F_y)\rangle,
 \qquad |c(x,y)|=1,\quad c(x,x)=1.
\]
These steps are derived in Sections~\ref{rp_comparisonsection}--\ref{rp_normsection}; Section~\ref{rp_chartsection} supplies simultaneous nonzero monomial charts at arbitrary index.

\item\label{item:collar_BF}\textbf{Physical source differential.}
For an invertible source border of the original $P$, the inverse bulk block $E_R$ gives, in a parallel unit BF frame,
\[
 d\log S_R=\operatorname{FP}_{z=0}\operatorname{Tr}(E_R\,dP\,\mathcal W_\lambda^{-z}).
\]
Section~\ref{rp_BFsection} derives the normalization, sign and logarithmic heat term from BF's definitions, then uses finite Schur Jacobi for general frozen sources.

\item\label{item:collar_balance}\textbf{Balanced moving-collar identity.}
Compare the three cyclic sewings with the three reflected doubles. Every actual oriented half occurs once with each sign. Section~\ref{rp_localizationsection} cancels their local full symbols before taking sharp half restrictions. Section~\ref{rp_transmissionsection} identifies the remaining ordinary trace with the logarithmic derivative of the boundary-pairing ratio. The physical ratio has the same derivative, so $\tau$ is constant along ambient triple paths.
\end{enumerate}

Half-operator and source-plane regularity, including smoothing estimates with all derivatives when the common collar varies, are proved in Lemmas~\ref{rp_halfregularity}, \ref{lem:collar_parameter_smoothing} and~\ref{lem:collar_constrained_smoothing}. These supporting proofs are collected at the end rather than assumed in the three steps above.

\subsubsection{Physical source coordinates at arbitrary index}\label{rp_sourcedefinition}
For a closed Weyl operator use the determinant line
\[
 \lambda_P=(\det\ker P)^*\otimes\det\ker P^\dagger.
\]
Its smooth structure is given by spectral windows, not by a separately chosen basis of a jumping kernel. On a neighborhood where $\lambda>0$ misses the spectra of $P^\dagger P$ and $PP^\dagger$, put $H_\lambda^\pm=\operatorname{ran}\pi_\lambda^\pm$. The finite low map is $P_\lambda:H_\lambda^+\to H_\lambda^-$; the high map is invertible. Let $G_{\rm hi}$ be its inverse, extended by zero on $H_\lambda^-$, and let $e_\lambda$ denote the determinant-line frame associated with chosen orthonormal low bases. Its ordinary zeta--Quillen norm is
\[
 \|e_\lambda\|_Q
   =\det_\zeta(P^\dagger P|_{H_{\rm hi}^+})^{1/2}.
\]

Define the line-valued source functional $\sigma_P[F]$ by the ordinary finite low-mode Berezin integral, combined with high-mode Wick contractions $\langle f,G_{\rm hi}g\rangle$, multiplied by $e_\lambda$. This fixes all source coefficients, including those at nonzero index. One explicit low-coordinate integration convention is
\[
 \int_{\lambda} F
 =(-1)^{n_+(n_+-1)/2}
   [\psi_1\cdots\psi_{n_+}\bar\psi_1\cdots\bar\psi_{n_-}]F,
 \qquad n_\pm=\dim H_\lambda^\pm,
\]
with low exponential $\exp(-\bar\psi P_\lambda\psi)$, fixed exterior order and the graded determinant-line identifications. For a square invertible finite matrix this gives $\int e^{-\bar\psi A\psi}=\det A$ and $\langle q(f)\bar q(g)\rangle/\langle1\rangle=\langle f,A^{-1}g\rangle$. Adding modes to the window integrates out an invertible finite block and gives precisely its determinant-line transition; in the full $R$ family the index, and hence $n_++n_-$, is even. The definition is therefore independent of the window. Equivalently its local coefficients are finite Schur determinants in the same original line; an augmented operator is only a coordinate device, never a new zeta-determinant prescription.

For row sources $f_1,\ldots,f_q$ and column sources $g_1,\ldots,g_p$ write
\[
 R_+u=(\langle f_i,u\rangle)_i,\qquad R_-z=\sum_jg_jz_j,\qquad
 P_R=\begin{pmatrix}P&R_-\\R_+&0\end{pmatrix}:
       H^1(E^+)\oplus\mathbb C^p\longrightarrow L^2(E^-)\oplus\mathbb C^q.
\]
Its index is zero when $q-p=\ind P$. On that degree the corresponding ordered monomial coefficient is nonzero exactly when $P_R$ is invertible. Indeed eliminate the high modes: the remaining finite square matrix is
\[
 \begin{pmatrix}
  P_\lambda&\pi_\lambda^-R_-\\
  R_+\pi_\lambda^+&-R_+G_{\rm hi}R_-
 \end{pmatrix}.
\]
Its determinant is the original coefficient up to the fixed Berezin-order sign. This proves both the equivalence and the local finite-dimensional coordinates used later. Source degrees with $q-p\ne\ind P$ vanish. Thus the index convention below gives $d(F)=-12k$ for a nonzero closed coefficient.

For the representation $\mathcal R=(\mathbf{10}\oplus\overline{\mathbf5})^{\oplus3}$, the anomaly and phase arguments of Section~\ref{sec:hsu_eta} apply on the full ambient unitary $SU(5)$ connection space. With the present orientation,
\begin{equation}\label{rp_reviewindex}
 \ind P=-12k,\qquad d(F)=\#q-\#\bar q=-12k
 \quad\hbox{for a nonzero coefficient in closed class }k.
\end{equation}
One generation has quadratic index $3+1=4$ relative to the fundamental and zero cubic anomaly. Lemma~\ref{rp_finiteinsertions} records continuity and boundedness of the physical coefficients through zero modes. The positive even-power reference convention specified above completes the flat ambient trivialization. Its consistency on the real locus is checked independently in Section~\ref{rp_referencesec}.

\subsubsection{Boundary Fock spaces and varying seam data}\label{rp_fockdefinition}

Write $\mathcal H=L^2(Y;S^+\otimes E_R|_Y)$ in the fixed normal spinor identification. For a half, $W$ is the $L^2$ closure of the traces of its smooth homogeneous solutions, and $\Pi$ is the orthogonal projection onto $W$. At fixed $b$ all positive-half Cauchy projections have the same complete boundary symbol: the full collar operator germ is common. Their differences are smoothing. Choose one reference projection $\Pi_0$ and write $W_0=\operatorname{ran}\Pi_0$. The relevant restricted Grassmannian consists of closed subspaces $W$ with
\[
 \Pi_W-\Pi_0\in\mathcal S_2,\qquad
 \|\Pi_W-\Pi_0\|_{\mathcal S_2}^2
 =\operatorname{Tr}(\Pi_W-\Pi_0)^2<\infty.
\]
This Hilbert--Schmidt condition permits a finite-norm change of the occupied sea. Here the stronger smoothing condition guarantees it. The charged Fock space based at $\Pi_0$ is the particle-hole completion
\[
 \mathcal F_{\Pi_0}
 =\widehat{\bigwedge}(W_0^\perp)\,\widehat\otimes\,
   \widehat{\bigwedge}(\overline{W_0}),
\]
with the usual positive exterior inner product; $\widehat{\bigwedge}$ means the Hilbert direct sum over all finite exterior degrees. The two factors describe particles in initially empty modes and holes in initially occupied modes, and their degree difference includes all integer charge sectors~\cite{mickelsson_scott}. Its CAR satisfy
\[
 \{a(h),a^*(k)\}=\langle h,k\rangle\mathrm I,\quad
 \{a(h),a(k)\}=0,\quad\|a(h)\|=\|h\|.
\]
The annihilator is antilinear in $h$ and the creator is linear. Each Cauchy space $W$ defines a unit occupied Slater ray $\Omega_W$. Remove its finitely many unmatched directions relative to $\Pi_0$ and use the principal-angle blocks for the rest. The finite products of normalized particle-hole factors converge in Fock norm since smoothing gives $\sum_j\sin^2\alpha_j<\infty$. This constructs the ray; its phase is arbitrary. Inner products of such normalized Slater vectors are the corresponding relative Fredholm determinants, as derived directly in the transmission proof. The restricted-Fock framework is also described in~\cite{mickelsson_scott}; its determinant-line sewing theorem is not used as a Quillen or BF isometry.

The charge of an occupied plane $W$ relative to $W_0$ is $\ind(\Pi_0|_W)$. A $q$ source removes one occupied fermion and a $\bar q$ source adds one, so a monomial changes the charge by $-d(F)$. The trace-space identification of closed zero modes below gives the difference of vacuum charges as $\ind P_{xy}=-12(n_y-n_x)$; thus one may take the half vacuum charge to be $-12n_x$.

When $b$ varies, a common reference projection varies with it. Its Kato unitary identifies the one-particle spaces and their polarized Fock fibers; this is a choice of Fock-bundle chart. We do not assume that unitary has an implementer inside one fixed Fock representation. Smoothness of the projections and source states is proved in Lemma~\ref{rp_halfregularity} and in the constrained-plane part of Proposition~\ref{rp_transmissionprop}. All final positivity statements involve finite matrices of the physical scalar kernel, so no global Fock-frame choice is needed.

\subsubsection{Half data and exact source identities}
Write $M_+=[0,\ell]\times T^3$, $Y=Y_0\sqcup Y_\ell$, and
\[
 P=\partial_t+H+A_t,\qquad H^\dagger=H,\quad A_t^\dagger=-A_t.
\]
Use units in which the fixed spectral scale is one. Product collars have $A_t=0$ and coefficients independent of $t$. Boundary inner products are antilinear in the first slot. Green's identity is
\begin{equation}\label{rp_green}
 \langle v,Pu\rangle-\langle P^\dagger v,u\rangle
   =\langle\gamma v,J_\partial\gamma u\rangle_Y,
 \qquad J_\partial=\operatorname{diag}(-\mathrm I,+\mathrm I).
\end{equation}
Let $W\subset\mathcal H=L^2(Y)$ be the Cauchy space and $\Pi$ its orthogonal projection. The orthogonal Calder\'on inverse $G$ and Poisson map $\mathcal K$ obey
\begin{equation}\label{rp_halfidentities}
 PG=\mathrm I,\quad GPu=u-\mathcal K\Pi\gamma u,\quad P\mathcal K=0,\quad\gamma\mathcal K=\mathrm I_W,
 \qquad L:=\gamma G,\quad\operatorname{ran}L\subset W^\perp.
\end{equation}
These follow from the invertible Calder\'on boundary problem and unique continuation; standard maximal-domain and Poisson mapping results are in~\cite{booss_lesch_zhu}. Lemma~\ref{rp_halfregularity} proves their parameter regularity without any closed-kernel dimension assumption.

In the common charged boundary Fock representation let $\Omega_W$ be a unit occupied Slater vector. Thus $a(h)\Omega_W=0$ for $h\in W^\perp$ and $a^*(w)\Omega_W=0$ for $w\in W$. Set
\[
 \widehat q(f)=a(\mathcal K^*f),\qquad
 \widehat{\bar q(g)}=a^*(Lg),\qquad
 C(q(f),\bar q(g))=\langle f,Gg\rangle.
\]
Here $q$ is antilinear in its smearing, $\bar q$ linear. Extend $C$ skew-symmetrically and set its other basic entries to zero. For a positive scalar $h$ define $\Psi(F)=h\mathscr W(F)\Omega_W$ by the finite half-Wick recursion
\[
 \mathscr W(\xi_1\cdots\xi_r)
 =\widehat\xi_1\mathscr W(\xi_2\cdots\xi_r)
  +\sum_{j=2}^r(-1)^j C(\xi_1,\xi_j)
          \mathscr W(\xi_2\cdots\widehat\xi_j\cdots\xi_r),
 \qquad\mathscr W(1)=\mathrm I.
\]
All expressions contain finitely many bounded CAR operators and ordinary smooth-source contractions. The restricted Fock setup is the one used in~\cite{mickelsson_scott}; no gluing identity is assumed here.

Define left odd derivations by
\[
 \partial^{\bar q}_v\bar q(g)=\langle v,g\rangle,\quad\partial^{\bar q}_v q(f)=0,
 \qquad
 \partial^{q}_u q(f)=\langle f,u\rangle,\quad\partial^{q}_u\bar q(g)=0.
\]
\begin{lemma}[Half-Wick source identities]\label{rp_halfward}
For smooth $u,v$ up to the boundary and every finite polynomial $F$,
\begin{align}
 \Psi(q(P^\dagger v)F)
    &=\Psi(\partial^{\bar q}_vF)-a(J_\partial\gamma v)\Psi(F),\label{rp_halfq}\\
 \Psi(\bar q(Pu)F)
    &=-\Psi(\partial^{q}_uF)+a^*(\gamma u)\Psi(F).\label{rp_halfb}
\end{align}
\end{lemma}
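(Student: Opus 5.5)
The plan is to prove both identities by induction on the degree of the polynomial $F$, using only the finite half-Wick recursion defining $\mathscr W$, the half identities \eqref{rp_halfidentities}, Green's identity \eqref{rp_green} and the CAR. Throughout I use $J_\partial^\dagger=J_\partial$, that $\mathcal K^*$ maps into $W$, and the annihilation rules $a(h)\Omega_W=0$ for $h\in W^\perp$ and $a^*(w)\Omega_W=0$ for $w\in W$. Both sides are multilinear in $F$ and carry the common factor $h$. It therefore suffices to treat Grassmann monomials $F=\xi_1\cdots\xi_r$ in the basic sources $q(f)$, $\bar q(g)$ and to drop $h$.

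\emph{Base case $F=1$.} For \eqref{rp_halfq}, take $w\in W$ and apply Green's identity with $u=\mathcal K w$, using $P\mathcal K=0$ and $\gamma\mathcal K=\mathrm I_W$. This gives $\langle\mathcal K^*P^\dagger v,w\rangle=-\langle J_\partial\gamma v,w\rangle$, so $\mathcal K^*P^\dagger v=-\Pi J_\partial\gamma v$. The component $(1-\Pi)J_\partial\gamma v$ lies in $W^\perp$ and annihilates $\Omega_W$. Hence
\[
\widehat{q(P^\dagger v)}\,\Omega_W=-a(J_\partial\gamma v)\,\Omega_W,
\]
while $\partial^{\bar q}_v1=0$. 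For \eqref{rp_halfb}, $GPu=u-\mathcal K\Pi\gamma u$ gives $LPu=\gamma u-\Pi\gamma u$. Since $a^*(\Pi\gamma u)\Omega_W=0$, we get $\widehat{\bar q(Pu)}\Omega_W=a^*(\gamma u)\Omega_W$.

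\emph{Contractions.} The same two computations give the contractions with the new source:
\begin{align*}
C\bigl(q(P^\dagger v),\bar q(g)\bigr)&=\langle P^\dagger v,Gg\rangle=\langle v,g\rangle-\langle J_\partial\gamma v,Lg\rangle,\\
C\bigl(\bar q(Pu),q(f)\bigr)&=-\langle f,GPu\rangle=-\langle f,u\rangle+\langle\mathcal K^*f,\gamma u\rangle,
\end{align*}
and $C(q,q)=C(\bar q,\bar q)=0$. In each line the first term is exactly the odd derivation $\partial^{\bar q}_v$ (respectively $-\partial^{q}_u$) applied to that factor. The second term is exactly a CAR anticommutator:
\begin{align*}
\{a(J_\partial\gamma v),a^*(Lg)\}&=\langle J_\partial\gamma v,Lg\rangle, & \{a(J_\partial\gamma v),a(\mathcal K^*f)\}&=0,\\
\{a^*(\gamma u),a(\mathcal K^*f)\}&=\langle\mathcal K^*f,\gamma u\rangle, & \{a^*(\gamma u),a^*(Lg)\}&=0.
\end{align*}

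\emph{Inductive step.} Apply the recursion with first letter $\xi_1=q(P^\dagger v)$ (respectively $\bar q(Pu)$) and remaining word $F$. This produces three kinds of terms:
\begin{itemize}
\item $\widehat\xi_1\mathscr W(F)\Omega_W$;
\item the signed sum of $\langle v,\cdot\rangle$ (respectively $-\langle\cdot,u\rangle$) contractions, which is by definition $\mathscr W(\partial F)\Omega_W$;
\item the signed sum of the CAR remainder contractions.
\end{itemize}
The key auxiliary claim is a graded commutator formula for an operator $b\in\{a(J_\partial\gamma v),a^*(\gamma u)\}$:
\[
b\,\mathscr W(F)=(-1)^{|F|}\mathscr W(F)\,b+\mathscr W(\delta_bF),
\]
where $\delta_b$ is the odd derivation $\xi\mapsto\{b,\widehat\xi\}$. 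This should follow by a separate induction from the recursion, because $\{b,\widehat\xi_i\}$ is a scalar and the contraction coefficients $C$ are scalars.

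Given the claim, I split $\widehat\xi_1=\mp b+(\widehat\xi_1\pm b)$. The operator $\widehat\xi_1\pm b$ is, by the base case, an annihilator of $\Omega_W$ up to the $W$ or $W^\perp$ components. The anticommutator terms it produces when moved through $\mathscr W(F)$ to reach $\Omega_W$ cancel exactly the CAR remainder contractions from the recursion. What is left is $\mp b\,\mathscr W(F)\Omega_W$ plus the derivation term, which is the claimed right side. The anticommutator of $a^*(\gamma u)$ with $a(\mathcal K^*f)$ uses only $\mathcal K^*f\in W$, which is why $\Pi\gamma u$ may be replaced by $\gamma u$.

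The main obstacle is sign bookkeeping. This includes the $(-1)^j$ in the recursion versus the Grassmann sign of the odd derivations, the antilinearity of $q$ and of $a$, and the sign of $J_\partial$. I would fix these by checking the degree-two case $F=\bar q(g)$ (respectively $q(f)$) explicitly before writing the general induction. No analytic difficulty arises: all objects are finite words in bounded CAR operators, and $u,v$ are smooth up to the boundary, so Green's identity and the half identities apply directly.
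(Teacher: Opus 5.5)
Your proposal is correct and follows essentially the paper's proof. You use the same Green and Calder\'on identities, the same splitting of the new leg into $\mp a(J_\partial\gamma v)$ or $a^*(\gamma u)$ plus a piece that kills $\Omega_W$, and the same cancellation of the non-contact parts of the contractions when that piece is moved through the Wick word. Your graded commutator formula $b\,\mathscr W(F)=(-1)^{|F|}\mathscr W(F)\,b+\mathscr W(\delta_bF)$ does hold for any single CAR operator $b$, including the vacuum-killing remainder you actually apply it to, and it is just an explicit form of the paper's ``graded Leibniz rule by induction in the Wick recursion.''
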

\begin{proof}
Equations~\eqref{rp_green}--\eqref{rp_halfidentities} give
\begin{align*}
 \mathcal K^*P^\dagger v&=-\Pi J_\partial\gamma v,& LPu&=(\mathrm I-\Pi)\gamma u,\\
 \langle P^\dagger v,Gg\rangle
   &=\langle v,g\rangle-\langle J_\partial\gamma v,Lg\rangle,&
 \langle f,GPu\rangle
   &=\langle f,u\rangle-\langle\mathcal K^*f,\Pi\gamma u\rangle.
\end{align*}
For~\eqref{rp_halfq}, replace the occupied annihilation leg by $-a(J_\partial\gamma v)$. The added empty annihilation leg kills $\Omega_W$; commuting it through the finite CAR product contributes exactly $-\langle J_\partial\gamma v,Lg\rangle$. The remaining contraction is the contact derivation $\partial^{\bar q}_v$. For~\eqref{rp_halfb}, replace the empty creation leg by $a^*(\gamma u)$. Its added occupied creation leg kills $\Omega_W$; its anticommutators contribute $\langle\mathcal K^*f,\Pi\gamma u\rangle$. The contact term is $-\partial^{q}_u$. Applying the graded Leibniz rule proves both statements by induction in the displayed Wick recursion. This proof uses no determinant, trace or spectral limit.
\end{proof}

\subsubsection{Cancellation at the two seams}

For halves $x,y$ with common seam data define the independently constructed pairing functional by
\begin{equation}\label{rp_pairfunctional}
 \mathcal B_{xy}((\Theta A)B)=\langle\Psi_x(A),\Psi_y(B)\rangle.
\end{equation}
Reflection is antilinear, reverses order and satisfies $\Theta q=-\bar q_L$, $\Theta\bar q=-q_L$. Retain temporal antiperiodicity. Fold the negative half to $M_+$. The boundary matching for a smooth closed test field is
\begin{equation}\label{rp_sigma}
 \Sigma=\operatorname{diag}(\mathrm I,-\mathrm I)=-J_\partial.
\end{equation}
The minus at $Y_\ell$ is the antiperiodic temporal identification. The folded operator relations are $P_L^\dagger\leftrightarrow P_x$ and $P_L\leftrightarrow P_x^\dagger$.

\begin{lemma}[Closed source identities]\label{rp_closedward}
With exactly this reflection and matching dictionary, $\mathcal B_{xy}$ obeys the closed Gaussian identities
\begin{align}
 \mathcal B_{xy}(q(P_{xy}^\dagger v)F)&=\mathcal B_{xy}(\partial^{\bar q}_vF),\label{rp_closedq}\\
 \mathcal B_{xy}(\bar q(P_{xy}u)F)&=-\mathcal B_{xy}(\partial^{q}_uF).\label{rp_closedb}
\end{align}
They hold for arbitrary globally smooth test fields and all finite source polynomials.
\end{lemma}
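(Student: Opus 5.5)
The plan is to derive the closed identities~\eqref{rp_closedq}--\eqref{rp_closedb} from the half identities of Lemma~\ref{rp_halfward}, applied once on each half. The boundary terms from the two halves should cancel after the reflection dictionary and the folded matching~\eqref{rp_sigma} are used.

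First I would decompose the global test field. Take a globally smooth $v$ on the closed manifold. Its restriction to the positive half is a half-smooth field $v_y$ on $M_+$. Its restriction to the negative half, folded back to $M_+$, gives a field $v_x$. The antiperiodic temporal identification relates their traces by $\gamma v_x=\Sigma\gamma v_y$ with $\Sigma=-J_\partial$. Since the closed operator $P^\dagger_{xy}$ is local, $q(P_{xy}^\dagger v)$ splits into a positive-half source $q(P_y^\dagger v_y)$ and a negative-half source. Under $\Theta$, using $P_L^\dagger\leftrightarrow P_x$ and $\Theta q=-\bar q_L$, the negative-half source becomes the reflected image of $-\bar q(P_x v_x)$ up to the linear spinor identification $\mathscr R$. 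It suffices to check the identities on monomials $(\Theta A)B$; general $F$ follow by linearity. The derivation $\partial^{\bar q}_v$ splits the same way, into a derivative acting on $B$ and a reflected $\partial^q$ acting on $A$.

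Next I would apply the half identities. On the $y$ side,~\eqref{rp_halfq} gives
\[
\Psi_y(q(P_y^\dagger v_y)B)=\Psi_y(\partial^{\bar q}_{v_y}B)-a(J_\partial\gamma v_y)\Psi_y(B).
\]
On the $x$ side, because the pairing is antilinear in its first slot, the reflected source moves into the bra as $\Psi_x(\bar q(P_xv_x)A)$. Identity~\eqref{rp_halfb} rewrites this as $-\Psi_x(\partial^q_{v_x}A)+a^*(\gamma v_x)\Psi_x(A)$. The derivative terms reproduce $\mathcal B_{xy}(\partial^{\bar q}_vF)$ once the reflection signs, including the order reversal $\Theta(FG)=(\Theta G)(\Theta F)$ and the Grassmann parity of $A$, are tracked. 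The two CAR terms are
\[
\langle a^*(\gamma v_x)\Psi_x(A),\Psi_y(B)\rangle=\langle\Psi_x(A),a(\gamma v_x)\Psi_y(B)\rangle
\]
and $-\langle\Psi_x(A),a(J_\partial\gamma v_y)\Psi_y(B)\rangle$. These cancel exactly when $\gamma v_x=J_\partial\gamma v_y$ after the $\mathscr R$ identification and the sign from $\Theta q=-\bar q_L$. Verifying this is where $\Sigma=-J_\partial$ enters. The second identity,~\eqref{rp_closedb}, is proved symmetrically: exchange the roles of~\eqref{rp_halfq} and~\eqref{rp_halfb}, and use $\Theta\bar q=-q_L$.

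The main obstacle is the sign bookkeeping. Three minus signs must combine correctly: the $-$ in $\Theta q=-\bar q_L$, the antiperiodic $-\mathrm I$ at $Y_\ell$, and the $J_\partial$ orientation on the folded half, whose outward normal is reversed. Grassmann reordering past $\Theta A$ adds a parity sign that must also match. I would first check everything on a single seam with a one-mode model: a product collar with $H$ acting as a scalar and $F=1$ or $F$ linear. There both sides can be computed explicitly from $C(q(f),\bar q(g))=\langle f,Gg\rangle$, which fixes the sign conventions. The general case then follows by the same induction on the Wick recursion that was used in Lemma~\ref{rp_halfward}. Finally I would confirm that smoothness of $v$ across the seams is exactly what makes the folded traces agree. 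Without it, a jump term would survive in the Green identity~\eqref{rp_green}.
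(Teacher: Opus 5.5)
Your plan is the paper's proof. You split the closed test field into its right restriction and folded left restriction with $\gamma v_x=\Sigma\gamma v_y$, apply \eqref{rp_halfq} on $y$ and \eqref{rp_halfb} on $x$, and move $a^*$ across the pairing. The boundary legs then cancel via $\Sigma=-J_\partial$, the contact terms reassemble $\partial^{\bar q}_vF$, and the symmetric argument gives \eqref{rp_closedb}.

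For the signs you left open, order reversal puts the reflected insertion at the \emph{right} of $A$, giving $-\langle\Psi_x(A\,\bar q(P_xv_x)),\Psi_y(B)\rangle$. Moving it left therefore costs the same $(-1)^r$ as the right-half insertion crossing $\Theta A$, and with your original $v_x$ the boundary coefficient is simply $\gamma v_x+J_\partial\gamma v_y=0$. No further Wick induction is needed, since Lemma~\ref{rp_halfward} already holds for all $F$.
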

\begin{proof}
Take $F=(\Theta A)B$ with $A$ homogeneous of degree $r$, and write $X=\Psi_x(A)$, $Y=\Psi_y(B)$. A right-half insertion crosses $r$ left fields. A left insertion is $-\Theta$ of the corresponding opposite species on $x$, and reversal puts that species at the \emph{right} of $A$.

For a closed $v$, denote its right restriction by $v_y$ and its folded left restriction by $u_x$. Then $\gamma u_x=\Sigma\gamma v_y$. Applying~\eqref{rp_halfq}--\eqref{rp_halfb}, the two contributions to $q(P^\dagger v)$ are
\begin{align*}
 -\langle\Psi_x(A\bar q(P_xu_x)),Y\rangle
  &=(-1)^r\bigl[\langle\Psi_x(\partial^{q}_{u_x}A),Y\rangle
                  -\langle X,a(\gamma u_x)Y\rangle\bigr],\\
 (-1)^r\langle X,\Psi_y(q(P_y^\dagger v_y)B)\rangle
  &=(-1)^r\bigl[\langle X,\Psi_y(\partial^{\bar q}_{v_y}B)\rangle
                  -\langle X,a(J_\partial\gamma v_y)Y\rangle\bigr].
\end{align*}
The boundary coefficient is $\gamma u_x+J_\partial\gamma v_y=0$. The contact terms are precisely the derivative of the full word, because
\[
 \partial^{\bar q}_{v_L}(\Theta A)=(-1)^r\Theta(\partial^{q}_{u_x}A).
\]
The conjugation of the contact coefficient comes from antilinearity of $\Theta$ and the first slot of the pairing. This proves~\eqref{rp_closedq}.

For a closed $u$, let its folded left restriction be $v_x$, so $\gamma v_x=\Sigma\gamma u_y$. The contributions to $\bar q(Pu)$ are
\begin{align*}
 -\langle\Psi_x(Aq(P_x^\dagger v_x)),Y\rangle
 &=(-1)^r\bigl[-\langle\Psi_x(\partial^{\bar q}_{v_x}A),Y\rangle
                   +\langle X,a^*(J_\partial\gamma v_x)Y\rangle\bigr],\\
 (-1)^r\langle X,\Psi_y(\bar q(P_yu_y)B)\rangle
 &=(-1)^r\bigl[-\langle X,\Psi_y(\partial^{q}_{u_y}B)\rangle
                   +\langle X,a^*(\gamma u_y)Y\rangle\bigr].
\end{align*}
Now $J_\partial\gamma v_x+\gamma u_y=0$, and
$\partial^{q}_{u_L}(\Theta A)=(-1)^r\Theta(\partial^{\bar q}_{v_x}A)$.
This proves~\eqref{rp_closedb}. Linearity covers arbitrary $A,B$ and sums of split words.

There is no distributional differentiation of a sharp indicator. First differentiate the smooth closed test field, then restrict it. Its half restrictions are smooth up to the cut, and the boundary terms are those in Green's identity. Lemma~\ref{lem:collar_source_extension} supplies the canonical graded extension to global $L^2$ smearings, independent of the split expansion. Thus the pairing is a functional on the same global smooth source algebra used by the uniqueness lemma.
\end{proof}

\subsubsection{A direct all-source comparison at every index}\label{rp_comparisonsection}

\begin{lemma}[Uniqueness for the closed source identities]\label{rp_uniqueness}
Let $P$ be a smooth closed elliptic Weyl operator. The vector space of scalar linear functionals on its finite Grassmann source algebra satisfying~\eqref{rp_closedq}--\eqref{rp_closedb} is one dimensional. No invertibility or index-zero assumption is needed.
\end{lemma}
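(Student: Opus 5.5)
I plan to prove the statement in two parts, existence and uniqueness. Existence is immediate: the physical source functional $\sigma_P[\cdot]$ of Section~\ref{rp_sourcedefinition}, evaluated in any fixed frame of the determinant line, satisfies~\eqref{rp_closedq}--\eqref{rp_closedb}. These are the standard Schwinger--Dyson identities of a Gaussian Berezin integral. In a spectral window they hold exactly for the finite low block together with high-mode Wick contractions using $PG_{\rm hi}=\mathrm I$ on $H^-_{\rm hi}$, and they are window independent by the shell transition. For uniqueness it suffices to show that any solution $\omega$ is determined by a single scalar.

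The main tool is the spectral-window splitting. Fix $\lambda$ away from the spectra and decompose every smearing as $g=\pi_\lambda^-g+(1-\pi_\lambda^-)g$ and $f=\pi_\lambda^+f+(1-\pi_\lambda^+)f$. First, high sources are eliminated. Any high target vector is $g=P u$ with $u=G_{\rm hi}g$ smooth, and any high source vector is $f=P^\dagger v$ with $v=(P^\dagger|_{\rm hi})^{-1}f$. Applying~\eqref{rp_closedb} or~\eqref{rp_closedq}, one replaces a word containing a high $\bar q$ or $q$ by a contact-derivative word of strictly lower degree. Induction on the number of high legs therefore expresses $\omega$ on every word through $\omega$ on words built only from low-mode sources $q(e_i)$, $\bar q(e'_j)$ with $e_i\in H^+_\lambda$ and $e'_j\in H^-_\lambda$.

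Next, the identities are applied inside the finite window. For low $u\in H^+_\lambda$, the vector $Pu=P_\lambda u$ is low, and similarly $P^\dagger v$ is low for low $v$. The identities therefore restrict to the finite-dimensional Schwinger--Dyson system of the matrix $P_\lambda:\mathbb C^{n_+}\to\mathbb C^{n_-}$ on the exterior algebra $\bigwedge(\mathbb C^{n_+})^*\otimes\bigwedge\mathbb C^{n_-}$. The remaining step is the finite lemma: the solutions of $\omega(\bar q(Mu)F)=-\omega(\partial^q_uF)$ and $\omega(q(M^\dagger v)F)=\omega(\partial^{\bar q}_vF)$ form a one-dimensional space for an arbitrary, possibly rectangular or singular, $M$. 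To prove it, I would Fourier-dualize. The functionals on the source algebra correspond to elements $\Phi$ of the Grassmann algebra in fields $\psi,\bar\psi$, paired by Berezin integration. The identities become the first-order equations $(\partial_{\bar\psi}+M\psi)\Phi=0$ and $(\partial_\psi-\bar\psi M)\Phi=0$ in suitable signs. Conjugating by $e^{-\bar\psi M\psi}$ turns them into $\partial_{\bar\psi}\Phi'=\partial_\psi\Phi'=0$, so $\Phi'$ is a constant. Hence $\omega$ equals a scalar times the Gaussian functional. No invertibility is used, and this covers nonzero index.

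I expect the main obstacle to be the bookkeeping in the reduction. One must check that the contact terms generated while eliminating high legs never reintroduce new high legs at the same degree, so that the induction terminates. The finite-block signs must match those of the graded derivations $\partial^{q}$ and $\partial^{\bar q}$. One must also verify that the extension to general $L^2$ smearings by continuity (Lemma~\ref{lem:collar_source_extension}) preserves uniqueness. For the first point I would order the induction by the total degree plus the number of high legs, which the contact rule strictly decreases.
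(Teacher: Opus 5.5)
Your proof is correct and follows essentially the paper's route. You eliminate every non-low leg with the two closed source identities, each step lowering the degree by two, then solve the residual finite system, with existence supplied by the physical Gaussian section.

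The only difference is where the split is made. The paper splits exactly at $\ker P\oplus\ker P^\dagger$, writing $f=\Pi_{Z_+}f+P^\dagger G_{\rm red}^\dagger f$ and $g=\Pi_{Z_-}g+PG_{\rm red}g$. The finite block is then $M=0$, and the identities simply say that the functional is annihilated by every $\partial^{q}_u$ and $\partial^{\bar q}_v$, so only the saturated coefficient survives. Your spectral window with the conjugation by $e^{-\bar\psi M\psi}$ is a slightly more general form of the same finite step.

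Your concern about extending to $L^2$ smearings is unnecessary here, because the lemma and its identities are stated for smooth sources.
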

\begin{proof}
Let $Z_+=\ker P$, $Z_-=\ker P^\dagger$, and let $G_{\rm red}$ be the inverse on their orthogonal complements. For smooth profiles,
\[
 f=\Pi_{Z_+}f+P^\dagger G_{\rm red}^\dagger f,
 \qquad g=\Pi_{Z_-}g+PG_{\rm red}g.
\]
Use the source identities to eliminate every high-mode field. Each elimination replaces a product by contact terms of degree two less, so a finite word needs finitely many steps. On the remaining finite zero-mode algebra, choose $u\in Z_+$ or $v\in Z_-$ in the same identities. The result is $\Lambda(\partial^{q}_uF)=\Lambda(\partial^{\bar q}_vF)=0$. Hence $\Lambda$ vanishes on every zero-mode word which omits any kernel or cokernel generator: multiply such a word by its missing generator and differentiate. Only the single completely saturated coefficient remains free. Its value determines every original coefficient by the preceding recursion.

The usual physical Gaussian source section gives a nonzero example: its saturated coefficient is the nonzero high determinant times the two top zero-mode exterior factors. Multiplication by a background-dependent BF phase does not affect these source identities.
\end{proof}

\begin{proposition}[Source independence without finite boundary windows]\label{rp_allsource}
Assume the Green and reflection dictionary above, unique continuation, and the common charged restricted Fock realization at fixed seam data. Then there is a nonzero scalar $c(x,y)$, independent of the source polynomial, such that
\begin{equation}\label{rp_sourcecomparison}
 \mathsf S_{xy}(A,B)=c(x,y)\langle\Psi_x(A),\Psi_y(B)\rangle.
\end{equation}
This includes nonzero index and identically vanishing vacuum coefficients. It does not yet assert $|c|=1$.
\end{proposition}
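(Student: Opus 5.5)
The plan is to realize both sides of \eqref{rp_sourcecomparison} as linear functionals on the closed source algebra of the sewn operator $P_{xy}$ and apply Lemma~\ref{rp_uniqueness}. Fix seam data $b$ and halves $x,y$. The left side is $F\mapsto\mathsf S_b(x,y;F)$, given by Eq.~\eqref{eq:collar_physical}: the ambient BF trivialization applied to the original determinant-line source section $\sigma_{A_{xy}}[F]$. The right side is the pairing functional $\mathcal B_{xy}$ of Eq.~\eqref{rp_pairfunctional}, extended from split words $(\Theta A)B$ to the global source algebra by Lemma~\ref{lem:collar_source_extension}. Once both satisfy the closed Gaussian identities \eqref{rp_closedq}--\eqref{rp_closedb}, the one-dimensionality in Lemma~\ref{rp_uniqueness} makes them proportional.

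The first step is to verify the identities for the physical side. In a spectral window (Section~\ref{rp_sourcedefinition}), $\sigma_P[F]$ is a finite low-mode Berezin integral times high-mode Wick contractions with $G_{\rm hi}$, multiplied by $e_\lambda$. Inserting $q(P^\dagger v)$ or $\bar q(Pu)$ and splitting $v,u$ into low and high parts gives the finite-dimensional Schwinger--Dyson identity. High parts contract to $\langle v,g\rangle$ or $\langle f,u\rangle$ through $G_{\rm hi}$. Low parts are handled by integration by parts in the Berezin integral against $\exp(-\bar\psi P_\lambda\psi)$. The background-dependent unit BF scalar is a constant multiplier at fixed background, so it preserves the identities. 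The second step is to verify them for $\mathcal B_{xy}$. This is Lemma~\ref{rp_closedward}, which needs only the Green and reflection dictionary and the common Fock realization. Its extension to global smearings is what lets both functionals live on one algebra.

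The third step is to apply Lemma~\ref{rp_uniqueness} and show the resulting constant may be taken nonzero on the left. The physical section is nonzero: its saturated coefficient is $\det_\zeta$ of the high block times the zero-mode exterior factors, a nonzero line element, and $\tau^{\rm amb}$ is unitary. Uniqueness then gives $\mathcal B_{xy}=\kappa\,\mathsf S_{xy}$ for a scalar $\kappa$. It remains to show $\kappa\neq0$, which is needed to set $c=\kappa^{-1}$.

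I expect this last point to be the main obstacle, since it requires one source monomial with a nonzero Fock pairing. The first approach is to take the saturating degree $d(F)=-12k$, which is compatible with the vacuum charge difference $-12(n_y-n_x)$ computed in Section~\ref{rp_fockdefinition}. Then I would choose sources whose images $\mathcal K^*f$ and $Lg$ fill exactly the relative principal-angle defect between $W_x$ (reflected) and $W_y$. The unmatched directions are finite, and each is realized by half-Wick creators or annihilators with profiles chosen by density. Density holds because unique continuation makes the ranges of $\mathcal K^*$ and $L$ on smooth interior profiles dense in $W$ and $W^\perp$. The pairing of the two resulting Slater vectors is then a relative Fredholm determinant of the remaining matched block, which is nonzero because all principal angles there are below $\pi/2$. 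Because the CAR contributions are bounded, a small perturbation of the profiles keeps the pairing nonzero, so $\kappa\neq0$. Nothing here constrains $|c|$; that is left to the norm section.
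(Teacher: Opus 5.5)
Your proposal is correct, and its first two steps match the paper. Both functionals satisfy \eqref{rp_closedq}--\eqref{rp_closedb}: the physical one is the nonzero Gaussian example in Lemma~\ref{rp_uniqueness}, and the pairing satisfies them by Lemma~\ref{rp_closedward} with the extension of Lemma~\ref{lem:collar_source_extension}. Uniqueness therefore reduces everything to showing $\mathcal B_{xy}\neq0$.

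You differ from the paper only in that last step. The paper's argument is soft. Unique continuation makes $\mathcal K^*$ and $L$ on interior profiles dense in $W$ and $W^\perp$, so products of the corresponding CAR legs span a dense subspace of the full charged Fock space. Each half-Wick monomial equals such a CAR product plus lower-degree contraction terms, so induction makes each half's polynomial span dense. Two dense subspaces cannot be orthogonal, so some pair of monomials has nonzero pairing. Nothing about the relative geometry of $W_x$ and $W_y$ is needed.

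You instead construct an explicit saturating monomial and read off its pairing as evaluation determinants times the convergent cosine product over the matched principal-angle blocks. This is essentially Lemma~\ref{rp_saturated} moved forward. It gains an explicit nonzero coordinate and the magnitude formula later used for $|c|=1$, but it needs details the density route avoids:
\begin{itemize}
\item Place all sources in one half, say $y$. Then the $q$-legs lie in $W_y$ and can remove the $W_y\cap W_x^\perp$ modes, while the $\bar q$-legs lie in $W_y^\perp$ and can create the $W_x\cap W_y^\perp$ modes.
\item Use exactly $\dim(W_y\cap W_x^\perp)$ sources of type $q$ and $\dim(W_x\cap W_y^\perp)$ of type $\bar q$. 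Net degree $-12k$ alone is not enough, because accidental kernel--cokernel pairs also need saturation.
\item Check that the internal contractions $\langle f,Gg\rangle$ in the half-Wick recursion drop out. Each removes two legs and so leaves an unmatched mode unfilled.
\end{itemize}
With these in place, ``exact filling then perturbation'' is unnecessary. You only need the evaluation matrices on the unmatched directions to be invertible, an open condition reachable directly. Indeed $\langle\mathcal K_y^*f,w_{+}\rangle=\langle f,u\rangle$, and unique continuation on an interior ball gives invertibility.
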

\begin{proof}
The physical source functional and $\mathcal B_{xy}$ satisfy the same closed identities by Lemmas~\ref{rp_closedward}--\ref{rp_uniqueness}. It remains to exclude $\mathcal B_{xy}=0$.

For any interior ball $O$, unique continuation makes
\[
 \mathcal K_x^*:C_c^\infty(O)\longrightarrow W_x,
 \qquad L_x:C_c^\infty(O)\longrightarrow W_x^\perp
\]
dense. For $\mathcal K^*$ an annihilator is a homogeneous Poisson solution vanishing on $O$. For $L$ use Green's identity and the adjoint Poisson solution; vanishing on $O$ again implies vanishing everywhere. The full range of $L$ is dense in $W^\perp$: for smooth boundary $h\in W^\perp$, extend it smoothly to $u$; then $GPu=u$, so $LPu=h$.

Finite products of the corresponding occupied annihilators and empty creators span a dense set of particle-hole vectors. Half-Wick products have these CAR products as their top term and only lower-degree contractions. Induction therefore gives the same dense polynomial span. Since all Fock charge components are included, this applies to both halves even when their vacuum charges differ. Their images cannot be mutually orthogonal, so some pairing is nonzero. Expanding the two polynomials shows that a pair of monomials already has a nonzero pairing. Uniqueness now gives a nonzero proportionality scalar.

At a fixed nonzero physical source coefficient, proportionality ensures that the corresponding pairing is nonzero as well. Their ratio supplies a local source-independent coordinate. This identifies transitions through kernel jumps without differentiating a jumping kernel basis.
\end{proof}

\subsubsection{Quillen normalization from minimal saturated sources}\label{rp_normsection}

Write $\mathcal D=\left(\begin{smallmatrix}0&P^\dagger\\P&0\end{smallmatrix}\right)$ and let $\mathcal C_x$ be the orthogonal projection onto the full doubled Cauchy space on a half. The self-adjoint realization $\mathcal D_{x,\mathcal C_x}$ has domain $\{u\in H^1:\mathcal C_x\gamma u=0\}$; its square has the corresponding iterated domain. From now on fix
\begin{equation}\label{rp_halfnorm}
 h_x=\det_\zeta(\mathcal D_{x,\mathcal C_x}^{\,2})^{1/4}>0.
\end{equation}
These boundary problems are invertible: a zero mode has its trace both in the Cauchy space and in its orthogonal complement, hence has zero trace and vanishes by unique continuation.

For a sewing set $T_{xy}=\Pi_x|_{W_y}$ and
\[
 b_{xy}=\det_F\bigl(T_{xy}^*T_{xy}+\Pi_{\ker T_{xy}}\bigr)^{1/2}.
\]
The projection difference is smoothing at common seam data, so the operator inside is identity plus trace class on $W_y$. This definition is the product of the nonzero singular values of $T_{xy}$.

\begin{lemma}[Trace spaces of closed zero modes]\label{rp_tracezeros}
Let $u_\alpha$ and $v_\beta$ be bulk-orthonormal bases of $\ker P_{xy}$ and $\ker P_{xy}^\dagger$. With right-half restriction understood, put
\[
 w_{+,\alpha}=\gamma u_\alpha,\quad
 w_{-,\beta}=J_\partial\gamma v_\beta,\quad
 \mathsf H_\pm=(\langle w_{\pm,a},w_{\pm,b}\rangle).
\]
Their ranges are respectively $W_y\cap W_x^\perp$ and $W_x\cap W_y^\perp$. Both Gram matrices are positive definite.
\end{lemma}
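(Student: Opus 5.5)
The plan is to construct, for each of the two kernels, a linear bijection between the closed kernel and the stated boundary intersection. Positive definiteness of the Gram matrices then follows at once: a bijection sends the bulk-orthonormal basis to a linearly independent family in $\mathcal H$, and the Gram matrix of a linearly independent family is positive definite. I will treat $\ker P_{xy}$ in detail; $\ker P_{xy}^\dagger$ follows from the same argument with the roles of $P$ and $P^\dagger$, and of $x$ and $y$, exchanged. The folded dictionary of Section~\ref{rp_BFsection}'s predecessor (the cancellation-at-the-seams subsection) will be used throughout: under folding, $P_L^\dagger\leftrightarrow P_x$ and $P_L\leftrightarrow P_x^\dagger$, and the matching is $\Sigma=-J_\partial$ as in~\eqref{rp_sigma}.

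\emph{Step 1 (trace lies in the intersection).} Let $u\in\ker P_{xy}$. Its right restriction $u_y$ is smooth up to the cut and satisfies $P_yu_y=0$, so $\gamma u_y\in W_y$. Its folded left restriction $v$ satisfies $P_x^\dagger v=0$, with $\gamma v=\Sigma\gamma u_y$. For any $w\in W_x$, I would apply Green's identity~\eqref{rp_green} on the $x$ half to $v$ and the Poisson extension $\mathcal K_xw$. Both bulk terms vanish, which gives $\langle J_\partial\gamma v,w\rangle=0$. Since $J_\partial\Sigma=-\mathrm I$, we have $J_\partial\gamma v=-\gamma u_y$. Hence $\gamma u_y\perp W_x$, and therefore $w_{+,\alpha}\in W_y\cap W_x^\perp$.

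\emph{Step 2 (injectivity).} Suppose $\gamma u_y=0$. Then $u_y$ has zero Cauchy data on the collar, so unique continuation for the Weyl operator gives $u_y=0$. The same holds for the left half, since its trace is $\Sigma\cdot 0$. Thus $u=0$.

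\emph{Step 3 (surjectivity).} Given $h\in W_y\cap W_x^\perp$, I take $u_y=\mathcal K_yh$. On the $x$ half I need a solution $v$ of $P_x^\dagger v=0$ with $\gamma v=\Sigma h$. This follows from the identification
\[
 \{\text{traces of homogeneous }P_x^\dagger\text{ solutions}\}=J_\partial W_x^\perp .
\]
The inclusion $\subseteq$ is Step~1. The reverse inclusion uses the Calder\'on identities~\eqref{rp_halfidentities}: the Cauchy projection of $P^\dagger$ is complementary to $J_\partial\Pi J_\partial$, equivalently $\operatorname{ran}L$ is dense in $W^\perp$ together with the adjoint Poisson map of Lemma~\ref{rp_halfregularity}. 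Since $\Sigma h=-J_\partial h\in J_\partial W_x^\perp$, such a $v$ exists. The glued field has traces matched by $\Sigma$, which is exactly the antiperiodic closed identification, and satisfies $P_{xy}u=0$ on each side. I would check it is a weak solution across the seam by testing against closed smooth fields: the Green boundary terms cancel exactly as in the proof of Lemma~\ref{rp_closedward}. Elliptic regularity then makes it a smooth zero mode with $\gamma u_y=h$.

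\emph{Step 4 (cokernel).} For $v\in\ker P_{xy}^\dagger$, the right restriction solves $P_y^\dagger$, so by the identification above $J_\partial\gamma v\in W_y^\perp$. The folded left restriction solves $P_x$, with trace $\Sigma\gamma v$. Since $\Sigma\gamma v=-J_\partial\gamma v$, this trace lies in $W_x$, so $J_\partial\gamma v\in W_x$. Injectivity and surjectivity then follow verbatim.

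The main obstacle is the reverse inclusion for the $P^\dagger$ Cauchy space at arbitrary seam data, together with the regularity of the glued field. The Cauchy spaces are defined as $L^2$ closures, so I must ensure that an element of the intersection has a smooth solution on each side, not merely an $L^2$ boundary value. I would handle this by first observing that the intersection is finite dimensional and consists of smooth vectors. This holds because $\Pi_x+\Pi_y-\mathrm I$, restricted appropriately, is Fredholm with smoothing defect, since the projection difference is smoothing at common seam data. With smoothness in hand, the Poisson extensions are smooth up to the boundary, and Green's identity applies classically.
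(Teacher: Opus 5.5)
Your proposal is correct and follows essentially the paper's own argument. You identify the adjoint Cauchy space as $J_\partial W^\perp$ via Green's identity, use the folded matching $\Sigma=-J_\partial$ to place the traces in $W_y\cap W_x^\perp$ and $W_x\cap W_y^\perp$, glue Poisson solutions into a closed weak solution made smooth by elliptic regularity for the converse, and obtain positive-definite Grams from injectivity of restriction by unique continuation. Your a priori smoothness step, $h=(\Pi_y-\Pi_x)h$ with a smoothing projection difference, is a harmless refinement; the paper lets elliptic regularity of the glued weak solution supply the smoothness instead.
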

\begin{proof}
Green's identity identifies the adjoint half Cauchy space as $J_\partial W^\perp$. Folding the negative half and imposing $\Sigma=-J_\partial$ gives the first intersection for $P$ and the second for $P^\dagger$. Conversely, a trace in either intersection glues the corresponding Poisson solutions into a closed weak solution, which is smooth by elliptic regularity. Restriction is injective by unique continuation, proving positivity of the Grams. Multiplication by $J_\partial$ is unitary and therefore does not change the adjoint trace Gram.
\end{proof}

\begin{lemma}[Loya--Park specialization]\label{rp_metric}
At every sewing, including nonzero index,
\begin{equation}\label{rp_normglue}
 N_{xy}:=\det{}'_\zeta(P_{xy}^\dagger P_{xy})^{1/2}
 =h_xh_y b_{xy}(\det\mathsf H_+\det\mathsf H_-)^{-1/2}.
\end{equation}
Empty determinants are one. No invertibility of the tangential seam Dirac operator is assumed.
\end{lemma}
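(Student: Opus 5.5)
The plan is to derive \eqref{rp_normglue} as a direct specialization of the Loya--Park gluing formula for $\zeta$-determinants of Dirac Laplacians, without requiring an invertible tangential operator. Their Theorem~1.1 treats a closed manifold cut along a hypersurface with product collar. It writes the primed $\zeta$-determinant of the glued Dirac Laplacian as a product of three pieces: the two half determinants with Calder\'on (Cauchy-data) boundary conditions, the Fredholm determinant of the transmission map between the two Cauchy spaces, and the Gram determinants of the boundary traces of closed zero modes.

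First I would apply that theorem to the doubled self-adjoint operator $\mathcal D_{xy}$ on the sewn torus. Since $\mathcal D_{xy}^{2}=P_{xy}^\dagger P_{xy}\oplus P_{xy}P_{xy}^\dagger$ and the two blocks have the same nonzero spectrum, I would use
\[
 \det{}'_\zeta(\mathcal D_{xy}^2)^{1/4}=\det{}'_\zeta(P_{xy}^\dagger P_{xy})^{1/2}.
\]
On the half side, the fourth root of the Calder\'on determinant is exactly $h_x$, respectively $h_y$, as defined in \eqref{rp_halfnorm}. I would then fold the negative half using $\Sigma=-J_\partial$ from \eqref{rp_sigma}, so that the doubled Cauchy spaces of the two halves meet in the configuration Loya--Park assume. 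This folding should also carry the antiperiodic sign and the fixed spectral scale $m_0=1$.

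Next I would identify the transmission factor. In Loya--Park it is $\det_F$ of an operator built from $\mathcal C_x\mathcal C_y$ restricted away from its kernel. The doubled Cauchy space splits into the $P$-part $W$ and the $P^\dagger$-part $J_\partial W^\perp$. By the chirality block structure, the doubled operator decomposes into $T_{xy}=\Pi_x|_{W_y}$ and its adjoint counterpart, and both carry the same nonzero singular values. After the fourth root this leaves precisely $b_{xy}$, and I would check that the half-powers combine correctly. The kernel corrections in their formula are Gram determinants of boundary traces of closed $\mathcal D_{xy}$ zero modes. By Lemma~\ref{rp_tracezeros}, these traces span $W_y\cap W_x^\perp$ and $W_x\cap W_y^\perp$, with Grams $\mathsf H_\pm$. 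They enter inversely, and the fourth root of the doubled (squared) contribution yields $(\det\mathsf H_+\det\mathsf H_-)^{-1/2}$. When the kernels are empty, the determinants are one.

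The main obstacle is matching conventions rather than any estimate. I must verify that Loya--Park's hypotheses hold, namely a product collar, a Dirac-type operator of the form $\gamma(\partial_t+H)$ with $H$ self-adjoint, and the Calder\'on projector they use coinciding with our orthogonal $\mathcal C_x$. I must also confirm that their seam-zero-mode terms cancel, or are absorbed, when the orthogonal Calder\'on projector is used instead of APS conditions; this is what removes the need for an invertible tangential operator. Finally, the zero-mode Grams must be normalized by bulk-orthonormal bases, as in Lemma~\ref{rp_tracezeros}, and not boundary-orthonormal ones, since a mismatch there would introduce a spurious factor. I would check these by reducing to the finite one-mode examples already used in Appendix~\ref{sec:quillen_gluing}.
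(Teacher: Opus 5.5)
Your overall route is the paper's. You apply Loya--Park's Theorem~1.1 to the doubled operator $\mathcal D$, take fourth roots, use that $P^\dagger P$ and $PP^\dagger$ share their nonzero spectrum, and read off $h_x$, $h_y$, $b_{xy}$ and the trace Grams.

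\textbf{The gap.} It lies exactly where the clause ``no invertibility of the tangential seam operator'' matters. You describe the theorem as a product of three pieces, but its right-hand side also carries the prefactor $2^{-\zeta_{\mathscr D_Y^2}(0)-h_Y}$, with $h_Y=\dim\ker\mathscr D_Y$. This factor appears in Loya--Park's Calder\'on-projector formula itself. Your proposed mechanism, that seam-zero-mode terms cancel because orthogonal Calder\'on projectors are used instead of APS conditions, therefore does not remove it. The Calder\'on formulation explains why the theorem needs no invertible $\mathscr D_Y$, but the $h_Y$-dependent power of two survives and must be evaluated.

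The missing idea is a heat-coefficient fact. The seam $Y=Y_0\sqcup Y_\ell$ is a closed three-manifold, so the small-$t$ expansion of $\operatorname{Tr}e^{-t\mathscr D_Y^2}$ has no $t^0$ term. Hence $\zeta_{\mathscr D_Y^2}(0)=-h_Y$, and the prefactor is exactly $1$ with the seam zero modes retained. Without this step your argument gives the identity only up to an uncontrolled factor $2^{-(\zeta_{\mathscr D_Y^2}(0)+h_Y)/4}$.

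\textbf{A smaller point: the transmission factor.} Loya--Park's transmission factor is $\det_F\bigl((2\mathrm I+U_b+U_b^{-1})/4\bigr)$, where $U_b$ is the relative boundary unitary with its $-1$ eigenspace omitted. Your singular-value heuristic needs to be backed by identifying that unitary explicitly. In the paper it is $U=S_{W_x}S_{W_y}$ with $S_W=2\Pi_W-\mathrm I$, and the sign comes from the reflected negative half ($\kappa_-=-S_{W_x}$). Each principal-angle block then has eigenvalues $e^{\pm2i\alpha}$ and contributes $\cos^4\alpha$. The unmatched kernel or cokernel traces are exactly the omitted $-1$ eigenvectors. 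With the opposite sign the blocks would contribute $\sin^4\alpha$, so the folding convention is not a formality.

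You also need the reflection that identifies the Calder\'on problem of the left half $\Theta x$ with the one defining $h_x$.
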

\begin{proof}
Loya--Park Theorem 1.1, with their zero-mode convention, states~\cite{loya_park_gluing}
\begin{equation}\label{rp_LPexact}
 \frac{\det{}'_\zeta\mathcal D^2}
 {\det_\zeta\mathcal D_{\mathcal C_x}^2\det_\zeta\mathcal D_{\mathcal C_y}^2}
 =2^{-\zeta_{\mathscr D_Y^2}(0)-h_Y}
   (\det \mathsf L)^{-2}
   \det_F\left(\frac{2\mathrm I+U_b+U_b^{-1}}4\right).
\end{equation}
Here each half uses its own orthogonal Calder\'on projector $\mathcal C_\pm$, with boundary condition $\mathcal C_\pm\gamma u=0$; neither boundary problem depends on the partner. The operator $U_b$ omits the $-1$ eigenspace, $h_Y=\dim\ker\mathscr D_Y$, and $\mathsf L=BB^*$ on its range, where $B$ restricts a bulk-orthonormal basis of $\ker\mathcal D$ to $Y$. The hypotheses are: a compact closed manifold split along a smooth closed hypersurface, a self-adjoint Dirac-type double, product metric and coefficients on a full collar, and the orthogonal Calder\'on projectors. They hold here; $Y$ may have the two components retained throughout this subsection.

For the unitary dictionary, identify the adjoint trace component with $J_\partial$ times itself. The half doubled Cauchy space becomes $W\oplus W^\perp$. For its normal Clifford matrix and eigenspace identifications use
\[
 G_\partial=\begin{pmatrix}0&-\mathrm I\\\mathrm I&0\end{pmatrix},
 \qquad \Gamma_\pm h=2^{-1/2}(h,\mp ih),\qquad S_W=2\Pi_W-\mathrm I.
\]
If $(f,g)\in W\oplus W^\perp$, then its $\pm i$ coordinates satisfy
$h_-=(f-ig)/\sqrt2=S_Wh_+$, with $h_+=(f+ig)/\sqrt2$.
In LP's common normal splitting,
\[
 \mathcal C_\pm=\frac12
 \begin{pmatrix}\mathrm I&\kappa_\pm^{-1}\\\kappa_\pm&\mathrm I\end{pmatrix},
 \qquad \kappa_+=S_{W_y},\quad\kappa_-=-S_{W_x},\quad
 U=-\kappa_-\kappa_+^{-1}=S_{W_x}S_{W_y}.
\]
The minus sign in $\kappa_-$ comes from the reflected negative-half space $W_x^\perp\oplus W_x$. On a two-dimensional principal-angle block its eigenvalues are $e^{\pm2i\alpha}$, so the determinant of the LP block is $\cos^4\alpha$. A fourth root gives $\cos\alpha$. Each unmatched kernel or cokernel trace gives a $-1$ eigenvector and is omitted from $U_b$. Thus the fourth root of the Fredholm factor is $b_{xy}$.

The nonzero eigenvalues of $BB^*$ equal those of $B^*B=\mathsf H_+\oplus\mathsf H_-$. The factor $(\det\mathsf L)^{-2}$ gives the exponent $-1/2$ on the two Grams after the fourth root. Finally the closed three-dimensional seam has no constant local heat coefficient for this Dirac Laplacian, so $\zeta_{\mathscr D_Y^2}(0)=-h_Y$. The power of two is exactly one, with its zero modes retained. Reflection unitarily identifies the negative half's doubled boundary Laplacian with that of $x$, giving its factor $h_x$; the positive half gives $h_y$. Nonzero spectra of $P^\dagger P$ and $PP^\dagger$ agree, so $(\det{}'_\zeta\mathcal D^2)^{1/4}=N_{xy}$. This proves~\eqref{rp_normglue}. The factors $h_x,h_y$ belong to the individual halves, $b_{xy}$ is their genuine overlap factor, and $\mathsf H_\pm$ retain the closed zero-mode trace normalization. The purely seam-dependent power of two has cancelled; no partner-dependent factor has been absorbed into a half norm.
\end{proof}

\begin{lemma}[Saturated CAR cofactor with its bulk normalization]\label{rp_saturated}
Put $r_+=\dim\ker P_{xy}$ and $r_-=\dim\ker P_{xy}^\dagger$. Choose all $r_+$ row sources $f_i$ and all $r_-$ column sources $g_j$ in one interior ball of the right half, so that
\[
 (\mathsf A_+)_{i\alpha}=\langle f_i,u_\alpha\rangle,
 \qquad (\mathsf A_-)_{\beta j}=\langle v_\beta,g_j\rangle
\]
are invertible. Let $F_R$ be this minimally saturated word in any one fixed order. Then
\begin{align}
 |\langle\Psi_x(1),\Psi_y(F_R)\rangle|
 &=h_xh_y b_{xy}
   \frac{|\det\mathsf A_+\det\mathsf A_-|}
        {(\det\mathsf H_+\det\mathsf H_-)^{1/2}},\label{rp_CARnorm}\\
 \|\sigma_{xy}[F_R]\|_Q
 &=N_{xy}|\det\mathsf A_+\det\mathsf A_-|.\label{rp_physicalnorm}
\end{align}
\end{lemma}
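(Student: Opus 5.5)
The plan is to prove the two identities separately and then relate them through Lemma~\ref{rp_metric}. For \eqref{rp_physicalnorm}, I would use the spectral-window definition of $\sigma_P[F]$ from Section~\ref{rp_sourcedefinition}. I choose the window edge $\lambda$ below every nonzero singular value, so that $H_\lambda^+=\ker P_{xy}$ and $H_\lambda^-=\ker P_{xy}^\dagger$, and the low map $P_\lambda$ vanishes. With this choice the low Berezin integral of the minimally saturated word $F_R$ has no room for high contractions. Any Wick pairing $\langle f,G_{\rm hi}g\rangle$ would leave an unsaturated zero-mode generator, and that term vanishes. The only surviving term is the one in which each $q(f_i)$ meets a kernel vector and each $\bar q(g_j)$ meets a cokernel vector. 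In the orthonormal frames $u_\alpha,v_\beta$ this gives $\det\mathsf A_+\det\mathsf A_-$, up to the fixed ordering sign, multiplied by the frame $e_\lambda$. Since $\|e_\lambda\|_Q=\det'_\zeta(P^\dagger P)^{1/2}=N_{xy}$, this proves \eqref{rp_physicalnorm}. The BF phase plays no role because $\tau^{\rm amb}$ is unitary.

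For \eqref{rp_CARnorm}, I would evaluate $\langle\Omega_{W_x},\mathscr W_y(F_R)\Omega_{W_y}\rangle$ by finite-dimensional Fock algebra. By Lemma~\ref{rp_tracezeros}, the CAR legs $a(\mathcal K_y^*f_i)$ and $a^*(L_yg_j)$ change the occupied plane $W_y$ by exactly the unmatched directions $W_y\cap W_x^\perp$ and $W_x\cap W_y^\perp$. I would show that only the components of $\mathcal K_y^*f_i$ along $\operatorname{span}\{w_{+,\alpha}\}$ and of $L_yg_j$ along $\operatorname{span}\{w_{-,\beta}\}$ survive in the overlap with $\Omega_{W_x}$. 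The lower-degree contraction terms $C(\xi_1,\xi_j)$ of the Wick recursion lower the occupation mismatch and so give zero overlap. Green's identity, as in the proof of Lemma~\ref{rp_closedward}, should then express the coefficients: $\langle w_{+,\alpha},\mathcal K_y^*f_i\rangle=\langle\gamma u_\alpha,\Pi_y\gamma\,\cdot\rangle$, which reduces to the bulk pairing $\langle f_i,u_\alpha\rangle$, and similarly $\langle w_{-,\beta},L_yg_j\rangle$ reduces to $\langle v_\beta,g_j\rangle$. This uses $P u_\alpha=0$ and $P^\dagger v_\beta=0$ on the right half, together with the sign matrix $J_\partial$.

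Expanding those components in the nonorthonormal bases $w_{\pm}$ and normalizing to orthonormal unmatched frames produces the factor $(\det\mathsf H_+\det\mathsf H_-)^{-1/2}$. The matched principal-angle blocks contribute the overlap of the remaining Slater rays, which is $b_{xy}=\prod\cos\alpha_j$ up to phase. The half norms $h_x,h_y$ enter from the definition $\Psi=h\mathscr W\Omega$. Finally, substituting \eqref{rp_normglue} shows that the two right-hand sides agree, which is the purpose of the lemma.

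I expect the \textbf{main obstacle} to be the coefficient identification $\langle w_{+,\alpha},\mathcal K_y^*f_i\rangle=\langle f_i,u_\alpha\rangle$. It requires showing that the right-half restriction of $u_\alpha$ equals the Poisson extension $\mathcal K_y\gamma u_\alpha$. This follows because $u_\alpha|_{M_+}$ is a homogeneous solution, so by \eqref{rp_halfidentities} it is $\mathcal K_y\Pi_y\gamma u_\alpha$ with $\gamma u_\alpha\in W_y$. The difficulty is then to control the orthogonal-projection corrections carefully enough that no cross terms from the matched blocks survive. I would first verify the whole chain in a finite-dimensional toy model with one unmatched direction, and only then transcribe it to the principal-angle decomposition.
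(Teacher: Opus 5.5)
Your plan is correct and is essentially the paper's own proof. On the physical side, you take a spectral window containing exactly $\ker P_{xy}$ and $\ker P_{xy}^\dagger$, so minimal saturation leaves only the two top zero-mode evaluations times the high Quillen factor $N_{xy}$. On the Fock side, you use the principal-angle split: the evaluations $\langle\mathcal K_y^*f_i,w_{+,\alpha}\rangle=\langle f_i,u_\alpha\rangle$ (via $u_\alpha|_{M_+}=\mathcal K_y\Pi_y\gamma u_\alpha$) and $\langle w_{-,\beta},L_yg_j\rangle=\langle v_\beta,g_j\rangle$ (via Green's identity) are normalized by $\mathsf H_\pm^{-1/2}$, and the matched blocks give $b_{xy}$.

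The matched-block cross terms you flag as the obstacle are removed by the same occupation count you already apply to the contractions. Since $\mathcal K_y^*f_i\in W_y$ and $L_yg_j\in W_y^\perp$, the Fock space factorizes over $U\oplus U'\oplus(U\oplus U')^\perp$, with $U=W_y\cap W_x^\perp$ and $U'=W_x\cap W_y^\perp$. Emptying $U$ and filling $U'$ uses up every leg, so nothing acts on the matched factor. This is the paper's one-line argument.
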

\begin{proof}
Unique continuation makes the restrictions of each zero-mode basis linearly independent on the chosen ball; compact interior dual tests therefore give the two invertible evaluation matrices.

The boundary-normalized unmatched frames are $w_+\mathsf H_+^{-1/2}$ and $w_-\mathsf H_-^{-1/2}$. The annihilation legs have the actual evaluations
\[
 \langle \mathcal K_y^*f_i,w_{+,\alpha}\rangle
   =\langle f_i,u_\alpha\rangle.
\]
For the creation legs Green's identity gives the equally important adjoint formula
\[
 \langle w_{-,\beta},L_yg_j\rangle
 =\langle\gamma v_\beta,J_\partial\gamma G_yg_j\rangle
 =\langle v_\beta,P_yG_yg_j\rangle
 =\langle v_\beta,g_j\rangle.
\]
Hence their boundary-normalized evaluation matrices are $\mathsf A_+\mathsf H_+^{-1/2}$ and $\mathsf H_-^{-1/2}\mathsf A_-$.

In the principal-angle decomposition, the right vacuum occupies precisely $r_+$ unmatched modes absent from the left, and lacks precisely $r_-$ unmatched modes occupied on the left. A nonzero matrix element must annihilate every former mode and create every latter mode. The word is minimal, so all of its CAR legs must be used for these tasks. Every internal Wick contraction removes two legs and leaves an unmatched mode unfilled; those terms vanish. Components of the legs in the matched angle blocks vanish for the same reason. The remaining finite exterior coefficient is the product of the two evaluation determinants, up to its fixed order sign. The matched blocks contribute $\prod\cos\alpha=b_{xy}$. This proves~\eqref{rp_CARnorm}.

This is also a direct infinite-wedge argument: remove the finitely many unmatched modes, truncate the principal-angle tail, and let the cutoff grow. The normalized Slater vectors converge in Fock norm because the squared sines are summable; finite CAR products are bounded; and the remaining product of cosines converges. No bulk zeta determinant or inverse is compressed in this limit.

For the physical source, the reduced closed inverse supplies the nonzero-mode Gaussian. Minimal saturation leaves exactly the two top zero-mode exterior evaluations; any high-mode contraction again leaves a zero mode unsaturated. In bulk-orthonormal bases their absolute product is $|\det\mathsf A_+\det\mathsf A_-|$, while the high Quillen factor is $N_{xy}$. This proves~\eqref{rp_physicalnorm}.
\end{proof}

\begin{corollary}[Unit-modulus source comparison]\label{rp_unitcomparison}
The source-independent $c(x,y)$ in~\eqref{rp_sourcecomparison} has $|c(x,y)|=1$ at every sewing. One minimal nonzero coefficient suffices at each point, even where the uninserted vacuum vanishes. Fixed invertible source borders give smooth ratios through kernel jumps.
\end{corollary}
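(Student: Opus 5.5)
The plan is to evaluate the comparison \eqref{rp_sourcecomparison} on a single well-chosen source pair at each sewing and read off $|c(x,y)|$ by taking absolute values. Fix seam data and halves $x,y$. Take $A=1$ on the left half and $B=F_R$, the minimally saturated word of Lemma~\ref{rp_saturated}. Its row and column sources lie in one interior ball of the right half, and they are chosen so that $\mathsf A_\pm$ are invertible. Unique continuation guarantees that such sources exist for any sewing, whatever the index and whether or not the vacuum coefficient vanishes. Proposition~\ref{rp_allsource} then gives
\[
 \mathsf S_{xy}(1,F_R)=c(x,y)\,\langle\Psi_x(1),\Psi_y(F_R)\rangle .
\]
Since $c$ does not depend on the source, one nonzero instance determines it.

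The first step is the left side. The physical coefficient \eqref{eq:collar_physical} is $\tau^{\rm amb}$ applied to $\sigma_{A_{xy}}[F_R]$, and $\tau^{\rm amb}$ is unitary for the Quillen metric. Hence $|\mathsf S_{xy}(1,F_R)|=\|\sigma_{xy}[F_R]\|_Q=N_{xy}|\det\mathsf A_+\det\mathsf A_-|$ by \eqref{rp_physicalnorm}. For the right side, \eqref{rp_CARnorm} gives $h_xh_yb_{xy}|\det\mathsf A_+\det\mathsf A_-|(\det\mathsf H_+\det\mathsf H_-)^{-1/2}$. Lemma~\ref{rp_metric} says this prefactor equals $N_{xy}$. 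Both sides are nonzero: $N_{xy}>0$, $b_{xy}>0$ because the unmatched directions are removed, and $\mathsf A_\pm$ and $\mathsf H_\pm$ are invertible. Dividing, $|c(x,y)|=1$.

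The main obstacle is making sure the conventions behind \eqref{rp_physicalnorm} and \eqref{rp_CARnorm} actually match. The bulk-orthonormal zero-mode bases $u_\alpha,v_\beta$ must be the same ones used for the Gram matrices $\mathsf H_\pm$ in Lemma~\ref{rp_tracezeros}, so that the $\mathsf H_\pm$ factors cancel exactly against those in Lemma~\ref{rp_metric}. The Quillen norm of the saturated low factor must carry no extra Gram factor, which holds precisely because the bases are bulk-orthonormal. I would check this by taking both sides in the same bulk-orthonormal bases and noting that a change of basis multiplies both by the same Jacobian determinants. Fixed order signs do not matter, since only moduli enter.

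Two consequences remain. The first is $c(x,x)=1$. I would check this on a reflected-diagonal sewing with empty kernels, where $\langle\Psi_x(1),\Psi_x(1)\rangle=h_x^2>0$ by \eqref{rp_CARnorm} (here $b_{xx}=1$). The physical vacuum coefficient is then a positive real number under the positive-reference convention. I would state this only as a check, leaving the full diagonal statement to the reference step. The second is smoothness through kernel jumps. Fix an invertible source border $P_R$ and take the ratio of the two sides on a neighborhood where that fixed monomial coefficient stays nonzero. By the finite Schur-determinant coordinates of Section~\ref{rp_sourcedefinition}, this ratio is smooth there, since both the physical coefficient and the CAR pairing are smooth nonvanishing functions of the background. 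By source independence the ratios from different borders agree on overlaps, so $c$ is smooth without ever differentiating a jumping kernel basis.
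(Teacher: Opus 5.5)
Your argument is the paper's own proof. You equate the moduli from \eqref{rp_normglue} and \eqref{rp_CARnorm}--\eqref{rp_physicalnorm} on the minimally saturated word, use unitarity of the BF trivialization and the source independence of Proposition~\ref{rp_allsource}, and pass through kernel jumps with fixed invertible source borders. Your aside on $c(x,x)=1$ is not needed for this corollary and does not follow from the reference convention alone: the paper proves it afterwards by scaling reflected doubles to the free connection along invertible backgrounds and using reality of the BF differential.
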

\begin{proof}
Equations~\eqref{rp_normglue}--\eqref{rp_physicalnorm} make the two nonzero magnitudes equal. The specified closed BF trivialization is unitary. Source independence then gives the same $c$ for every coefficient. At a jump choose an invertible fixed source border, rather than differentiating the jumping minimal kernel basis.
\end{proof}

\paragraph{The reflected diagonal.}
Folding a closed zero mode on $\Theta x\cup_b x$ gives $P_xu=0$, $P_x^\dagger v=0$, with $v(0)=u(0)$ and $v(\ell)=-u(\ell)$. Green's identity gives
\[
 0=-\|u(\ell)\|^2-\|u(0)\|^2.
\]
Unique continuation and reflection exclude kernel and cokernel. Thus every reflected double is invertible. In~\eqref{rp_normglue} it has $b_{xx}=1$ and no trace Grams, giving $\|\sigma_{xx}[1]\|_Q=h_x^2$. This also proves smooth dependence of the half norm from an invertible closed determinant family.

To fix its phase, choose an absolute half frame, product near its ends, and reflect it to a global frame on the double. Scale the connection to zero in this frame. The entire path consists of invertible reflected doubles. The linear unitary temporal pullback $\mathscr R$ obeys $\mathscr RP\mathscr R^{-1}=P^\dagger$ and $\mathscr RV\mathscr R^{-1}=V^\dagger$. For large $\Re z$, adjunction, this pullback and intertwining give
\[
 \overline{\operatorname{Tr}(P^{-1}V(P^\dagger P)^{-\bar z})}
 =\operatorname{Tr}((PP^\dagger)^{-z}VP^{-1})
 =\operatorname{Tr}(P^{-1}V(P^\dagger P)^{-z}).
\]
The continued finite part is real. The BF differential proved in Section~\ref{rp_BFsection} therefore leaves the phase constant. Its free value is positive by the specified prescription. Consequently
\begin{equation}\label{rp_diagonal}
 \mathsf S_{xx}(1,1)=h_x^2>0,\qquad c(x,x)=1.
\end{equation}
This uses no spatial antiperiodicity and makes no assumption that the seam operator is invertible.
\subsubsection{Single monomials and their source planes}

Write $V_x[F]=\Psi_x(F)$ for the same normalized half vector.

\begin{lemma}[A monomial prepares one modified source plane]\label{rp_sourceplane}
Let a half monomial have $q$ row sources $R_+$ and $p$ column sources $R_-$. In the orthogonal half coordinates put
\[
 \mathsf R=R_+\mathcal K_x:W_x\to\mathbb C^q,\qquad
 \mathsf T=\gamma G_xR_-:\mathbb C^p\to W_x^\perp,\qquad
 \mathsf C=R_+G_xR_-.
\]
Define $\Lambda=(\mathsf R,-\mathsf C)$ and $\mathsf H=(\mathrm I,-\mathsf T)$ on $W_x\oplus\mathbb C^p$. The following are equivalent:
\begin{enumerate}
 \item $V_x[F_x]\ne0$;
 \item $\Lambda$ is onto and $\mathsf H|_{\ker\Lambda}$ is injective.
\end{enumerate}
When they hold, the half state is a nonzero scalar times the unit Slater ray with occupied plane
\begin{equation}\label{rp_modifiedplane}
 \mathscr P_x^F=\{w-\mathsf Tz:\ \mathsf Rw-\mathsf Cz=0\}.
\end{equation}
Its projection differs from $\Pi_x$ by a finite-rank smoothing operator and
\[
 \ind(\Pi_x|_{\mathscr P_x^F})=p-q.
\]
The reflected constrained trace plane is exactly $(\mathscr P_x^F)^\perp$. For two nonzero monomials of equal resulting Fock charge, the pairing is nonzero exactly when their original closed source border is invertible.
\end{lemma}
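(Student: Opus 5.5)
The plan is to reduce the half-Wick vector $V_x[F_x]=h_x\mathscr W_x(F_x)\Omega_{W_x}$ to a finite-dimensional Gaussian (Berezin) computation on $W_x\oplus\mathbb C^p$. First, expand the monomial through the half-Wick recursion. Each row source acts by $a(\mathcal K_x^*f_i)$, which lies in $W_x$ and annihilates an occupied mode. Each column source acts by $a^*(L_xg_j)$ with $L_xg_j=\mathsf T e_j\in W_x^\perp$, which creates an empty mode. The contractions are the entries of $\mathsf C=R_+G_xR_-$. I would write the state as a finite Grassmann integral over auxiliary variables $z\in\mathbb C^p$ and $\zeta\in\mathbb C^q$: roughly, $\int d\zeta\,dz\,\exp[\zeta\cdot(\mathsf R a-\mathsf C z)+ a^*(\mathsf T z)]\Omega_{W_x}$, with the sign conventions of the half-Wick recursion. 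The $\mathsf C$-contractions are then generated exactly by the bilinear term. This is the generating-functional version of the recursion, checked by induction on the word exactly as in Lemma~\ref{rp_halfward}.

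Second, evaluate this integral in principal-angle-free coordinates, since only finite-rank data enter. The linear constraints $\Lambda(w,z)=\mathsf Rw-\mathsf Cz=0$ select the occupied directions that survive. The map $\mathsf H$ sends $(w,z)\mapsto w-\mathsf Tz$, and its image on $\ker\Lambda$ is the new occupied plane. By finite-dimensional linear algebra (the usual Slater-determinant/Grassmann-integral identity), the result is a scalar multiple of the Slater vector of $\mathsf H(\ker\Lambda)$. That scalar is a nonzero maximal minor exactly when $\Lambda$ has full rank $q$ and $\mathsf H$ is injective on $\ker\Lambda$. Otherwise some occupied direction is annihilated twice, or the created vectors become linearly dependent, and the vector vanishes. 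This gives the equivalence of (1) and (2) and formula~\eqref{rp_modifiedplane}. Since $\mathsf R,\mathsf T$ have finite rank and smooth kernels (the sources are interior and $\mathcal K_x^*,L_x$ are smoothing on interior test functions), $\mathscr P_x^F$ differs from $W_x$ by a finite-rank smoothing change. The index is then a dimension count: $\ker\mathsf R$ has codimension $q$ in $W_x$ when $\Lambda$ is onto, and $p$ dimensions are added.

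Third, the orthogonal complement statement. I would repeat the construction for the reflected monomial using the adjoint data, Green's identity~\eqref{rp_green} and $\Sigma=-J_\partial$, as in Lemma~\ref{rp_tracezeros}. The constrained trace plane on the reflected side is the annihilator of $\mathscr P_x^F$, because the constraint and creation maps swap roles under $\Theta$.

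Finally, the pairing criterion. For equal Fock charge, $\langle\Omega_{\mathscr P_x},\Omega_{\mathscr P_y}\rangle$ is a relative Fredholm determinant of $\Pi_{\mathscr P_x}|_{\mathscr P_y}$, so it is nonzero iff $\mathscr P_y\cap\mathscr P_x^\perp=0$. I would identify this intersection with the kernel of the closed bordered operator $P_R$ of Section~\ref{rp_sourcedefinition}. A vector in the intersection glues, through Poisson extension and unique continuation, to a closed solution of $P_Ru=0$. The argument runs as in Lemma~\ref{rp_tracezeros} with the source rows and columns included, and the converse is by restriction. Since $P_R$ has index zero at equal charge, injectivity is equivalent to invertibility. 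This gluing with inhomogeneous source terms is the main obstacle. One has to check that the $\mathsf C$ and $\mathsf T$ terms reproduce exactly the $R_\pm$ blocks of $P_R$ after folding. I would first verify this on the finite low-mode Schur reduction displayed after the definition of $P_R$.
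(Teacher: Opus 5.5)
Your route is the paper's. It evaluates the half-Wick word by finite exterior algebra (Berezin, equivalently Cauchy--Binet) on the window spanned by the boundary legs, and identifies the surviving ray with $\mathsf H(\ker\Lambda)$. For the reflected plane it uses the adjoint relations $(\mathsf R_L,\mathsf T_L,\mathsf C_L)=(\mathsf T^\dagger,-\mathsf R^\dagger,\mathsf C^\dagger)$. It then reduces the closed source border to matching on $(\mathscr P_x^F)^\perp\oplus\mathscr P_y^F$.

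The gluing you flag as the main obstacle needs no low-mode Schur check. On a half, the solutions of $Pu=-R_-z$, $R_+u=0$ are exactly $u=\mathcal K_xw-G_xR_-z$ with $(w,z)\in\ker\Lambda$, and their traces are $\mathsf H(w,z)$; this is how the paper's proof opens. Injectivity of the trace map from $\ker P_R$ into $(\mathscr P_x^F)^\perp\cap\mathscr P_y^F$ then comes from injectivity of $\mathsf H|_{\ker\Lambda}$ on each half. It does not come from unique continuation as in the zero-mode trace lemma. Without that injectivity, $u=-G_yR_-z$ with $\mathsf Tz=0=\mathsf Cz$, glued to zero on the partner half, is a closed kernel element with vanishing seam trace.

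The one step that fails as written is the index count. Surjectivity of $\Lambda=(\mathsf R,-\mathsf C)$ only gives $\operatorname{ran}\mathsf R+\operatorname{ran}\mathsf C=\mathbb C^q$; it does not make $\mathsf R$ onto. So $\ker\mathsf R$ need not have codimension $q$ in $W_x$, and the plane need not gain $p$ new directions. For example, $\mathsf R=0$ with $\mathsf C$ onto gives $\mathscr P_x^F=W_x\oplus\mathsf T(\ker\mathsf C)$: nothing is removed, and only $p-q$ directions are added. Your net answer $p-q$ is right, but the two individual counts are not.

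The paper counts $\dim\ker(\Pi_x|_{\mathscr P_x^F})=\dim\ker\mathsf C=p-\operatorname{rank}\mathsf C$ and cokernel dimension $q-\operatorname{rank}\mathsf C$. Equivalently, $\mathsf H$ maps $\ker\Lambda$ isomorphically onto $\mathscr P_x^F$ with $\Pi_x\mathsf H(w,z)=w$. The map $(w,z)\mapsto w$ on the codimension-$q$ subspace $\ker\Lambda\subset W_x\oplus\mathbb C^p$ then has index $p-q$.
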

\begin{proof}
The forced homogeneous equation and source constraints are
\[
 u=\mathcal K_xw-G_xR_-z,\qquad R_+u=0.
\]
Its trace is precisely~\eqref{rp_modifiedplane}; the plane depends on the actual half and its sources, not on a sewing partner. A nonzero sewn border implies that $\Lambda$ is onto and this constrained trace is injective: prescribe its source rows arbitrarily for surjectivity, and set the other half to zero to test injectivity.

Here is a finite exterior proof of the stronger assertion about the state itself. Let $E=\operatorname{ran}\mathsf R^\dagger\subset W_x$, $E'=\operatorname{ran}\mathsf T\subset W_x^\perp$, and $r=\dim E$. Occupation on the orthogonal complement of $E\oplus E'$ is unchanged; this lemma requires no ultraviolet limit. On this finite window the half-Wick coefficient is, up to its fixed exterior-order sign, the component with $q$ constraint legs in the exterior image of
\begin{equation}\label{rp_finitepush}
 \begin{pmatrix}
  \mathsf R&-\mathsf C\\
  \mathrm I_E&0\\
  0&-\mathsf T
 \end{pmatrix}:
 E\oplus\mathbb C^p\longrightarrow
 \mathbb C^q\oplus E\oplus E'.
\end{equation}
Precisely, apply its $(r+p)$th exterior power to the occupied volume of $E$ wedged with the standard volume of $\mathbb C^p$, then extract the component in $\det\mathbb C^q\otimes\bigwedge^{r+p-q}(E\oplus E')$.
Indeed selecting constraint legs reproduces the contractions $\mathsf C$ and the remaining CAR legs in the half-Wick recursion; this is ordinary finite exterior Cauchy--Binet. If $\Lambda$ is not onto the component vanishes. If it is onto, choose a complement to its kernel. Taking all $q$ constraint legs leaves the wedge of the images of a basis of $\ker\Lambda$ under $\mathsf H$, multiplied by a nonzero constraint determinant. This wedge is nonzero exactly when the constrained trace is injective, and its ray is~\eqref{rp_modifiedplane}. Thus the state is a single Slater ray, rather than a general superposition. This calculation identifies a ray; all coefficients and Berezin signs still come from the original source map above.

For example, take one occupied direction $e$, one empty direction $f$, and one row and column, with $\mathsf Re=\rho$, $\mathsf C=\kappa$, $\mathsf T1=\xi f$. The two columns in~\eqref{rp_finitepush} are $(\rho,e,0)$ and $(-\kappa,0,-\xi f)$. Extracting their one-constraint-leg component gives, up to the common order sign,
\[
 \kappa e-\rho\xi f.
\]
The constraint is $\rho w-\kappa z=0$, whose trace $we-\xi zf$ spans exactly this ray when it is nonzero. If $(\rho,\kappa)=(0,0)$ the constraint is not onto and the extracted component vanishes. This displays both the contraction and the particle-hole leg in the same finite wedge.

The plane contains $W_x\cap\ker\mathsf R$ and lies in $W_x+\operatorname{ran}\mathsf T$. These are finite modifications by smooth boundary legs. It is closed and its projection difference is finite-rank smoothing. The projection onto $W_x$ has kernel dimension $p-\operatorname{rank}\mathsf C$ and cokernel dimension $q-\operatorname{rank}\mathsf C$, since $\operatorname{ran}\mathsf R+\operatorname{ran}\mathsf C=\mathbb C^q$. This proves the index formula.

The reflection assertion has an explicit sign check. In the fixed folded convention the matching identification is $\Sigma=-J_\partial=\operatorname{diag}(\mathrm I,-\mathrm I)$, while Green's boundary form uses $J_\partial=\operatorname{diag}(-\mathrm I,\mathrm I)$. Green's identity gives
\[
 G_L=G_x^\dagger,\qquad \mathcal K_L=-L_x^\dagger,\qquad L_L=\mathcal K_x^\dagger.
\]
Reflection takes the source maps to $R_{-,L}=-R_+^\dagger$ and $R_{+,L}=-R_-^\dagger$. Its three finite blocks are consequently
\[
 (\mathsf R_L,\mathsf T_L,\mathsf C_L)
   =(\mathsf T^\dagger,-\mathsf R^\dagger,\mathsf C^\dagger).
\]
On the other hand, taking the annihilator of $\ker(\mathsf R,-\mathsf C)$ directly gives
\begin{equation}\label{rp_dualplane}
 (\mathscr P_x^F)^\perp
 =\{v+\mathsf R^\dagger\eta:\ v\in W_x^\perp,
                       \mathsf T^\dagger v=\mathsf C^\dagger\eta\},
\end{equation}
exactly the reflected trace plane. This check includes conjugations and holds for $p\ne q$. The reflected rank conditions follow too: $\binom{\mathsf T}{\mathsf C}$ is injective and $(\mathsf R,\mathsf C)$ is onto.

Finally eliminate the two halves' finite source constraints. The closed matching border is invertible exactly when matching on $(\mathscr P_x^F)^\perp\oplus\mathscr P_y^F$ is invertible. For equal charges that is equivalent to invertibility of $\Pi_{\mathscr P_x^F}|_{\mathscr P_y^F}$, hence to a nonzero Slater overlap. Thus simultaneous nonzero source charts can be checked without assuming a Quillen metric comparison. Smoothness on such charts follows from the smooth full-rank constraint maps and invertible matching border; no constant rank of $\mathsf R$ or $\mathsf C$ individually is required.
\end{proof}

\subsubsection{Simultaneous charts for the triangle}\label{rp_chartsection}

Define
\begin{equation}\label{rp_triangle}
 \tau(x,y,z)=c(x,y)c(y,z)c(z,x).
\end{equation}
It is invariant under all local half-phase choices. The crucial point is that \emph{only this triangle} needs transport. Constancy of the individual $c(x,y)$ under interior variations is unnecessary.

Use the six pairs $p=(x,y),(y,z),(z,x),(x,x),(y,y),(z,z)$ with signs $\epsilon_p=(1,1,1,-1,-1,-1)$. Every actual oriented half occurs once with each sign. On a simultaneous nonzero source chart put
\[
 \mathfrak S=\prod_p S_p^{\epsilon_p},\qquad
 \mathfrak B=\prod_p B_p^{\epsilon_p},\qquad
 B_{ij}=\langle V_i[F_i],V_j[F_j]\rangle.
\]
Here $S_{ij}=\mathsf S_{ij}(F_i,F_j)$. Then $\mathfrak S/\mathfrak B=\tau$, since all three diagonal comparison factors equal one.

\begin{lemma}[One-ball monomial charts, with prescribed charge]\label{rp_monomialcharts}
Fix an open ball $O$ strictly inside a connected half and a target smooth Slater plane whose projection differs from $\Pi_x$ by a smoothing operator. A ray for that plane can be approximated in restricted Fock norm by rays of single monomials, with every source supported in $O$ and with the target charge. This holds over the real numbers whenever the half and boundary spaces carry compatible antiunitary involutions commuting with $P_x$ and its Green and Poisson maps. Lemma~\ref{rp_realref} verifies these hypotheses at the references used below.

For finitely many halves at common seam data, possibly in unequal relative classes, there are monomials of one resulting Fock charge for which every pair overlap and every original closed source border is nonzero. These source profiles can be frozen on a neighborhood of the chosen configuration. Finitely many such neighborhoods cover any compact triple path.
\end{lemma}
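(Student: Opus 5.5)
The plan has two parts: a single-half approximation, then the simultaneous chart.

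\emph{Single half.} Fix the half $x$, the ball $O$, and a target plane $\mathscr P$ whose projection differs from $\Pi_x$ by a smoothing operator. Since $\sum_j\sin^2\alpha_j<\infty$ in the principal-angle decomposition of $\mathscr P$ against $W_x$, the unit Slater ray $\Omega_{\mathscr P}$ is a Fock-norm limit of rays of planes that differ from $W_x$ only on a finite window: finitely many unmatched directions plus finitely many rotated angle blocks. So it suffices to treat a finite modification $E\subset W_x$ (removed) and $E'\subset W_x^\perp$ (added).

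\emph{Realizing a finite modification by a monomial.} Lemma~\ref{rp_sourceplane} shows that a monomial with row sources $R_+$ and column sources $R_-$ prepares exactly the occupied plane $\{w-\mathsf Tz:\mathsf Rw-\mathsf Cz=0\}$, with $\mathsf R=R_+\mathcal K_x$ and $\mathsf T=\gamma G_xR_-$. I would choose profiles in $C_c^\infty(O)$ so that $\mathsf R^\dagger=\mathcal K_x^*R_+^\dagger$ approximately spans $E$ and $\mathsf T$ approximately spans $E'$. This is possible because the proof of Proposition~\ref{rp_allsource} shows that $\mathcal K_x^*$ and $L_x$ have dense range from $C_c^\infty(O)$. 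Rotated angle blocks are handled by pairing one row and one column source, as in the one-mode example after~\eqref{rp_finitepush}. The mixing coefficient $\mathsf C=R_+G_xR_-$ can be made small relative to the leading legs, or absorbed: by that lemma the plane depends continuously on $(\mathsf R,\mathsf T,\mathsf C)$ on the open set where $\Lambda$ is onto and $\mathsf H|_{\ker\Lambda}$ is injective. Continuity of the Slater ray in the finite window then gives the approximation. The charge is $p-q$ by the index formula, so it equals the target charge automatically. In the real case, each step uses only real-linear density, because the antiunitary involutions commute with $\mathcal K_x$ and $G_x$; the real part of a dense range is dense in the real subspace.

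\emph{Simultaneous charts.} For finitely many halves $x_1,\dots,x_m$ at common seam data, I first choose target planes $\mathscr P_i$ of one common charge whose pairwise Slater overlaps are all nonzero. Invertibility of $\Pi_{\mathscr P_i}|_{\mathscr P_j}$ is an open and dense condition in the restricted Grassmannian of fixed charge: its complement is the zero set of a nontrivial analytic Fredholm determinant. So a generic finite-rank perturbation of, say, the vacuum planes shifted to a common charge works. This covers unequal relative classes, since only the charges of the chosen planes matter. Next I approximate each $\mathscr P_i$ by a monomial plane as above. Nonvanishing of the overlaps is an open condition and survives the approximation. The last assertion of Lemma~\ref{rp_sourceplane} then converts nonzero overlaps into invertible closed source borders. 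Freezing the profiles on a neighborhood is justified by openness together with continuity of $\mathsf R,\mathsf T,\mathsf C$ and of the Cauchy projections in the seam data (Lemma~\ref{rp_halfregularity}). Compactness of a triple path then gives a finite cover.

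\emph{Main obstacle.} I expect the hardest step to be the genericity claim for the target planes with prescribed charge, in the real setting. The zero set of a real analytic determinant could contain an entire real component, so density of the good set is not automatic. I would first exhibit one explicit configuration with all overlaps nonzero, for example by building each plane so that it meets every $W_{x_j}^\perp$ trivially, using a finite rotation inside the common smoothing window. I would then argue that the real-analytic overlap function is not identically zero on the connected real family, which makes its nonzero set open and dense there.
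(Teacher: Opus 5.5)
\paragraph{Gap: the contraction block is never controlled.} There is a genuine gap where you realize the finite modification by a monomial. Approximating the boundary legs $\mathsf R=R_+\mathcal K_x$ and $\mathsf T=\gamma G_xR_-$ by density says nothing about $\mathsf C=R_+G_xR_-$. That block is a separate bilinear function of the same sources, and it enters the prepared plane at order one.

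In the one-mode example after~\eqref{rp_finitepush} the prepared direction is $\kappa e-\rho\xi f$. A pure exchange ($e$ removed, $f$ added) therefore needs $\kappa=0$. A rotated angle block needs $\kappa=-\rho\xi\cot\alpha$, a definite nonzero value. So your angle-block step and your claim that $\mathsf C$ ``can be made small'' already pull in opposite directions, and neither is justified:
\begin{itemize}
\item Rescaling rows or columns multiplies $\kappa$ and $\rho\xi$ by the same factor, so their ratio cannot be made small that way.
\item Continuity of the plane in $(\mathsf R,\mathsf T,\mathsf C)$ only helps once $\mathsf C$ is already close to the required value.
\item ``Absorbing'' $\mathsf C$ is also impossible. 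With $\mathsf R$ onto, the plane is $\ker\mathsf R$ plus the vectors $\mathsf R^\dagger(\mathsf R\mathsf R^\dagger)^{-1}\mathsf Cz-\mathsf Tz$. Their occupied component is fixed by $\mathsf C$, and no choice of $\mathsf T$, which only moves the $W_x^\perp$ component, can compensate.
\end{itemize}

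\paragraph{The missing idea.} You can prescribe $\mathsf C$ exactly without disturbing either leg. Once the approximating rows are independent, $R_+:C_c^\infty(O;E^+)\to\mathbb C^q$ is onto. Choose $\chi_j\in C_c^\infty(O;E^+)$ with $R_+\chi_j=\mathsf C_{\cdot j}-(R_+G_xR_-^0)_{\cdot j}$, and replace the column source $g_j^0$ by $g_j^0+P_x\chi_j$.
\begin{itemize}
\item Since $\gamma\chi_j=0$, \eqref{rp_halfidentities} gives $G_xP_x\chi_j=\chi_j$. Hence $\gamma G_xg_j$ is unchanged, while $R_+G_xR_-$ becomes exactly the target $\mathsf C$.
\item All supports stay in $O$.
\item In the real case the correction is real, because $J_{\mathbb R}$ commutes with $P_x$ and $G_x$.
\end{itemize}
With this, unmatched and rotated directions can be treated uniformly, as the paper does. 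Set $q=\dim E$ and let $\mathsf R$ be the coordinate projection onto $E$. Perturb $\mathscr L$ slightly with fresh empty components so that its $E'$ projection is injective. Write its basis as columns $(\mathsf C,-\mathsf T)$, and realize all three blocks.

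\paragraph{Simultaneous charts.} Your genericity detour is unnecessary, and the witness you propose targets the wrong overlaps. Meeting every $W_{x_j}^\perp$ trivially concerns the vacua, whereas the pairing involves $\Pi_{\mathscr P_i}|_{\mathscr P_j}$. The simple route is a single target plane for all halves:
\begin{itemize}
\item All $\Pi_{x_i}$ share the collar germ and differ by smoothing operators, so one smooth target plane of the chosen charge lies in every half's restricted Grassmannian.
\item Approximate it separately for each half, with unit vectors within $1/4$ in Fock norm of a common unit target vector.
\item Then every pairwise overlap differs from $1$ by less than $1/2$, so all overlaps are nonzero.
\end{itemize}
This is the paper's argument, and it removes the real-analytic obstacle you flagged.
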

\begin{proof}
\emph{Complex source charts.} Unique continuation makes both interior leg maps have dense range:
\[
 \mathcal K_x^\dagger:C_c^\infty(O;E^+)\longrightarrow W_x,
 \qquad
 L_x:C_c^\infty(O;E^-)\longrightarrow W_x^\perp.
\]
For the first assertion an annihilator is a homogeneous Poisson solution vanishing on $O$. For the second, Green's identity identifies the adjoint leg map, up to the fixed boundary sign, with the adjoint Poisson solution; the same argument applies. No statement about dense \emph{polynomial spans} is being used.

First approximate the target ray by a finite modification of $W=W_x$. Principal-angle truncation retains all unmatched occupied or empty directions and finitely many angle blocks. The omitted smoothing tail converges in Fock norm, since its squared sines are summable. Smoothness of all finite legs follows from the smoothing projection difference. Thus it suffices to prepare a plane
\[
 (W\ominus E)\oplus\mathscr L,\qquad
 E\subset W,\quad E'\subset W^\perp,\quad
 \mathscr L\subset E\oplus E',\quad
 \dim E=r,\quad \dim\mathscr L=p.
\]
Its charge relative to $W$ is $p-r$. Enlarge $E'$ by fresh smooth empty directions and perturb $\mathscr L$ arbitrarily little so that its projection into $E'$ is injective. Explicitly add distinct small fresh empty components to a basis of $\mathscr L$. This also covers a plane containing pure occupied directions. The case $p=0$ needs no perturbation.

Choose coordinates on $E$ and a basis of the perturbed $\mathscr L$. Write its basis columns as $(\mathsf C,-\mathsf T)$, with $\mathsf T:\mathbb C^p\to E'$ injective, and set $q=r$ and $\mathsf R$ equal to the coordinate projection $W\to E\cong\mathbb C^q$. Then
\[
 \{w-\mathsf Tz:\mathsf Rw=\mathsf Cz\}
       =(W\ominus E)\oplus\mathscr L.
\]
Lemma~\ref{rp_sourceplane} therefore gives its nonzero ray from these finite data. Approximate the rows of $\mathsf R$ and the columns of $\mathsf T$ using the dense interior leg maps. Choose the row approximations independent, as the desired rows are independent. Let $R_+$ and $R_-^0$ be the resulting actual smooth source maps in $O$.

The contraction block can be prescribed \emph{exactly}, without changing those boundary legs. The independent rows make $R_+:C_c^\infty(O;E^+)\to\mathbb C^q$ onto. For each column choose $\chi_j\in C_c^\infty(O;E^+)$ with
\[
 R_+\chi_j=\mathsf C_{\cdot j}-(R_+G_xR_-^0)_{\cdot j},
 \qquad g_j=g_j^0+P_x\chi_j.
\]
Compact support gives $G_xP_x\chi_j=\chi_j$ and $\gamma\chi_j=0$. Consequently
\begin{equation}\label{rp_tunecontraction}
 \gamma G_xg_j=\gamma G_xg_j^0,\qquad
 (R_+G_xR_-)_{\cdot j}=\mathsf C_{\cdot j}.
\end{equation}
The rows and columns remain supported in $O$. The finite half-Wick state is continuous in $\mathsf R,\mathsf T,\mathsf C$ because it is a finite polynomial in contractions and bounded CAR operators. It remains nonzero and its ray approaches the target ray as the leg approximations improve. Its relative charge is still $p-q$, as required. If $q=0$ the contraction adjustment is empty; if $p=0$ the construction uses only rows.

For finitely many halves choose one common smooth target ray of any fixed resulting charge. The collar condition makes their Cauchy projections differ by smoothing operators, so this target is in the restricted Grassmannian for every half. Apply the construction separately and align the approximating unit vectors within distance $1/4$ of the target. Each overlap then differs from one by less than $1/2$, so all overlaps are nonzero. Lemma~\ref{rp_sourceplane} identifies this condition with invertibility of the original source borders, independently of the Quillen comparison.

\emph{Real source charts.} On a real family denote its antiunitary involution by $J_{\mathbb R}$, distinct from the boundary-sign matrix $J_\partial$. Average interior leg preimages with $J_{\mathbb R}$. Use real finite bases and fresh empty directions; the independent real rows have real compactly supported duals. The correction~\eqref{rp_tunecontraction} is then real too, because $J_{\mathbb R}$ commutes with $P_x$ and $G_x$. Thus the algebraic construction works with real sources on families preserved by this involution. It does not use the later triangle or reference-phase conclusion.

Finally freeze the actual profiles at each chart center. Lemma~\ref{rp_halfregularity} and openness of the original closed bordered inverse preserve the nonzero half vectors and borders on a neighborhood, including across a jump of the unbordered kernel. After a common Kato pullback, the half vectors and overlaps are smooth in the restricted Fock charts. Compactness gives a finite path cover. On the Standard Model family the vacuum charge is $-12n_x$ up to a common origin, so target charge zero means $d(F_x)+12n_x=0$. The full monomial degree has even parity. Unequal class labels cause no additional restriction.
\end{proof}

Use Lemma~\ref{rp_monomialcharts} for each triple on the path. Source independence in Proposition~\ref{rp_allsource} and Corollary~\ref{rp_unitcomparison} makes the scalar triangle agree on chart overlaps. Thus the subsequent derivative calculation never divides by a vacuum coefficient that may vanish.

For changing $b$, these are charts of a Fock bundle. Choose a smooth common reference projector and its Kato unitary on the one-particle boundary space. Pulling all half data back by this same unitary gives smoothing differences from a fixed polarization and smooth restricted-Fock charts. No assertion implements that unitary inside one fixed Fock representation. This distinction is used explicitly in the relative determinant below.

\subsubsection{BF normalization and the physical bordered coefficient}\label{rp_BFsection}

BF I Definition 1.10, Eq.~(1.41), uses a real metric parameter denoted there by $\mu$. Rename it $\kappa_{\rm BF}$ here, to distinguish it from the fixed spectral scale. Its high-window metric factor is
\[
 \exp\{-\tfrac12\zeta_\lambda'(0)-\tfrac12\kappa_{\rm BF}A_0\}.
\]
Thus $\kappa_{\rm BF}=0$ is the ordinary zeta--Quillen metric. Definition 1.17 and Theorem 1.18 give the high connection term~\cite{bf1}
\begin{equation}\label{rp_bfhigh}
 \omega_{\rm hi}^\lambda
 =\tfrac12(\gamma_0^\lambda-\delta_0^\lambda)
   +\tfrac12\Gamma'(1)(dA_0-B_0).
\end{equation}
On a real tangent $V=dP$, block multiplication of
$\mathcal D=\left(\begin{smallmatrix}0&P^\dagger\\P&0\end{smallmatrix}\right)$ gives
\[
 a(r;V)=\operatorname{Tr}(G_{\rm hi}Ve^{-rP^\dagger P}),\qquad
 \gamma_r=2\Re a(r;V),\quad\delta_r=-2i\Im a(r;V).
\]
The order $(\nabla\mathcal D)\mathcal D$ is the one in BF Definition~1.14, Eq.~(1.47). Its ordinary diagonal blocks are $V^\dagger P$ and $VP^\dagger$.

For the sign conversion, write forms on the left. The graded product is
\[
 (\alpha\otimes A)(\beta\otimes B)
 =(-1)^{|A|\deg\beta}(\alpha\wedge\beta)\otimes AB.
\]
Since $\mathcal D^{-1}$ is odd, on a one-parameter path it gives
$\mathcal D^{-1}(\nabla\mathcal D)
=-ds\otimes(\mathcal D^{-1}\nabla_s\mathcal D)$.
Consequently the minus ordinary trace and plus supertrace in BF Eq.~(1.49) become respectively the ordinary matrix trace and minus supertrace of $\mathcal D^{-1}\nabla_s\mathcal D$. With $S^+$ positive these give the displayed $\gamma_r,\delta_r$. On a finite band the calibration is
\[
 \frac{\nabla\det P}{\det P}=\operatorname{Tr}(P^{-1}\nabla P);
 \qquad P=re^{i\theta}\ \Longrightarrow\ dr/r+i\,d\theta.
\]
This is BF's finite-band identity, Eq.~(1.53), in ordinary matrix notation. In a parallel frame $e_\parallel$, $\nabla(se_\parallel)=(ds)e_\parallel$, so passing to a scalar coefficient adds no minus sign. In BF I Theorem 1.15, Eq.~(1.50), the logarithmic coefficients of $\gamma_r$ and $\delta_r$ are respectively $dA_0$ and $B_0$. Since $2a=\gamma_r-\delta_r$, the logarithmic coefficient is exactly
$L(V)=\tfrac12(dA_0-B_0)(V)$ in $a(r;V)=\text{powers}+L\log r+C_\lambda+o(1)$.
For $\mathcal W_\lambda=P^\dagger P+\pi_\lambda^+$ the low trace is zero because $G_{\rm hi}$ has high range. Therefore
\begin{align}
 \operatorname{Tr}(G_{\rm hi}V\mathcal W_\lambda^{-z})
  &=\frac1{\Gamma(z)}\int_0^\infty r^{z-1}a(r;V)\,dr\notag\\
  &=-L/z+C_\lambda+\Gamma'(1)L+O(z),\label{rp_mellin}\\
 \omega_{\rm hi}^\lambda(V)
  &=\operatorname{FP}_{z=0}\operatorname{Tr}(G_{\rm hi}V\mathcal W_\lambda^{-z}).\label{rp_bfexact}
\end{align}
Indeed $\int_0^1r^{z-1}\log r\,dr=-1/z^2$ and
$1/\Gamma(z)=z-\Gamma'(1)z^2+O(z^3)$. Negative powers and the large-$r$ tail contribute no constant after this multiplication. The logarithm is retained; choosing a parameter to suppress it would change the metric normalization.

For fixed smooth source borders,
\[
 P_R=\begin{pmatrix}P&R_-\\R_+&0\end{pmatrix},\quad
 q-p=\ind P,\quad E_R=(P_R^{-1})_{\rm bulk},
\]
finite Schur Jacobi in the \emph{original} determinant line gives
\begin{equation}\label{rp_relativejacobi}
 d\log(S_R/S_T)=\operatorname{Tr}((E_R-E_T)dP).
\end{equation}
The high factor cancels, and the inverse difference is finite-rank smoothing. This is not a new determinant prescription for the augmented operator.

At a base point choose auxiliary columns spanning the cokernel and rows evaluating the kernel, both orthonormal and frozen. Choose the cut below the first positive singular value there. Then $E_{R^0}=G_{\rm hi}$ and the finite low/source block is
\[
 F_{R^0}=\begin{pmatrix}0_{p\times q}&\mathrm I_p\\\mathrm I_q&0_{q\times p}\end{pmatrix}.
\]
Its inverse is off diagonal, while the physical low variation is upper left, so its Jacobi trace is zero. Kato low-frame connections vanish at the base point; the two high/low Schur couplings vanish there, so their differentiated product is zero. These vanishings concern the full complex Jacobi trace, not just its imaginary part. Equations~\eqref{rp_bfexact}--\eqref{rp_relativejacobi} consequently give, in a parallel unit BF frame,
\begin{equation}\label{rp_bfsource}
 d\log S_R=\operatorname{FP}_{z=0}\operatorname{Tr}(E_R\,dP\,\mathcal W_\lambda^{-z}).
\end{equation}
Its real part is $d\log\|\sigma_P[F_R]\|_Q$, by metric compatibility; its imaginary part is $d\arg S_R$. The positive sign is also the ordinary finite-band Jacobi sign in BF I Eqs.~(1.51)--(1.53), after the displayed ordinary/supertrace dictionary. This calculation is local on invertible source charts and needs no BF holomorphic theorem. The logarithmic term is retained at $\kappa_{\rm BF}=0$.

\paragraph{Product-cylinder consistency check.}
The separated temporal model for a constant tangential mode checks the overlap and phase conventions. On $[0,\ell]$ let the tangential eigenvalue be $\lambda\in\mathbb R$ and the unitary temporal transport on half $j$ have phase $\varphi_j$. Put $s=\ell\lambda$, $q=e^{-s}$ and $\Delta=\varphi_y-\varphi_x$. Its normalized Cauchy vector is $v_j=(1,q e^{-i\varphi_j})/\sqrt{1+q^2}$. On the antiperiodic sewn circle the temporal spectrum is
\[
 \lambda+\frac{i[2\pi(n+\tfrac12)+\Delta]}{2\ell},
 \qquad n\in\mathbb Z.
\]
The symmetric finite-part prescription, calibrated by $\det_\zeta\partial_t=2$, gives the local coefficient $\mathcal S_\lambda(\Delta)=2\cosh(s+i\Delta/2)$. Directly,
\[
 \frac{|\mathcal S_\lambda(\Delta)|}{\mathcal S_\lambda(0)}
 =\frac{|1+q^2e^{-i\Delta}|}{1+q^2}
 =|\langle v_x,v_y\rangle|,\qquad
 \frac{\mathcal S_\lambda(\Delta)}{\mathcal S_\lambda(0)}
 =e^{i\Delta/2}\langle v_x,v_y\rangle.
\]
In this local holonomy chart the residual phase is a half-phase coboundary. The single mode is not a separately gauge-invariant chiral theory. Using $\sum_{n\in\mathbb Z}^{\rm sym}(n+z)^{-1}=\pi\cot(\pi z)$ in the logarithmic spectral derivative gives, away from the zeros,
\[
 d\log\mathcal S_\lambda
 =\tanh(s+i\Delta/2)\left(\ell\,d\lambda+\frac i2\,d\Delta\right),
\]
with the positive Jacobi sign used in~\eqref{rp_bfsource}. At $\lambda=0$ the Cauchy vector is $(1,1)/\sqrt2$ when $\varphi_j=0$, and the reflected coefficient remains $2$. At $\lambda=0,\Delta=\pi$ the vacuum vanishes and a saturated source chart is required. These mode ratios test the overlap and local phase dictionary; absolute half norms and the full three-dimensional seam factor retain the joint zeta normalization of~\eqref{rp_LPexact}. The check does not replace the moving-collar or four-dimensional logarithmic-term arguments.

\subsubsection{Local cancellation with changing collar data}\label{rp_localizationsection}

Let $p=(xy,yz,zx,xx,yy,zz)$ with $\epsilon=(+,+,+,-,-,-)$. The balance can be read directly as
\[
\begin{array}{c|ccc|ccc}
 p&xy&yz&zx&xx&yy&zz\\ \hline
 \epsilon_p&+&+&+&-&-&-\\
 \text{left half}&\Theta x&\Theta y&\Theta z&\Theta x&\Theta y&\Theta z\\
 \text{right half}&y&z&x&x&y&z
\end{array}
\]
Thus each oriented half occurs once with each sign. Freeze physical source profiles on a chart where all six closed borders are invertible. At each actual oriented half $a$ put
\[
 \Delta_a=\sum_{p\ni a}\epsilon_p E_p^{aa},\qquad V_a=\dot P_a.
\]
\begin{lemma}[Balanced finite parts on fixed cuts]\label{rp_localization}
Suppose the six operators are smooth Dirac-type families with fixed principal symbol and common positive-width product collars. Their finite source borders are smooth and supported in the interiors. Then $\Delta_a$ has a smooth kernel up to the boundary, and
\begin{equation}\label{rp_localtrace}
 \sum_p\epsilon_p\operatorname{FP}_{z=0}\operatorname{Tr}(E_p\dot P_p\mathcal W_p^{-z})
       =\sum_a\operatorname{Tr}(\Delta_a V_a).
\end{equation}
All traces on the right are ordinary. Common collar coefficients may vary.
\end{lemma}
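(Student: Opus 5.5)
The plan is to work term by term. I first localize each closed inverse $E_p$ to its two halves, modulo smoothing remainders. Then I use the balance table to cancel the non-smoothing parts, so that every surviving term is trace class. For that class the zeta finite part is just the ordinary trace.

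\emph{Step 1: split the closed bordered inverse on fixed cuts.} For each sewing $p$ built from the left half $\Theta a$ and the right half $b$, write the inverse of the closed bordered operator $P_{R,p}$ in the standard gluing form. On the half $b$ its bulk block is
\[
 E_p^{bb}=G_b+\mathcal K_b\,\mathcal M_p\,L_b+(\text{finite-rank source terms}).
\]
Here $G_b$ is the orthogonal Calder\'on inverse from~\eqref{rp_halfidentities}, and $\mathcal M_p$ is the finite-rank-plus-smoothing boundary matching operator. It is built from $\Pi_{\Theta a},\Pi_b$ and the source planes of Lemma~\ref{rp_sourceplane}. Invertibility of the border on the chart makes $\mathcal M_p$ well defined. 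The collar condition makes the two Cauchy projections at common seam data differ by smoothing operators, by Lemma~\ref{lem:collar_parameter_smoothing}. So the Poisson-sandwiched correction has a kernel that is smooth in the interior and up to the boundary. The source terms are smooth because the borders are supported in the interior.

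\emph{Step 2: cancel via the balance.} The half $a$ occurs once with $\epsilon=+$ and once with $\epsilon=-$. In both occurrences the singular part $G_a$ is identical: it depends only on the half $a$ and its seam data, not on the partner. Hence $G_a$ cancels in $\Delta_a$, and $\Delta_a$ is a difference of Poisson corrections and finite source terms. These are smooth up to the boundary, with the required parameter control supplied by Lemmas~\ref{rp_halfregularity} and~\ref{lem:collar_constrained_smoothing}. This gives the first claim of the lemma.

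\emph{Step 3: pass from closed to half traces.} Decompose $\operatorname{Tr}(E_p\dot P_p\mathcal W_p^{-z})$ along the fixed cut as a sum of half-diagonal pieces plus off-diagonal pieces. The off-diagonal pieces have smooth kernels and carry no pole. The main obstacle sits here. The weight $\mathcal W_p^{-z}$ is a closed-manifold complex power, not a half operator, so I cannot simply restrict it to a half. My first attempt is to use the locality of the heat or resolvent parametrix near the seam: the full symbols of $E_p$ and $\mathcal W_p$ near the cut depend only on the common collar germ, which is shared by all six sewings. By Paycha--Scott Corollary~2.5, the meromorphic part and the local finite-part densities of $\operatorname{Tr}(E_p\dot P_p\mathcal W_p^{-z})$ are integrals of local symbol densities. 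The signed sum of these densities vanishes pointwise, because each oriented half contributes the same local density $\operatorname{res}$-type term with opposite signs. This disposes of the collar region even when the collar coefficients move. What remains is the signed sum of the weighted traces of the smoothing operators $\Delta_a V_a$. For trace-class operators the factor $\mathcal W^{-z}$ converges to the identity in trace norm as $z\to0$, so the finite part equals the ordinary trace. That yields~\eqref{rp_localtrace}.

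The care needed is to take the sharp half restrictions only after the local densities have cancelled. That is the point of cancelling full symbols first, rather than applying an indicator to a pseudodifferential kernel.
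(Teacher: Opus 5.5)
Your proposal is correct. Step~3 is essentially the paper's argument: apply the Paycha--Scott density pointwise, cancel the complete insertion symbols and the $\log\mathcal W_p$ symbols between the two occurrences of each actual half (all six germs agree on the common collar, with matched charts and quantizations), and only then split the surviving smooth diagonal with sharp half indicators.

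You differ genuinely in how you prove that $\Delta_a$ has a smooth kernel up to the boundary. The paper uses only pseudolocality on the closed manifolds. A source border changes the inverse by smoothing terms, so $E_p\in\Psi^{-1}$ carries the full inverse symbol of $P_p$. The two occurrences of a half have identical germs in its interior, and all six agree on the common collar, so the signed kernel is smooth near the diagonal. Off-diagonal elliptic regularity covers the rest, and the cut is an interior hypersurface there, so no boundary analysis is needed.

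You instead use the Calder\'on decomposition $E_p^{bb}=G_b+\mathcal K_b\mathcal M_pL_b+\dots$, in which the partner-independent $G_b$ cancels exactly. This gives an explicit formula for $\Delta_a$, with the structure the paper exploits later in Proposition~\ref{rp_transmissionprop}. It rests, however, on a fact you only gesture at: the $b$-slot of the matching inverse, applied to $L_bf\in W_b^\perp$, must be smoothing. This holds because the matched left Cauchy space is $W_x^\perp$ and $\Pi_x(1-\Pi_b)=(\Pi_x-\Pi_b)(1-\Pi_b)$ is smoothing. Hence the parametrix part of the matching inverse puts nothing into the $b$-slot, and the bordered inverse differs from it only by smoothing and finite-rank terms with smooth legs. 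On the reflected double $\Theta x\cup_b x$ without sources, the right-half block of the inverse is in fact exactly $G_x$.

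Two remarks in Step~3 should be corrected, although your final argument does not use them.
\begin{itemize}
\item Across a sharp cut, the blocks $\mathbf 1_LA\mathbf 1_R$ of a pseudodifferential operator are not smooth near the seam diagonal. They simply contribute nothing when you split $\int k_z(u,u)\,du$ by halves. By contrast, multiplying the separate blocks of $E_p$, $\dot P_p$ and $\mathcal W_p^{-z}$ would produce singular cross terms.
\item What cancels pointwise is only the symbolic part of the Paycha--Scott densities; $\operatorname{TR}_u$ of the smoothing difference is already the diagonal of $\Delta_aV_a$. So you need neither a ``weighted trace of $\Delta_aV_a$'' nor a $z\to0$ trace-norm limit. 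That weighted trace is not even defined, because its two occurrences carry different closed weights $\mathcal W_{p_\pm}$.
\end{itemize}
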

\begin{proof}
An elliptic source border changes the inverse parametrix only by smoothing terms. Thus $E_p\in\Psi^{-1}$, with the full inverse symbol of $P_p$. In a half interior the two occurrences of that actual half have identical complete symbols and opposite signs. Near either cut there is an open full collar on which all six operator germs agree. Their balanced inverse kernel is smooth there. Together with off-diagonal elliptic regularity this proves smoothness up to each half boundary, uniformly on a smaller compact parameter chart.

The weight $\mathcal W_p=P_p^\dagger P_p+\pi_p^+$ is positive invertible of order two. Its complete symbol, and that of $\log\mathcal W_p$, depend only on the local operator germ; low-window modifications are smoothing. Paycha--Scott Corollary 2.5 supplies the exact local density~\cite{paycha_scott}
\[
 \operatorname{FP}\operatorname{Tr}(A\mathcal W^{-z})
   =\int\bigl(\operatorname{TR}_u A
               -\tfrac12\operatorname{res}_{u,0}(A\log\mathcal W)\bigr)du.
\]
Corollary~2.5, Eq.~(2.9), applies to every classical insertion, including the present integer order $-1$. The paragraph following its Eq.~(2.10) states that the corrected combination is a global density independently of the order. Individually $\operatorname{TR}_u A$ need not be invariant here; $\operatorname{res}_{u,0}$ is the log-free degree-$-4$ symbol coefficient. Use matching coordinates, quantizations and smooth cutoffs on half-interior and common-collar patches as a proof convention. In an interior patch pair the two copies of its half; in a collar patch sum all six terms. The paired variations $V=\dot P$ agree with the paired germs, even when the common collar data move. Complete insertion symbols, logarithmic weight symbols and differentiated cutoff symbols therefore cancel. The remaining finite-part density is the ordinary diagonal of the smoothing difference. This also handles all cutoff overlaps: identical differentiated cutoff symbols cancel before a trace is assigned. Equivalently, an additional right cutoff equal to one near the left support changes the operator by an off-diagonal smoothing kernel with zero ordinary diagonal.

Only after this cancellation split the smooth diagonal with the sharp half indicators. On each side of each seam the terms group exactly into the three actual halves present on that side. Smooth partitions and sharp partitions both integrate this same total smooth density; individual half assignments need not agree before regrouping. This proves~\eqref{rp_localtrace}. Changing $b$ alters the common germ on both copies at once, so the argument applies to its tangent too. No sharp cutoff has entered an individual divergent trace.
\end{proof}

For a concrete symbol check, in the flat fixed-symbol setting write $P=p_1(\xi)+p_0(u)$ and $E\sim\sum_{j\ge1}e_{-j}$. The parametrix recursion is
\[
 e_{-1}=p_1^{-1},\qquad
 e_{-(j+1)}=-p_1^{-1}\left(p_0e_{-j}
              +\frac1i\sum_\alpha\partial_{\xi_\alpha}p_1\,
                                      \partial_{u_\alpha}e_{-j}\right).
\]
Every homogeneous term therefore agrees on the identified germs; source borders do not modify it. Matching only $p_1$ would not suffice. This explains precisely what the common-collar hypothesis supplies.

\subsubsection{The transmission trace identity on fixed graph domains}\label{rp_transmissionsection}

This section checks the bulk inverse, source elimination and relative boundary determinant in one set of spaces. The cuts and boundary framings stay fixed while the common collar coefficients vary. In particular, the geometric trace and transmission maps do not acquire a shape derivative.

\begin{proposition}[Transmission identity on a nonzero source chart]\label{rp_transmissionprop}
Let the actual oriented halves be smooth Dirac-type coefficient families with fixed principal symbol and common positive-width product collars. Freeze interior monomial source profiles on a simultaneous nonzero chart. Use the normal reflection dictionary and six signs above. Then the full complex identity~\eqref{rp_compositeidentity} holds, with $\mathfrak B$ the actual six-pairing ratio. All traces in that identity are ordinary. No index-zero restriction is imposed.
\end{proposition}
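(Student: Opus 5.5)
The plan is to compute both sides of \eqref{rp_compositeidentity} explicitly in the fixed graph-domain spaces. On the left we have the balanced ordinary trace $\sum_a\operatorname{Tr}(\Delta_aV_a)$ from Lemma~\ref{rp_localization}. On the right we have the logarithmic derivative of the six-pairing ratio $\mathfrak B$. Both will be reduced to a common finite-plus-smoothing boundary expression.

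\textbf{Step 1: transmission formula for the closed bordered inverse.} For each of the six sewings $p$, I would write the bulk block $E_p$ of the bordered closed inverse in transmission form. On the half $a\in p$, the closed solution of $P_pu=g$ with frozen source rows and columns is the half Calder\'on solution $G_ag$ corrected by a Poisson term $\mathcal K_a w_a$. The boundary datum $w_a$ is fixed by the matching $\Sigma$ at both seams and by the finite source constraints. Eliminating the interior sources as in Lemma~\ref{rp_sourceplane}, the matching becomes the invertible map
\[
 (\mathscr P^{F}_{\ell})^\perp\oplus\mathscr P^{F}_{r}\to\mathcal H .
\]
The diagonal block $E_p^{aa}$ is then $G_a$ plus a term of the form $\mathcal K_a\,\mathsf M_p^{-1}\,(\text{trace of }G_a)$. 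Here $\mathsf M_p$ is the relative transmission operator between the two modified source planes, and the source-contraction blocks $\mathsf R,\mathsf T,\mathsf C$ enter only through finite-rank corrections.

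Because each oriented half occurs once with each sign, the $G_a$ terms cancel exactly in $\Delta_a$. This is the same cancellation as in Lemma~\ref{rp_localization}, now at the level of exact operators. What remains is a smoothing, boundary-factored operator, so $\operatorname{Tr}(\Delta_aV_a)$ becomes an ordinary trace on $\mathcal H$. By cyclicity it equals
\[
 \sum_{p\ni a}\epsilon_p\operatorname{Tr}_{\mathcal H}\!\bigl(\mathsf M_p^{-1}\,\gamma G_aV_a\mathcal K_a\bigr).
\]

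\textbf{Step 2: identify the boundary insertion with the variation of the occupied planes.} Differentiating the identities \eqref{rp_halfidentities} along the coefficient path at fixed cuts should give
\[
 \dot\Pi_a=-\gamma G_aV_a\mathcal K_a\,\Pi_a+(\text{adjoint term}).
\]
The analogous statement for the constrained planes $\mathscr P^F_a$ then follows from the finite source elimination with frozen profiles. On the Fock side, each $B_p$ is the relative Fredholm determinant $\det(\Pi_{\mathscr P_\ell}|_{\mathscr P_r})$ in a common Kato chart, times the half normalizations $h$. Its logarithmic derivative is
\[
 \operatorname{Tr}\bigl(T_p^{-1}\dot T_p\bigr)
\]
plus Kato-connection terms attached to each individual half.

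Two cancellations then close the argument. First, the half normalizations $h_a$ and the Kato frame terms for half $a$ appear with opposite signs in the balanced ratio, so they cancel. This is exactly why only the triangle ratio $\mathfrak B$, and not any single $B_p$, is claimed. Second, $\operatorname{Tr}(T_p^{-1}\dot T_p)$, rewritten with $\dot\Pi$ from the formula above, coincides with the Step~1 expression. For this I would use the identification $\mathsf M_p\simeq T_p$ and the dual-plane formula \eqref{rp_dualplane} to place the reflected half correctly. The signs must be tracked through $\Sigma=-J_\partial$ and through the reflection relations $G_L=G_x^\dagger$, $\mathcal K_L=-L_x^\dagger$. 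Finite rank differences of index, $p\neq q$, are handled by the finite Schur Jacobi identity \eqref{rp_relativejacobi}, so no index-zero assumption enters.

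\textbf{Main obstacle.} I expect the hardest part to be justifying the Kato-chart cancellation and the trace-class status of every individual term when the common collar data $b$ move. The common reference projector changes, its unitary need not be implementable in one fixed Fock space, and individual terms such as $\operatorname{Tr}(T_p^{-1}\dot T_p)$ are only defined relative to a chart. My first attempt would be to pull all six configurations back by the same Kato unitary, so that each $\Pi_a-\Pi_0$ stays smoothing with smooth parameter dependence, as supplied by Lemmas~\ref{rp_halfregularity} and \ref{lem:collar_parameter_smoothing}. I would then show that the chart-dependent one-forms enter linearly with coefficients $\sum_{p\ni a}\epsilon_p=0$. Only after that cancellation would I assign ordinary traces, mirroring the order of operations in Lemma~\ref{rp_localization}.
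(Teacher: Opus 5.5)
Your architecture matches the paper's own: the transmission form of the bordered inverse, cancellation of the $G_a$ terms between the two occurrences of each oriented half, moving the boundary factor across, reduction to matching on the source planes, and comparison with Slater overlaps in one common Kato chart. The gap is that the whole content of this proposition is that every trace in \eqref{rp_compositeidentity} is an ordinary trace. Your Steps~1--2 manipulate traces that do not exist once the common collar moves. There are three concrete problems.

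(i) Smoothness of $\Delta_aV_a=-\mathcal A_a\mathcal N_a$ does not by itself let you cycle the boundary factor; you need $\mathcal N_a$ to be trace class. The paper gets this from the lower bound \eqref{rp_boundedbelow} on the endpoint space $H^{-1/2}(Y)\oplus\mathbb C^{p_a}$, proved from the graph-domain trace theorem and interior dual tests for the source columns. That bound gives a bounded left inverse $\mathcal R_a$, and hence $\mathcal N_a=-\mathcal R_a\Delta_aV_a\in\mathcal S_1$. You placed the computation on boundary $L^2$, where the Poisson map is not bounded below into bulk $L^2$ (the cylinder modes $e^{-\lambda t}e_\lambda$ show this). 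So this step needs either that endpoint or a separate regularity argument.

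(ii) The sources cannot be treated as disposable finite-rank corrections, because they change $M_p^{-1}\dot M_p$. The columns must sit inside $\mathcal A_a(c,z)=\mathcal K_ac-G_aR_{-,a}z$ and the rows inside $C_p$; this is what absorbs the index. Equation \eqref{rp_relativejacobi} does not do this job, since it concerns the physical coefficients $S_R$. The sources are removed afterwards only by the column isomorphism $B^{\rm col}_a$, applied identically on both occurrences of each half so that its Jacobi terms cancel before tracing, together with a finite-rank unitriangular row change.

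(iii) The per-sewing traces $\operatorname{Tr}_{\mathcal H}(\mathsf M_p^{-1}\gamma G_aV_a\mathcal K_a)$ in Step~1, and the sewing-by-sewing coincidence asserted in Step~2, are undefined once $V_a$ reaches the collar. There $\gamma G_aV_a\mathcal K_a$ is a pseudodifferential operator of order $-1$ on the three-dimensional seam, which is not trace class, and likewise $\dot\Pi_a$ is not smoothing. Your proposed repair, that the chart terms enter linearly with coefficients $\sum_{p\ni a}\epsilon_p=0$, is not what happens. After the common pullback $M^{\rm red}_p=UN_p$ one has $\mathcal L_p=N_p^{-1}(U^{-1}\dot U)N_p+N_p^{-1}\dot N_p$. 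So the non-trace-class Kato generator enters conjugated by the sewing-dependent $N_p=\mathrm I+K_p$. The missing step is that $N_p^{-1}(U^{-1}\dot U)N_p-U^{-1}\dot U$ is trace class with zero trace, which follows by cycling only the trace-class factor $K_p$. Only then do the trace-class per-half groups regroup into $\sum_p\epsilon_p\operatorname{Tr}(N_p^{-1}\dot N_p)=\frac{d}{ds}\log\mathfrak B$, with the half scalars cancelling. The paper encodes the same step in three pieces: the single relative operator $\mathscr C$ of \eqref{rp_onecycle}, the operator-level rearrangement \eqref{rp_groupedrearrangement}, and $\operatorname{Tr}(\mathscr C^{-1}D\mathscr C-D)=0$.

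Once (i)--(iii) are supplied, your per-sewing Jacobi bookkeeping is a legitimate and slightly more direct variant of the paper's six-block determinant. Without them, the proof has a hole exactly where the proposition makes its claim.
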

\begin{proof}[Proof, including the endpoint domains]

\par\medskip\noindent\textbf{The domain of the boundary trace.}\enspace

For an actual oriented half $a$, write $\mathscr H_a^\pm=L^2(M_a;E_a^\pm)$ and put
\[
 \mathscr G_a=\{u\in\mathscr H_a^+:P_{a,0}u\in\mathscr H_a^-\},\qquad
 \|u\|_{\mathscr G_a}^2=\|u\|_{L^2}^2+\|P_{a,0}u\|_{L^2}^2.
\]
Because $P_{a,s}-P_{a,0}$ is zero order, all maximal domains are this same space, with locally uniformly equivalent graph norms. The maximal trace theorem gives a bounded map
\begin{equation}\label{rp_graphtrace}
 \gamma:\mathscr G_a\longrightarrow H^{-1/2}(Y),\qquad
 \|\gamma u\|_{H^{-1/2}}\le C\bigl(\|u\|_{L^2}+\|P_{a,s}u\|_{L^2}\bigr).
\end{equation}
These are exactly~\cite[Eqs.~(2.26)--(2.27)]{booss_lesch_zhu}, at Sobolev order zero. Trace is not used as a bounded map from arbitrary $L^2$ fields. For example, $u_N(t,y)=\sqrt N\eta(Nt)h(y)$, with fixed smooth $h$ and $\eta(0)=1$, has bounded bulk $L^2$ norm but unbounded boundary norm even in $H^{-1/2}(Y)$.

Take $\beta_a$ to be the orthogonal Calder\'on projector at the chart center. The earlier Green identity and unique-continuation argument give its invertible regular half problem. Freeze this projector; the problem stays invertible on a sufficiently small parameter chart. Let $G_a$ be its inverse and let $\mathcal K_a$ be its Poisson map, parametrized by the fixed space $\mathscr B_a=\operatorname{ran}\beta_a$ on boundary $H^{-1/2}$. The endpoint Poisson estimate is~\cite[Thm.~5.6 and Def.~5.9]{booss_lesch_zhu}. All derivatives below are bounded derivatives in the displayed spaces:
\begin{align}
 G_a&:\mathscr H_a^-\longrightarrow H^1\cap\ker(\beta_a\gamma)\subset\mathscr G_a,
 &\mathcal K_a&:\mathscr B_a\longrightarrow\mathscr G_a,\notag\\
 \mathcal K_{a,s}&=\mathcal K_{a,0}-G_{a,s}(P_{a,s}-P_{a,0})\mathcal K_{a,0},
 &\dot G_a&=-G_aV_aG_a,\quad\dot{\mathcal K}_a=-G_aV_a\mathcal K_a.\label{rp_auxvariation}
\end{align}
These are the fixed-auxiliary maps constructed in Lemma~\ref{rp_halfregularity}, not the varying orthogonal maps there. For the last identity, first differentiate the formula for $\mathcal K_{a,s}$, then use the resolvent identity for $G_a$. Compactly supported source columns $R_{-,a}:\mathbb C^{p_a}\to\mathscr H_a^-$ and rows $R_{+,a}:\mathscr H_a^+\to\mathbb C^{q_a}$ are frozen on the chart. Set
\[
 \mathscr D_a=\mathscr B_a\oplus\mathbb C^{p_a},\qquad
 \mathcal A_a(c,z)=\mathcal K_ac-G_aR_{-,a}z.
\]
Thus $\mathcal A_a:\mathscr D_a\to\mathscr G_a$ is bounded, $P_a\mathcal A_a(c,z)=-R_{-,a}z$, and $\dot{\mathcal A}_a=-G_aV_a\mathcal A_a$ in the graph norm.

For a sewing $p$ let $C_p$ stack the fixed trace mismatch and both source-row maps. Its domain is $\mathscr G_L\oplus\mathscr G_R$, and its target is
$\mathscr Y_p=H^{-1/2}(Y)\oplus\mathbb C^{q_L}\oplus\mathbb C^{q_R}$. It is bounded there and independent of the path parameter. Put
\[
 G_0=G_L\oplus G_R,\quad\mathcal A=\mathcal A_L\oplus\mathcal A_R,
 \qquad M_p=C_p\mathcal A:\mathscr D_L\oplus\mathscr D_R\longrightarrow\mathscr Y_p.
\]
Consequently $C_pG_0$ and $C_p\mathcal A$ are bounded. No factorization below applies $C_p$ to an arbitrary $L^2$ field.

\par\medskip\noindent\textbf{The bulk inverse and an ordinary trace.}\enspace

On a nonzero source chart $M_p$ is invertible on these endpoint spaces. To see the Sobolev realization explicitly, at the regular level this is the usual equivalence of the closed source border and transmission. The boundary matching operator is elliptic of order zero between Calder\'on ranges with finite source blocks: on the common collar the two oriented principal Cauchy ranges are complementary, so the mismatch principal symbol is an isomorphism. The source blocks are smoothing. Its kernel and cokernel are smooth, so its Fredholm index and nullities are independent of the boundary Sobolev order. Invertibility at the regular level therefore implies invertibility on $H^{-1/2}$ as well. This uses the full elliptic matching problem, not surjectivity of the unrestricted trace onto $H^{-1/2}$.

For closed bulk forcing $f$ and zero source-row target, solve on the two halves as $u=G_0f+\mathcal A d$. The matching and row equations give $M_pd=-C_pG_0f$. Thus the bulk block $E_p$ of the original closed bordered inverse satisfies the exact bounded identity
\begin{equation}\label{rp_transmissioninverse}
 E_p=G_0-\mathcal A M_p^{-1}C_pG_0,
 \qquad\dot M_p=-C_pG_0V\mathcal A.
\end{equation}
It applies at arbitrary index; the finite source coordinates are already part of $d$. The second identity follows from~\eqref{rp_auxvariation} with $C_p$ fixed on the common graph domain.

The map $\mathcal A_a$ is bounded below into the \emph{bulk} Hilbert space, a stronger fact than its graph-domain bound. Its source columns are independent because the sewn border is invertible. Choose compact interior dual tests $\rho_j$ with $\langle\rho_j,R_{-,a}z\rangle=z_j$. For $u=\mathcal A_a(c,z)$, integration by parts has no boundary term and gives
\[
 |z_j|=|\langle P_a^\dagger\rho_j,u\rangle|\le C_j\|u\|_{L^2}.
\]
Now $w=u+G_aR_{-,a}z=\mathcal K_ac$ is homogeneous. Apply~\eqref{rp_graphtrace} to $w$ and use $\beta_a\gamma \mathcal K_ac=c$. This yields
\begin{equation}\label{rp_boundedbelow}
 \|c\|_{H^{-1/2}}+|z|\le C\|\mathcal A_a(c,z)\|_{L^2}.
\end{equation}
Hence $\mathcal A_a$ has closed range in $\mathscr H_a^+$ and a bounded left inverse
$\mathcal R_a=(\mathcal A_a^*\mathcal A_a)^{-1}\mathcal A_a^*$. It varies smoothly on a smaller chart. This construction of $\mathcal R_a$ uses only bounded Hilbert-space maps and does not extend trace to all $L^2$.

Let $\pi_a$ select the actual-half parameter component and $\iota_a$ insert its graph-domain component. Define the bounded map
\[
 \mathcal N_a=\sum_{p\ni a}\epsilon_p\pi_aM_p^{-1}C_p\iota_aG_aV_a:
                   \mathscr H_a^+\longrightarrow\mathscr D_a.
\]
Balancing the two copies of each actual half in~\eqref{rp_transmissioninverse} cancels its $G_a$ term. The exact identities are
\begin{align}
 \Delta_aV_a&=-\mathcal A_a\mathcal N_a,
 &X_a:=\sum_{p\ni a}\epsilon_p(M_p^{-1}\dot M_p)^{aa}
       &=-\mathcal N_a\mathcal A_a.\label{rp_twoproducts}
\end{align}
Section~\ref{rp_localizationsection} makes $\Delta_aV_a$ smoothing, hence trace class on $\mathscr H_a^+$. The left inverse proves
$\mathcal N_a=-\mathcal R_a\Delta_aV_a\in\mathcal S_1(\mathscr H_a^+,\mathscr D_a)$. Cross-space cyclicity is now an ordinary trace-class identity:
\begin{equation}\label{rp_cyclicity}
 \operatorname{Tr}_{\mathscr H_a^+}(\Delta_aV_a)=\operatorname{Tr}_{\mathscr D_a}X_a.
\end{equation}
No trace of $C_p$, $M_p^{-1}\dot M_p$, or an unbalanced half generator is needed separately. Smooth kernels on compact halves, with their derivatives, give trace-norm differentiability by factoring through a Sobolev embedding of order greater than four.

The endpoint is natural. For $P=\partial_t+B$ on a product cylinder, a positive tangential eigenmode $B e_\lambda=\lambda e_\lambda$ gives $u_\lambda=e^{-\lambda t}e_\lambda$. Then
\[
 \|u_\lambda\|_{L^2}^2=\frac{1-e^{-2\lambda\ell}}{2\lambda},\qquad
 \|\gamma u_\lambda\|_{H^{-1/2}}^2
       =(1+\lambda^2)^{-1/2}(1+e^{-2\lambda\ell}).
\]
Here the Sobolev norm is defined by $1+B^2$. Their ratio tends to two; the boundary $L^2$ to bulk $L^2$ norm ratio instead diverges. Zero tangential modes give finite constants and are retained.

\par\medskip\noindent\textbf{Eliminating the source constraints.}\enspace

The grouped trace in~\eqref{rp_cyclicity} is unchanged by a smooth bounded column isomorphism chosen identically on both occurrences of each actual oriented half. A column change $B_a^{\rm col}$ conjugates its grouped $X_a$ and adds $ (B_a^{\rm col})^{-1}\dot B_a^{\rm col}$ with opposite signs, which cancel \emph{before} tracing. A finite-rank unitriangular row change has zero total Jacobi trace. These assertions concern grouped or finite-rank operators only.

Set $\Lambda_a=R_{+,a}\mathcal A_a$ and $H_a=\gamma\mathcal A_a$. Invertibility of $M_p$ makes $\Lambda_a$ onto: prescribe its source-row target arbitrarily and all other targets zero. Choose fixed smooth preimages $J_{a,0}$ of a finite source-row basis at the chart center. They exist because smooth boundary parameters are dense and $\Lambda_a$ is onto a finite-dimensional target. Then
$J_{a,s}=J_{a,0}(\Lambda_{a,s}J_{a,0})^{-1}$ is a smooth finite right inverse on a smaller chart, with smooth boundary columns. The map $H_a$ on $\ker\Lambda_a$ is bounded below: insert this half and a zero partner into the inverse bound for $M_p$. Its closed range is the intrinsic constrained trace plane $\mathscr P_a^F$ in Lemma~\ref{rp_sourceplane}, independent of the auxiliary boundary domain. Thus
\[
 B_a^{\rm col}(l,\alpha)
   =(H_a|_{\ker\Lambda_a})^{-1}l+J_a\alpha,
 \quad H_aB_a^{\rm col}(l,\alpha)=l+H_aJ_a\alpha,
 \quad\Lambda_aB_a^{\rm col}(l,\alpha)=\alpha
\]
is a bounded half column isomorphism. Maps on the varying trace ranges are written in local bounded smooth trivializations; the common choice below supplies such trivializations explicitly.

Here is a direct regularity check for those constrained ranges. Work first on the regular boundary level
$\mathscr D_a^0=\beta_a L^2(Y)\oplus\mathbb C^{p_a}$. The same elliptic matching inverse is bounded at this level, so $H_a|_{\ker\Lambda_a}$ is bounded below in boundary $L^2$, not just in $H^{-1/2}$. Surjectivity of the finite row map makes
\[
 E_s^{\ker}=\mathrm I-\Lambda_s^*(\Lambda_s\Lambda_s^*)^{-1}\Lambda_s
\]
the smooth orthogonal projection onto its kernel. The rows are smooth boundary functionals; thus $E_s^{\ker}-\mathrm I$ has finite rank and smooth columns. Its Kato transport $U_s^{\ker}$ identifies these kernels smoothly on every Sobolev level. On the fixed kernel at the chart center set
\[
 T_s^F=H_sU_s^{\ker}|_{\ker\Lambda_0},\qquad
 \Pi_s^F=T_s^F\bigl((T_s^F)^*T_s^F\bigr)^{-1}(T_s^F)^*.
\]
The lower bound makes the inverse legitimate. As in~\eqref{rp_projectionregularity}, its elliptic order-zero parametrix gives smoothness on every boundary Sobolev level. Lemma~\ref{lem:collar_constrained_smoothing} supplies the explicit finite-rank Gram projection for these constraints. The combined surjectivity and lower bound have just been verified, and interior source supports make all boundary legs smooth with parameter derivatives. It follows in the full smoothing topology that $\Pi_s^F-\Pi_s$ is a smooth finite-rank smoothing family. This construction never differentiates a basis of $\operatorname{ran}\mathsf R$ or $\operatorname{ran}\mathsf C$, whose ranks may jump. It proves both the needed constrained-range trivializations and the smooth smoothing differences for the relative determinant below.

For a reflected pair these changes put the original matching border into the exact block form
\[
 \begin{pmatrix}
  \mathcal T_{ij}^{\rm src}&H_{\Theta i}J_{\Theta i}&-H_jJ_j\\
  0&\mathrm I&0\\0&0&\mathrm I
 \end{pmatrix},\qquad
 \mathcal T_{ij}^{\rm src}(l,r)=l-r,
 \quad (l,r)\in(\mathscr P_i^F)^\perp\oplus\mathscr P_j^F.
\]
Left multiplication by the finite-rank row matrix with first row
$(\mathrm I,-H_{\Theta i}J_{\Theta i},H_jJ_j)$ and identity diagonal removes the two upper-right blocks. The finite identity blocks have zero derivative. The grouped coordinate invariance therefore reduces~\eqref{rp_cyclicity} to matching on these two source planes. Constant permutations and the fixed minus sign on the right are incorporated identically on every occurrence. No determinant of an infinite sign or half-coordinate map is assigned.

\par\medskip\noindent\textbf{One relative determinant and the six-term rearrangement.}\enspace

Choose as common reference one positive modified plane along the chart. Its projector has a Kato transport $U(s)$ which is unitary on boundary $L^2$ and bounded on every required Sobolev space. In pulled-back coordinates let $\Pi_*$ be the fixed reference projector and $\Pi_i$ the three positive projectors. Their differences are smoothing. Nonzero Slater overlaps and equal charge imply $\|\Pi_i-\Pi_*\|<1$; let $Q_i$ be the corresponding direct rotations. They satisfy
\[
 Q_i=\mathrm I+\Psi^{-\infty},\qquad Q_i\Pi_*Q_i^*=\Pi_i,\qquad\det_FQ_i=1.
\]
The last identity follows by writing a direct rotation as the exponential of an off-diagonal skew-adjoint smoothing operator, whose ordinary trace is zero. In these coordinates the reduced matching maps are
\begin{equation}\label{rp_reducedmatching}
 M_{ij}^{\rm red}=U N_{ij},\qquad
 N_{ij}=Q_i(\mathrm I-\Pi_*)+Q_j\Pi_*=\mathrm I+\Psi^{-\infty}.
\end{equation}
This is an identification of one-particle spaces and their polarized Fock fibers. It does not implement $U$ inside one fixed Fock representation.

Define three-block cyclic and diagonal sums $\mathbb M_{\rm cyc},\mathbb M_{\rm diag}$ from the reduced $M_{xy},M_{yz},M_{zx}$ and $M_{xx},M_{yy},M_{zz}$, respectively, and define $\mathbb N$ similarly. The single relative operator is
\begin{equation}\label{rp_onecycle}
 \mathscr C=\mathbb M_{\rm diag}^{-1}\mathbb M_{\rm cyc}
 =\mathbb N_{\rm diag}^{-1}\mathbb N_{\rm cyc}=\mathrm I+\Psi^{-\infty}.
\end{equation}
The common moving $U$ cancels before differentiation. All remainders and their derivatives are trace class on the endpoint boundary space. On a three-dimensional seam, for instance,
\[
 \|S\|_1\le\|(\mathrm I+\Delta_Y)^{-2}\|_1
                 \|(\mathrm I+\Delta_Y)^2S\|_{L^2\to L^2}.
\]
Conjugate by the fixed Sobolev isometry to obtain the same conclusion on $H^{-1/2}$. The Fredholm determinant is independent of that Sobolev realization: generalized eigenvectors of a smoothing operator with nonzero eigenvalue are smooth, and hence its nonzero eigenvalues and algebraic multiplicities agree on all these spaces.

Compute the source interpretation on boundary $L^2$, using the Sobolev independence just established. Multiplying $N_{ij}$ by $Q_i^*$ gives an operator triangular in $\operatorname{ran}(\mathrm I-\Pi_*)\oplus\operatorname{ran}\Pi_*$, with diagonal blocks $\mathrm I$ and $\Pi_*Q_i^*Q_j\Pi_*$. Therefore
\[
 \det_FN_{ij}=\det_F(\Pi_*Q_i^*Q_j\Pi_*|_{\Pi_*\mathcal H}),
\]
which is the overlap of the normalized Slater lifts $\upsilon_i,\upsilon_j$. Write the actual half-Wick vectors as $V_i[F_i]=s_i\upsilon_i$, with $s_i\ne0$. Every half scalar, including its Quillen normalization, phase and finite source cofactor, cancels once conjugated and once unconjugated in the six-term ratio. Thus
\begin{equation}\label{rp_boundaryratio}
 \det_F\mathscr C
   =\frac{B_{xy}B_{yz}B_{zx}}{B_{xx}B_{yy}B_{zz}}=\mathfrak B.
\end{equation}
All finite exterior-order signs remain in the actual half vectors.

To check the rearrangement without tracing divergent blocks, let indices $i$ run cyclically through $x,y,z$, let $P_-=\mathrm I-\Pi_*$, $P_+=\Pi_*$, and put
$\mathcal L_{ij}=(M_{ij}^{\rm red})^{-1}\dot M_{ij}^{\rm red}$. The grouped blocks for the negative and positive actual half $i$ are
\begin{align*}
 \widetilde X_i^-&=P_-(\mathcal L_{i,i+1}-\mathcal L_{ii})P_-,\\
 \widetilde X_i^+&=P_+(\mathcal L_{i-1,i}-\mathcal L_{ii})P_+.
\end{align*}
These are trace class by~\eqref{rp_cyclicity} and the legitimate coordinate changes. The difference
$\mathbb L_{\rm cyc}-\mathbb L_{\rm diag}$ is trace class too, because every $N_{ij}$ is identity plus smoothing and the common $U^{-1}\dot U$ is bounded. Compress its three diagonal blocks to $P_-$ and $P_+$. In the positive compression the identity
\[
 \sum_iP_+(\mathcal L_{i,i+1}-\mathcal L_{ii})P_+
 =\sum_iP_+(\mathcal L_{i-1,i}-\mathcal L_{ii})P_+
\]
is an equality of finite sums of bounded operators \emph{before} tracing. The negative compression already has the desired ordering. Consequently
\begin{equation}\label{rp_groupedrearrangement}
 \operatorname{Tr}(\mathbb L_{\rm cyc}-\mathbb L_{\rm diag})
   =\sum_i\bigl(\operatorname{Tr}\widetilde X_i^-+\operatorname{Tr}\widetilde X_i^+\bigr)
   =\sum_a\operatorname{Tr} X_a.
\end{equation}
No cyclic permutation of individual non-trace-class traces has occurred.

Finally put $D=\mathbb M_{\rm diag}^{-1}\dot{\mathbb M}_{\rm diag}$. From $\mathbb M_{\rm cyc}=\mathbb M_{\rm diag}\mathscr C$,
\[
 \mathbb L_{\rm cyc}-\mathbb L_{\rm diag}
 =\mathscr C^{-1}\dot{\mathscr C}+(\mathscr C^{-1}D\mathscr C-D).
\]
The second term has zero ordinary trace. Indeed write $\mathscr C=\mathrm I+K$, with $K$ trace class, and cycle only trace-class factors:
\[
 \operatorname{Tr}(\mathscr C^{-1}DK-\mathscr C^{-1}KD)
 =\operatorname{Tr}(DK\mathscr C^{-1}-D\mathscr C^{-1}K)=0.
\]
Combining~\eqref{rp_cyclicity},~\eqref{rp_groupedrearrangement}, Fredholm Jacobi and~\eqref{rp_boundaryratio} proves the required full complex identity
\begin{equation}\label{rp_compositeidentity}
 \boxed{\displaystyle
 \sum_a\operatorname{Tr}(\Delta_aV_a)
     =\operatorname{Tr}(\mathscr C^{-1}\dot{\mathscr C})
     =\frac d{ds}\log\mathfrak B.}
\end{equation}
Only bounded compositions on fixed graph or endpoint spaces and ordinary trace-class cyclicity were used.

Combine~\eqref{rp_compositeidentity} with the local finite-part cancellation and the full BF source differential:
\begin{equation}\label{rp_fullratio}
 \frac d{ds}\log\mathfrak S
 =\sum_a\operatorname{Tr}(\Delta_aV_a)
 =\frac d{ds}\log\mathfrak B.
\end{equation}
Its real part checks the balanced norm ratio; the pointwise modulus comparison for each sewing was proved separately by Loya--Park and saturated sources. Its imaginary part gives $d\tau/ds=0$. Interior and common-collar variations are both allowed. Source independence identifies the scalar on overlaps of frozen source charts; a compact ambient triple path is covered by finitely many such charts. This proves the transmission step needed to transport the triangle to its real reference, within the stated smooth Dirac and Calder\'on framework.

\end{proof}

\subsubsection{Absolute reference phases in all relative classes}\label{rp_referencesec}

Fix boundary framings. An induced $SU(5)$ half bundle over $[0,\ell]\times T^3$ is absolutely trivial, but its framed relative class is $n\in H^4(M_+,\partial M_+;\mathbb Z)\cong\mathbb Z$. This follows because the four-dimensional Postnikov truncation of $BSU(5)$ is $K(\mathbb Z,4)$. The sewn class and index conventions are
\[
 k=n_y-n_x,\qquad \ind P_{xy}=-12k.
\]
One generation contributes $3+1=4$ to the index coefficient. No claim classifies the components of the admissible lattice set by $n$.

\begin{corollary}[Equal-class reflection positivity]\label{cor:collar_equalclass}
At fixed seam data, the source kernel restricted to any one relative class is positive semidefinite. In particular, the zero-induced-charge restriction is reflection positive. The unrestricted functional also has a nonnegative quadratic form for every homogeneous observable of a fixed net source degree; mixed degrees require the all-class argument below.
\end{corollary}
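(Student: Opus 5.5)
The plan is to reduce positivity within one relative class to the phase-triangle constancy already established, and then to use Lemma~\ref{lem:collar_coboundary}. Fix seam data $b$ and a relative class $n$. For halves $x,y$ in class $n$ the sewn charge is $k=n_y-n_x=0$. Proposition~\ref{rp_allsource} and Corollary~\ref{rp_unitcomparison} give $\mathsf S_{xy}(A,B)=c(x,y)\langle\Psi_x(A),\Psi_y(B)\rangle$ with $|c|=1$, and the reflected diagonal \eqref{rp_diagonal} gives $c(x,x)=1$. The remaining input is $\tau(x,y,z)=1$ for every equal-class triple.

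To get this, I would join any triple $(x,y,z)$ of class $n$ to a common reference triple by a smooth path of ambient $SU(5)$ half connections with common product collars, staying in class $n$. The framed relative class is the only obstruction, and the ambient connection space on a fixed framed bundle is affine. Along this path I would cover by the simultaneous nonzero monomial charts of Lemma~\ref{rp_monomialcharts}. The equal-charge requirement there is automatic, since all three halves have the same vacuum charge $-12n$. Then \eqref{rp_fullratio} gives $d\tau/ds=0$, so $\tau$ is constant along the path.

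At the endpoint I would take $x=y=z$, the diagonal triple. There $\tau=c(x,x)^3=1$, so no reference phase is needed. This is what makes the equal-class case easier than the all-class case. Moving one half at a time from the diagonal point $(x,x,x)$ to $(x,y,z)$ along class-$n$ paths reaches any equal-class triple, and $\tau=1$ holds throughout.

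With $\tau\equiv1$, Lemma~\ref{lem:collar_coboundary} rephases $\Psi_x\mapsto u(x)\Psi_x$. The kernel then becomes the Gram kernel $\langle u(x)\Psi_x(A),u(y)\Psi_y(B)\rangle$. For $F=\sum_\alpha g_\alpha(x)A_\alpha$ with half-local continuous coefficients, I would use the bosonic factorization \eqref{eq:collar_bosons}:
\[
 \omega_{\mathrm c}((\Theta F)F)\big|_{n}=\int d\rho(b)\,
 \Bigl\|\int d\nu_b(x)\,\mathbf 1_{n}(x)\,u(x)\sum_\alpha g_\alpha(x)\Psi_x(A_\alpha)\Bigr\|^2\ge0 .
\]
This requires measurability and boundedness of the integrand. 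Lemma~\ref{rp_finiteinsertions} and compactness supply these, together with measurability of $u$; I would take $u$ locally constant on chart cover pieces, or obtain it by Borel selection. Taking $n$ with zero induced closed charge gives the second claim.

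For the unrestricted functional and homogeneous $F$ of fixed net degree $d$, the coefficient $\mathsf S_{xy}$ vanishes unless $d(F_x)$ and $d(F_y)$ satisfy the index constraint \eqref{rp_reviewindex}. For the reflected word the net degree is $d(F_y)-d(F_x)=-12(n_y-n_x)$, so equal degrees force $n_x=n_y$. The quadratic form is therefore a sum over classes of the equal-class forms above, each nonnegative.

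The main obstacle I expect is the path step. It requires that class-$n$ ambient half connections with the fixed seam collar are path connected, and that the chart covers glue the constancy of $\tau$. I would try contracting linearly within the affine space of connections on the framed bundle with the collar held fixed. Any residual issue would sit in keeping the source charts nonzero, which Lemma~\ref{rp_monomialcharts} handles locally.
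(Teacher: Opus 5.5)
Your proposal is correct and follows the paper's proof. You identify the framed class-$n$ bundles, move $y$ and $z$ affinely to $x$ with the collar held fixed so that triangle transport gives $\tau(x,y,z)=\tau(x,x,x)=1$ without any weak-$SU(2)$ reference, combine this with $|c|=1$, the reflected diagonal and Lemma~\ref{lem:collar_coboundary}, and use $d(F_y)-d(F_x)=-12(n_y-n_x)$ to reduce homogeneous observables to the equal-class blocks.

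The only real difference is the integration step. The paper averages finite Gram sums over independent samples from $\nu_b/M_b$ precisely to avoid the measurable choice of $u$ and of Slater phases that your Bochner integral needs. Your fallback of taking $u$ locally constant on chart pieces does not work, since $u$ moves with the phase convention for $\Psi_x$. However, $u(x)\Psi_x(A)$ is canonical and weakly measurable through its pairings $\mathsf S_{x_*x}(B,A)$ with the vectors $\Psi_{x_*}(B)$, whose span is dense. One small correction: the zero-induced-charge restriction is the block-diagonal sum over \emph{all} classes $n$, not a choice of a single $n$.
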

\begin{proof}
If $n_x=n_y=n_z=n$, identify the three fixed framed bundles. Connections with the same product collar form an affine space. Move $y$ and $z$ to $x$ in that space while holding the collar fixed. Triangle transport gives
$\tau(x,y,z)=\tau(x,x,x)=1$. The unit-modulus comparison, the reflected diagonal and Lemma~\ref{lem:collar_coboundary} therefore give a Gram kernel within each class, without a weak-$SU(2)$ reference evaluation. The finite-matrix integration argument in Section~\ref{rp_conclusion} applies to each positive block.

For a nonzero coefficient the selection rule is
\[
 d(F_y)-d(F_x)=-12(n_y-n_x).
\]
Thus an observable of a single net degree has no unequal-class contributions; the vacuum is the degree-zero example. A polynomial mixing net degrees can have unequal-class coefficients and needs the all-class argument below. Index zero alone would not have established the positivity of its equal-class block.
\end{proof}

\paragraph{Unequal classes and their real references.}
First move the common framed collar potential to zero, using $b_s=(1-s)b$ and the same extension cutoff on each half. Preserve a smaller fixed product collar. At $b=0$, interpolate each interior connection affinely on its fixed framed bundle to a weak-$SU(2)$ reference. The ambient connection space with fixed collars is affine. The inclusion $SU(2)\hookrightarrow SU(5)$ induces an isomorphism on $\pi_3$; thus every relative $n$ has a reference
\[
 A_n=-q(t)\,dg_ng_n^{-1},\qquad \deg(g_n:T^3\to SU(2))=n,
\]
where $q$ is zero and one near the two ends and the boundary frames are $1,g_n$. The low-dimensional homotopy assertion follows successively from $SU(j+1)/SU(j)=S^{2j+1}$ for $j\ge2$; the same homotopy calculation gives the stated four-dimensional bundle classification. Both framed collar potentials are zero. These paths cover every class triple. They need not remain in the admissible-link image, and all lemmas above apply on the smooth ambient path.

\begin{lemma}[Real source charts and positive full references]\label{rp_realref}
At the weak-$SU(2)$ references the triangle of the actual comparison phases is one for every relative class triple, in the prescribed ambient BF trivialization.
\end{lemma}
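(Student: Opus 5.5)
The plan is to use a real structure to force each comparison phase $c(x,y)$ at the reference configurations to be real and positive. The triangle $\tau$ is then a product of three positive unit numbers, hence one. I would work with the weak-$SU(2)$ references $A_n=-q(t)\,dg_ng_n^{-1}$ and zero collar data. On this locus the representation decomposes as $\mathcal R|_{SU(2)}=12\,\mathbf2\oplus21\,\mathbf1$. The doublet is pseudoreal, so the twelve doublet copies pair into six quaternionic pairs, and the singlets see a free operator.

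The first step is to construct an antiunitary involution $J_{\mathbb R}$ on the even tensor power $\mathbf2^{\oplus12}$, built from $\epsilon$ acting on the doublet index together with a quaternionic pairing of the copies so that $J_{\mathbb R}^2=+1$. On the singlets $J_{\mathbb R}$ is ordinary complex conjugation. I would check the following properties, for every relative class $n$:
\begin{itemize}
\item $J_{\mathbb R}$ commutes with the half Weyl operators $P_x$.
\item It commutes with their orthogonal Calder\'on inverses $G_x$ and Poisson maps $\mathcal K_x$.
\item It intertwines the reflection dictionary $\Theta$, i.e.\ is compatible with $\Sigma=-J_\partial$.
\end{itemize}
Since $A_n$ takes values in $\mathfrak{su}(2)$, commutation with $P_x$ reduces to the fact that $\epsilon\bar\sigma\epsilon^{-1}=-\sigma$ on the gauge index, combined with a compatible Weyl-spinor conjugation. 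These are exactly the hypotheses of the real-source-chart clause in Lemma~\ref{rp_monomialcharts}. Applying that clause gives real monomial charts in which all six half states, all boundary pairings $B_{ij}$, and all physical coefficients $\mathsf S_{ij}$ are fixed by the induced real structure. Unequal classes enter only through the fixed charge condition $d(F_x)+12n_x=0$.

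The second step shows that in these real charts the physical coefficient $\mathsf S_{xy}$ is real. Two inputs go into this. First, the real structure makes the high-mode zeta determinant positive and the finite Schur/Berezin low-mode factors real. Second, the prescribed ambient BF trivialization sends the positive real twelfth-power reference to $1$. Because BF transport respects the real structure, the even-tensor-power orientation is parallel on the real locus, so the trivialization maps real determinant-line elements to real numbers. Similarly $B_{xy}$ is the overlap of real Slater rays and is real. The ratio $c(x,y)=\mathsf S_{xy}/B_{xy}$ is therefore real. By Corollary~\ref{rp_unitcomparison} it also has unit modulus, so $c(x,y)=\pm1$, and hence $\tau(x,y,z)=\pm1$ at the references.

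The third step removes the sign, and I expect it to be the main obstacle. One route is to deform, within the real locus with collars fixed, through configurations where the relevant charted coefficients stay nonzero. Continuity would then make $\tau=\pm1$ constant, and it could be evaluated at $x=y=z$, where $\tau=1$ by the reflected diagonal \eqref{rp_diagonal}. The difficulty is that different classes $n_x\ne n_y$ cannot be connected by such a deformation. Instead I would exploit that signs of $c$ change by coboundaries of half-state phases, which leaves $\tau$ unchanged. On each class I would fix the real half vectors $\Psi_x$ by requiring a positive saturated CAR cofactor, as in Lemma~\ref{rp_saturated}. I would then compute the sign of $\mathsf S_{xy}$ through the Berezin-order convention and the twelfth-power orientation. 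The aim is to show that this sign factors as $s(n_x)s(n_y)$, using that the index $-12k$ is even and that the source degree is a multiple of twelve, so that order signs such as $(-1)^{n_+(n_+-1)/2}$ split additively. A factorized sign is a coboundary and cancels in $\tau$, giving $\tau=1$. If this explicit sign bookkeeping fails, the fallback is a $\mathbb Z_2$-valued flatness argument: view $\tau$ on class triples as a real-analytic continuous function on the connected real reference family with fixed classes, and evaluate it at a sewing where both halves are gauge transforms of one another by $g_n$.
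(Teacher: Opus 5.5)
\textbf{The gap is in your Step 3, which is the decisive step.} After Step 2 you only know $c(x,y)=\pm1$, and neither proposed route closes this for unequal classes.

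The claimed factorization of the sign as $s(n_x)s(n_y)$ is just the statement $\tau=1$ restricted to signs, and you give no mechanism for it. Parity of the index and of the source degree controls Berezin reordering signs inside an individual coefficient. It says nothing about the relative sign between the physical coefficient $\mathsf S_{xy}$ and the independently constructed Fock pairing $B_{xy}$, which is what $c$ measures. Normalizing each $\Psi_x$ by a saturated cofactor against some partner only moves signs into a coboundary, which $\tau$ cannot see, so it cannot prove $\tau=1$.

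The fallback also fails, as you partly note. Halves of different relative classes cannot be deformed into one another at fixed framed seam data. A gauge transformation by $g_n$ changes the boundary framing, so it does not give a comparison at common seam data.

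The missing idea is to use the multiplicity twelve directly, not through index parity:
\begin{itemize}
\item Take a real single-doublet source list from the real clause of Lemma~\ref{rp_monomialcharts}, with the per-copy charge condition.
\item Repeat it identically in all twelve doublet copies, and leave the $21$ singlets uninserted.
\item Direct sums tensor the determinant lines and add the BF connections. So each full physical coefficient is the twelfth power of a real single-copy coefficient, times positive free-singlet factors, times a cross-copy reordering sign. That sign is $(\pm1)^{\binom{12}{2}}=(\pm1)^{66}=+1$ for every source parity.
\item Each $B_{xy}$ is likewise the twelfth power of a nonzero real single-copy pairing, with the same cross-copy bookkeeping.
\end{itemize}
Both quantities are therefore strictly positive, so $c(x,y)>0$. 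With $|c(x,y)|=1$ from Corollary~\ref{rp_unitcomparison}, you get $c(x,y)=1$ for every pair, including unequal classes, and the triangle is trivially one. No deformation or index-parity sign bookkeeping is needed. The same evenness of twelve absorbs the possible sign holonomy of the single-copy real line. That is why the prescribed reference lives on the twelfth tensor power, and why you need not claim that the trivialization orients a real line on the full representation by itself.

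\textbf{Your Step 1 also needs correcting, and the correction is what makes the argument above available.}
\begin{itemize}
\item The real structure should act within a single doublet Weyl field: $J_{\mathbb R}=(i\sigma_2\otimes i\tau_2)K_0$. The pseudoreal spinor factor and the pseudoreal gauge factor already give $J_{\mathbb R}^2=+1$.
\item If you add a quaternionic pairing of copies on top of the spinor conjugation and $\epsilon$, the resulting antiunitary squares to $-1$. Mixing copies would in any case destroy the per-copy reality that the twelfth-power argument uses.
\item On the singlets, ordinary complex conjugation does not commute with $\partial_t+i\sigma_j\partial_j$. Up to a phase the only commuting antiunitary there is $i\sigma_2K_0$, which squares to $-1$.
\item You do not need a real structure on the singlets at all. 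They are uninserted free factors, invertible by temporal antiperiodicity and positive by the prescribed convention.
\end{itemize}
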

\begin{proof}
The representation restricts as $R|_{SU(2)}=12\,\mathbf2\oplus21\,\mathbf1$. For a single doublet use normal Pauli coordinates and
\[
 J_{\mathbb R}=(i\sigma_2\otimes i\tau_2)K_0,\qquad J_{\mathbb R}^2=\mathrm I.
\]
This real structure is distinct from the boundary-sign matrix $J_\partial$. The necessary conjugation identities are
\[
 (i\sigma_2)\overline{\sigma_j}(i\sigma_2)^{-1}=-\sigma_j,
 \qquad (i\tau_2)\overline{A_\alpha}(i\tau_2)^{-1}=A_\alpha.
\]
Complex conjugation also changes the sign of $i$, so $J_{\mathbb R}$ commutes with $P=\partial_t+i\sigma_jD_j+A_t$ and with its adjoint. The two pseudoreal squares multiply to $+1$. The two endpoint signs in $J_\partial$ and $\Sigma$ are real scalar matrices, so they commute with $J_{\mathbb R}$ as well. The normal antilinear reflection exchanges $q$ and $\bar q$ with real signs; hence its source action commutes with the induced real structure.

The Cauchy spaces, orthogonal projections, Poisson maps and Calder\'on inverses are preserved by $J_{\mathbb R}$, by uniqueness. Choose a common real reference polarization and real exterior bases. Conjugating their wedge coefficients defines an antiunitary real structure on every charged Fock sector, with square one. It sends $a(h)$ to $a(J_{\mathbb R}h)$ and similarly for creation. Real source contractions and real CAR legs therefore give real half vectors, after a real choice of each vacuum lift. Their pairings are real. The real version of the one-ball chart lemma supplies single-doublet monomials of any common resulting charge with all six pairings nonzero, including unequal half classes.

BF transport preserves the real single-doublet determinant line. Indeed the real structure intertwines the heat and low-window constructions with their conjugates, so the connection preserves this real line also across kernel crossings, represented by fixed source borders. A real unit line can have sign holonomy; its even twelfth tensor power has a canonical positive unit orientation, which is parallel. Only the evenness of twelve is used here.

For completeness, the absolute prescription is consistent across the whole weak-$SU(2)$ locus at fixed closed class. On a four-dimensional base $SU(2)$ bundles are classified by $c_2$. Two such reductions with the same class can therefore be identified by a bundle isomorphism; it extends to an $SU(5)$ isomorphism. After that identification their $SU(2)$ connections are connected affinely. Gauge equivariance and flat ambient transport, together with real parallel transport on this affine family, identify the same positive twelfth-power orientation everywhere on the locus. Thus fixing it once per closed class has the meaning required by the prescription; it is not a phase chosen separately for each sewing.

Repeat the chosen single-doublet source list in all twelve identical copies, and leave the free singlets uninserted. Direct sums tensor the original determinant lines and add their BF connections. The nonzero single-copy physical coefficients and half pairings are real; their twelfth powers are positive. Any within-copy exterior sign is raised to the twelfth power. For the cross-copy reordering, each crossing sign repeats $\binom{12}{2}=66$ times, so it is $+1$ for all source parities. The 21 free singlet factors are positive in the chosen convention; temporal antiperiodicity makes their closed operator invertible even for periodic spatial spin structure.

Thus every full physical coefficient and pairing used in this reference chart is positive. Their ratio is also of unit modulus by Corollary~\ref{rp_unitcomparison}, and is therefore one. All three reference comparison phases, and their triangle, are one.
\end{proof}

The phase triangle is independent of the source chart and invariant under local half phases. Source charts cover the ambient path in finitely many neighborhoods, so its transport identity and Lemma~\ref{rp_realref} give
\begin{equation}\label{rp_triangleone}
 \tau(x,y,z)=1
\end{equation}
for every triple at arbitrary seam data and relative classes. This evaluates every component constant, rather than appealing to local constancy alone.

An extra closed-sector rephasing $e^{i\beta(k)^3}$, cubic in the induced charge $k$ rather than linear as a theta term, preserves magnitudes, reflected diagonals, Hermiticity under $k\mapsto-k$ and local derivative identities. On half classes $(0,1,2)$ the sewn classes are $(1,1,-2)$, so its triangle is
\[
 e^{i\beta(1^3+1^3+(-2)^3)}=e^{-6i\beta}.
\]
The specified reference prescription excludes this ambiguity; the trace argument alone does not. An additive phase $e^{i\theta k}$ instead has triangle one and is absorbed by $u(x)=e^{i\theta n_x}$. This distinction, rather than a prohibition of ordinary theta phases, is the point of the example.

\subsubsection{Reflection positivity from the analytic identities}\label{rp_conclusion}
The three analytic steps and the all-class reference evaluation now give the kernel required by the theorem stated at the beginning of this subsection.

\begin{proof}[Proof of Theorem~\ref{prop:collar_rp}]

The source comparison and its unit modulus are established in Sections~\ref{rp_comparisonsection}--\ref{rp_normsection}. The source charts in Section~\ref{rp_chartsection}, the differential in Section~\ref{rp_BFsection}, the trace identities in Sections~\ref{rp_localizationsection}--\ref{rp_transmissionsection}, and the reference evaluation in Section~\ref{rp_referencesec} give~\eqref{rp_triangleone}. For a finite collection choose arbitrary unit Slater representatives and one reference half. Lemma~\ref{lem:collar_coboundary} supplies unit half phases $u(x)$ with
\[
 c(x,y)=\overline{u(x)}u(y),\qquad
 S_{xy}(F_x,F_y)=\langle u(x)V_x[F_x],u(y)V_y[F_y]\rangle.
\]
Every finite matrix of physical fixed-seam source coefficients is therefore positive semidefinite. This conclusion covers arbitrary polynomial sources by the source-independent comparison. The special saturated monomials were only coordinates for the proof.

A bounded measurable positive-definite kernel integrates nonnegatively against a finite measure. We use this fact through finite samples, so integration needs no globally measurable choice of Fock frames or $x_*$. Write $K_b(x,y)=S_{xy}(F_x,F_y)$ and $M_b=\nu_b(\text{half space})$. The physical scalar kernel is measurable and bounded on the compact regulated support by Lemma~\ref{rp_finiteinsertions}. If $M_b>0$, sample $N$ independent halves with law $\nu_b/M_b$. Finite Gram positivity gives $\sum_{r,s}K_b(X_r,X_s)\ge0$ for every sample. Taking its expectation and multiplying by $M_b^2/N^2$ yields
\[
 0\le (1-N^{-1})\iint K_b(x,y)\,d\nu_b(x)d\nu_b(y)
       +\frac{M_b}{N}\int K_b(x,x)\,d\nu_b(x).
\]
Letting $N\to\infty$ proves nonnegativity of the double integral. The case $M_b=0$ is immediate. Integrate against $d\rho(b)$ in~\eqref{eq:collar_bosons}. This proves Theorem~\ref{prop:collar_rp} without a global boundary-frame selection.

Taking $F=1$ proves $Z_{\mathrm c}\ge0$. The vacuum selection rule and the role of mixed source degrees were separated in Corollary~\ref{cor:collar_equalclass}. Positive diagonal kernels do not prove that the integrated half-state is nonzero, so no strict noncancellation claim is made.

Restriction to zero induced closed $SU(5)$ charge retains $n_x=n_y$. The resulting block-diagonal kernel is positive semidefinite, so the same integration proves the restricted assertion. This is a restriction on the induced $SU(5)$ charge, not an arbitrary restriction on all Standard Model flux labels.

\end{proof}

\begin{corollary}[Compatible half phases]\label{cor:collar_halfphases}
Suppose a rephasing of the full source weight at fixed seam data has the form
\[
 S'_{xy}(F_x,F_y)
 =e^{i[L_b(y)-L_b(x)]}S_{xy}(F_x,F_y),\qquad L_b(x)\in\mathbb R,
\]
with measurable half phases independent of the sources. It preserves reflection positivity and finite-volume bounds. In particular an induced $SU(5)$ theta factor $e^{i\theta(n_y-n_x)}$ has this property for every real $\theta$.
\end{corollary}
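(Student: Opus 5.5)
The plan is to reduce everything to the Gram-kernel structure already established in the proof of Theorem~\ref{prop:collar_rp}. At fixed seam data $b$, the proof of that theorem produces unit half phases $u(x)$ such that $S_{xy}(F_x,F_y)=\langle u(x)V_x[F_x],u(y)V_y[F_y]\rangle$. The rephased kernel is
\[
 S'_{xy}=e^{i[L_b(y)-L_b(x)]}S_{xy}
 =\bigl\langle e^{iL_b(x)}u(x)V_x[F_x],\,e^{iL_b(y)}u(y)V_y[F_y]\bigr\rangle ,
\]
because the inner product is antilinear in its first slot. So $S'$ is again a Gram kernel, with the same vectors multiplied by the unit scalars $e^{iL_b(x)}u(x)$. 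Equivalently, in the language of Lemma~\ref{lem:collar_coboundary}, the comparison phase changes by the coboundary $c\mapsto c\,e^{i[L_b(y)-L_b(x)]}$, and this leaves the triangle $\tau$ unchanged. Two further properties are immediate: the diagonal is untouched, $S'_{xx}=S_{xx}$, and $|S'|=|S|$. Every finite matrix of fixed-seam coefficients of $S'$ is therefore positive semidefinite.

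Next I will repeat the integration step of the proof of Theorem~\ref{prop:collar_rp} verbatim. The kernel $K'_b(x,y)=S'_{xy}$ is bounded by the same bound as $K_b$, since the phase has modulus one. It is measurable because $L_b$ is assumed measurable and the sources do not depend on it. The sampling argument with $N$ i.i.d.\ halves drawn from $\nu_b/M_b$ then gives nonnegativity of $\iint K'_b\,d\nu_b\,d\nu_b$, and integrating against $d\rho(b)$ gives $\omega'((\Theta F)F)\ge0$. For finite-volume bounds, the rephased weight has the same modulus pointwise, so absolute convergence follows from the bounds of Theorems~1--2 and Lemma~\ref{rp_finiteinsertions}.

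For the theta example, I will check that it has the required form. By the sewing convention of Section~\ref{rp_referencesec}, the induced closed charge is $k=n_y-n_x$. Hence $e^{i\theta k}=e^{i[L(y)-L(x)]}$ with $L(x)=\theta n_x$, which is real. It is also measurable (locally constant), because $n_x$ is a framed relative class and is stable under small changes of the half data. This is exactly the additive case already noted after Eq.~\eqref{rp_triangleone}.

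The only point I expect to need care is measurability and well-definedness of $n_x$ as a function on the half configuration space at fixed $b$. I will get this from topological constancy of the framed relative class under continuous variation of the admissible links (Proposition~\ref{prop:collar_transfer}). On the compact regulated support, this makes $n_x$ locally constant, hence Borel.
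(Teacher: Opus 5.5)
Your proposal is correct and follows the paper's proof essentially line for line. You absorb $e^{iL_b(x)}$ into the Gram vectors $u(x)\Psi_x(F_x)$, with your antilinearity check justifying the step, then reuse the bounded-kernel sampling integration, and take $L_b(x)=\theta n_x$ for the theta factor. Your remarks on the unchanged modulus and on local constancy (hence measurability) of $n_x$ are harmless additions that the paper leaves implicit.
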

\begin{proof}
Replace each Gram vector by $e^{iL_b(x)}u(x)\Psi_x(F_x)$. All finite kernel matrices remain positive, and the same bounded-kernel integration applies. The theta example has $L_b(x)=\theta n_x$.
\end{proof}
More generally a real local imaginary action preserves this argument when its reflected negative-half integral is the negative of the positive-half integral and no independent seam phase is added. Its integral on $\Theta x\cup_b y$ is then $L_b(y)-L_b(x)$; these half integrals need not be integers. Locality by itself is insufficient: a constant phase multiplying every closed amplitude leaves the balanced ratio unchanged but can spoil its positive diagonal. The theorem fixes the zero-theta prescription; the corollary states an explicit sufficient test for other phase conventions.

\subsubsection{Supporting estimates and published inputs}\label{rp_externalinputs}

The following estimates supply the regularity used above. Their proofs use fixed-domain inversion, elliptic boundary estimates and finite-rank formulas, and do not assume reflection positivity.

\paragraph{Half operators and common collars.}
\begin{lemma}[Half-operator regularity through closed zero modes]\label{rp_halfregularity}
Let $P_s$ be a smooth family of the stated Dirac-type half operators, with fixed principal symbol on fixed bundles. Locally in $s$, the orthogonal Cauchy projection $\Pi_s$, the orthogonal Calder\'on inverse $G_s$, and the Poisson and source-leg maps depend smoothly in their boundary Sobolev and bulk operator norms. Their classical symbols depend smoothly as well. No assumption on the dimension of any \emph{closed} kernel is needed. For halves with a common collar germ, their Cauchy-projection differences are smooth families of smoothing operators, including when that germ varies with $s$.
\end{lemma}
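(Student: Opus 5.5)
The plan is to reduce every assertion of Lemma~\ref{rp_halfregularity} to a fixed-domain inversion problem whose coefficients depend smoothly on $s$, and then recover the orthogonal objects by finite algebraic operations. First, because $P_s-P_{s_0}$ is a zeroth-order multiplication operator, all $P_s$ share the maximal graph domain $\mathscr G$ with locally uniformly equivalent norms, and the trace map $\gamma:\mathscr G\to H^{-1/2}(Y)$ is fixed. Freeze the orthogonal Calder\'on projector $\beta=\Pi_{s_0}$ at a chart center. As in the reflected-diagonal argument, the boundary problem $\{P_s u=f,\ \beta\gamma u=0\}$ is invertible at $s_0$: a kernel element has its trace in both $W$ and $W^\perp$, hence has zero trace and vanishes by unique continuation. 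This holds whatever the closed kernel dimension is, since only half problems enter. Openness of invertibility then gives fixed-auxiliary inverses $G_s^\beta$ and Poisson maps $\mathcal K_s^\beta=\mathcal K_{s_0}^\beta-G_s^\beta(P_s-P_{s_0})\mathcal K_{s_0}^\beta$. These depend smoothly in operator norm between the stated Sobolev spaces, with derivatives given by resolvent identities, as in~\eqref{rp_auxvariation}.

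Second, the Cauchy space is $W_s=\gamma\mathcal K_s^\beta(\operatorname{ran}\beta)$. The map $T_s=\gamma\mathcal K_s^\beta$ is bounded below on $\operatorname{ran}\beta$ near $s_0$ because it is the identity at $s_0$, so $\Pi_s=T_s(T_s^*T_s)^{-1}T_s^*$ is smooth on boundary $L^2$. To obtain smoothness on every boundary Sobolev level, I would use the Seeley parametrix for the Calder\'on projector: its complete symbol is computed locally from the symbols of $P_s$, which depend smoothly on $s$. The formula for $\Pi_s$ then differs from a classical order-zero family by smoothing terms that are smooth in $s$. The orthogonal inverse $G_s$ and orthogonal Poisson map are recovered from $G_s^\beta,\mathcal K_s^\beta$ and $\Pi_s$ by the finite identities $G_s=G_s^\beta-\mathcal K_s^\beta(\beta\,\cdot)\ldots$ (correcting the boundary condition through $\Pi_s$). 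Source legs are compositions with interior-supported smooth sources, so they inherit the regularity.

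Third, for halves whose collar germs agree, the full symbols of their Calder\'on projectors agree, because the symbol recursion displayed after Lemma~\ref{rp_localization} depends only on the germ. Consequently $\Pi_s-\Pi'_s$ has vanishing complete symbol and is smoothing. To keep this smoothing property together with its $s$-derivatives when the germ itself moves, I would differentiate the parametrix construction in $s$ and note that the symbol-level cancellation holds identically in $s$; the residual operators are compositions with off-diagonal interior kernels, whose smooth dependence follows from elliptic regularity away from the collar.

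The main obstacle is this last step. I expect the difficulty to be showing uniform smoothing estimates with all $s$-derivatives, rather than only at fixed $s$, when the common collar coefficients vary. I would attack it by writing the difference of the two projectors as $\gamma(\mathcal K-\mathcal K')$, expressing it through a cutoff $\chi$ equal to one on the collar, and exhibiting the terms as $\gamma G(\,[P,\chi]\,)\ldots$. In each such term the commutator is supported in the interior where the two halves may differ, so the boundary trace of its solution is smoothing by interior-to-boundary elliptic estimates that are uniform in $s$.
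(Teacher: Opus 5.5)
Your first two steps follow the paper's own proof. You freeze $\beta=\Pi_{s_0}$ and invert the fixed-domain half problem by unique continuation, so no closed-kernel hypothesis enters. You perturb the Poisson map by $\mathcal K_{\beta,s}=\mathcal K_{\beta,0}-G_{\beta,s}(P_s-P_0)\mathcal K_{\beta,0}$, set $\Pi_s=T_s(T_s^*T_s)^{-1}T_s^*$, and then correct the boundary condition. The formula you left open is $G_s=G_{\beta,s}-\mathcal K_{\beta,s}\beta\Pi_s\gamma G_{\beta,s}$ and $\widetilde{\mathcal K}_s=\mathcal K_{\beta,s}\beta\Pi_s$; it works because $T_s\beta\Pi_s=\Pi_s$.

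The paper is sharper at one point. It uses $\beta T_s=\mathrm I$, so $S_s=T_s-\iota$ (with $\iota$ the inclusion) has order $-1$ and satisfies $\beta S_s=0$. Hence $T_s^*T_s=\mathrm I+S_s^*S_s\ge\mathrm I$ on the whole chart, not just near $s_0$. Smoothness on every $\mathscr B^r$ then follows from invertibility of this elliptic order-zero operator on each Sobolev level together with the bounded inverse theorem. Your route instead asserts that a Seeley remainder is smooth in $s$, which is part of what has to be proved. Also, the recursion after Lemma~\ref{rp_localization} is for the interior inverse; the Calder\'on symbol is germ-determined only after Seeley's boundary construction is applied to it.

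Your common-collar step is a genuinely different route. The paper, in Lemma~\ref{lem:collar_parameter_smoothing}, writes each half's projection as one model projection plus a smooth smoothing remainder, and then orthogonalizes. The model is built from the product collar: the incoming spectral subspace of $H_s$ with a cut kept in a gap, model solutions $e^{-tH_s}$, and the defect corrected by an invertible auxiliary double. This describes every half relative to a model fixed by the germ alone, but it needs the product structure and a spectral cut that exists only locally in $s$. Your direct two-half comparison needs only agreement near $Y$.

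Your comparison must, however, use each half's own frozen inverse. A common $\beta$ cannot serve both halves: $\{P'u=0,\ \Pi\gamma u=0\}$ has kernel traces $W'\cap W^\perp$, which are nonzero exactly when $\ker P_{xy}\ne0$ (Lemma~\ref{rp_tracezeros}). So $\gamma(\mathcal K-\mathcal K')$ is not literally available. No cutoff is needed either. For $h\in W_s$ and $u=\widetilde{\mathcal K}_sh$, the field $u-G'_{\beta',s}(P'_s-P_s)u$ solves $P'_s=0$. Hence $h-R_sh\in W'_s$ with $R_s=\gamma G'_{\beta',s}(P'_s-P_s)\widetilde{\mathcal K}_s$. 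This operator is smoothing, and uniform interior regularity controls all its $s$-derivatives, because $P'_s-P_s$ and $\dot P'_s-\dot P_s$ vanish on the common collar.

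If you keep a cutoff equal to one on the collar, note that $P'(\chi u)=[P,\chi]u+\chi(P'-P)u$ has two interior terms, not only the commutator. The commutator itself is Clifford multiplication by $d\chi$ and is independent of $s$.

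You still have to convert the graph relation into a statement about projections. From $(\mathrm I-\Pi'_s)\Pi_s=(\mathrm I-\Pi'_s)R_s$ and its mirror you get $\Pi_s-\Pi'_s=R_s^*(\mathrm I-\Pi'_s)-(\mathrm I-\Pi_s)R'_s$, which is smoothing with all parameter derivatives. The difference between the routes is that the paper gets a per-half expansion tied to the explicit product-cylinder model, while you get a more elementary, more general pairwise comparison.
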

\begin{proof}
At a chart center freeze $\beta=\Pi_0$. The regular boundary problem with domain
$\{u\in H^1:\beta\gamma u=0\}$ is invertible at the center. The domain is fixed, and $P_s-P_0$ is zero order, so its inverse $G_{\beta,s}$ exists nearby and is smooth by the bounded inverse theorem. In particular
$\dot G_\beta=-G_\beta\dot P G_\beta$. The boundary parametrix gives its usual Sobolev bounds. With $\mathscr B^r=\beta H^r(Y)$, define
\[
 \mathcal K_{\beta,s}=\mathcal K_{\beta,0}-G_{\beta,s}(P_s-P_0)\mathcal K_{\beta,0}.
\]
Then $P_s\mathcal K_{\beta,s}=0$ and $\beta\gamma \mathcal K_{\beta,s}=\mathrm I_{\mathscr B^r}$. This is the unique Poisson solution with that prescribed $\beta$-trace; the endpoint extension $r=-1/2$ takes values in the maximal graph domain. The Poisson and maximal-trace bounds used here are those of~\cite[Sec.~2.3, Thm.~5.6 and Def.~5.9]{booss_lesch_zhu}, together with the classical boundary parametrix.

Put $T_s=\gamma \mathcal K_{\beta,s}:\mathscr B^0\to L^2(Y)$, and let $\iota:\mathscr B^0\hookrightarrow L^2(Y)$ be inclusion. Its range is the Cauchy space and $\beta T_s=\mathrm I$. The displayed resolvent formula and fixed principal symbol give
$S_s:=T_s-\iota$ of order $-1$, smoothly in the boundary calculus. Also $\beta S_s=0$. Consequently
\begin{equation}\label{rp_projectionregularity}
 T_s^*T_s=\mathrm I+S_s^*S_s\ge\mathrm I,\qquad
 \Pi_s=T_s(T_s^*T_s)^{-1}T_s^*.
\end{equation}
Adjoints here use fixed boundary $L^2$. The positive elliptic order-zero operator $\mathrm I+S_s^*S_s$ is invertible on every $\mathscr B^r$: a Sobolev kernel or obstruction is smooth by elliptic regularity and hence excluded by its $L^2$ invertibility. Its inverse is smooth by the bounded inverse theorem on each Sobolev level and the parameter-dependent parametrix. Thus~\eqref{rp_projectionregularity} gives the required smooth orthogonal projection, rather than assuming smoothness of its range.

The actual orthogonal maps in~\eqref{rp_halfidentities}, extended by zero on $W_s^\perp$ when appropriate, are
\begin{equation}\label{rp_orthogonalregularity}
 \widetilde{\mathcal K}_s=\mathcal K_{\beta,s}\beta\Pi_s,\qquad
 G_s=G_{\beta,s}-\mathcal K_{\beta,s}\beta\Pi_s\gamma G_{\beta,s},\qquad
 L_s=\gamma G_s.
\end{equation}
Indeed $P_sG_s=\mathrm I$ and $\Pi_s\gamma G_s=0$; uniqueness identifies these with the orthogonal boundary problem. These formulas prove the asserted norm regularity of $G_s:L^2\to H^1$, of the Poisson map, and of the interior source legs and contractions. No closed inverse or closed kernel projection occurs.

For equal but varying collar germs, Lemma~\ref{lem:collar_parameter_smoothing} gives the full smoothing estimates with every parameter derivative. The fixed-domain formulas above give smoothness of each individual projection in every boundary Sobolev operator norm. In particular $[\dot\Pi_s,\Pi_s]$ is bounded on those spaces, and its Kato equation supplies the common smooth transport used in the transmission argument.
\end{proof}

\begin{lemma}[Common-collar smoothing with parameter derivatives]\label{lem:collar_parameter_smoothing}
Let $P_{i,s}$ be finitely many smooth families of unitary Dirac-type half operators with fixed principal symbol and boundary identifications. Suppose their germs agree on a common fixed positive-width product collar for every $s$. Their orthogonal Cauchy projections satisfy
\[
 \sup_{s\in K}\left\|\partial_s^k(\Pi_{i,s}-\Pi_{j,s})
                    \right\|_{H^{-N}(Y)\to H^M(Y)}<\infty
 \quad\hbox{for every }N,M,k\ge0
\]
on each compact parameter subinterval $K$ in a sufficiently small chart. In particular the differences are smooth in the smoothing topology, even when the common collar coefficients vary.
\end{lemma}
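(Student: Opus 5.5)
The plan is to express the projection difference through the fixed-domain formulas of Lemma~\ref{rp_halfregularity} and to show that every term is an operator that crosses the interior of the half before returning to the boundary. Along the way it will pass through at least one parameter-dependent interior cutoff, which makes it smoothing with all derivatives. Concretely, I fix a chart center and a common frozen projector $\beta$, taken equal to $\Pi_{i,0}$ for one index. Near the center the regular problems $\beta\gamma u=0$ for all $P_{i,s}$ stay invertible, because the collar germs coincide and the problem at the center is the same up to interior terms. By~\eqref{rp_projectionregularity}, $\Pi_{i,s}=T_{i,s}(T_{i,s}^*T_{i,s})^{-1}T_{i,s}^*$ with $T_{i,s}=\gamma\mathcal K_{\beta,i,s}$. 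It therefore suffices to prove that $T_{i,s}-T_{j,s}$ is smoothing, jointly smooth in $s$, in every $H^{-N}\to H^M$ norm. The inverse Gram factors and adjoints then inherit the property by the resolvent-type identity $X^{-1}-Y^{-1}=X^{-1}(Y-X)Y^{-1}$, since each factor is bounded on every Sobolev level with $s$-derivatives by Lemma~\ref{rp_halfregularity}.

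To compare $T_i$ and $T_j$, I choose a cutoff $\chi$ equal to one on a half-width collar and supported in the common collar, with $\chi$ independent of $s$. On the collar $P_{i,s}=P_{j,s}=:P_{c,s}$. I then build a common collar parametrix $Q_s$ for the boundary problem. Using the product structure $\partial_t+H_s$, I take $Q_s$ from the parameter-dependent boundary calculus, smooth in $s$ because $H_s$ is. I write $\mathcal K_{\beta,i,s}=\chi\,\mathcal K^{c}_s+R_{i,s}$, where $\mathcal K^c_s$ is the common collar Poisson parametrix. The correction is $R_{i,s}=-G_{\beta,i,s}[P_{i,s},\chi]\mathcal K^c_s+(\text{smoothing})$, which follows because $P_{i,s}(\chi\mathcal K^c_s)=[P_{c,s},\chi]\mathcal K^c_s+\chi\cdot(\text{smoothing})$. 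Applying $\gamma$, the common term $\gamma\chi\mathcal K^c_s$ cancels in $T_i-T_j$. What remains is $\gamma(R_{i,s}-R_{j,s})$. The commutator $[P,\chi]$ is supported strictly inside, away from $Y$, while $\mathcal K^c_s$ maps $H^{-N}(Y)$ to functions smooth in the interior with all $s$-derivatives. Interior elliptic regularity and the mapping properties of $G_{\beta,i,s}$ then show that $\gamma G_{\beta,i,s}[P,\chi]\mathcal K^c_s$ maps $H^{-N}$ into $H^M$ for every $M$. Each $s$-derivative is handled by $\dot G=-G\dot PG$ and $\partial_s\mathcal K^c_s$. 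These only insert further operators, but crucially the derivative factors hit either the interior-supported commutator or the collar parametrix, never a sharp boundary. The Sobolev orders are therefore preserved, and the bounds are uniform on compact $K$ after shrinking the chart.

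The main obstacle is the collar parametrix when the common germ moves with $s$. One needs a Poisson parametrix $\mathcal K^c_s$ whose full symbol is smooth in $s$, and whose error after composition with $\chi$ is smoothing with all derivatives in $s$, uniformly. I would construct it explicitly from the Calder\'on symbol of $\partial_t+H_s$ on the product cylinder, as $e^{-tH_s}$ restricted to the positive spectral part modulo lower order, using the smooth dependence of the symbol of $H_s$ on $s$. The remainder would be controlled by standard parameter-dependent symbol estimates. Once that is in place, the rest is bookkeeping of bounded compositions.
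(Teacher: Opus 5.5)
Your reduction has a genuine gap at its first step. You freeze one projector $\beta=\Pi_{i,0}$ and assert that the problems $(P_{j,s},\ \beta\gamma u=0)$ are invertible near the chart center for \emph{every} $j$, because the germs agree and the problems differ only by interior terms. Interior perturbations preserve the Fredholm property and the index, but they do not preserve invertibility.

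By unique continuation, the kernel of $(P_{j,0},\ \Pi_{i,0}\gamma u=0)$ is the Poisson image of $W_{j,0}\cap W_{i,0}^\perp$. By Green's identity, its cokernel corresponds to $W_{i,0}\cap W_{j,0}^\perp$. Lemma~\ref{rp_tracezeros} identifies these with the trace spaces of $\ker P_{ij}$ and $\ker P_{ij}^\dagger$ for the sewn operator. So your frozen problem is invertible exactly when the sewn operator is, and the paper deliberately keeps closed zero modes. Its index is $\operatorname{ind}P_{ij}=-12(n_j-n_i)$, which is nonzero for halves in unequal relative classes. That is precisely the situation in which this lemma is invoked in Lemma~\ref{rp_monomialcharts}. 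Relative indices of projection pairs add, so no better choice of a single frozen $\beta$ can fix this.

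Without a common invertible frozen problem, $T_{i,s}$ and $T_{j,s}$ are not maps on a common space. The literal cancellation of $\gamma\chi\mathcal K^c_s$ in $T_{i,s}-T_{j,s}$, on which your argument rests, is then unavailable. There is also a smaller inaccuracy. With $\beta$ frozen at $s=0$ while the collar coefficients move, $\beta\gamma\chi\mathcal K^c_s$ differs from the identity on $\operatorname{ran}\beta$ at order $-1$, not by a smoothing operator. So $R_{i,s}=-G_{\beta,i,s}[P_{i,s},\chi]\mathcal K^c_s+(\text{smoothing})$ is not right as written; you would have to compose with an $s$-smooth parametrix of $\beta\gamma\mathcal K^c_s$ on $\operatorname{ran}\beta$.

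The paper never uses a half boundary problem to absorb the cutoff defect. It builds the common product-collar model, with a spectral cut in a local gap of $H_s$, so that zero crossings of $H_s$ cost only smooth finite-rank terms. It then observes that the defect of the cut-off model solution is supported away from the seam. That defect is corrected with the inverse of an auxiliary invertible double with fixed normal matching, whose invertibility follows from unique continuation regardless of closed zero modes or relative class. Hence each Cauchy projection equals the \emph{same} model projection plus a smooth smoothing family. Orthogonalization (the paper's $F_s$, equivalently your Gram formula $T(T^*T)^{-1}T^*$) then finishes exactly as you intended.

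If you want to keep half problems instead, freeze a separate $\beta_i=\Pi_{i,0}$ for each $i$. Then $u=\chi\mathcal K^c_sh-G_{\beta_i,i,s}P_{i,s}\chi\mathcal K^c_sh$ is an exact homogeneous solution, so $\operatorname{ran}(\gamma\mathcal K^c_s-S_{i,s})\subset W_{i,s}$ with $S_{i,s}$ a smooth smoothing family. A parametrix of $\beta_i\gamma\mathcal K^c_s$ on $\operatorname{ran}\beta_i$ then gives $\Pi_{j,s}\Pi_{i,s}\equiv\Pi_{i,s}$ modulo smooth smoothing families. Exchanging $i,j$ and taking adjoints yields $\Pi_{i,s}\equiv\Pi_{j,s}$. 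This is a comparison of ranges, not the direct cancellation of leading terms you proposed.
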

\begin{proof}
Use the product-collar Poisson construction in Ref.~\cite[Sec.~5.3, Eqs.~(5.30)--(5.32) and Remark~5.11]{booss_lesch_zhu}. The common tangential operator is self-adjoint. Locally choose a spectral cut away from its spectrum and a smooth cutoff in the normal coordinate. On the incoming spectral subspace the model solution is $e^{-tH_s}$; at any positive distance from the boundary it is smoothing. The corresponding outgoing model on the doubled operator has the same property. All parameter derivatives are smoothing there: the cut stays in a spectral gap, and the positive-distance spectral multipliers and their derivatives are rapidly decreasing at high frequency. Finite spectral-band changes only add smooth finite-rank terms. This argument does not assume that $H_s$ is invertible at zero.

The defect of the cutoff product solution is supported away from the seam, and is consequently a smooth family of smoothing maps from boundary data into bulk smooth sections. Correct it by an invertible auxiliary double with fixed normal matching $T=(J_0^*)^{-1}$, where $J_0$ is the fixed principal normal coefficient, so $J_0^*T=1$. Its domain is fixed because the principal normal symbol is fixed. Unique continuation excludes its kernel and cokernel; the bounded inverse theorem gives a smooth local inverse. Boundary elliptic estimates give the same assertion on every needed Sobolev level. Composing this inverse with the cutoff defect and taking the regular trace therefore remains smoothing with all parameter derivatives. This is the remainder in the cited product-collar construction. It follows that each Cauchy projection is the same model boundary projection plus a smooth smoothing remainder.

One may equivalently start with nonorthogonal Calder\'on idempotents $C_s$ from that construction. Orthogonalization preserves the conclusion: with
\[
 F_s=C_sC_s^*+(1-C_s^*)(1-C_s),\qquad
 \Pi_s=C_sC_s^*F_s^{-1},
\]
$F_s$ is positive invertible on $L^2$ and elliptic of order zero. Its inverse is smooth on every Sobolev level. The resolvent identity shows that a smoothing difference of two $F_s$ gives a smoothing difference of their inverses, with every parameter derivative. Thus orthogonalization leaves the common complete symbol and the asserted remainder topology unchanged. The estimates follow on compact parameter subintervals.
\end{proof}

\begin{lemma}[Finite source constraints preserve smooth smoothing differences]\label{lem:collar_constrained_smoothing}
Let $W_s=\operatorname{ran}\Pi_s\subset\mathcal H_b=L^2(Y)$ vary smoothly as above. Write a nonzero monomial source plane as
\[
 W_s^F=\{w+T_s\eta:w\in W_s,\ \eta\in\mathbb C^p,
                           \ R_sw+C_s\eta=0\},
\]
where the finite rows of $R_s$ and columns of $T_s$ are smooth boundary legs, $T_s\mathbb C^p\subset W_s^\perp$, and $C_s$ is a finite matrix. Suppose the combined constraint is onto and the trace map on its kernel is bounded below. These conditions hold locally on an invertible source-border matching chart. Then $\Pi_s^F-\Pi_s$ is a smooth family of finite-rank smoothing operators. No constancy of the separate ranks of $R_s$, $T_s$ or $C_s$ is required.
\end{lemma}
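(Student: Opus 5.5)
The plan is to reduce the constrained plane to an explicit finite-rank Gram projection built from $\Pi_s$ and finitely many smooth boundary vectors, and then read off smoothness directly.

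Set $\mathcal X=W_s\oplus\mathbb C^p$, $\Lambda_s=(R_s,C_s):\mathcal X\to\mathbb C^q$ and $\mathsf H_s(w,\eta)=w+T_s\eta$. Since $\Lambda_s$ is onto, the orthogonal projection onto $\ker\Lambda_s$ is
\[
 E_s=\mathrm I-\Lambda_s^*(\Lambda_s\Lambda_s^*)^{-1}\Lambda_s .
\]
This is smooth in $s$: $\Lambda_s\Lambda_s^*$ is an invertible $q\times q$ matrix whose entries are built from $\Pi_s$, the smooth rows of $R_s$ and the matrix $C_s$. Moreover $\mathrm I-E_s$ has finite rank, and its range is spanned by $\Pi_s\bar r_i\oplus C_s^*e_i$. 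These are smooth boundary vectors because $\Pi_s$ maps smooth functions to smooth functions, smoothly in $s$ (Lemma~\ref{rp_halfregularity}).

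Next I transport $\ker\Lambda_0$ to $\ker\Lambda_s$ by the Kato unitary generated by $[\dot E_s,E_s]$. Because $\dot E_s$ is finite rank with smooth legs, this generator is finite rank plus terms involving $\dot\Pi_s$, and it gives a smooth identification on every Sobolev level. Put $T^F_s=\mathsf H_sU_s|_{\ker\Lambda_0}$. Then $\Pi^F_s=T^F_s\bigl((T^F_s)^*T^F_s\bigr)^{-1}(T^F_s)^*$, and the lower bound hypothesis makes the Gram operator invertible with a smooth inverse.

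To see that $\Pi^F_s-\Pi_s$ is finite rank and smoothing, I would not compute the formula above directly; I would use the geometry of the plane instead. The plane $W^F_s$ contains $W_s\cap(\Pi_s\bar r)^\perp$, which has finite codimension in $W_s$. It also lies in $W_s+\operatorname{ran}T_s$, a finite extension. So $W^F_s=(W_s\ominus A_s)\oplus B_s$, where $A_s=\operatorname{span}\{\Pi_s\bar r_i\}$ and $B_s\subset A_s\oplus\operatorname{ran}T_s$ is finite dimensional. Then
\[
 \Pi^F_s-\Pi_s=-\Pi_{A_s}+\Pi_{B_s},
\]
and both $A_s$ and $B_s$ are spanned by smooth boundary sections.

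The main obstacle is that $\dim A_s$, and hence the split into $A_s$ and $B_s$, can jump where the ranks of $R_s$, $C_s$ or $T_s$ change, so $\Pi_{A_s}$ and $\Pi_{B_s}$ need not be separately smooth. To avoid this, I would fix the finite-dimensional ambient space $F_s=\operatorname{span}\{\Pi_s\bar r_i\}\oplus\operatorname{ran}T_s$ padded by $C_s$-coordinates, that is, the image of the fixed-dimension space $\mathbb C^q\oplus\mathbb C^p$ under the smooth map $(\alpha,\eta)\mapsto\sum\alpha_i\Pi_s\bar r_i+T_s\eta$. Inside $F_s$, I would write $\Pi^F_s-\Pi_s$ as a single Gram expression in the smooth frame of $q+p$ smooth vectors, of the form
\[
 \mathcal V_s\,\mathcal G_s^{+}\,\mathcal V_s^*,\qquad \mathcal V_s:\mathbb C^{q+p}\to L^2(Y),
\]
with the constraint folded in through $\Lambda_s$. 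The task is then to show that the surjectivity and lower-bound hypotheses make the relevant $(q+p)$-dimensional Gram block invertible uniformly near each $s$. Given that, the expression is a finite sum $\sum_{a,b}v_a(s)\,g_{ab}(s)\langle v_b(s),\cdot\rangle$ with smooth $v_a(s)\in C^\infty(Y)$ and smooth finite matrix entries $g_{ab}$.

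Each term maps $H^{-N}$ to $H^M$ with all $s$-derivatives bounded, because $\partial_s^kv_a$ is smooth by Lemma~\ref{rp_halfregularity} and Lemma~\ref{lem:collar_parameter_smoothing}. This gives the smooth finite-rank smoothing family claimed.
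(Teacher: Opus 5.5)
Your setup is correct up to the obstacle you flag. That includes $E_s$ (once it is embedded in the fixed space $\mathcal H_b\oplus\mathbb C^p$), the pointwise identity $\Pi^F_s-\Pi_s=-\Pi_{A_s}+\Pi_{B_s}$, and your diagnosis that $\Pi_{A_s}$ and $\Pi_{B_s}$ jump with the ranks. The gap is that the step which actually proves the lemma is only announced (``the task is then to show\dots''), and in the form you indicate it is false.

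The Gram matrix of the frame $\mathcal V_s=(\Pi_s\bar r_1,\dots,\Pi_s\bar r_q,T_se_1,\dots,T_se_p)$ degenerates exactly where the ranks jump. So its pseudo-inverse $\mathcal G_s^+$ is discontinuous there, and your coefficients $g_{ab}$ are not smooth. Concretely, take $q=p=1$, $W_s=W$ constant, $C_s=1$, $T_s=0$ and $\Pi\bar r_1=s\,w_0$ with $0\ne w_0\in W$. For every $s$ the constraint is onto, the trace is bounded below on its kernel, and $\Pi^F_s=\Pi$. Yet the frame Gram matrix is $\operatorname{diag}(s^2\|w_0\|^2,0)$, whose pseudo-inverse blows up as $s\to0$. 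Padding the vectors by $C_s$- or $\eta$-coordinates still leaves a Gram determinant vanishing at $s=0$. Your Kato paragraph does not rescue this: at best it gives smoothness of $\Pi^F_s$ in fixed Sobolev operator norms, not smoothness of the difference in the smoothing topology.

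The missing idea is to invert only objects that the two hypotheses make honestly invertible, on a fixed space. The paper works on $\mathcal H_b\oplus\mathbb C^p$ with
\[
 B_s=\operatorname{diag}(\Pi_s,I_p),\quad \Lambda_s=(R_s\Pi_s,C_s),\quad E_s=B_s-\Lambda_s^*(\Lambda_s\Lambda_s^*)^{-1}\Lambda_s,\quad A_s(w,\eta)=\Pi_sw+T_s\eta,
\]
and $M_s=E_sA_s^*A_sE_s+1-E_s$, which is invertible by the lower bound. Then $\Pi^F_s=A_sE_sM_s^{-1}E_sA_s^*$. Orthogonality $T_s\mathbb C^p\perp W_s$ gives $A_s^*A_s-B_s=\operatorname{diag}(0,T_s^*T_s-I_p)$. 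Hence $M_s-1$, $M_s^{-1}-1$ and $E_s-B_s$ are finite rank with smooth legs, and substitution leaves $\Pi_s$ plus finite-rank smoothing terms with smooth coefficients.

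Your own route can also be completed, but the invertible $(q+p)$-block is the bordered Lagrange-multiplier matrix, not a Gram matrix:
\[
 S_s=\begin{pmatrix}T_s^*T_s&C_s^*\\ C_s&-R_s\Pi_sR_s^*\end{pmatrix},\qquad
 \Pi^F_s-\Pi_s=(T_s,\,-\Pi_sR_s^*)\,S_s^{-1}\,(T_s,\,-\Pi_sR_s^*)^*.
\]
Here $\Pi_sR_s^*$ has columns $\Pi_s\bar r_i$. The formula comes from minimizing $\|w+T_s\eta-h\|$ over $w\in W_s$ subject to $R_sw+C_s\eta=0$. This gives $w=\Pi_sh-\Pi_sR_s^*\mu$ and $S_s(\eta,\mu)=(T_s^*h,-R_s\Pi_sh)$.

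To see that $S_s$ is invertible, suppose $S_s(\eta,\mu)=0$. Pairing the two rows with $\eta$ and $\mu$ gives $\|T_s\eta\|^2+\|\Pi_sR_s^*\mu\|^2=0$, hence $C_s\eta=0$ and $C_s^*\mu=0$. The lower bound then forces $\eta=0$, and surjectivity forces $\mu=0$. So $S_s^{-1}$ is smooth, and your final paragraph applies with $G_s=S_s^{-1}$.
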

\begin{proof}
On $\mathcal H_b\oplus\mathbb C^p$ set
\[
 B_s=\operatorname{diag}(\Pi_s,I_p),\qquad
 \Lambda_s=(R_s\Pi_s,C_s),\qquad
 E_s=B_s-\Lambda_s^*(\Lambda_s\Lambda_s^*)^{-1}\Lambda_s.
\]
Surjectivity makes the finite inverse legitimate. Since $\Lambda_sB_s=\Lambda_s$, $E_s$ is the orthogonal projector onto the constrained kernel inside $W_s\oplus\mathbb C^p$. Put
\[
 A_s(w,\eta)=\Pi_sw+T_s\eta,\qquad
 M_s=E_sA_s^*A_sE_s+1-E_s.
\]
The assumed lower bound makes $M_s$ positive invertible. The projector onto $W_s^F$ is exactly
\[
 \Pi_s^F=A_sE_sM_s^{-1}E_sA_s^*.
\]
Indeed $E_s$ commutes with $M_s$, and direct multiplication proves self-adjointness, idempotence and the stated range. Orthogonality of $T_s$ to $W_s$ gives
\[
 A_s^*A_s-B_s=\operatorname{diag}(0,T_s^*T_s-I_p),\qquad
 M_s-1=E_s\operatorname{diag}(0,T_s^*T_s-I_p)E_s.
\]
This is a finite-rank correction with smooth boundary legs. Its inverse differs from the identity by another such correction; a finite-dimensional inverse formula, or the resolvent identity, gives smooth dependence. Also $E_s-B_s$ is finite-rank smoothing. Substitution in the projector formula leaves $\Pi_s$ plus finite-rank smoothing terms with smooth coefficients. Differentiating these formulas proves the assertion in every smoothing seminorm. Only the combined surjectivity and lower bound are used.
\end{proof}

The lower bound in the constrained-plane lemma follows on a nonzero matching chart by inserting its trace on one half and zero data on the partner into the full matching inverse estimate. Compact interior source supports give smooth boundary legs with every parameter derivative. These estimates establish the smoothing topology used for the relative Fredholm determinant.

\paragraph{Source extensions and compact regulated bounds.}
\begin{lemma}[$L^2$ extension of split source coefficients]\label{lem:collar_source_extension}
At fixed seam data, the half-Wick pairings extend to continuous graded-multilinear functionals in each fixed finite source degree on the global $L^2$ source space. The extension is independent of the expansion into split words and agrees with the original pairing on interior sources. Its restriction to globally smooth smearings has the same domain as the physical source section.
\end{lemma}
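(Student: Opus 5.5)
The plan is to obtain the $L^2$ extension by bounding the half-Wick pairing in each fixed source degree by a product of $L^2$ norms of the smearings. We then define the global pairing on a general $L^2$ profile by splitting it with the sharp half indicators, $f=\mathbf 1_{M_x}f+\mathbf 1_{M_y}f$. Each half piece is an $L^2$ section supported in one closed half. On such pieces we will use the explicit half-Wick formulas of Lemma~\ref{rp_halfward}. The basic legs are $a(\mathcal K^*f)$ and $a^*(Lg)$, and the basic contraction is $C(q(f),\bar q(g))=\langle f,Gg\rangle$. Every word is a finite sum of products of bounded CAR operators applied to $\Omega_W$, multiplied by contractions. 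Hence it suffices to show three things: $\mathcal K^*:L^2(M_+)\to W$ is bounded, $L=\gamma G:L^2(M_+)\to W^\perp$ is bounded into boundary $L^2$, and $G:L^2\to H^1$ is bounded. The last gives $|\langle f,Gg\rangle|\le C\|f\|\|g\|$.

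The second bound is the delicate one, since the trace is not bounded on $L^2$, as the example $u_N$ in the transmission proof shows. We will use instead that $G$ maps $L^2$ into $H^1\cap\{\Pi\gamma u=0\}$. By Lemma~\ref{rp_halfregularity}, $G$ is bounded into $H^1$, so $\gamma G$ is bounded into $H^{1/2}(Y)\subset L^2(Y)$. For $\mathcal K^*$ we will use the $L^2$-adjoint of the Poisson map. The identity $\mathcal K^*P^\dagger v=-\Pi J_\partial\gamma v$ together with boundedness of $\mathcal K:W\to L^2(M_+)$ (the Poisson map from boundary $L^2$ into the bulk, \cite{booss_lesch_zhu}) gives $\|\mathcal K^*f\|_{L^2(Y)}\le C\|f\|$. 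With these three bounds and $\|a(h)\|=\|h\|$, the Wick recursion yields
\[
 |\langle\Psi_x(A),\Psi_y(B)\rangle|\le C_r\,h_xh_y\prod_i\|f_i\|_{L^2}
\]
for a word of degree $r$. This is the degreewise graded-multilinear continuity.

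Independence of the split expansion will follow from density and uniqueness. Both the extended functional and the original pairing are continuous in each degree. On the dense subspace of sources that are smooth and supported in the open half interiors, the split expansion is unique and the two agree by definition. The extension is therefore unique, and any rearrangement of split words gives the same value.

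Next we check that on globally smooth smearings the extension is the functional appearing in Lemma~\ref{rp_closedward}. A smooth global source cut sharply at the seam yields half pieces in $L^2$ that are smooth up to the boundary. On these, the formulas of Lemma~\ref{rp_halfward} apply, so the closed identities (\eqref{rp_closedq}--\eqref{rp_closedb}) hold on this global algebra. Lemma~\ref{rp_uniqueness} then identifies it with a multiple of the physical source section, whose degreewise coefficients are also $L^2$-continuous by the reduced-inverse and finite zero-mode construction (Theorem~\ref{thm:smeared_insertions}). The domains therefore coincide.

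The main obstacle we expect is the boundedness of $\mathcal K^*$ and $L$ on $L^2$ sources supported up to the seam, since the trace itself is not $L^2$-bounded. If the $H^1$ route for $L$ fails on the endpoint domain, we would fall back on the fixed-auxiliary formulas~\eqref{rp_orthogonalregularity}. There $L=\gamma G_\beta-\beta\Pi\gamma G_\beta$, so only $\gamma G_\beta:L^2\to H^{1/2}$ is needed. That mapping property is the standard regular boundary-problem estimate.
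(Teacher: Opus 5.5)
Your proposal is correct and is essentially the paper's argument: bounded $G:L^2\to H^1$, a bounded Poisson map $\mathcal K$ (hence bounded $\mathcal K^*$), and $L=\gamma G$ bounded into boundary $L^2$ give the degreewise product-of-$L^2$-norms bound; the measure-zero seam makes the global $L^2$ space the orthogonal sum of the two half spaces; density of interior profiles gives uniqueness; and the spectral-window form of the physical section gives it the same degreewise $L^2$ continuity. Two small points: independence of the split expansion is really the algebraic fact that the exterior algebra of that orthogonal sum is canonically the graded tensor product of the half exterior algebras (with Koszul signs), which is what makes the functional well defined on the dense set before density can be invoked; and for the last clause your detour through Lemmas~\ref{rp_closedward} and~\ref{rp_uniqueness} is unnecessary and slightly circular (the proof of Lemma~\ref{rp_closedward} itself cites this lemma), since the $L^2$ continuity of the physical coefficients that you already note suffices.
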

\begin{proof}
For a fixed half, its orthogonal Green operator $G:L^2\to H^1$, Poisson map $\mathcal K:L^2(Y)\to L^2(M_+)$ and source leg $L=\gamma G:L^2\to L^2(Y)$ are bounded. Thus the Wick contractions obey $|\langle f,Gg\rangle|\le C\|f\|_2\|g\|_2$, and the boundary CAR legs obey $\|a(h)\|,\|a^*(h)\|\le\|h\|$. A fixed-degree half-Wick word is a finite sum of these bounded products applied to its vacuum vector. Its norm is at most a degree-dependent constant times the product of the $L^2$ norms of its profiles. Pairing the two halves preserves this bound.

The seam has measure zero, so the global $L^2$ space is the orthogonal direct sum of its two half spaces. The algebraic exterior algebra of a direct sum is canonically the graded tensor product of the two exterior algebras. Expand a global generator into its two restrictions and use this identification, including its Koszul signs. This defines the extension independently of a chosen ordering or split expansion. Smooth interior profiles are dense in each half $L^2$ space, giving uniqueness and agreement with the original half maps. The physical section has the same degreewise extension: a spectral window leaves bounded high-mode contractions and finitely many bounded low-mode evaluations. On compact regulated families a finite cover by spectral and half-operator charts makes these degreewise bounds uniform, so the corresponding integrated coefficients are continuous under uniform $L^2$ profile limits. For a globally smooth test field in a Ward identity, differentiate it before restricting to the halves; Green's formula then supplies its actual boundary traces. No derivative of a sharp half indicator is used.
\end{proof}

\begin{lemma}[Continuity and boundedness of finite insertions]\label{rp_finiteinsertions}
For a compact smooth family of reconstructed connections, every fixed finite smooth-smeared source coefficient is continuous through zero modes and bounded. The assertion also holds for coefficients with continuous bosonic prefactors, and after applying a smooth unitary phase trivialization.
\end{lemma}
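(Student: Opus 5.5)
The plan is to reduce the statement to the finite-dimensional reduction already used in Theorem~\ref{thm:smeared_insertions} and Appendix~\ref{sec:source_families}, applied chart by chart on the compact family.

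First, cover the compact family by finitely many neighborhoods. On each neighborhood, transport the varying Hilbert spaces to a fixed reference by the unitary maps $T_U=B_U(B_U^\dagger B_U)^{-1/2}$ of Appendix~\ref{sec:source_families}. Each neighborhood should also admit a common spectral window $\lambda>0$ missing the spectra of $P^\dagger P$ and $PP^\dagger$. After transport, $P_U=P_{U_0}+V_U$ with $V_U$ zeroth order and norm continuous. The resolvent identity then gives norm continuity of the low projectors $\pi^\pm_\lambda$, of the finite low map $P_\lambda$, and of the complementary inverse $G_{\rm hi}$, with $\|G_{\rm hi}\|\le\lambda^{-1/2}$. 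Kato transport supplies continuous orthonormal low bases.

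Next, write a fixed source coefficient as in Section~\ref{rp_sourcedefinition}. It is a finite sum of three kinds of factors, multiplied by the frame $e_\lambda$:
\begin{itemize}
\item high contractions $\langle f,G_{\rm hi}g\rangle$;
\item low evaluations $\pi^\pm_\lambda f$, $\pi^\pm_\lambda g$;
\item minors of the finite matrix $P_\lambda$, i.e.\ the Berezin polynomial.
\end{itemize}
Each factor is continuous, since the profiles are $L^2$ continuous and the low integral is polynomial. No inverse of $P_\lambda$ ever appears, so zero modes and nonzero index cause no singularity. The Quillen magnitude $\det_\zeta(P^\dagger P|_{H^+_{\rm hi}})^{1/2}$ is continuous, because the spectrum stays gapped on the chart.

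Then glue the charts. On overlaps, enlarging the window integrates out an invertible shell, and the identities \eqref{eq:window_isometry}--\eqref{eq:naturality_shell} give exactly the determinant-line transition. The locally defined coefficients therefore form one continuous section of the determinant line. Applying the smooth unitary trivialization $\tau$ (Lemma~\ref{lem:determinant_phase} or the ambient BF convention) gives continuous scalars. Continuous bosonic prefactors multiply these and keep them continuous.

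Boundedness comes from two facts. A continuous function on a compact set is bounded, and the finite cover gives uniform constants. Concretely, each coefficient of degree $r$ is dominated by a chart constant times $\prod\|f_i\|_2\|g_j\|_2$, using $\|G_{\rm hi}\|\le\lambda^{-1/2}$ and the fact that the low matrices have bounded entries.

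The main obstacle is choosing a common spectral window on each neighborhood when eigenvalues accumulate near a prospective edge. I would handle it pointwise: at each center choose $\lambda$ strictly inside a spectral gap, which is possible because the spectrum is discrete. Norm continuity of the transported family keeps that gap open on a small neighborhood, and compactness then yields the finite cover. A second, lesser point is continuity of the high zeta determinant in the parameter. For this I would cite the parameter-dependent heat-parametrix argument already used in the real-analytic squared-norm lemma, where continuity in the fixed smooth reconstructed family suffices.
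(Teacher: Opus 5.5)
Your proposal is correct and follows essentially the same route as the paper. Both arguments use a local spectral window with fixed low-mode dimensions and a gapped high inverse and zeta--Quillen factor, and write each coefficient as a finite sum of high contractions times polynomial low-mode minors, with no inverse of a low matrix. A finite compact cover then gives boundedness, and the unitary trivialization and bosonic prefactors are handled trivially. Your additions (the explicit $T_U$ transport, the shell-transition gluing on overlaps, and the degreewise $L^2$ bound) only make explicit what the paper's terser proof leaves implicit.
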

\begin{proof}
Choose a spectral window near any parameter value. Low-mode dimensions are fixed there, and all low matrices and their source evaluations are smooth. The high map has a uniform local gap, so its inverse, its contractions on fixed smooth sources and its zeta--Quillen factor are smooth. Each finite source coefficient is a finite sum of products of these quantities and polynomial low-mode minors; it contains no inverse of a low matrix. The coefficients are consequently continuous at a closed zero mode. A compact parameter support has a finite cover by such charts, giving boundedness. Multiplication by a unitary trivialization changes no magnitude. Continuous bosonic prefactors are bounded on the same compact support.
\end{proof}

\paragraph{External hypotheses.}
The construction and gluing derivations are contained in F.5. The following published analytic and topological results are the external inputs; their hypotheses and uses are recorded explicitly.

\paragraph{Elliptic boundary theory and unique continuation.}
For a smooth first-order Dirac-type operator and its adjoint on a compact manifold with boundary, use the classical boundary parametrix and unique continuation: a homogeneous solution vanishing on an open interior set, or with full zero boundary trace, is zero. The orthogonal projection onto Cauchy traces is a classical order-zero pseudodifferential projection. Its complete symbol is determined by the full operator germ at the boundary. Regular elliptic boundary problems have the standard Sobolev regularity and smooth kernels for their finite-dimensional obstructions.

The endpoint facts used in the transmission argument are specifically~\cite[Eqs.~(2.26)--(2.27)]{booss_lesch_zhu}
\[
 \gamma:\{u\in L^2:Pu\in L^2\}\longrightarrow H^{-1/2}(Y),\qquad
 \|\gamma u\|_{H^{-1/2}}\le C(\|u\|_{L^2}+\|Pu\|_{L^2}),
\]
and~\cite[Thm.~5.6, Def.~5.9]{booss_lesch_zhu} the Poisson bounds
$\mathcal K:H^r(Y)\to H^{r+1/2}(M_+)$ on Cauchy data for $-1/2\le r\le1/2$, with higher regularity supplied by the boundary parametrix. We do not infer a trace bound on arbitrary bulk $L^2$. Nor do we cite a general continuity theorem as if it proved the needed smoothness: Lemma~\ref{rp_halfregularity} derives parameter regularity for this fixed-principal-symbol, zero-order perturbation family.

All hypotheses are checked here: smooth compact cylinder, smooth Hermitian bundles, fixed Dirac principal symbol and metric, and smooth unitary connections. The two boundary components are kept together. A frozen orthogonal Calder\'on condition is invertible at the chart center; its inverse and fixed domain are used only on its open invertibility neighborhood. The source rows and columns are finite and smooth in the interior.

\paragraph{The primed Loya Park formula.}
The precise external norm identity is~\cite[Thm.~1.1]{loya_park_gluing}, written in our notation as~\eqref{rp_LPexact}. It applies to a compact self-adjoint Dirac operator split along a closed hypersurface, with a product metric and coefficients in an open collar, using the orthogonal Calder\'on conditions on both sides. The closed determinant is primed, the $-1$ eigenspace of the relative boundary unitary is omitted from $U_b$, and the restricted closed-kernel trace operator $\mathsf L$ is retained. The tangential operator need not be invertible.

Here the operator is the self-adjoint double $\mathcal D$, the seam is the disjoint union $Y_0\sqcup Y_\ell$, and Proposition~\ref{prop:collar_transfer} supplies product collars of positive width. Lemma~\ref{rp_metric} gives the normal-Clifford unitary dictionary, the fourth-root principal-angle factor and the identification $B^*B=\mathsf H_+\oplus\mathsf H_-$. Lemma~\ref{rp_saturated} then derives the physical source evaluations and CAR cofactor. The external theorem fixes a norm; it supplies no phase identity or positivity conclusion.

\paragraph{The BF spectral window connection.}
The external BF input is the smooth determinant bundle with zeta--Quillen metric and compatible connection, as constructed for smooth elliptic families in~\cite{bf1}. In the notation used in Section~\ref{rp_BFsection}, the exact pieces are:
\begin{enumerate}
\item Definition~1.10, printed p.~168: the high metric factor is $\exp[-\zeta_\lambda'(0)/2-\kappa_{\rm BF}A_0/2]$. Ordinary zeta normalization therefore means $\kappa_{\rm BF}=0$.
\item Definition~1.14 and Theorem~1.15, printed pp.~170--171: the constant terms of the high heat integrals are $\gamma_0^\lambda,\delta_0^\lambda$, and their logarithmic coefficients are $dA_0,B_0$ in Eq.~(1.50).
\item Definition~1.17 and Theorem~1.18, printed pp.~173--174: the low determinant connection plus the high term~\eqref{rp_bfhigh} define a smooth connection compatible across spectral windows and with that metric.
\end{enumerate}
The fibers are closed compact spin four-tori, the metric and Hermitian structures are fixed, and the family consists of smooth elliptic operators with fixed principal symbol. Finite source borders do not replace this family or its determinant line. Equation~\eqref{rp_bfexact} is derived here by Mellin finite part, and~\eqref{rp_bfsource} by an auxiliary frozen border and finite Schur Jacobi. BF's holomorphic Theorem~1.21 is not used. The sign dictionary $\gamma=2\Re a$, $\delta=-2i\Im a$, and the retained logarithm are explicit in the derivation above.

\paragraph{The weighted trace density.}
For a classical pseudodifferential insertion $A$ and an admissible elliptic weight $Q$ of positive order $q$, Paycha--Scott~\cite[Cor.~2.5, Eq.~(2.9)]{paycha_scott} gives
\[
 \operatorname{FP}_{z=0}\operatorname{Tr}(AQ^{-z})
  =\int_M\left(\operatorname{TR}_u A
               -\frac1q\operatorname{res}_{u,0}(A\log Q)\right)du
       -\operatorname{Tr}(A\Pi_Q).
\]
The formula includes integer-order insertions; here $A=E_p\dot P_p$ has order $-1$ in four dimensions. Here $\operatorname{TR}_u$ is the finite-part symbol integral in a fixed quantization, and $\operatorname{res}_{u,0}$ takes the log-free homogeneous term of degree $-\dim M$ and integrates it on the unit cotangent sphere. The corrected combination is a globally defined density independently of the insertion order, as stated immediately after their Eq.~(2.10); the two summands need not separately be densities. In this subsection $Q=\mathcal W_\lambda$ is positive invertible, so $\Pi_Q=0$, its spectral cut is fixed and $q=2$. Each closed bordered inverse has order $-1$ and differs from a closed parametrix only by smoothing terms. Lemma~\ref{rp_localization} cancels matched full symbols before assigning sharp halves. It does not assign a finite part to a discontinuous half insertion. The full operator germs, weight symbols and all overlap cutoff symbols are matched, rather than only their leading symbols.

\paragraph{Anomaly transport and bordism.}
BF curvature is the family index density. Dai--Freed~\cite[Cor.~5.11]{daifreed} identifies the holonomy of the inverse determinant line with the adiabatic exponentiated reduced eta invariant, including its graded index sign; inversion gives the corresponding formula for our line. Section~\ref{sec:hsu_eta} checks vanishing of the entire degree-six local polynomial and evenness of the index. The bordism input is $\Omega_5^{\rm Spin}(BSU(n))=0$ for $n>2$~\cite[Sec.~3.4]{gem}. The mapping-torus gauge bundle is an induced $SU(5)$ bundle, and its spin structure is retained. Thus the global phase obstruction vanishes. No assertion that $\Omega_5^{\rm Spin}(B\mathsf G_{\rm SM})$ vanishes is needed.

\paragraph{Fock and trace class facts.}
The boundary representation is defined in Section~\ref{rp_fockdefinition} by exterior Hilbert completion and square-summable angle tails. Basic CAR boundedness and ordinary finite exterior identities are the Fock facts used. Mickelsson--Scott~\cite{mickelsson_scott} provides the broader determinant-line/Fock framework. Its Theorem~3.4 is not invoked to identify our metric or BF phase.

The analytic operator-ideal facts are ordinary trace-class cyclicity for bounded factors and trace-norm Fredholm Jacobi. Every use is preceded by the trace-class estimate:~\eqref{rp_boundedbelow}--\eqref{rp_cyclicity} for the bulk/endpoint factorization, and the smoothing Sobolev embedding in Section~\ref{rp_transmissionsection} for the relative determinant. The determinant's independence of boundary Sobolev order is proved there using smoothness of generalized eigenvectors with nonzero eigenvalue.

\paragraph{Relation to the original regulator and its continuum limit.}
The variant preserves exact link and plaquette holonomies, continuum chiral fermions, anomalies and bundle topology. These are reasons to investigate the same positivity mechanism for Craig's reconstruction, whose broader weights also act on distances of order $a$. The gapped homotopy~\eqref{eq:collar_homotopy} identifies bundles; it supplies no identity between the connections or their physical source sections. Even away from zeros,
\[
 \log\left|\frac{W_{0,\mathrm c}(U)}{W_0(U)}\right|
 =\frac12\log
 \frac{\det_\zeta(P_{\mathrm c}^\dagger P_{\mathrm c}/m_0^2)}
      {\det_\zeta(P^\dagger P/m_0^2)}
\]
requires spectral information beyond characteristic classes. Positivity does not propagate along an arbitrary homotopy of weights.

For fixed $\delta$, no collar reconstruction of this type can converge to a non-flat smooth target in a norm controlling curvature. Both collars normal to $\mu$ and those normal to $\nu$ have $F_{\mu\nu}(A_{\mathrm c})=0$. If $\|F_{\mu\nu}(A)\|\ge f_0>0$ on a coordinate box $O$, their union has limiting volume $[1-(1-2\delta)^2]|O|$. Thus, for $1\le p<\infty$ and any gauge identification,
\[
 \liminf_{a\to0}
 \|F_{\mu\nu}(A_{\mathrm c})^g-F_{\mu\nu}(A)\|_{L^p(O)}
 \ge f_0\bigl([1-(1-2\delta)^2]|O|\bigr)^{1/p}>0.
\]
Every non-flat target has such a component and box. This rules out the sufficient strong-smooth-connection route of Section~\ref{sec:continuum_comparison} for this variant; it does not establish different continuum physics. Exact link transports preserve the input plaquette holonomy while curvature is redistributed into cell interiors. A weaker target is convergence of renormalized determinant ratios on smooth external backgrounds, with matched references and local counterterms. Letting $\delta$ shrink with $a$ avoids this fixed-density bound only at the cost of weight derivatives of order $1/(\delta a)$ and separate control of all shrinking-collar estimates. Comparing integrated continuum theories also requires control of the fluctuating bosonic measures.

Theorem~\ref{prop:collar_rp} gives $Z_{\mathrm c}\ge0$. Strict positivity remains a separate question: a normalized reflection-positive state requires $Z_{\mathrm c}>0$. Nonzero Yukawa couplings, Hamiltonian reconstruction, and the continuum comparison are outside the result.

\paragraph{Effective-field-theory outlook.}
Craig's reconstruction and the variant begin with the same lattice links and the same continuum chiral fermions. They provide different prescriptions for filling in the gauge connection between lattice sites. The variant concentrates field variation away from lattice planes, creating the product regions needed for the positivity proof. It introduces no additional dynamical variables. The physical expectation is that the detailed field profile within a cell becomes unresolvable by measurements at fixed physical distances as $a\to0$.

Consider a fermion with fixed physical wavelength $\lambda$. As $a/\lambda\to0$, its wavefunction extends across increasingly many cells. One expects its long-distance propagation to depend on the coarse gauge background and renormalized couplings, with diminishing sensitivity to precisely where the curvature lies inside each cell. The curvature obstruction above does not contradict this picture. Although the collars occupy a fixed fraction of the volume, their widths and separations both shrink with $a$. Rapid oscillations can retain an order-one curvature difference even when the underlying connection perturbation becomes small: differentiation amplifies variation on short distances. Pointwise curvature agreement is therefore a stronger requirement than agreement of long-distance fermion propagation.

Short-wavelength virtual fermions can still resolve the cell profile. The Wilsonian expectation is that integrating out these modes records that sensitivity in local terms in the effective action, as in the effective-action approach to lattice-spacing effects~\cite{symanzik}; see also Ref.~\cite[Sec.~2.2]{luscher_symanzik}. At zero Yukawa coupling, the leading background matching terms concern vacuum normalization and gauge kinetic terms. Their coefficients can differ between reconstructions, so the comparison must hold measured couplings fixed rather than bare couplings. After this matching, remaining effects are expected to be suppressed by powers of $a/\lambda$, possibly accompanied by logarithms. This supplies a physical reason to expect a common continuum theory despite different microscopic curvature profiles.

A first concrete test is to compare the two renormalized determinants to quadratic order in a weak, slowly varying external gauge field. The coefficient of the gauge kinetic term would determine the required coupling matching; the remaining momentum dependence would test whether the long-distance responses agree. Its dependence on the cell profile $\delta$ and on $a\mu$, including possible finite and logarithmic terms, must be calculated. The curvature bound alone does not determine that coefficient.

Establishing this expectation requires showing that the reconstruction change produces no surviving additional low-energy effects, that its ultraviolet contribution admits the stated local matching, and that the full chiral phase agrees in the prescribed zero-theta convention. Matching observables and controlling the fluctuating bosonic measures are also necessary. These steps remain open; the argument motivates continuum equivalence without establishing it or reflection positivity for Craig's reconstruction at finite $a$.

\section{Mathematical terminology}\label{app:terminology}

\begingroup\small
\begin{description}[style=nextline,leftmargin=2em,itemsep=0pt,topsep=0pt,parsep=0pt]
\item[Bundle and connection] Gauge fields described in overlapping patches, with transition functions relating local gauge potentials; a connection defines parallel transport.
\item[Section] A field defined consistently in those patches, with components related by gauge transformations on the overlaps.
\item[Spin structure] The choice of periodic or antiperiodic fermion boundary conditions on noncontractible spacetime cycles.
\item[Characteristic class] A quantized topological charge or flux, such as an integral of $F\wedge F$ or of products of abelian field strengths.
\item[Elliptic/Fredholm operator] A kinetic operator invertible in its leading momentum dependence for $p\ne0$; on compact spacetime it has finite-dimensional spaces of zero modes for itself and its adjoint (kernel and cokernel).
\item[Determinant line, Quillen metric] The one-dimensional space recording the basis dependence of a chiral determinant, including zero modes, together with its zeta-regularized magnitude. In the Standard Model Yukawa theory without Majorana masses, the Pfaffian reduces to this determinant.
\item[Pfaffian] The Gaussian integral for a skew fermion bilinear; its square is the determinant of the skew matrix. Field ordering fixes its overall sign.
\item[Finite-part trace] The constant term of a meromorphically regularized operator trace. It defines a spectral subtraction scheme for ultraviolet divergent loop traces.
\item[Bismut--Freed connection] The analogue of a Berry connection for the determinant phase as the gauge background varies.
\item[Holonomy] The net phase accumulated around a closed loop of gauge backgrounds, analogous to a Berry phase.
\item[Eta invariant and spin bordism] The eta invariant measures regularized spectral asymmetry. Two closed five-dimensional gauge/spin configurations are bordant if together they bound a six-dimensional one. After cancellation of the perturbative anomaly, the exponentiated eta phase is constant on bordism classes; if the relevant fifth spin-bordism group vanishes, this residual phase is $1$.
\item[Spectral filter] A smooth function $f(H/\Lambda^2)$ of a positive kinetic operator that leaves low spectral modes intact and suppresses modes above a cutoff, without choosing a gauge-dependent basis.
\item[Cutoff-to-Quillen normalization] A finite, real, local change of the regulated fermion weight needed to make its magnitude agree with the positive spectral (Quillen) metric. It does not change its complex phase.
\item[Symbol and nested filter] A symbol is an operator's high-momentum expansion. A nested filter is a separate factor on each side of each interaction insertion in a loop, so its momentum derivatives affect local counterterms.
\item[Principal symbol and smoothing] The principal symbol is the leading homogeneous momentum term; the complete symbol includes its lower-order asymptotic terms. Equal complete symbols imply a smoothing difference, with a smooth kernel and rapid high-momentum decay. On compact spacetime such differences are trace class.
\item[Spectral cut (Agmon cut)] A ray avoided by the high-momentum principal symbol and the relevant spectrum of a complex elliptic operator, used to choose a logarithm branch. It is a phase convention for complex powers, not an energy cutoff.
\item[Riesz projection and Kato transport] A contour integral of the resolvent selects a finite spectral subspace separated by a gap. Kato transport identifies these smoothly varying subspaces without choosing eigenvectors individually.
\item[Semiclassical calculus] An expansion at small $h=\Lambda^{-1}$ with scaled momentum $\eta=h\xi$; operator products and their trace remainders can then be estimated uniformly on a momentum shell.
\item[Calder\'on projection] The orthogonal boundary projection onto traces of bulk solutions of the homogeneous Dirac equation. Its range is the Cauchy-data space; it is different from imposing a spectral boundary condition.
\item[Restricted Fock space and CAR] The particle-hole Hilbert space relative to a chosen filled fermion sea. Changes of occupied subspace with Hilbert--Schmidt projection difference define normalizable Slater rays. CAR are the canonical anticommutation relations of creation and annihilation operators.
\item[Principal angles] Angles between two closed subspaces; their cosines are the nontrivial singular values of the projection from one subspace onto the other and determine the Slater overlap.
\item[APS boundary condition] A boundary condition selecting one spectral half of the tangential Dirac operator, with an explicit convention for its zero modes; it enters the Atiyah--Patodi--Singer index and eta formulas.
\item[Weighted trace and residue] The weighted trace is the finite part of a trace with a complex power of a positive elliptic reference operator; the noncommutative (Wodzicki) residue extracts the coefficient of symbol order $-4$ in four dimensions.
\item[Local counterterm] A regulator-dependent integral of a local gauge-invariant expression in fields and finitely many derivatives, used to define renormalized parameters.
\item[Variation estimate] A uniform bound on the difference of derivatives of two log weights (equivalently, effective actions with the opposite sign); with reference phases fixed, integrating the bound compares their full weights.
\item[Grushin extension] Add finitely many auxiliary rows and columns to a Fredholm operator so that the enlarged operator is invertible; its finite effective block retains the zero-mode and source information.
\item[Schatten classes and trace norm] $T\in\mathcal S_p$ means $\sum_j s_j(T)^p<\infty$ for the singular values $s_j$ of $T$. The trace norm $\|T\|_1=\sum_j s_j(T)$ is finite for trace-class operators. In four dimensions the relative Yukawa operator lies in $\mathcal S_5$.
\item[Regularized Fredholm determinant] The fifth-order determinant $\det_5$ retains the fifth and higher Yukawa-loop contributions; the complete renormalized first four amplitudes must also be included.
\item[Common spectral filter] The same gauge-covariant function of the zero-Yukawa positive operators is used on both sides of the Yukawa multiplication map, in every index sector.
\item[Finite-dimensional reduction] Separate potentially vanishing modes into a finite-dimensional block, invert the complementary operator, and evaluate the low-mode Grassmann integral directly.
\end{description}
\endgroup

\end{document}